\documentclass[table,svgnames,dvipsnames]{article}

\usepackage{soul}
\usepackage{url}
\usepackage[hidelinks]{hyperref} 
\usepackage[small]{caption}
\usepackage{graphicx}
\usepackage{booktabs}
\usepackage{amsmath,multirow}
\usepackage{amsthm,thm-restate,thmtools}
\usepackage{amssymb}
\usepackage{enumerate}
\usepackage{xcolor,xfrac}
\usepackage[english]{babel}
\usepackage{charter}
\usepackage{authblk}
\usepackage{framed}
\usepackage[normalem]{ulem}  
\usepackage{tikz}
\usetikzlibrary{shapes.arrows}
\usetikzlibrary{shapes.geometric} 
\usepackage{amsfonts} 
\usepackage[T1]{fontenc}
\usepackage[utf8]{inputenc}
\usepackage[top=1 in,bottom=1in, left=1 in, right=1 in]{geometry}  
\usepackage{enumitem}
\usepackage{tabularx}
\hypersetup{
     colorlinks   = true,
     linkcolor    = red, 
     urlcolor     = teal, 
	 citecolor    = blue 
}
\usepackage{mathtools}
\usepackage{nicefrac} 
\usepackage{pifont}
\usepackage{bm}
\usepackage{dsfont}
\usepackage{algorithm}
\usepackage{graphicx}
\usepackage[noend]{algpseudocode}
\usepackage{subcaption}
\usepackage{gensymb}
\usepackage{pgfplots}
\usepackage{bbm}
\usepackage{comment}
\usepackage{xr}
\usepackage[sort,numbers]{natbib}

\usepackage{tkz-graph}

\newtheorem{innercustomthm}{Theorem}
\newenvironment{customthm}[1]
  {\renewcommand\theinnercustomthm{#1}\innercustomthm}
  {\endinnercustomthm}
  
\newtheorem{innercustomlemma}{Lemma}
\newenvironment{customlemma}[1]
  {\renewcommand\theinnercustomlemma{#1}\innercustomlemma}
  {\endinnercustomlemma}

  \newtheorem{innercustomcoro}{Corollary}
\newenvironment{customcoro}[1]
  {\renewcommand\theinnercustomcoro{#1}\innercustomcoro}
  {\endinnercustomcoro}
  
\newtheorem{theorem}{Theorem}
\newtheorem{lemma}{Lemma}
\newtheorem{corollary}{Corollary}

\newtheorem{remark}{Remark}
\newtheorem{assumption}{Assumption}
\newtheorem{observation}{Observation}

\DeclareMathOperator*{\gap}{Gap}
\DeclareMathOperator*{\out}{out}
\DeclareMathOperator*{\opt}{opt}
\DeclareMathOperator*{\alg}{ALG}
\DeclareMathOperator*{\size}{size}
\DeclareMathOperator*{\act}{act}
\DeclareMathOperator*{\tr}{trace}

\title{Offline Learning of Nash Stable Coalition Structures \\ with Possibly Overlapping Coalitions}

\date{\empty}

\author{Saar Cohen\\
Department of Computer Science, Bar Ilan University, Ramat Gan, Israel\\
Department of Computer Science, University of Oxford, Oxford, United Kingdom\\
saar30@gmail.com}

\begin{document}

\maketitle

\begin{abstract}
Coalition formation concerns strategic collaborations of {selfish} agents that form coalitions based on their preferences. It is often assumed that coalitions are disjoint and preferences are fully known, which may not hold in practice. In this paper, we thus present a new model of coalition formation with \textit{possibly overlapping} coalitions under \textit{partial information}, where selfish agents may be part of \textit{multiple} coalitions simultaneously and their full preferences are initially unknown. Instead, information regarding past interactions and associated utility feedbacks is stored in a fixed offline dataset, and we aim to efficiently infer the agents' preferences from this dataset. We analyze the impact of diverse dataset information constraints by studying two types of utility feedbacks that can be stored in the dataset: \textit{semi-bandit} (agent-level) and \textit{bandit} (coalition-level) utility feedbacks. For both feedback models, we identify assumptions under which the dataset covers sufficient information for an offline learning algorithm to infer preferences and use them to recover a partition that is (approximately) \textit{Nash stable}, in which no agent can improve her utility by unilaterally deviating. Our additional goal is devising algorithms with \textit{low sample complexity}, requiring only a small dataset to obtain a desired approximation to Nash stability. Under semi-bandit feedback, we provide a sample-efficient algorithm proven to obtain an approximately Nash stable partition under a {\textit{sufficient} and \textit{necessary}} assumption on the information covered by the dataset. However, under bandit feedback, we show that {only under a stricter assumption is sufficient for sample-efficient learning}. Still, in multiple cases, our algorithms' sample complexity bounds have \textit{\textbf{optimality}} guarantees up to logarithmic factors. {Finally, extensive experiments show that our algorithm converges to a low approximation level to Nash stability across diverse settings.}
\end{abstract}

\section{Introduction}
\label{sec:intro}
In a large consulting firm, managers must assign consultants to upcoming client projects across various domains (e.g., {finance, logistics}). A consultant may work on multiple projects, one per domain. The compatibility between consultants depends on the domain{, e.g.}, two employees may collaborate efficiently on logistics, but clash in finance due to differing approaches. However, the managers do not have full access to employees’ preferences over team compositions. Testing new team structures through trial projects is also infeasible due to the financial and organizational costs of reassignments, client-facing risks, and time constraints. Instead, managers rely on a fixed dataset of historical project outcomes and post-project peer evaluations. Based only on this limited, pre-collected information, they aim to assign consultants to teams that align with their true but unknown preferences, ensuring no one would prefer to switch groups unilaterally. Such scenarios and many other real-world cases exemplify \textit{coalition formation}, where \textit{agents} perform activities in \textit{coalitions} rather than on their own, and the challenge is forming coalitions that satisfy some desired criteria.


A popular model for studying coalition formation is that of \textit{hedonic games} \cite{dreze1980hedonic}, whose outcome is a set of disjoint coalitions (hereafter, \textit{partition}). The desirability of partitions is often evaluated in terms of various concepts of \textit{single-agent stability} based on agents' preferences \cite{aziz_savani_moulin_2016,bullinger2025stability}, where no agent benefits from leaving her current coalition to join another one on her own. A common such notion is \textit{Nash stability}, where no agent can increase her utility by unilaterally deviating (e.g., no consultant can improve by changing teams unilaterally). In the hedonic games literature, agents express preferences only for coalitions they are part of while disregarding inter-coalitional relationships, and these preferences are typically assumed to be fully known when computing stable partitions. Yet, both assumptions may not hold in many real-life scenarios, as in our consulting firm example.

In this paper, we thus introduce and study a \textit{new} framework of coalition formation with \textit{possibly overlapping} coalitions in \textit{partial information} settings, where selfish agents may {join} \textit{several} coalitions concurrently{, following either \textit{mixed} or \textit{pure} strategies, while} their full preferences are initially unknown. Instead, we only have partial knowledge about past interactions and corresponding utility feedbacks, which is stored in a fixed offline dataset. The dataset is collected in advance, without further interactions within the environment, reflecting real-life settings where active interactions may be costly and risky, like in our consulting firm example. Our goal is to maximally exploit the available dataset to efficiently infer agents' preferences. We exhibit our results for \textit{additively separable} and \textit{symmetric} preferences \cite{bogomolnaia2002stability}, where any pair of agents assigns the same cardinal utility to one another, and an agent's utility for her chosen coalitions is the \textit{sum} of her utilities from their members. However, agents' mutual utilities for one another may vary across different coalitions, reflecting real-world cases such as our consulting firm example, where a consultant may benefit more from another peer in one domain than in another. {Symmetric preferences are realistic in reciprocal interactions like friendships, as widely studied in friend-oriented hedonic games \cite{gairing2019computing,dimitrov2006simple}. Further, our results for \textit{mixed} strategies readily extend to \textit{asymmetric} preferences, where agents may value each other differently, but extending our analysis for \textit{pure} strategies may be generally infeasible (see Remark \ref{remark:asymmetric}).}

Many realistic scenarios differ in the type and granularity of utility information available in the dataset. We thus explore the effect of different information constraints by studying two forms of utility feedbacks that can be stored in the dataset. Specifically, we examine \textit{semi-bandit} utility feedbacks, where \textit{agent-level} utilities are available for each agent's interactions with other members of her chosen coalitions. We also analyze \textit{bandit} feedback settings with less granular utility information, where we can only observe \textit{coalition-level} utility feedbacks about each agent's overall utility from her chosen coalitions.

\textbf{Contributions.} In both semi-bandit and bandit feedback settings, we characterize assumptions under which the dataset covers sufficient information for an offline learning algorithm to deduce agents' preferences and use them to construct an outcome that is (approximately) Nash stable. Under those assumptions, we develop algorithms with \textit{low sample complexity}, requiring only a small dataset to reach a desired approximation to Nash stability. In semi-bandit feedback settings, we design a sample-efficient offline learning algorithm that attains an approximately Nash stable {outcome} under a minimal assumption on the information covered by the dataset. {Our assumption is \textit{sufficient} and \textit{necessary}, as we prove that no weaker assumption enables efficient learning of an approximately Nash stable outcome, regardless of the dataset size.} Intuitively, we require that the dataset reflects the partitions that may form through unilateral deviations. For instance, in our consulting firm example, if a unilateral deviation would result in a finance team of a certain size, then the dataset should include some past project outcomes with a finance team of that same size. However, under bandit feedback, we show that {only a stricter assumption is \textit{sufficient} for sample-efficient learning of an approximately Nash stable outcome}. Specifically, for any agent and any unilateral deviation of that agent from some Nash stable outcome, {the stricter assumption} requires that the dataset is at least as informative as a sufficiently large dataset generated according to the outcome of that deviation. In many cases, our algorithms' sample complexity bounds have \textit{\textbf{optimality}} guarantees up to logarithmic factors. {Finally, extensive experiments confirm that our algorithms consistently reaches a low approximation to Nash stability in a variety of settings.}

\section{Related Works}
Hedonic games have been presented by \citet{dreze1980hedonic}, and later expanded to the study of various notions of stability, fairness, and optimality (see, e.g., \cite{aziz_savani_moulin_2016,woeginger2013core}). Highly related to our work are \textit{additively separable hedonic games} (ASHGs) with symmetric preferences \cite{bogomolnaia2002stability}, where many works evaluate the system by means of Nash stability \cite{aziz2011stable,banerjee2001core,ballester2004np}. {Particularly, 
\citet{bogomolnaia2002stability} proved that Nash stable partitions may not exist in \textbf{general} ASHGs, while \citet{sung2010computational} showed that checking if an instance admits such a partition is NP-complete in the strong sense. In contrast,} ASHGs with {\textit{\textbf{symmetric}}} preferences admit a Nash stable partition due to a potential function argument \cite{bogomolnaia2002stability}, but computing such partitions is PLS-complete \cite{gairing2019computing}. However, while hedonic games do not allow agents to be part of several coalitions, our model captures realistic cases where agents can join \textit{multiple} coalitions, which thus may overlap. 

\citet{shehory1996formation,shehory1998methods} introduced the first model of overlapping coalition formation for handling task allocation, which was later followed by additional works on task-oriented applications (see, e.g., \cite{dang2006overlapping,lin2007multi,zhang2010searching,zhang2017task}). However, this line of research aims to find (approximately) optimal coalition structures, while we analyze the system by means of stability. Though the \textit{group} stability notion of the core has been examined in cooperative games with overlapping coalitions (see, e.g., \cite{chalkiadakis2010cooperative,zick2012stability,elkind2013computational,zick2019cooperative}), we explore Nash stability, which is based on \textit{single-agent} deviations. Further, all above works on overlapping coalition formation typically allow arbitrary monetary transfers, i.e., the payoff or resources of a coalition can be distributed arbitrarily among its members. Conversely, monetary transfers are \textit{unavailable} in our context, as we consider games with \textit{non-transferable} utilities.


{However}, the above works on hedonic games and cooperative games with overlapping coalitions unrealistically require that the agents' preferences are fully known when computing stable partitions. To tackle this issue, several studies have examined PAC learning in those domains \cite{sliwinski2017learning,igarashi2019forming,fioravanti2023pac,balcan2015learning}, using samples to learn agents' preferences and a \textit{core-stable} partition, where no subset of agents can improve their utility by regrouping into a new coalition. Yet, while they focus on the \textit{group} deviations, we study \textit{single-agent} deviations by analyzing the system in terms of Nash stability. Recently, \citet{cohen2024online,cohen2025online,cohen2025decentralized,cohen2023online,cohen2024onlinefriends,cohen2025egalitarianism} and \citet{cohen2026delayed} proposed \textit{online} and \textit{online learning} variants of coalition formation. In online learning of coalition structures, agents initially lack preference knowledge and form coalitions based on preferences learned through active interactions. While \citet{cohen2025online,cohen2025decentralized} analyze Nash stability in \textit{online} settings, we assess Nash stability in \textit{offline} scenarios where active interactions are costly or risky. Instead, agents’ preferences are inferred from a fixed, pre-collected dataset without further interactions.

Our work is also closely related to offline reinforcement learning (RL), aiming to learn an optimal policy from a dataset collected a priori without further interactions with the environment. A key challenge in offline RL is the insufficient coverage of the dataset \cite{wang2021statistical}, arising from the lack of continued exploration \cite{szepesvari2022algorithms}. 
To address this challenge, existing studies presented various assumptions on the sufficient coverage of the dataset in both single-agent settings \cite{szepesvari2005finite,rashidinejad2021bridging,yin2021towards,jin2021pessimism} and multi-agent settings \cite{sidford2020solving,zhang2023model,zhang2021finite}. Recently, minimal assumptions for offline zero- and general-sum games have been identified \cite{cui2022offline,cui2022provably,zhong22pessimistic}, together with algorithms for learning a Nash equilibrium. However, their sample complexity scales with the number of actions each agent can take, which may be \textit{exponential} in the number of agents in our setting. In contrast, our proposed assumptions allow for algorithms that remove this exponential dependency, while attaining sample complexity with \textit{\textbf{optimality}} guarantees (up to logarithmic factors).

\section{Preliminaries}


We consider a \textit{\textbf{possibly overlapping coalition formation}} (POCF) game $G=(\mathcal{N}, \{\mathcal{A}_i\}_{i\in \mathcal{N}})$, defined as follows. We are given a finite set $\mathcal{N} = \{1,\dots,n\}$ of $n$ selfish agents with \textit{unknown} preferences. Hereafter, we denote $[k] := \{1,\dots,k\}$ for $k \in \mathbb{N}$ and $[0] = \{0\}$. We focus on realistic scenarios where the number of coalitions may be constrained. For an integer $k\geq 1$, each agent can join one of $k$ \textit{candidate} coalitions. In our consulting firm example, this can be thought of as if each consultant picks which rooms to enter among $k$ rooms. Thus, in our setting of possibly overlapping coalitions, the \textit{\textbf{action space}} of each agent $i$ is denoted by $\mathcal{A}_i \subseteq \mathcal{P}([k]) \setminus \{\emptyset\}$, where $\mathcal{P}([k])$ is the power set of $[k]$ and each action $a_i \in \mathcal{A}_i$ is a subset of at most $k$ candidate coalitions that agent $i$ can join, where $a_i\neq\emptyset$ (i.e., agent $i$ always decides to join some coalition). Note that, if $|a_i|\leq 1$ for any agent $i$ and any action $a_i\in \mathcal{A}_i$, then each agent can participate in at most one coalition, in which case coalitions are \textit{disjoint}; otherwise, coalitions may \textit{overlap}.

Each agent $i$ can join certain coalitions among $k$ candidate ones following a \textit{\textbf{mixed strategy}} $\varphi_i \in \Delta(\mathcal{A}_i)$, where $\Delta(\mathcal{A}_i)$ is the probability simplex over $\mathcal{A}_i$, i.e., for any $a_i \in \mathcal{A}_i$, agent $i$ joins the candidate coalitions in $a_i$ with probability $\varphi_i(a_i) \in [0,1]$. Letting $\mathcal{A} = \prod_{i=1}^n \mathcal{A}_i$ be the \textit{joint action space}, each agent $i$ can then sample an action $a_i \in \mathcal{A}_i$ from $\varphi_i$ independently from other agents, forming a \textit{joint action} $\mathbf{a} = (a_i)_{i \in \mathcal{N}}$. Similarly, let $\bm{\varphi} = (\varphi_i)_{i \in \mathcal{N}}$ be the agents' \textit{joint mixed strategy}. The constructed joint action $\mathbf{a}$ induces a partition of the agents $\pi^{\mathbf{a}} = (C_\ell^{\mathbf{a}})_{\ell \in [k]}$ with possibly overlapping coalitions, where, for any $\ell \in [k]$, $C_\ell^{\mathbf{a}}$ is the set of agents joining the $\ell$th candidate coalition, i.e., $C_\ell^{\mathbf{a}} = \{i \in \mathcal{N}: \ell\in a_i \}$. For any $\ell \in [k]$, if no agent joins the $\ell$th candidate coalition (i.e., $\ell\notin a_i$ for any agent $i$), then this coalition is empty. Thereby, we denote by $|\pi^{\mathbf{a}}|$ the number of non-empty coalitions in $\pi^{\mathbf{a}}$. We also denote the coalitions in $\pi^{\mathbf{a}}$ containing agent $i$ as $\pi^{\mathbf{a}}(i)$. 

Afterwards, we can derive each agent's utility from her chosen strategy, determined by aggregating her utilities resulting from interactions with other agents. We focus on POCF games with \textit{additively separable} and \textit{symmetric}{\footnote{{As noted in Section \ref{sec:intro}, our results for \textit{mixed} strategies naturally extend to \textit{asymmetric} preferences, yet this is generally infeasible for \textit{pure} strategies (see Remark \ref{remark:asymmetric}).}}} valuations in each coalition, where the utility that any pair of agents derives from their interaction within that coalition is equal, indicating the intensity by which they prefer each other to another agent. However, their mutual utilities from each other may differ across different coalitions. This captures real-life scenarios, such as our consulting firm example, where a consultant may benefit more from another one in one domain than in another. An agent's utility from her chosen coalitions is then the sum of her utilities from other members of those coalitions. As common in the literature (see, e.g., \cite{flammini2021strategyproof}), we assume that agents' utilities are within $[-1,1]$. Recall that preferences are \textit{unknown} and even the agents themselves may not be aware of them. Thus, supposing a joint action $\mathbf{a}$ is chosen by all agents, the uncertainty about the mutual valuations resulting from their interactions within the $\ell$-th candidate coalition is captured by an \textit{unknown} and \textit{fixed} distribution $\mathcal{D}_{i,j}^\ell(\cdot|\mathbf{a})$ over $[-1,1]$ with mean $d_{i,j}^\ell$ for any pair of distinct agents $i,j$, which we aim to learn. The \textit{\textbf{mutual utility}} $v_{i,j}^\ell$ of agents $i,j$ that results from their interactions in the $\ell$-th candidate coalition according to a joint action $\mathbf{a}$ is then independently drawn from $\mathcal{D}_{i,j}^\ell(\cdot|\mathbf{a})$. We use the convention that $v_{i,i}^\ell= d_{i,i}^\ell =0$ for any agent $i$. Thus, for any joint action $\mathbf{a}$ sampled from a joint mixed strategy $\bm{\varphi}$, if agent $i$ decides to join some candidate coalitions in $a_i$, then her utility from the induced partition $\pi^{\mathbf{a}} = (C_\ell^{\mathbf{a}})_{\ell \in [k]}$ is $v_i(\mathbf{a}) = \sum_{\ell\in a_i}\sum_{i \neq j \in C_\ell^{\mathbf{a}}} v_{i,j}^\ell$, whose mean is $d_i(\mathbf{a}) = \sum_{\ell\in a_i}\sum_{i \neq j \in C_\ell^{\mathbf{a}}} d_{i,j}^\ell$. 
Agent $i$'s utility from her strategy $\varphi_i$ is thus defined as $V_i(\bm{\varphi}) := \mathbb{E}_{\mathbf{a} \sim \bm{\varphi}} [d_i(\mathbf{a})]$.

\subsection{Nash Stability in POCF Games}
\label{sec:nash}
Each agent joins a coalition with the goal of maximizing her own utility. We thus want to study stability und
er single agents' incentives to deviate between coalitions. When agents play \textit{\textbf{pure}} strategies, each agent $i$'s strategy is joining certain candidate coalitions by only picking some action $a_i \in \mathcal{A}_i$. Let $\mathbf{a}_{-i} = (a_j)_{j \neq i}$ be the joint strategy of all agents except for agent $i$. Agent $i$ can then \textit{\textbf{deviate}} by moving from her selected coalition to another one with index $a_i' \in \mathcal{A}_i$, which is a \textit{\textbf{Nash deviation}} if it improves her utility, i.e., $V_i(\mathbf{a}_{-i},a_i') > V_i(\mathbf{a}_{-i},a_i)$. We also study \textit{mixed} strategies, where we consider the other notion of \textit{mixed Nash deviations}. Consider the joint strategy $\bm{\varphi}$. Letting $\bm{\varphi}_{-i} = (\varphi_j)_{j \neq i}$ for any agent $i$, agent $i$ may perform a \textit{\textbf{(mixed single-agent) deviation}} from her strategy $\varphi_i$ to another strategy $\phi_i \in {\Delta(\mathcal{A}_i)}$, which is a \textit{\textbf{mixed Nash deviation}} only if it immediately makes her better off, i.e., $V_i(\bm{\varphi}_{-i},\phi_i) > V_i(\bm{\varphi})$. Hence, a (\textit{pure} or \textit{mixed}) joint strategy for which no Nash deviation is possible is said to be \textit{\textbf{Nash stable}} (NS){, or a \textit{Nash equilibrium}}. 

Our POCF class of coalition formation games generalizes the well-known class of \textit{symmetric additively separable hedonic games} (S-ASHGs) \cite{bogomolnaia2002stability}. Indeed, consider cases where the number of coalitions is unconstrained ($k=n$), and each agent can join exactly one of the $n$ \textit{candidate} coalitions (i.e., $\mathcal{A}_i=\{\{\ell\}\}_{\ell\in[n]}$ for any agent $i$). We thereby obtain S-ASHGs. For S-ASHGs under \textit{pure} strategies, the existence of a Nash stable outcome is guaranteed by potential function argument \cite{bogomolnaia2002stability}. Next, we show that our class of symmetric POCF games are also potential games, {establishing that they always admit at least one pure (and hence mixed) NS strategy. Pure NS strategies need not be unique (e.g., if all agents assign zero utility to one another, then any pure strategy is Nash stable).}
\begin{lemma}
    \label{lemma:potential game}
    A symmetric POCF game with unknown, {symmetric} preferences is a {\normalfont potential game}. {Therefore, any symmetric POCF game always admits at least one pure (and thus mixed) NS strategy}.
\end{lemma}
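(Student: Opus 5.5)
We will exhibit an exact potential function. For a pure joint action $\mathbf{a}$ define
\[
\Phi(\mathbf{a}) := \frac{1}{2}\sum_{\ell\in[k]}\sum_{i\in C_\ell^{\mathbf{a}}}\sum_{i\neq j\in C_\ell^{\mathbf{a}}} d_{i,j}^\ell = \frac12\sum_{i\in\mathcal{N}} d_i(\mathbf{a}),
\]
which is half the total mean utility of all agents, i.e., the sum over coalitions $\ell$ of the mean mutual utilities of unordered pairs in $C_\ell^{\mathbf{a}}$. The plan is to show that $\Phi(\mathbf{a}_{-i},a_i')-\Phi(\mathbf{a}_{-i},a_i) = d_i(\mathbf{a}_{-i},a_i')-d_i(\mathbf{a}_{-i},a_i)$ for every agent $i$ and every $a_i,a_i'\in\mathcal{A}_i$.

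To verify this identity, split the pair terms of $\Phi$ into those not involving $i$ and those involving $i$. The terms not involving $i$ are unchanged when only $i$ changes her action, since membership of every $j\neq i$ in each $C_\ell$ is fixed by $\mathbf{a}_{-i}$. The terms involving $i$ are $\sum_{\ell\in a_i}\sum_{j\in C_\ell,\,j\neq i} d_{i,j}^\ell$. Here symmetry $d_{i,j}^\ell=d_{j,i}^\ell$ within each coalition is used, so that the two ordered contributions from $i$ and from $j$ collapse and cancel the factor $\tfrac12$. This sum equals exactly $d_i(\mathbf{a})$, which gives the identity.

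One subtlety, and the main point to handle carefully, is that the distributions $\mathcal{D}_{i,j}^\ell(\cdot\mid\mathbf{a})$ are indexed by $\mathbf{a}$. We rely on the model's statement that the mean $d_{i,j}^\ell$ is a fixed constant independent of $\mathbf{a}$, and on symmetry being imposed on these means. If the means could depend on $\mathbf{a}$, the argument would break, so we state this reading explicitly.

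For mixed strategies, extend by $\Psi(\bm{\varphi}) := \mathbb{E}_{\mathbf{a}\sim\bm{\varphi}}[\Phi(\mathbf{a})]$. Since agents randomize independently, conditioning on $\mathbf{a}_{-i}\sim\bm{\varphi}_{-i}$ and applying the pure identity pointwise gives $\Psi(\bm{\varphi}_{-i},\phi_i)-\Psi(\bm{\varphi}) = V_i(\bm{\varphi}_{-i},\phi_i)-V_i(\bm{\varphi})$. By linearity of expectation, the game is therefore an exact potential game in both the pure and the mixed sense.

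For existence, note that $\mathcal{A}$ is finite, since each $\mathcal{A}_i\subseteq\mathcal{P}([k])$. Hence $\Phi$ attains a maximum at some $\mathbf{a}^*$. At $\mathbf{a}^*$, any pure Nash deviation would strictly increase $\Phi$, contradicting maximality, so $\mathbf{a}^*$ is pure NS. Viewed as a degenerate mixed profile, $\mathbf{a}^*$ is also mixed NS: $V_i(\bm{\varphi}_{-i},\phi_i)$ is linear in $\phi_i$, so it is maximized at a vertex $\phi_i=\delta_{a_i'}$, and no $a_i'$ improves on $a_i^*$. The only real obstacle is the bookkeeping of the double counting and the symmetry use in the identity above; the rest is routine.
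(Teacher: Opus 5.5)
Your proof is correct and follows essentially the paper's route: it uses the same potential $\Phi=\frac{1}{2}\sum_{i\in\mathcal{N}}(\text{utility of } i)$ and derives the same exact-potential identity from symmetry. Your split of the pair terms into those containing $i$ and those not containing $i$ is a tidier version of the paper's decomposition into the deviator's own gap plus abandoned and welcoming coalitions. You also spell out the finite-maximizer and pure-to-mixed steps, which the paper leaves implicit.

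Your caveat that the means $d_{i,j}^\ell$ must not depend on $\mathbf{a}$ is well placed: the paper's argument silently needs it too. Indeed, the identity for this $\Phi$ fails for the coalition-size-dependent utilities used in the paper's own lower-bound constructions.
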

\begin{proof}
    (\textit{Sketch})
    Consider a joint mixed strategy $\bm{\varphi}$. Given a joint action $\mathbf{a} \sim \bm{\varphi}$, in Appendix A we prove that $\Phi(\mathbf{a}) = \frac{1}{2} \sum_{i \in \mathcal{N}} v_i(\mathbf{a})$ is a potential function for \textit{pure} strategies, as $\Phi(\mathbf{a}_{-i},a_i) - \Phi(\mathbf{a}_{-i},a_i') = v_i(\mathbf{a}_{-i},a_i) - v_i(\mathbf{a}_{-i},a_i')$ for any agent $i$ playing {some} action $a_i' \neq a_i$. Similarly, {slightly abusing notation}, $\Phi(\bm{\varphi}) = \frac{1}{2} \sum_{i \in \mathcal{N}} V_i(\bm{\varphi})$ is a potential for \textit{mixed} strategies as $\Phi(\bm{\varphi}_{-i}, \varphi_i) - \Phi(\bm{\varphi}_{-i}, \phi_i) = V_i(\bm{\varphi}_{-i}, \varphi_i) - V_i(\bm{\varphi}_{-i}, \phi_i)$ for any agent $i$ playing another strategy $\phi_i \in \Delta(\mathcal{A}_i)$.
\end{proof}

In S-ASHGs with \textit{pure} strategies, computing {NS} strategies is PLS-complete \cite{gairing2019computing}. {As S-ASHGs are a special case of our model, this PLS-completeness extends to symmetric POCF games by Lemma \ref{lemma:potential game}}. We thus also consider an \textit{approximate} notion of Nash stability. For a joint strategy $\bm{\varphi}$, agent $i$'s \textit{best response} to the other agents' strategies is a Nash deviation given by a strategy $\phi_i^{\star}$ satisfying $V_i^{\star}(\bm{\varphi}_{-i}) := V_i(\bm{\varphi}_{-i},\phi_i^{\star}) = \max_{\phi \in \Delta(\mathcal{A}_i)} V_i(\bm{\varphi}_{-i},\phi)$. We measure the worst agent's local gap between the expected utilities she gets from her best response and her current strategy by $\bm{\varphi}$'s \textbf{\textit{duality gap}}:
\begin{equation}
    \label{eq:gap}
         \gap(\bm{\varphi}):=\max_{i \in \mathcal{N}} [V_i^\star(\bm{\varphi}_{-i}) - V_i(\bm{\varphi})]
\end{equation}
Hence, for any $\varepsilon> 0$, the agents' joint strategy $\bm{\varphi}$ is \textit{\textbf{$\varepsilon$-approximate Nash stable}} ($\varepsilon$-NS) if no agent can improve her gain by more than $\varepsilon$, i.e., $\gap(\bm{\varphi}) \leq \varepsilon$. If $\gap(\bm{\varphi})=0$, then $\bm{\varphi}$ is {\textit{exactly} Nash stable}.

\begin{remark}
    \label{remark:uniform}
    {If $\{\ell\} \in A_i$ for any $\ell \in [k]$ and any agent $i$, consider the mixed strategy $\varphi_i$ where each agent treats all candidate coalitions as equally desirable, i.e., $\varphi_i(\{\ell\}) = \frac{1}{k}$ for any $\ell \in [k]$.} Clearly, this is an exact mixed NS strategy{, but} it ignores the agents' preferences entirely, which is unrealistic in practical scenarios as agents often act strategically based on their own preferences, not arbitrarily. {Instead, our framework aims to learn preference-aligned mixed NS strategies that reflect real-life behavior by accounting for agents' incentives.}
\end{remark}

\subsection{Offline Learning in Coalition Formation}
In the offline version of our model, we have access only to a fixed dataset $\mathcal{S}=\{(\mathbf{a}^m, \mathbf{v}^m)\}_{m=1}^M$ of $M$ samples independently{\footnote{{Independence is a standard assumption in offline learning (e.g., \cite{cui2022provably,cui2022offline}), where the goal is to learn from a fixed dataset rather than model its temporal generation. In our setting, this is also practical: for statistical analysis, data can be treated as i.i.d. samples from an exploration policy, even if it was generated sequentially. Standard subsampling methods (e.g., \cite{cui2022provably}) can alleviate temporal correlations to enforce independence.}}} drawn from a possibly unknown exploration policy $\rho\in \Delta(\mathcal{A})$, where $\mathbf{a}^m$ is a joint action sampled from $\rho$ and $\mathbf{v}^m$ comprises utility feedbacks resulting from the partition induced by $\mathbf{a}^m$. In particular, no further sampling from $\rho$ is allowed. We say that a joint action $\mathbf{a}$ is \textit{\textbf{covered}} by the exploration policy if $\rho(\mathbf{a})>0$. We study the following two utility feedback models, ordered by decreasing level of granularity:
\begin{enumerate}
    \item \textbf{Semi-bandit feedback,} capturing scenarios where we can observe \textit{agent-level} feedback about each agent's utilities from her interactions with any other member of her chosen coalitions, i.e., for any $m\in[M]$, each agent $i$ and any $\ell \in a_i^m$, we attain her realized utility $v_{i,j}^{\ell,m}$ for each agent $j \neq i$ in the $\ell$-th candidate coalition $C_\ell^{\mathbf{a}^m}$ formed by $\mathbf{a}^m$, yielding $\mathbf{v}^m=\{v_{i,j}^{\ell,m}\}_{i\in \mathcal{N}, \ell \in a_i^m, i \neq j \in C_\ell^{\mathbf{a}^m}}$. 
    
    \item \textbf{Bandit feedback,} referring to cases where we can only observe \textit{coalition-level} feedback about each agent's overall utility from her coalitions. Namely, for any $m\in[M]$ and each agent $i$, we only obtain agent $i$'s utility from the partition induced by $\mathbf{a}^m$ (i.e., $v_i(\mathbf{a}^m)$), with no information about the individual utility $v_{i,j}^{\ell,m}$ assigned to any agent $j \neq i$, i.e., $\mathbf{v}^m=\{v_i(\mathbf{a}^m)\}_{i\in \mathcal{N}}$. 

\end{enumerate}

Under each feedback model, our goal is identifying conditions under which the dataset covers sufficient information for an \textit{offline learning} algorithm to learn preferences and use them to recover an $\varepsilon$-NS joint strategy that aligns with the true preferences. We term such assumptions as \textit{(dataset) coverage assumptions}. Particularly, our goal is devising algorithms with \textit{\textbf{small duality gap}} and \textit{\textbf{low sample complexity}}, i.e., such algorithms find an $\varepsilon$-NS strategy for a small $\varepsilon> 0$ using a number of samples that is small in its dependency on the number of agents $n$ and $1/\varepsilon$.


\subsection{Surrogate Minimization in POCF Games}

In Algorithm \ref{alg:surrogate min}, we present a general algorithmic framework for learning NS strategies in offline symmetric POCF games that will be used throughout our work. {To simplify the discussion, we mainly focus on learning NS \textit{mixed} strategies. Later, in Section \ref{sec:Learning Approximate NS Pure Strategies} and Footnote \ref{footnote:Learning Approximate NS Pure Strategies bandit}, we explain how Algorithm \ref{alg:surrogate min} can be adapted for learning NS \textit{pure} strategies.} Algorithm \ref{alg:surrogate min} relies on a \textbf{\textit{utility estimator}} $\hat{v}_i: \mathcal{A}\rightarrow\mathbb{R}$ for each agent $i$, which estimates her mean utility from the partition induced by a sampled joint action $\mathbf{a} \in \mathcal{A}$ via $\hat{v}_i(\mathbf{a})$ (line \ref{state:estimate}). It also exploits an exploration bonus, introducing conservatism into the learning process. Formally, for any agent $i$ and some confidence level $\delta \in (0,1]$ capturing the degree of certainty, $b_i^\delta:\mathcal{A}\rightarrow\mathbb{R}$ is an \textbf{\textit{exploration bonus}} for the utility estimator $\hat{v}_i$ if the following holds with probability at least $1-\delta$ for any joint action $\mathbf{a}\in\mathcal{A}$:
\begin{equation}
    \label{eq:bonus}
    |d_i(\mathbf{a})-\hat{v}_i(\mathbf{a})| \leq b_i(\mathbf{a})
\end{equation}
The specific expression for $\hat{v}_i$ and $b_i^\delta$ may vary based on the utility feedback model, as we will discuss in the remainder of the paper. In any case, we can define the upper confidence bound (UCB) and the lower confidence bound (LCB) of agent $i$ by $\overline{v}_i^\delta(\mathbf{a})= \hat{v}_i(\mathbf{a})+b_i^\delta(\mathbf{a})$ and $\underline{v}_i^\delta(\mathbf{a})= \hat{v}_i(\mathbf{a})-b_i^\delta(\mathbf{a})$, respectively. They allow us to construct optimistic and pessimistic estimates of the expected utility obtained by each agent $i$ from a joint strategy $\bm{\varphi}$, given by (respectively):
\begin{equation}
    \label{eq:opt and pess estimates}
    \overline{V}_{i}^\delta(\bm{\varphi}) := \mathbb{E}_{\mathbf{a} \sim \bm{\varphi}} [\overline{v}_i^\delta(\mathbf{a})] \quad , \quad \underline{V}_{i}^\delta(\bm{\varphi}) := \mathbb{E}_{\mathbf{a} \sim \bm{\varphi}} [\underline{v}_i^\delta(\mathbf{a})]
\end{equation}
For a joint strategy $\bm{\varphi}$, agent $i$'s \textit{optimistic best response} to {others'} strategies is a Nash deviation given by a strategy $\phi_i^{\star}$ {that satisfies}: 
\begin{equation}
    \label{eq:opt best response}
    \overline{V}_i^{\star,\delta}(\bm{\varphi}_{-i}) := \overline{V}_i^\delta(\bm{\varphi}_{-i},\phi_i^{\star}) = \max_{\phi \in {\Delta(\mathcal{A}_i)}} \overline{V}_i^\delta(\bm{\varphi}_{-i},\phi)
\end{equation}
Algorithm \ref{alg:surrogate min} uses \eqref{eq:opt best response} to estimate the true duality gap in \eqref{eq:gap} via $\widehat{\gap}^{{\delta}}(\bm{\varphi}) := \max_{i\in \mathcal{N}} [\overline{V}_i^{\star,\delta}(\bm{\varphi}_{-i})-\underline{V}_{i}^\delta(\bm{\varphi})]$, for which it then computes an approximate minimizer (line \ref{state:surrogate}). Namely, {Algorithm \ref{alg:surrogate min} finds} a joint {mixed} strategy $\bm{\varphi}^{\out}$ that solves \eqref{eq:approx NS} up to $\epsilon_{\opt}$-optimality (e.g., {by standard coordinate-descent schemes, as detailed in Section \ref{sec:EMPIRICAL EVALUATIONS}}){\footnote{{Solving \eqref{eq:approx NS} exactly is generally infeasible, as computing NS outcomes is PLS-complete, even under full information \cite{gairing2019computing}.\label{footnote:pls}}}:}
\begin{equation}
        \widehat{\gap}^\delta(\bm{\varphi}^{\out})
        \leq \min_{\bm{\varphi}} \widehat{\gap}^\delta(\bm{\varphi}) + \epsilon_{\opt}
\end{equation}
The intuition behind considering the above estimates is that they serve as surrogates for the true duality gap. Formally:
\begin{algorithm}[t!]
    \caption{{Surrogate Minimization in POCF Games}}
    \label{alg:surrogate min}   
    \textbf{Input:} An offline dataset $\mathcal{S}=\{(\mathbf{a}^m, \mathbf{v}^m)\}_{m=1}^M$.
    \begin{algorithmic}[1] 
        \State{Construct $\hat{v}_i$ and $b_i^\delta$ for any agent $i$ based on $\mathcal{S}$.\label{state:estimate}}
        \State{Return an \textit{approximate} solution $\bm{\varphi}^{\out}$ to:
        \begin{equation}
            \label{eq:approx NS}
            \min_{\bm{\varphi}\in \prod_{i=1}^n \Delta(\mathcal{A}_i)}\text{ }\max_{i \in \mathcal{N}} \left[\overline{V}_i^{\star,\delta}(\bm{\varphi}_{-i}) - \underline{V}_{i}^\delta(\bm{\varphi})\right]
        \end{equation}
        where $\overline{V}_{i}(\bm{\varphi}), \underline{V}_{i}^\delta(\bm{\varphi}),\overline{V}_i^{\star,\delta}(\bm{\varphi}_{-i})$ are as in \eqref{eq:opt and pess estimates}-\eqref{eq:opt best response}.\label{state:surrogate}}
    \end{algorithmic}
\end{algorithm}
\begin{lemma}
    \label{lemma:surrogate}
    For any $\delta \in (0,1]$ and any joint strategy $\bm{\varphi}${:} $\gap(\bm{\varphi}) \leq \widehat{\gap}^\delta(\bm{\varphi})$ and $\gap(\varphi^{\out}) \leq \min_{\bm{\varphi}}\widehat{\gap}^\delta(\bm{\varphi}) + \epsilon_{\opt}$ with probability at least $1-\delta$, where $\varphi^{\out}$ is the joint strategy produced by Algorithm \ref{alg:surrogate min}.
\end{lemma}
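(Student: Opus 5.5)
We work on the event $\mathcal{E}$ on which the bonus guarantee \eqref{eq:bonus} holds simultaneously for every agent $i$ and every joint action $\mathbf{a}\in\mathcal{A}$. By the definition of an exploration bonus, $\mathcal{E}$ has probability at least $1-\delta$. If the guarantee is only stated per agent, we instead run each agent at level $\delta/n$ and take a union bound; we read the lemma as intending the joint event. All subsequent steps are deterministic consequences of $\mathcal{E}$, so the whole argument reduces to monotonicity of expectation and of the maximum.

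\textbf{Step 1 (pointwise sandwich).} On $\mathcal{E}$, \eqref{eq:bonus} gives $\underline{v}_i^\delta(\mathbf{a})\le d_i(\mathbf{a})\le \overline{v}_i^\delta(\mathbf{a})$ for every $i$ and every $\mathbf{a}$. Fix any product strategy $\bm{\varphi}'$ and take expectations over $\mathbf{a}\sim\bm{\varphi}'$. Using \eqref{eq:opt and pess estimates} and $V_i(\bm{\varphi}')=\mathbb{E}_{\mathbf{a}\sim\bm{\varphi}'}[d_i(\mathbf{a})]$, this yields
\[
\underline{V}_i^\delta(\bm{\varphi}')\le V_i(\bm{\varphi}')\le \overline{V}_i^\delta(\bm{\varphi}').
\]
The point to note is that this holds for \emph{every} joint strategy $\bm{\varphi}'$ at once, because the event $\mathcal{E}$ is uniform over $\mathbf{a}$ and does not depend on the strategy.

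\textbf{Step 2 (best responses).} Fix $\bm{\varphi}$ and an agent $i$. For any $\phi\in\Delta(\mathcal{A}_i)$, Step 1 applied to $(\bm{\varphi}_{-i},\phi)$ gives
\[
V_i(\bm{\varphi}_{-i},\phi)\le \overline{V}_i^\delta(\bm{\varphi}_{-i},\phi)\le \overline{V}_i^{\star,\delta}(\bm{\varphi}_{-i}).
\]
Taking the maximum over $\phi$ gives $V_i^\star(\bm{\varphi}_{-i})\le \overline{V}_i^{\star,\delta}(\bm{\varphi}_{-i})$. This maximum exists because $\Delta(\mathcal{A}_i)$ is compact and the objective is linear in $\phi$. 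Combining it with $V_i(\bm{\varphi})\ge \underline{V}_i^\delta(\bm{\varphi})$ from Step 1,
\[
V_i^\star(\bm{\varphi}_{-i})-V_i(\bm{\varphi})\le \overline{V}_i^{\star,\delta}(\bm{\varphi}_{-i})-\underline{V}_i^\delta(\bm{\varphi}).
\]
Taking the maximum over $i\in\mathcal{N}$ gives $\gap(\bm{\varphi})\le\widehat{\gap}^\delta(\bm{\varphi})$, which is the first claim.

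\textbf{Step 3 (output of Algorithm \ref{alg:surrogate min}).} The first claim holds on $\mathcal{E}$ for all $\bm{\varphi}$ simultaneously, and $\bm{\varphi}^{\out}$ is data-dependent, so it is essential that no separate probability statement is needed for it. We apply the first claim to $\bm{\varphi}^{\out}$ and then use the $\epsilon_{\opt}$-optimality guarantee of line \ref{state:surrogate}:
\[
\gap(\bm{\varphi}^{\out})\le\widehat{\gap}^\delta(\bm{\varphi}^{\out})\le\min_{\bm{\varphi}}\widehat{\gap}^\delta(\bm{\varphi})+\epsilon_{\opt}.
\]

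\textbf{Main obstacle.} The only delicate point is the probabilistic bookkeeping. The confidence event must be uniform over all joint actions and all agents, so that it can be applied to the data-dependent output $\bm{\varphi}^{\out}$ without further union bounds over strategies. Once that event is fixed, Steps 1--3 are routine.
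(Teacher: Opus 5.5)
Your proposal is correct and follows the paper's proof essentially step for step. The paper likewise derives $\underline{V}_i^\delta(\bm{\varphi})\le V_i(\bm{\varphi})\le\overline{V}_i^\delta(\bm{\varphi})$ from the bonus property and bounds $\max_{\phi_i}V_i(\bm{\varphi}_{-i},\phi_i)-V_i(\bm{\varphi})$ by the optimistic-minus-pessimistic gap; it then chains the first claim with the $\epsilon_{\opt}$-optimality of $\bm{\varphi}^{\out}$. Your requirement that the confidence event hold uniformly over agents and joint actions, so that it applies to the data-dependent $\bm{\varphi}^{\out}$, is left implicit in the paper, and the concrete bonuses the paper constructs later do satisfy it.
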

\begin{proof}
    (\textit{Sketch})
    By \eqref{eq:bonus} and \eqref{eq:opt and pess estimates}, $\underline{V}_{i}^\delta(\bm{\varphi})\leq V_{i}(\bm{\varphi}) \leq \overline{V}_{i}^{{\delta}}(\bm{\varphi})$ holds with probability at least $1-\delta$. In Appendix B, we show that this easily implies the desired due to \eqref{eq:gap}.
\end{proof}

Next, we supply a general upper bound on the duality gap of the joint strategy $\varphi^{\out}$ produced by Algorithm \ref{alg:surrogate min}, which serves as a key tool in deriving our main results.
\begin{theorem}
    \label{thm:general duality gap}
    Let $\Gamma$ be the set of all {\normalfont deterministic} joint strategies, $b_i^\delta$ be an exploration bonus for the utility estimator $\hat{v}_i$ of any agent $i$ for any $\delta \in (0,1]$ and consider some NS (possibly mixed) joint strategy $\bm{\varphi}^\star$. Then, under each feedback model, the duality gap of the joint strategy $\varphi^{\out}$ produced by Algorithm \ref{alg:surrogate min} is upper bounded as follows with probability at least $1-\delta$ (which directly translates into Algorithm \ref{alg:surrogate min}'s approximation guarantees for Nash stability): 
    \begin{gather*}
        \gap(\varphi^{\out}) \leq 2\max_{i \in \mathcal{N}} \left[\max_{\bm{\varphi}'\in\Gamma} \mathbb{E}_{\mathbf{a}\sim (\bm{\varphi}_{-i}^\star,\varphi_i')} [b_i^\delta(\mathbf{a})]+ \mathbb{E}_{\mathbf{a}\sim \bm{\varphi}^\star} [b_i^\delta(\mathbf{a})]\right]+ \epsilon_{\opt}
    \end{gather*}
\end{theorem}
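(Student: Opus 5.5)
The plan is to apply Lemma \ref{lemma:surrogate} and then bound $\min_{\bm{\varphi}}\widehat{\gap}^\delta(\bm{\varphi})$ by its value at the NS strategy $\bm{\varphi}^\star$. Work on the event of probability at least $1-\delta$ on which \eqref{eq:bonus} holds for every agent and every joint action; a union bound over agents is absorbed into the definition of $b_i^\delta$. On this event Lemma \ref{lemma:surrogate} gives
\[
\gap(\bm{\varphi}^{\out}) \leq \widehat{\gap}^\delta(\bm{\varphi}^\star)+\epsilon_{\opt}
= \max_{i}\left[\overline{V}_i^{\star,\delta}(\bm{\varphi}^\star_{-i})-\underline{V}_i^\delta(\bm{\varphi}^\star)\right]+\epsilon_{\opt}.
\]
It therefore suffices to bound each bracket by $2\max_{\bm{\varphi}'\in\Gamma}\mathbb{E}_{(\bm{\varphi}^\star_{-i},\varphi_i')}[b_i^\delta]+2\mathbb{E}_{\bm{\varphi}^\star}[b_i^\delta]$.

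Fix an agent $i$. For the optimistic term, $\overline{V}_i^\delta(\bm{\varphi}^\star_{-i},\phi)$ is linear in $\phi\in\Delta(\mathcal{A}_i)$, so its maximum is attained at a deterministic strategy $\varphi_i'$ (a single action), which is the $i$-th coordinate of some element of $\Gamma$. Using $\hat v_i\le d_i+b_i^\delta$ on the event, we get $\overline v_i^\delta=\hat v_i+b_i^\delta\le d_i+2b_i^\delta$. Hence
\[
\overline{V}_i^{\star,\delta}(\bm{\varphi}^\star_{-i}) \le V_i(\bm{\varphi}^\star_{-i},\varphi_i')+2\,\mathbb{E}_{\mathbf{a}\sim(\bm{\varphi}^\star_{-i},\varphi_i')}[b_i^\delta(\mathbf{a})].
\]
Because $\bm{\varphi}^\star$ is NS, $V_i(\bm{\varphi}^\star_{-i},\varphi_i')\le V_i^\star(\bm{\varphi}^\star_{-i})=V_i(\bm{\varphi}^\star)$. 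For the pessimistic term, $\underline v_i^\delta=\hat v_i-b_i^\delta\ge d_i-2b_i^\delta$, so
\[
\underline{V}_i^\delta(\bm{\varphi}^\star)\ge V_i(\bm{\varphi}^\star)-2\,\mathbb{E}_{\mathbf{a}\sim\bm{\varphi}^\star}[b_i^\delta(\mathbf{a})].
\]
Subtracting the two bounds cancels $V_i(\bm{\varphi}^\star)$. What remains is bounded by twice the sum of the two bonus expectations, and replacing the specific $\varphi_i'$ by the maximum over $\Gamma$ gives the bracket bound. Maximizing over $i$ and adding $\epsilon_{\opt}$ yields the theorem.

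The main point requiring care is the reduction of the optimistic best response to a deterministic deviation. This is needed so that the bound involves only $\max_{\bm{\varphi}'\in\Gamma}$, and it follows from linearity of $\overline{V}_i^\delta$ in $\varphi_i$ given the product structure of $(\bm{\varphi}^\star_{-i},\phi)$. The other subtle point is that a single high-probability event must serve both terms and all agents simultaneously. This holds because \eqref{eq:bonus} is assumed uniformly over $\mathbf{a}$, and the argument is the same under either feedback model, since only \eqref{eq:bonus} is used.
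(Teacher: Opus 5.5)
Your proposal is correct and follows essentially the same route as the paper: you invoke Lemma~\ref{lemma:surrogate}, pass between optimistic/pessimistic and true values at a cost of $2\mathbb{E}[b_i^\delta]$ each, use linearity to reduce the optimistic best response to a deterministic deviation, and use Nash stability of $\bm{\varphi}^\star$ to cancel the true-value difference. The only difference is ordering: you evaluate the surrogate at $\bm{\varphi}^\star$ before applying the bonus bounds, whereas the paper applies them inside the $\min_{\bm{\varphi}}$ and only then plugs in $\bm{\varphi}^\star$, so your chain is slightly cleaner and lands directly on the stated factor $2$.
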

\begin{proof}
    (\textit{Sketch})
    In Appendix C, we first prove that \eqref{eq:bonus}-\eqref{eq:opt and pess estimates} imply that $V_i(\bm{\varphi})-\underline{V}_{i}^\delta(\bm{\varphi})$ and $\overline{V}_{i}^\delta(\bm{\varphi})-V_i(\bm{\varphi})$ are at most $2\mathbb{E}_{\mathbf{a} \sim \bm{\varphi}} [b_i^\delta(\mathbf{a})]$. As $\overline{V}_i^\delta(\bm{\varphi}_{-i},\phi_i)$ and $\underline{V}_i^\delta(\bm{\varphi})$ are both linear in each entry of $\bm{\varphi}$ and $\phi_i \in \Delta(\mathcal{A}_i)$ for any agent $i$, the maximum of $\max_{i \in \mathcal{N}} \left[\overline{V}_i^\delta(\bm{\varphi}_{-i},\phi_i) - \underline{V}_i^\delta(\bm{\varphi})\right]$ over all joint strategies $\bm{\phi} \in \prod_{i=1}^n \Delta(\mathcal{A}_i)$ is obtained at a vertex, i.e., a \textit{pure} joint strategy. Combining the above with \eqref{eq:gap}, Lemma \ref{lemma:surrogate} and the fact that $\bm{\varphi}^\star$ is an NS (possibly mixed) joint strategy: $\gap(\varphi^{\out}) \leq \gap(\bm{\varphi}^\star) +  2\max_{i \in \mathcal{N}} [\max_{\bm{\varphi}'\in\Gamma} \mathbb{E}_{\mathbf{a} \sim (\bm{\varphi}_{-i}^\star, \varphi'_i)} [b_i^\delta(\mathbf{a})]$ $+ \mathbb{E}_{\mathbf{a} \sim \bm{\varphi}^\star} [b_i^\delta(\mathbf{a})]] + \epsilon_{\opt} $. We then obtain the desired from $\gap(\bm{\varphi}^\star)=0$.
\end{proof}

\begin{remark}[{\textbf{Asymmetric Preferences}}]
    \label{remark:asymmetric}
    {Though we focus on symmetric preferences mainly to reduce the valuation space and simplify exposition, our results for {\normalfont mixed} strategies easily extend to {\normalfont asymmetric} preferences (e.g., under semi-bandit feedback, this only doubles the number of utility estimates per agent pair). Yet, extending our analysis for {\normalfont pure} strategies to {\normalfont asymmetric} preferences is generally neither theoretically justified nor tractable. Namely, symmetric POCF games always admit a {\normalfont pure} NS outcome (Lemma \ref{lemma:potential game}), but {\normalfont asymmetric} ones may not, and deciding existence is strongly NP-complete \citep{sung2010computational}.} 
\end{remark}

\begin{remark}[{\textbf{Bandit Algorithms}}]
    {While bandit algorithms are often associated with online exploration-exploitation, in our {\normalfont \textbf{offline}} setting bandit-based estimators are not used for exploration. Instead, they are standard in offline bandit learning for exploiting the feedback structure to handle uncertainty under limited information (e.g., \cite{jin2021pessimism}).}
\end{remark}

\section{{Semi-Bandit Feedback}}
\label{sec:semi-bandit}
For semi-bandit feedback settings, we herein derive a {\textit{\textbf{necessary}} and \textit{\textbf{sufficient}}} dataset coverage assumption (Section \ref{sec:Minimal Dataset Coverage Assumption}), under which we devise an offline learning algorithm with low duality gap and a sample complexity bound with \textit{\textbf{optimality}} guarantees up to logarithmic factors (Section \ref{sec:Algorithm 1 under Semi-Bandit Feedback}).

\subsection{The Coalition Size Coverage Assumption}
\label{sec:Minimal Dataset Coverage Assumption}
Under semi-bandit feedback, we obtain each agent's individual utility from interacting with every member of her chosen coalitions. This motivates us to require that the dataset covers the kinds of partitions that could arise from agents unilaterally deviating. Specifically, for any $\ell \in [k]$, if the $\ell$-th candidate coalition formed by some \textit{unilateral deviation} from a single NS strategy $\bm{\varphi}^\star$ has size $m \in [k]$, then the dataset should contain at least one joint action in which the $\ell$-th candidate coalition also has size $m$. The key intuition is {described in Observation \ref{obs:deviations from pure}:}
\begin{observation}
    \label{obs:deviations from pure}
    {For any joint action $\mathbf{a} \in \mathcal{A}$ sampled from some joint strategy $\bm{\varphi}$ and coalition $C\in\pi^{\mathbf{a}}$, consider a unilateral deviation of some agent $i$. If $i \in C$, then agent $i$ may either remain in or leave $C$ after she deviates. Otherwise, if $i \notin C$, then agent $i$ may either join $C$ or stay out after deviating. Thus, each unilateral deviation affects coalition sizes by either increasing or decreasing them by $1$, or leaving them unchanged. This allows us to reason about candidate coalitions and their sizes by considering only such deviations.}
\end{observation}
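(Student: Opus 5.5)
The plan is to verify Observation \ref{obs:deviations from pure} directly from the definition of the induced partition, $C_\ell^{\mathbf{a}} = \{j \in \mathcal{N} : \ell \in a_j\}$. I will compare the joint actions $\mathbf{a} = (\mathbf{a}_{-i}, a_i)$ and $\mathbf{a}' = (\mathbf{a}_{-i}, a_i')$, where agent $i$ deviates from $a_i$ to some $a_i' \in \mathcal{A}_i$. The deviation may be pure, or it may be a single realization of a mixed deviation $\phi_i$. In the mixed case I will fix the sampled actions, so the claim holds pointwise for every joint action in the support.

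Fix a candidate coalition index $\ell \in [k]$, with $C = C_\ell^{\mathbf{a}}$. The key identity is
\[
C_\ell^{\mathbf{a}'} \setminus \{i\} = \{j \neq i : \ell \in a_j\} = C_\ell^{\mathbf{a}} \setminus \{i\},
\]
which holds because the actions of every $j \neq i$ are unchanged. Hence the two coalitions can differ only in whether they contain $i$.

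I will then split into the two cases of the observation.
\begin{enumerate}
\item If $i \in C$, then $\ell \in a_i$. Either $\ell \in a_i'$, in which case $C_\ell^{\mathbf{a}'} = C$ and agent $i$ remains, or $\ell \notin a_i'$, in which case $C_\ell^{\mathbf{a}'} = C \setminus \{i\}$, agent $i$ leaves, and the size drops by exactly one.
\item If $i \notin C$, then either $\ell \in a_i'$, giving $C_\ell^{\mathbf{a}'} = C \cup \{i\}$ with the size increased by one, or $\ell \notin a_i'$, leaving $C$ unchanged.
\end{enumerate}
In all cases $\bigl||C_\ell^{\mathbf{a}'}| - |C_\ell^{\mathbf{a}}|\bigr| \le 1$ for every $\ell$.

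The argument applies simultaneously to all $\ell \in [k]$, including indices whose coalition is empty, since an empty coalition can only grow to size one. This holds even though $a_i$ and $a_i'$ may contain several indices, as in overlapping coalitions. The only subtlety, and it is a mild one, is remembering that one deviation can change several coalitions at once, each by at most one. The size-change property is stated per coalition, so this does not affect the conclusion.

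The final sentence of the observation follows from this. The sizes of candidate coalitions reachable by a unilateral deviation from any joint action in the support of $\bm{\varphi}^\star$ lie within $\pm 1$ of the sizes under that joint action. This justifies framing coverage in terms of coalition sizes.
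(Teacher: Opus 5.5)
Your proof is correct and follows the paper's own argument: the same case split on whether $\ell \in a_i$ and whether $\ell \in a_i'$, which the paper spells out in the appendix proof that the minimum coalition size coefficient is at most $3$. Your invariance identity $C_\ell^{\mathbf{a}'}\setminus\{i\} = C_\ell^{\mathbf{a}}\setminus\{i\}$ and the explicit fourth case ($\ell\notin a_i$, $\ell\notin a_i'$) make it slightly more complete than that three-case enumeration. You are also right that the $\pm 1$ window holds pointwise for each realized $\mathbf{a}_{-i}$: for mixed $\bm{\varphi}^\star_{-i}$, the deviation's coalition-size distribution can spread over all $n+1$ sizes, which is why the paper's mixed-strategy bound carries the factor $n+1$ rather than $3$.
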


Formally, for any joint strategy $\bm{\varphi}$, any $\ell \in [k]$ and any coalition size $\alpha\in[n]\cup\{0\}$, we define the \textit{\textbf{coalitional overall density}} of the $\ell$-th candidate coalition by $d_\ell^{\bm{\varphi}}(\alpha) = \sum_{\mathbf{a}\in\mathcal{A}: |C_\ell^{\mathbf{a}}|=\alpha} \bm{\varphi}(\mathbf{a})$, where $d_\ell^{\rho}(\alpha)$ is defined similarly for the exploration policy $\rho$. Our assumption is then formulated as:
\begin{assumption}[\textbf{Coalition Size Coverage}]
    \label{assump:unilateral deviation}
    There exists an NS joint strategy{\footnote{{As we focus on \textit{symmetric} POCF games, they always admit a pure NS strategy by Lemma \ref{lemma:potential game}, and thus the NS strategy $\bm{\varphi}^\star$ in Assumption \ref{assump:unilateral deviation} can be always chosen as \textit{pure}.}}} $\bm{\varphi}^\star$ such that, for any agent $i$, any $\ell \in [k]$ and any coalition size $\alpha\in[n]\cup\{0\}$, if there is a strategy $\phi_i \in \Delta(\mathcal{A}_i)$ with $d_\ell^{\bm{\varphi}_{-i}^\star,\phi_i}(\alpha)>0$, then $d_\ell^{\rho}(\alpha) > 0$.
\end{assumption}
\begin{remark}
    \label{remark:realistic}
    {There are realistic scenarios where Assumption \ref{assump:unilateral deviation} can be easily verified. For instance, it is {\normalfont always} satisfied by an exploration policy that samples joint actions uniformly at random, as such a policy covers all coalitions sizes. Datasets generated by this policy reflect early exploratory behavior, where agents lack prior knowledge.}
\end{remark}
We thus measure how well the dataset covers all coalition sizes through unilateral deviations from a joint strategy $\bm{\varphi}$ via the \textit{\textbf{coalition size coefficient}}, which is defined by:
\begin{equation}
    \label{eq:coalition coefficient}
    {c_{\size}^{\bm{\varphi}}} = \max_{i\in \mathcal{N}, \ell \in [k], \phi_i\in\Delta(\mathcal{A}_i), \alpha\in [n]\cup\{0\}:d_\ell^{\rho}(\alpha) > 0} \frac{d_\ell^{\bm{\varphi}_{-i},\phi_i}(\alpha)}{d_\ell^{\rho}(\alpha)}
\end{equation}

Next, we prove that {Assumption \ref{assump:unilateral deviation} is \textit{\textbf{necessary}}, i.e., no \textit{weaker} assumption} enables efficient learning of an approximate NS {mixed} strategy with a small duality gap, regardless of the dataset size. 
\begin{theorem}
    \label{thm:half gap}
    Let $\mathcal{G}$ be the class of all pairs $(G,\rho)$ consisting of a POCF game $G$ and an exploration policy $\rho$ satisfying Assumption \ref{assump:unilateral deviation}, except for at most one coalition size $\alpha\in[n]\cup\{0\}$. Then, for any algorithm $\alg$ with {\normalfont semi-bandit} feedback, there is $(G,\rho)\in \mathcal{G}$ such that any joint strategy $\bm{\varphi}$ produced by $\alg$ satisfies $\gap(\bm{\varphi})\geq\frac{1}{2}$ for the POCF game $G$, regardless of the dataset size.
\end{theorem}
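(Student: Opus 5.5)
The plan is a standard two-instance indistinguishability (Le Cam-type) argument. I will construct two POCF games $G_1,G_2$ and a single exploration policy $\rho$ such that both $(G_1,\rho)$ and $(G_2,\rho)$ lie in $\mathcal{G}$, and such that the dataset $\mathcal{S}$ has exactly the same distribution under both games for every $M$. Then any algorithm $\alg$ outputs the same (distribution over) joint strategies $\bm{\varphi}$ in both games. The construction will ensure that $\gap_{G_1}(\bm{\varphi})+\gap_{G_2}(\bm{\varphi})\geq 1$ for every $\bm{\varphi}$, so that $\max\{\gap_{G_1}(\bm{\varphi}),\gap_{G_2}(\bm{\varphi})\}\geq\frac12$, independently of the dataset size.

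For the construction, I take the smallest nontrivial instance: $n=2$ agents, $k=2$ candidate coalitions, and action spaces $\mathcal{A}_1=\{\{1\}\}$ (agent $1$ is fixed in coalition $1$) and $\mathcal{A}_2=\{\{1\},\{2\}\}$. Let $\rho$ put all its mass on the joint action $(\{1\},\{2\})$. Then coalition $1$ has size $1$ and coalition $2$ has size $1$, so coalition $1$ never has size $2$ under $\rho$. In both games, set $d^2_{1,2}$ arbitrarily (it is irrelevant, since agent $2$ is alone in coalition $2$). Set $d^1_{1,2}=+1$ in $G_1$ and $d^1_{1,2}=-1$ in $G_2$, with deterministic utilities. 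Since agent $1$ can only ever be in coalition $1$, the only coalition-size events that could reveal $d^1_{1,2}$ are those in which $|C^{\mathbf{a}}_1|=2$. These have zero $\rho$-mass, so under semi-bandit feedback every observed $v^{\ell,m}_{i,j}$ involves singleton coalitions and is empty, or identical across $G_1$ and $G_2$. Hence the data distributions coincide.

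Next, I will check membership in $\mathcal{G}$. In $G_1$ the NS strategy is $\bm{\varphi}^\star=(\{1\},\{1\})$. In $G_2$ it is $(\{1\},\{2\})$. In both cases, unilateral deviations of agent $2$ produce coalition-$1$ sizes in $\{1,2\}$ and coalition-$2$ sizes in $\{0,1\}$. Meanwhile, $\rho$ covers coalition-$1$ size $1$ and coalition-$2$ size $1$. The only violations are therefore coalition-$1$ size $2$ and coalition-$2$ size $0$. Since the definition of $\mathcal{G}$ allows at most one uncovered size, I must tune this. The fix I would try is to let $\rho$ also put mass on a joint action in which coalition $2$ is empty. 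To achieve this, I add a dummy third agent $3$ with $\mathcal{A}_3=\{\{2\}\}$, or alternatively I give agent $2$ an extra action such as $\{1,2\}$, with zero utilities wherever it is harmless. The aim is that every size except $|C_1|=2$ (in the relevant pair) is covered while the data remain uninformative about $d^1_{1,2}$. This bookkeeping is the main obstacle. Any extra covered action that places agents $1$ and $2$ together in coalition $1$ would leak the sign, so the padding must only touch coalition $2$ or dummy agents whose utilities are fixed and equal across games.

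Finally, I bound the gaps. Let $p=\varphi_2(\{1\})$ be the probability that agent $2$ joins coalition $1$. In $G_1$, agent $2$'s best response is to join coalition $1$, giving utility $1$, while her current utility is $p$; so $\gap_{G_1}\geq 1-p$. In $G_2$, her best response is to avoid coalition $1$, giving utility $0$, while her current utility is $-p$; so $\gap_{G_2}\geq p$. The sum is at least $1$, hence one of the two gaps is at least $\frac12$. If $\alg$ is randomized, I apply the same bound pointwise to each output and conclude that the expected gap is at least $\frac12$ in one of the two games, which yields the required $(G,\rho)\in\mathcal{G}$.
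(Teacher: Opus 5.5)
Your plan is the same two-instance indistinguishability argument the paper uses. The paper splits the gap via the probability $q$ of $|C_1|=5$ in a six-agent game; you split it via $p$. Your gap computation is actually cleaner, because with agent 1 frozen the gaps are exactly $1-p$ and $p$.

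The genuine gap is the step you flag yourself: as constructed, $(G_1,\rho)$ and $(G_2,\rho)$ are not in $\mathcal{G}$, and neither proposed patch repairs this. With $\mathcal{A}_2=\{\{1\},\{2\}\}$, agent 2 leaving coalition 2 is the same event as her joining coalition 1. So every joint action with $|C_1|=2$ has $|C_2|=0$, and $|C_2|=0$ occurs \emph{only} when agents 1 and 2 share coalition 1. Covering that size with $\rho$ would reveal the sign of $d^1_{1,2}$. Leaving it uncovered gives two exceptions, $(\ell,\alpha)=(1,2)$ and $(2,0)$, with two distinct values of $\alpha$. A dummy agent with $\mathcal{A}_3=\{\{2\}\}$ just moves the second exception to $(2,1)$, because $|C_2|=1$ again forces agent 2 into coalition 1. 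Adding $\{1,2\}$ while \emph{keeping} $\{1\}$ leaves the deviation to $\{1\}$, and hence $|C_2|=0$, among the unilateral deviations of every NS strategy.

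The missing idea is to make agent 2 a permanent member of coalition 2 by \emph{replacing} $\{1\}$ with $\{1,2\}$. Take $\mathcal{A}_1=\{\{1\}\}$, $\mathcal{A}_2=\{\{2\},\{1,2\}\}$, and $\rho$ the point mass on $(\{1\},\{2\})$. Then $C_2=\{2\}$ in every joint action, and the NS strategies are $(\{1\},\{1,2\})$ in $G_1$ and $(\{1\},\{2\})$ in $G_2$. Their unilateral deviations produce only $|C_1|\in\{1,2\}$ and $|C_2|=1$, and $\rho$ covers all of these except $(\ell,\alpha)=(1,2)$. So both pairs lie in $\mathcal{G}$ under either reading of ``one coalition size''. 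The sampled joint action puts no two agents in a common coalition, so the semi-bandit feedback is empty in both games. With $p=\varphi_2(\{1,2\})$ you get $V_2(\bm{\varphi})=p\,d^1_{1,2}$, which gives the same gaps $1-p$ and $p$ as before.

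The paper's own six-agent construction has exactly the coupling you ran into. In its $G_2$, the uncovered $|C_1|=6$ comes with an uncovered $|C_2|=0$, which the paper does not count, so it implicitly uses a looser notion of one exceptional size. Under the literal reading you adopted, the replacement above is what completes the proof.
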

\begin{proof}
    (\textit{Sketch})
    In Appendix D, we construct two symmetric POCF games $G_1$ and $G_2$ with $6$ agents whose utility distributions are deterministic, where the number of coalitions is at most $2$ (i.e., $k=2$), the action space of each agent is $\{\{1\},\{2\}\}$ and both games share the same exploration policy $\rho$. In the first game $G_1$, there are two types of pure NS joint strategies, which consist of all strategies where either only $2$ agents join the first candidate coalition or the grand coalition is formed within that coalition (i.e., all $6$ agents join the first candidate coalition). In the second game $G_2$, there is only one type of pure NS strategies, comprising all strategies where only $5$ agents join the first candidate coalition. For both games, we construct the same exploration policy $\rho$, which picks a joint action uniformly at random from the set of all joint actions where the first candidate coalition consists of exactly $2$, $4$ or $5$ agents, whereas all other joint actions are assigned zero probability under $\rho$. Then, we prove that the pairs $(G_1,\rho)$ and $(G_2,\rho)$ both belong to the class $\mathcal{G}$ stated in Theorem \ref{thm:half gap}. Afterwards, we show that any algorithm $\alg$ \textit{cannot} distinguish between $(G_1,\rho),(G_2,\rho)$ as both games appear behaviorally the same from the perspective of the data available under $\rho$, regardless of the size of the dataset obtained by $\alg$. Letting $q$ be the probability that a joint action $\mathbf{a}$ inducing a coalition of size $5$ is sampled from the joint strategy $\bm{\varphi}$ produced by $\alg$, we prove that $\gap(\bm{\varphi})\geq q$ for game $G_1$ and $\gap(\bm{\varphi})\geq 1-q$ for game $G_2$. As either $q\geq \frac{1}{2}$ or $1-q\geq \frac{1}{2}$, the desired follows.
\end{proof}

\subsection{Algorithm \ref{alg:surrogate min} under Semi-Bandit Feedback}
\label{sec:Algorithm 1 under Semi-Bandit Feedback}
Next, {we prove that Assumption \ref{assump:unilateral deviation} is \textit{\textbf{sufficient}} for learning approximate NS strategies. Specifically,} we derive the utility estimators and their exploration bonuses for which Algorithm \ref{alg:surrogate min} has low duality gap and sample complexity under semi-bandit feedback. Given an offline dataset $\mathcal{S}=\{(\mathbf{a}^m, \mathbf{v}^m)\}_{m=1}^M$, for any $m\in[M]$ and each agent $i$, recall that $\mathbf{v}^m$ contains agent $i$'s realized utility $v_{i,j}^{\ell,m}$ for each agent $i \neq j \in C_\ell^{\mathbf{a}^m}$, where $C_\ell^{\mathbf{a}^m}$ is the $\ell$-th candidate coalition formed by $\mathbf{a}^m$ for some $\ell \in a_i^m$. Hence, we use those feedbacks to estimate each agent's mean utility from each other agent $i \neq j \in \mathcal{N}$ by the empirical average of the utilities she received (if any) from agent $j$ across the entire dataset. Namely, let $N_{i,j}^\ell = \sum_{m=1}^M \mathds{1}\{\ell \in a_i^m\cap a_j^m\}$ be the number of samples any pair of agents $i,j$ joined the $\ell$-th candidate coalition, where $\mathds{1}\{\ell \in a_i^m\cap a_j^m\}$ equals $1$ if $\ell \in a_i^m\cap a_j^m$ and $0$ otherwise. Thus, agent $i$'s empirical mean utility from another agent $i \neq j \in \mathcal{N}$ within the $\ell$-th candidate coalition is $\hat{v}_i^\ell(j) = \frac{\sum_{m=1}^M v_{i,j}^{\ell,m} \mathds{1}\{\ell \in a_i^m\cap a_j^m\}}{N_{i,j}^\ell \vee 1}$. Afterwards, we estimate agent $i$'s utility from the partition $\pi^{\mathbf{a}} = (C_\ell^{\mathbf{a}})_{\ell \in [k]}$ induced by any possible joint action $\mathbf{a}$ via:
\begin{equation}
    \label{eq:empirical mean utility}
         \hat{v}_i(\mathbf{a}) = \sum_{\ell\in a_i} \sum_{i \neq j \in C_\ell^{\mathbf{a}}} \hat{v}_i^\ell(j) 
\end{equation}
To explore a joint action $\mathbf{a}$ more often if it is either promising or not explored enough, we construct agent $i$'s exploration bonus such that it decreases with the increase in $N_{i,j}^\ell$ {(here, $\delta \in (0,1]$ is a confidence level capturing the degree of certainty)}:
\begin{equation}
    \label{eq:bonus semi}
         b_i^\delta(\mathbf{a}) = \sum_{\ell\in a_i} \sum_{i \neq j \in C_\ell^{\mathbf{a}}} \sqrt{\frac{2  \log(4(n+1)k/\delta)}{N_{i,j}^\ell\vee 1}}
\end{equation}

\subsubsection{{Learning Approximate NS \textbf{Mixed} Strategies}}
\label{sec:Learning Approximate NS Mixed Strategies}
As we prove in Theorem \ref{thm:semi-bandit}, we carefully designed \eqref{eq:empirical mean utility} and \eqref{eq:bonus semi} based on Hoeffding's inequality to ensure that Algorithm \ref{alg:surrogate min} with the above utility estimators and exploration bonuses has a low duality gap{, establishing our algorithm's approximation to Nash stability under \textit{mixed} strategies}. 
\begin{theorem}
    \label{thm:semi-bandit}
    Under {\normalfont semi-bandit} feedback and Assumption \ref{assump:unilateral deviation}, for any $\delta \in (0,1]$, any dataset size $M \in \mathbb{N}$ and any NS joint strategy $\bm{\varphi}^\star$, the joint {mixed} strategy $\bm{\varphi}^{\out}$ formed by Algorithm \ref{alg:surrogate min} with utility estimators and exploration bonuses as in \eqref{eq:empirical mean utility}-\eqref{eq:bonus semi} satisfies $\gap(\varphi^{\out}) \leq \frac{f^\delta(n,k,\bm{\varphi}^\star)}{\sqrt{M}}+\epsilon_{\opt}$ with probability at least $1-\delta$, where:
    \begin{equation}
            \label{eq:func}
             f^\delta(n,k,\bm{\varphi}^\star)=8kn(n+1) {c_{\size}^{\bm{\varphi}^\star}}\log\left(\frac{4(n+1)k}{\delta}\right) {\sqrt{2(n-1)}} \left[\frac{n-1}{2} +\sqrt{\frac{n}{2}}\right]
    \end{equation}
    {and ${c_{\size}^{\bm{\varphi}^\star}}$ is the coalition size coefficient in \eqref{eq:coalition coefficient}.}
\end{theorem}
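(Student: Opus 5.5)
The argument reduces everything to Theorem \ref{thm:general duality gap}. That result already gives
$$\gap(\varphi^{\out}) \leq 2\max_i\Big[\max_{\bm{\varphi}'\in\Gamma}\mathbb{E}_{(\bm{\varphi}^\star_{-i},\varphi'_i)}[b_i^\delta]+\mathbb{E}_{\bm{\varphi}^\star}[b_i^\delta]\Big]+\epsilon_{\opt},$$
so two things remain. First, \eqref{eq:bonus semi} must be a valid exploration bonus in the sense of \eqref{eq:bonus}. Second, the expected bonus under any unilateral deviation $(\bm{\varphi}^\star_{-i},\phi_i)$, which includes $\bm{\varphi}^\star$ itself, must be at most of order $f^\delta/(8\sqrt{M})$. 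Doubling and summing the two terms then produces the constant in \eqref{eq:func}.

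\textbf{Validity of the bonus.} For each pair $i\neq j$ and coalition $\ell$, condition on the sample indices where $\ell\in a_i^m\cap a_j^m$. The corresponding $v^{\ell,m}_{i,j}$ are independent draws in $[-1,1]$ with mean $d^\ell_{i,j}$. Hoeffding's inequality then gives $|\hat v_i^\ell(j)-d^\ell_{i,j}|\le\sqrt{2\log(4(n+1)k/\delta)/(N^\ell_{i,j}\vee1)}$ with high probability. When $N^\ell_{i,j}=0$ the bound holds trivially, since the right-hand side is at least $1$ and $\hat v_i^\ell(j)=0$. A union bound is needed over pairs and $\ell$; the $\log(4(n+1)k/\delta)$ scaling suggests the author exploits symmetry and groups events by $\ell$ and coalition size. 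I would first do a crude union over $n^2k$ events and then adjust the constants. Summing the per-pair bounds over $\ell\in a_i$ and $j\in C^{\mathbf{a}}_\ell$ gives \eqref{eq:bonus}.

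\textbf{Bounding the expected bonus.} This is the core step and the main obstacle, since $N^\ell_{i,j}$ depends on specific pairs while Assumption \ref{assump:unilateral deviation} only controls coalition sizes.

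The plan is to relate co-membership counts to sizes. Under $\rho$, the expected number of pairs in coalition $\ell$ is $\sum_\alpha d^\rho_\ell(\alpha)\,\alpha(\alpha-1)/2$. A multiplicative Chernoff bound, absorbed into the same log factor, gives $N^\ell_{i,j}\gtrsim M\,\Pr_\rho(\ell\in a_i\cap a_j)$ for each pair. Next, write
$$\mathbb{E}_{(\bm{\varphi}^\star_{-i},\phi_i)}[b_i^\delta]\le\sum_\ell\sum_\alpha d^{\bm{\varphi}^\star_{-i},\phi_i}_\ell(\alpha)\sum_{j}\sqrt{\cdot}.$$
Replace $d^{\bm{\varphi}^\star_{-i},\phi_i}_\ell(\alpha)$ by $c_{\size}^{\bm{\varphi}^\star}d^\rho_\ell(\alpha)$; Assumption \ref{assump:unilateral deviation} guarantees that $d^\rho_\ell(\alpha)>0$ whenever the left side is positive. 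Then apply Cauchy--Schwarz over the at most $n-1$ partners $j$. This produces the factor $\sqrt{2(n-1)}$ together with a sum of $1/\sqrt{N}$ terms, which is bounded by $\sqrt{1/M}$ times size-dependent factors. Bounding these factors via $\alpha\le n$ should give the bracket $\tfrac{n-1}{2}+\sqrt{n/2}$ and the prefactors $kn(n+1)$ coming from summing over $\ell\in[k]$ and $\alpha\in[n]\cup\{0\}$. The pair-versus-size mismatch is exactly where I expect the difficulty.

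To handle it, I would first try averaging over the agents inside a coalition of size $\alpha$, arguing that the sum over members of $1/\sqrt{N}$ is controlled by total pair counts at size $\alpha$. If that fails, the fallback is a cruder per-size bound that still recovers the $\sqrt{1/M}$ rate, at the cost of constants that may differ from \eqref{eq:func}.
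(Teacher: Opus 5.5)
Your reduction to Theorem~\ref{thm:general duality gap} matches the paper's route. So do your Hoeffding argument for \eqref{eq:bonus semi}, the binomial lower bound on $N^\ell_{i,j}$ in terms of $d^\rho_\ell(i,j):=\Pr_\rho(\ell\in a_i\cap a_j)$, and the swap of $d^{\bm{\varphi}^\star_{-i},\phi_i}_\ell(\alpha)$ for $c_{\size}^{\bm{\varphi}^\star}d^\rho_\ell(\alpha)$. Your handling of the $N^\ell_{i,j}=0$ case is correct, and it is exactly those terms that cause trouble below.

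\textbf{The gap.} The genuine gap is the step you flag and leave open: turning the pair-specific terms $1/\sqrt{N^\ell_{i,j}\vee 1}$ into the size-level masses $d^\rho_\ell(\alpha)$ that Assumption~\ref{assump:unilateral deviation} controls. Neither averaging nor a cruder per-size bound can do this. Size coverage does not force $d^\rho_\ell(i,j)>0$ for the pairs that a deviation brings together, and the aggregate $\sum_\alpha d^\rho_\ell(\alpha)\alpha(\alpha-1)/2$ says nothing about an individual pair.

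\textbf{A concrete instance.} Take $n=4$, $k=2$, $\mathcal{A}_i=\{\{1\},\{2\}\}$, $d^1_{1,2}=d^2_{3,4}=1$, and all other means $-1$. Then $\bm{\varphi}^\star$ with $(C_1,C_2)=(\{1,2\},\{3,4\})$ is NS. Let $\rho$ be uniform over the joint actions with $(C_1,C_2)$ equal to $(\{1,2\},\{3,4\})$, $(\{1,2,4\},\{3\})$ and $(\{1\},\{2,3,4\})$. Deviations from $\bm{\varphi}^\star$ only create sizes $1,2,3$, all covered, so Assumption~\ref{assump:unilateral deviation} holds with $c_{\size}^{\bm{\varphi}^\star}=3$. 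Yet agent $3$ never plays $\{1\}$ under $\rho$, so $N^1_{3,1}=N^1_{3,2}=0$ with probability one. After agent $3$ deviates to $\{1\}$, her bonus \eqref{eq:bonus semi} therefore equals $2\sqrt{2\log(4(n+1)k/\delta)}$ for every $M$. The expected bonus that Theorem~\ref{thm:general duality gap} requires you to bound does not decay. Your claim that the fallback ``still recovers the $\sqrt{1/M}$ rate'' is false, and no $f^\delta(n,k,\bm{\varphi}^\star)/\sqrt{M}$ bound follows on this route from Assumption~\ref{assump:unilateral deviation} alone.

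\textbf{The paper's appendix does not close this either.} Reverse-engineering the bracket $\tfrac{n-1}{2}+\sqrt{n/2}$ from it will not help, for these reasons:
\begin{itemize}
\item It restricts the bonus to $j$ with $N^\ell_{i,j}\ge 1$. Its validity proof drops $-\sum_{N^\ell_{i,j}=0}d^\ell_{i,j}$, which requires that sum to be nonnegative, and then turns a one-sided bound into a two-sided one.
\item It uses $d^\rho_\ell(\alpha)\le\sum_{i'\neq j'}d^\rho_\ell(i',j')$, which can fail for $\alpha\in\{0,1\}$.
\item In the branch $d^\rho_\ell(i',j')>d^\rho_\ell(i,j)$, it replaces $1/\sqrt{d^\rho_\ell(i,j)}$ by the smaller $1/\sqrt{d^\rho_\ell(i',j')}$. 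This is exactly where the blow-up from a rarely co-covered pair disappears.
\item It uses $\sum_{i\neq j}d^\rho_\ell(i,j)\le 1$, but this sum is the expected number of pairs in $C_\ell$ and need not be at most $1$.
\end{itemize}

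\textbf{What your plan does prove.} It gives the analogous bound under a \emph{pairwise} coverage condition, e.g.\ $\Pr_\psi(\ell\in a_i\cap a_j)\le C\,d^\rho_\ell(i,j)$ for every $\psi=(\bm{\varphi}^\star_{-i},\phi_i)$.
\begin{itemize}
\item Keep the indicator: $\mathbb{E}_\psi[b^\delta_i]=\sum_{\ell,j}\Pr_\psi(\ell\in a_i\cap a_j)\sqrt{2\log(4(n+1)k/\delta)/(N^\ell_{i,j}\vee 1)}$.
\item Use $\Pr_\psi(\cdot)/\sqrt{d^\rho_\ell(i,j)}\le\sqrt{C\Pr_\psi(\cdot)}$.
\item Apply Cauchy--Schwarz with $\sum_{\ell,j}\Pr_\psi(\ell\in a_i\cap a_j)\le k(n-1)$.
\end{itemize}
This yields $O(\log(\cdot)\,k(n-1)\sqrt{C/M})$.

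\textbf{The log factor.} A plain union bound over the $\binom{n}{2}k$ pair--coalition events needs $\log(n(n-1)k/\delta)$ rather than $\log(4(n+1)k/\delta)$. The constant in \eqref{eq:func} would change in any case, and no grouping by coalition size avoids this.
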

\begin{proof}
    (\textit{Sketch})
    In Appendix E, we first use Hoeffding’s inequality to prove that $b_i$ as in \eqref{eq:bonus semi} is an exploration bonus for the utility estimator $\hat{v}_i$ in \eqref{eq:empirical mean utility} for any agent $i$. After bounding $\mathbb{E}_{\mathbf{a}\sim (\bm{\varphi}_{-i}^\star,\varphi_i')} [b_i^\delta(\mathbf{a})]$ and $ \mathbb{E}_{\mathbf{a}\sim \bm{\varphi}^\star} [b_i^\delta(\mathbf{a})]$ for any strategy $\varphi_i \in \Delta(\mathcal{A}_i)$, we obtain the desired result by Theorem \ref{thm:general duality gap}.
\end{proof}

For certain values of $\varepsilon> 0$, we now conclude that our algorithm's sample complexity bound for finding an $\varepsilon$-NS strategy \textit{\textbf{optimally}} depends on $\varepsilon$ up to logarithmic factors.
\begin{corollary}
    \label{coro:optimal}
    Under {\normalfont semi-bandit} feedback and Assumption \ref{assump:unilateral deviation}, for any $\delta \in (0,1]$, any $\varepsilon > \epsilon_{\opt}$ with $\epsilon_{\opt} = o(\varepsilon)$ and any NS joint {{\normalfont mixed}} strategy $\bm{\varphi}^\star$, Algorithm \ref{alg:surrogate min} with utility estimators and exploration bonuses as in \eqref{eq:empirical mean utility}-\eqref{eq:bonus semi} has a sample complexity bound that {\normalfont \textbf{optimally}} depends on $\varepsilon$ (up to logarithmic factors): for a dataset of size $M \geq \frac{f^\delta(n,k,\bm{\varphi}^\star)^2}{{(\varepsilon -\epsilon_{\opt})^2}}$ (see \eqref{eq:func}), $\varphi^{\out}$ is $\varepsilon$-NS (i.e., $\gap(\varphi^{\out}) \leq \varepsilon$) with probability at least $1-\delta$.
\end{corollary}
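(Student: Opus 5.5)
The corollary has two parts: the sufficiency statement (a dataset of the stated size gives an $\varepsilon$-NS output) and the optimality of the $\varepsilon$-dependence. I would handle them separately.

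\emph{Sufficiency.} I would invoke Theorem \ref{thm:semi-bandit} directly. Fix $\delta$ and the NS strategy $\bm{\varphi}^\star$ from Assumption \ref{assump:unilateral deviation}. With probability at least $1-\delta$,
\[
\gap(\varphi^{\out}) \leq \frac{f^\delta(n,k,\bm{\varphi}^\star)}{\sqrt{M}}+\epsilon_{\opt}.
\]
Since $\varepsilon>\epsilon_{\opt}$, the quantity $\varepsilon-\epsilon_{\opt}$ is positive. If $M \geq f^\delta(n,k,\bm{\varphi}^\star)^2/(\varepsilon-\epsilon_{\opt})^2$, then $\sqrt{M}\geq f^\delta/(\varepsilon-\epsilon_{\opt})$, so $f^\delta/\sqrt{M}\leq \varepsilon-\epsilon_{\opt}$. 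Substituting gives $\gap(\varphi^{\out})\leq\varepsilon$ on the same event. This part is routine.

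\emph{Order of the bound in $\varepsilon$.} Because $\epsilon_{\opt}=o(\varepsilon)$, we have $(\varepsilon-\epsilon_{\opt})^2=\Theta(\varepsilon^2)$. The $\delta$-dependence of $f^\delta$ enters only through $\log(4(n+1)k/\delta)$. Hence the sample complexity is $\tilde{O}(1/\varepsilon^2)$ in $\varepsilon$, up to logarithmic factors.

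\emph{Optimality.} I need a matching $\Omega(1/\varepsilon^2)$ lower bound for some instance satisfying Assumption \ref{assump:unilateral deviation} under semi-bandit feedback. I would build a two-point construction.
\begin{itemize}
\item Take two agents, $k=2$, and action spaces $\{\{1\},\{2\}\}$. Let $\rho$ be uniform, so the assumption holds by Remark \ref{remark:realistic}.
\item In coalition $1$, let the mutual utility be Bernoulli on $\{-1,1\}$ with mean $+c\varepsilon$ in game $G_+$ and $-c\varepsilon$ in game $G_-$. Set all utilities in coalition $2$ to zero.
\item In $G_+$ the unique kind of NS outcome puts both agents together in coalition $1$ with high probability. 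In $G_-$ they avoid being together there.
\end{itemize}
I would then show that any joint strategy that is $\varepsilon$-NS for both games is impossible. A strategy's gap in one game or the other is at least a constant times $\varepsilon$, depending on the probability $q$ of co-location in coalition $1$. This mirrors the $q$ versus $1-q$ argument in Theorem \ref{thm:half gap}, with the gap scaled by $\varepsilon$.

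Distinguishing the two games from $M$ samples requires $M=\Omega(1/\varepsilon^2)$. I would show this with Le Cam's method: the KL divergence between the two data distributions is $O(M\varepsilon^2)$, and Pinsker's inequality then bounds the success probability. This yields the matching lower bound.

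\emph{Main obstacle.} The hard step is the gap calibration, where pure NS strategies can be degenerate. With two agents, separating them across coalitions could give zero gap in both games. To prevent this, I would force interaction: give one agent the action space $\{\{1\}\}$ only, so the other agent's choice alone determines the gap, which then equals $c\varepsilon\cdot(1-q)$ or $c\varepsilon\cdot q$. Choosing $c=4$ makes one of the two gaps exceed $\varepsilon$.
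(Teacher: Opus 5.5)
\textbf{Verdict: correct, with a different optimality argument.} Your sufficiency argument is the paper's: plug $M \geq f^\delta(n,k,\bm{\varphi}^\star)^2/(\varepsilon-\epsilon_{\opt})^2$ into Theorem~\ref{thm:semi-bandit} and rearrange. You do not need to restrict to the $\bm{\varphi}^\star$ supplied by Assumption~\ref{assump:unilateral deviation}, since Theorem~\ref{thm:semi-bandit} is stated for every NS $\bm{\varphi}^\star$.

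\textbf{How your optimality argument differs.} The paper builds no hard instance. It asserts that the $1/\varepsilon^2$ dependence is optimal by citing $\Omega(1/\varepsilon^2)$ lower bounds from stochastic optimization and self-play in Markov games \cite{hassani2020stochastic,bai20provable}. It never argues that those bounds carry over to offline POCF games under semi-bandit feedback and Assumption~\ref{assump:unilateral deviation}. Your two-point construction makes that transfer explicit inside the model:
\begin{itemize}
\item With agent $A$ restricted to $\{\{1\}\}$ and agent $B$ playing $\{1\}$ with probability $q$, the gaps are $4\varepsilon(1-q)$ in $G_+$ and $4\varepsilon q$ in $G_-$. Since $\max\{4\varepsilon(1-q),4\varepsilon q\}\geq 2\varepsilon>\varepsilon$, no joint strategy is $\varepsilon$-NS in both games.
\item Uniform $\rho$ covers every joint action, so Assumption~\ref{assump:unilateral deviation} holds in both games.
\item Only the roughly $M/2$ co-located samples carry information, so the KL divergence between the two data distributions is of order $M\varepsilon^2$.
\item Le Cam's two-point method with Pinsker's inequality (or Bretagnolle--Huber, to recover a $\log(1/\delta)$ factor) then forces $M=\Omega(1/\varepsilon^2)$ for any algorithm with failure probability $\delta<1/2$.
\end{itemize}
Your route costs some extra work but gives a self-contained minimax lower bound for exactly the class the corollary concerns. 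The paper's route is a one-line citation whose applicability is left implicit.

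\textbf{Minor slips, none fatal.} In the unrestricted two-agent game, $G_+$ does not have a unique kind of NS outcome. Your degeneracy diagnosis also points at the wrong joint strategy. ``Both agents in coalition $2$'' has zero gap in both games: utilities there are $0$, and moving alone to coalition $1$ still yields $0$. The separated profile has gap $c\varepsilon$ in $G_+$. Your restriction of one agent to $\{\{1\}\}$ removes this degeneracy anyway. You also need $4\varepsilon\leq 1$ for the Bernoulli means on $\{-1,1\}$ to be valid. This is harmless, since the claim concerns small $\varepsilon$.
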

\begin{proof}
    {(\textit{Sketch})}
    In Appendix E.2, the desired follows from Theorem \ref{thm:semi-bandit}{; our \textbf{\textit{optimality}} guarantees with respect to $\varepsilon$ is by \cite{hassani2020stochastic,bai20provable}.}
\end{proof}

\begin{remark}
    \label{remark:minimum value coefficient}
    {The scaling with $n$ and $k$ in Theorem \ref{thm:semi-bandit} and Corollary~\ref{coro:optimal} is {\normalfont inevitable}: under Assumption \ref{assump:unilateral deviation}, to cover all unilateral deviations of $n$ agents across $k$ candidate coalitions, the dataset must cover $\Theta(nk)$ distinct joint actions. Further, the dependence on $c_{\size}^{\bm{\varphi}^\star}$ is sensible, as its minimum value is at most $3$ (see Appendix E.1). Thus, an exploration policy satisfying the {\normalfont necessary} Assumption \ref{assump:unilateral deviation} with a small ${c_{\size}^{\bm{\varphi}^\star}}$ always exists, but it may be hard to find. Empirically, our algorithm still approaches a low approximation to NS for datasets generated by a uniformly random exploration policy (see Section \ref{sec:EMPIRICAL EVALUATIONS}).} 
\end{remark}

\subsubsection{{Learning Approximate NS \textbf{Pure} Strategies}}
\label{sec:Learning Approximate NS Pure Strategies}
{As we focus on \textit{symmetric} POCF games, they always admit a pure NS strategy by Lemma \ref{lemma:potential game}, which legitimates learning an approximate NS \textit{pure} strategy, opposed to \textit{asymmetric} preferences (see Remark \ref{remark:asymmetric}). Hence, we can adapt Algorithm \ref{alg:surrogate min} to produce a pure strategy. Indeed, let $\Gamma_i \subset \Delta(\mathcal{A}_i)$ be the set of agent $i$'s \textit{deterministic} strategies, i.e., each deterministic strategy $\varphi_i \in \Gamma_i$ corresponds to exactly one \textit{pure} strategy $a_i \in \mathcal{A}_i$, such that $\varphi_i(a_i)=1$ for any agent $i$ and $\varphi_i(a_i')=0$ for any other pure strategy $a_i \neq a_i' \in \mathcal{A}_i$. Thus, instead of approximately solving \eqref{eq:approx NS} over \textit{mixed} strategies, Algorithm \ref{alg:surrogate min} can be modified to find a joint \textit{deterministic} strategy $\bm{\varphi}^{\out}\in \Gamma := \prod_{i=1}^n \Gamma_i$ that approximately solves the minimization problem $\min_{\bm{\varphi}\in\Gamma} \max_{i\in \mathcal{N}} [\overline{V}_i^{\star,\delta}(\bm{\varphi}_{-i})-\underline{V}_{i}^\delta(\bm{\varphi})]$. See Algorithm 2 in Appendix E.3 for a pseudo-code of the resulting algorithm.}

{The approximation to Nash stability and the sample complexity of the modified Algorithm \ref{alg:surrogate min} for \textit{pure} strategies follow from arguments similar to Theorem \ref{thm:semi-bandit} and Corollary \ref{coro:optimal-semi}, and are thus deferred to Appendices E.3-E.4. Essentially, the factor $n+1$ in \eqref{eq:func} is replaced by the constant $3$, reflecting that deviations only change each coalition size by at most one for \textit{pure} strategies (Observation \ref{obs:deviations from pure}). Intuitively, $\mathbb{E}_{\mathbf{a}\sim (\bm{\varphi}_{-i}^\star,\varphi_i')} [b_i^\delta(\mathbf{a})]$ and $ \mathbb{E}_{\mathbf{a}\sim \bm{\varphi}^\star} [b_i^\delta(\mathbf{a})]$ can be bounded by a sum over at most $3$ coalition sizes for \textit{pure} strategies, but over all possible $n+1$ coalition sizes for \textit{mixed} strategies. In both cases, each summand is bounded by the same quantity, yielding the factors $n+1$ and $3$ for mixed and pure strategies, respectively.}

\section{{Bandit Feedback}}
\label{sec:bandit}
Under bandit feedback, we prove that Assumption \ref{assump:unilateral deviation} is insufficient. Intuitively, as single-agent utilities are unobservable in this setting, we \textit{cannot} estimate an agent’s mean utility from any other agent. This may prevent us from accurately estimating utilities obtained from unilateral deviations, as required by Assumption \ref{assump:unilateral deviation}. Formally:
\begin{theorem}
    \label{thm:1/8 gap}
    Let $\mathcal{G}'$ be the class of all pairs $(G,\rho)$ consisting of a POCF game $G$ and an exploration policy $\rho$ satisfying Assumption \ref{assump:unilateral deviation}. Then, for any algorithm $\alg$ with {\normalfont bandit} feedback, there is $(G,\rho)\in \mathcal{G}$ such that any joint strategy $\bm{\varphi}$ produced by $\alg$ satisfies $\gap(\bm{\varphi})\geq\frac{1}{20}$ for the POCF game $G$, regardless of the dataset size.
\end{theorem}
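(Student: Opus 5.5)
We prove Theorem \ref{thm:1/8 gap} by a two-game indistinguishability argument, mirroring Theorem \ref{thm:half gap}. We build two symmetric POCF games $G_1,G_2$ with deterministic utility distributions and a common exploration policy $\rho$. The requirements are: (i) $(G_1,\rho)$ and $(G_2,\rho)$ both satisfy Assumption \ref{assump:unilateral deviation}; (ii) for every joint action $\mathbf{a}$ in the support of $\rho$, every agent's coalition-level utility $v_i(\mathbf{a})$ coincides in $G_1$ and $G_2$; (iii) the (approximately) NS strategies of the two games are far apart. Because the feedback is deterministic and identical on $\mathrm{supp}(\rho)$, the bandit datasets have the same distribution under both games for every $M$. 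Hence any $\alg$ outputs the same (possibly randomized) joint strategy $\bm{\varphi}$ in both games, regardless of the dataset size.

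The key point separating bandit from semi-bandit feedback is that sums of pairwise utilities can agree while the individual terms $d_{i,j}^\ell$ differ. Concretely, we take a small instance, say $n=3$ or $4$ agents, $k=2$, and $\mathcal{A}_i=\{\{1\},\{2\}\}$. We let $\rho$ be uniform over joint actions in which coalition $1$ has every size in $\{0,\dots,n\}$, so that all coalition sizes are covered and Assumption \ref{assump:unilateral deviation} holds trivially (as in Remark \ref{remark:realistic}). The support is chosen to be only a \emph{restricted} family of joint actions, e.g.\ for each size a single fixed membership pattern. We then choose the pairwise values in $[-1,1]$ so that the feedback on this family agrees across the two games, while the pairwise values differ. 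For instance, agent $1$ values agents $2$ and $3$ as $(x,-x)$ in $G_1$ and $(-x,x)$ in $G_2$, and on $\mathrm{supp}(\rho)$ agents $2$ and $3$ are always together or always apart from agent $1$, so only the sum $0$ is ever observed. Since utilities enter the symmetric sums of other agents too, we also check agents $2$ and $3$'s own sums, and add compensating terms if needed.

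Next we compute the gap in each game. In $G_1$, the Nash stable outcomes place agent $1$ with agent $2$ and apart from $3$; in $G_2$ they place agent $1$ with agent $3$ and apart from $2$. Let $q$ be the probability under the output $\bm{\varphi}$ of a configuration favorable to $G_1$. We lower bound $\gap_{G_1}(\bm{\varphi})\ge c\,(1-q)$ and $\gap_{G_2}(\bm{\varphi})\ge c\,q$ by exhibiting an explicit unilateral deviation, typically agent $1$ switching coalition, that gains a constant in expectation. It follows that $\max\{\gap_{G_1},\gap_{G_2}\}\ge c/2$. The constants are then tuned with $x$ and the mixing so that $c/2\ge 1/20$. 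The small constant $1/20$ reflects losses from averaging over agents $2$ and $3$'s own randomization.

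The main obstacle is step (iii) together with (ii). With mixed outputs the relevant quantity is a product distribution, and the best-response gain of agent $1$ depends on the marginals of agents $2$ and $3$ jointly. We must therefore show that \emph{every} product strategy incurs a constant gap in one of the games, including strategies in which agents $2$ and $3$ randomize to hedge. I would handle this by a case split on the marginal probabilities $p_2,p_3$ of agents $2$ and $3$ joining coalition $1$, and on agent $1$'s probability $p_1$. The argument is as follows. If $p_2$ and $p_3$ are close, agent $1$'s deviation gain is small. In that case we instead need agent $2$ or agent $3$ to have a profitable deviation; for this we give them a mild asymmetric incentive that is identical in both games and is invisible to the identification issue. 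Optimizing the resulting piecewise bounds over $(p_1,p_2,p_3)\in[0,1]^3$ is expected to yield the constant $\frac{1}{20}$.
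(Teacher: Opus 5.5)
There is a genuine gap. The proposal never completes a construction, and the instance you sketch cannot satisfy (ii) nontrivially. With $k=2$, $\mathcal{A}_i=\{\{1\},\{2\}\}$ and fixed symmetric means, each agent's bandit feedback is the sum of $d^\ell$ over her coalition. So (ii) says exactly that $\Delta^\ell:=d^\ell(G_1)-d^\ell(G_2)$ (symmetric, zero diagonal) has zero row sums on every coalition appearing in $\mathrm{supp}(\rho)$.

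\begin{itemize}
\item If coalition $1$ takes every size, the size-$n$ profile gives zero row sums of $\Delta^1$ on all of $\mathcal{N}$.
\item For $n=3$ this already forces $\Delta^1=0$, since a symmetric zero-diagonal $3\times 3$ matrix with zero row sums vanishes.
\item For $n=4$, the size-$3$ profile $\mathcal{N}\setminus\{m\}$ additionally forces $\Delta^1_{im}=0$ for all $i$, which reduces to the three-agent case.
\item Coalition $2$ is the complement, so the profiles where coalition $1$ has size $0$ or $1$ give $\Delta^2=0$ in the same way.
\end{itemize}

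Hence $G_1=G_2$. Your $(x,-x)$ versus $(-x,x)$ device fails for this reason. By symmetry agents $2$ and $3$ also see the flip, and once agent $2$'s total on $\{1,2,3\}$ is compensated, agent $3$'s total is off by $4x$. Under full size coverage, a nonzero confound of this row-sum type needs at least four agents carrying alternating-sign differences (e.g.\ a signed $4$-cycle). No observed coalition may then contain exactly three of them, so you need $n\ge 5$, or coverage only of the sizes adjacent to a chosen NS profile.

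More importantly, step (iii) is only asserted, and it is where the content of the lower bound lies. You must exhibit two games whose $\frac{1}{20}$-approximate NS sets over product mixed strategies are disjoint. Otherwise the data-independent algorithm that always outputs a common (near-)NS profile defeats the bound. Also, the duality gap charges each agent for a single fixed deviation against the others' randomization, so deviations chosen per realization cannot be combined. The ``case split on $(p_1,p_2,p_3)$'', the ``mild asymmetric incentive'' and the ``expected'' constant $\frac{1}{20}$ do not establish this.

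For comparison, the paper uses a different confound. It takes $n=3$, $k=3$, the overlapping action $\{1,2\}$ and size-dependent values, with $\rho$ chosen so that any two agents sharing coalition $1$ in the data also share coalition $2$. Only the per-pair sums $d^1+d^2$ are then revealed; at size $3$ they equal $-\frac{1}{2}$ in both games although the summands differ. This cross-coalition confound, available only because coalitions overlap, avoids the row-sum obstruction above even for $n=3$.

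However, the paper also settles (iii) with a one-line $q$-split, and that step fails there too. Its two games share a pure NS profile: two agents play $\{2\}$ and the third plays $\{3\}$, which appears in both of its NS characterizations. On either route, disjointness of the approximate-NS sets has to be proved explicitly.
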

\begin{proof}
    (\textit{Sketch})
    In Appendix F, we build two symmetric POCF games $G_1$ and $G_2$ with $3$ agents whose utility distributions are deterministic, where the number of coalitions is at most $3$ (i.e., $k=3$), the action space of each agent is $\{\{1\}, \{2\}, \{3\}, \{1,2\}\}$, and both games share the same exploration policy $\rho$. Similarly to Theorem \ref{thm:half gap}, both games are built so that their pure NS joint strategies differ, while $\rho$ is designed such that $(G_1,\rho)$ and $(G_2,\rho)$ are both in the class $\mathcal{G}'$ defined in Theorem \ref{thm:1/8 gap}. We then show that any algorithm $\alg$ \textit{cannot} distinguish between $(G_1,\rho),(G_2,\rho)$, as both games are behaviorally indistinguishable under the data distribution $\rho$, regardless of the dataset size observed by $\alg$. By arguments similar to Theorem \ref{thm:half gap}, the desired follows.
\end{proof}

\subsection{Algorithm \ref{alg:surrogate min} under Bandit Feedback}
\label{sec:Reduction to Multi-Agent Offline Linear Bandits}
To circumvent the impossibility in Theorem \ref{thm:1/8 gap}, we next show how to use ridge regression for constructing utility estimators with corresponding exploration bonuses under bandit feedback.{\footnote{{Our results in this section extend to learning approximate NS \textbf{pure} strategies by arguments similar to Section \ref{sec:Learning Approximate NS Pure Strategies}, and are thus deferred to Appendix G.3.}\label{footnote:Learning Approximate NS Pure Strategies bandit}}} For this sake, we first illustrate that the mean utility of each agent $i$ from some joint action $\mathbf{a}$ can be written as an inner product between a vector capturing all single-agent utilities and a binary vector, indicating whether agent $i$ and any other agent $j$ join the same coalitions under $\mathbf{a}$. This allows us to derive an assumption on the information covered by the dataset, which {is sufficient for} sample-efficient learning of an approximate NS strategy with a sample complexity bound that admits \textit{\textbf{optimality}} guarantees up to logarithmic factors.

Formally, under bandit feedback, consider any agent $i$ and any $\ell \in[k]$. We represent agent $i$'s mean mutual utility from interacting with agent $j$ in the $\ell$-th candidate coalition via an $n$-dimensional vector $\bm{\theta}_i^\ell$, whose $j$-th coordinate is $[\bm{\theta}_i^\ell]_j=d_{i,j}^\ell$. We also define a vector-valued function $\mathbf{y}_{i}^\ell: \mathcal{A}\rightarrow\{0,1\}^n$, where $\mathbf{y}_{i}^\ell(\mathbf{a})$ is an $n$-dimensional binary vector whose $j$-th coordinate is $1$ if both agents $i$ and $j$ join the $\ell$-th candidate coalition under some joint action $\mathbf{a} \in \mathcal{A}$, i.e., $[\mathbf{y}_{i}^\ell(\mathbf{a})]_j = \mathds{1}\{\ell \in a_i\cap a_j\}$. We then concatenate agent $i$'s single-agent utilities across coalitions into an $nk$-dimensional vector $\bm{\theta}_i=[\bm{\theta}_i^\ell]_{\ell\in[k]}$, and similarly define $\mathbf{y}_{i}: \mathcal{A}\rightarrow\{0,1\}^{nk}$ via $\mathbf{y}_{i}(\mathbf{a})=[\mathbf{y}_{i}^\ell(\mathbf{a})]_{\ell\in[k]}$. Further, we concatenate all agents' utilities into an $n^2 k$-dimensional vector $\bm{\theta}=[\bm{\theta}_i]_{i\in \mathcal{N}}$. Finally, letting $\mathbf{0}_{\kappa}$ be the $\kappa$-dimensional all-zeros vector, we denote the concatenation $\mathbf{z}_{i}(\mathbf{a})= [\mathbf{0}_{k(i-1)} , \mathbf{y}_{i}(\mathbf{a}),\mathbf{0}_{k(n-i+1)}]$.

As $d_i(\mathbf{a}) = \sum_{\ell\in a_i}\sum_{i \neq j \in C_\ell^{\mathbf{a}}} d_{i,j}^\ell$ is agent $i$'s mean utility from a joint action $\mathbf{a} \in \mathcal{A}$, then it can be written as $d_i(\mathbf{a}) =\langle\mathbf{z}_{i}(\mathbf{a}), \bm{\theta} \rangle$. Therefore, under bandit feedback, we can construct a utility estimator $\hat{v}_i$ for agent $i$'s mean utility with an exploration bonus $b_i^\delta$ for a confidence level $\delta \in (0,1]$ using \textit{ridge regression} over a dataset $\mathcal{S}=\{(\mathbf{a}^m, \mathbf{v}^m)\}_{m=1}^M$ to estimate $\bm{\theta}$ as follows:
\begin{equation}
    \label{eq:ridge}
    \begin{aligned}
        \hat{v}_i(\mathbf{a})&=\langle\mathbf{z}_{i}(\mathbf{a}) ,\hat{\bm{\theta}} \rangle \\
        b_i^\delta(\mathbf{a})&= {\|\mathbf{z}_{i}(\mathbf{a})\|_{V^{-1}} \sqrt{\beta}= \sqrt{\mathbf{z}_{i}(\mathbf{a})^\top V^{-1} \mathbf{z}_{i}(\mathbf{a}) \beta}} \\
        \hat{\bm{\theta}} &= V^{-1} \sum_{m\in[M]} \sum_{i\in \mathcal{N}} \mathbf{z}_{i}(\mathbf{a}^{{m}}) v_{i}(\mathbf{a}^{{m}}) \\
        V&=I+\sum_{m\in[M]} \sum_{i\in \mathcal{N}} \mathbf{z}_{i}(\mathbf{a}^{{m}})\mathbf{z}_{i}(\mathbf{a}^{{m}})^\top \\
        \sqrt{\beta}&=2\sqrt{n^2k}+\sqrt{n^2k\log(1+M/n)+\iota}
    \end{aligned}
\end{equation}
where $\iota=2 \log(4(n+1)k/\delta)$. Next, we derive our assumption on the information contained in the dataset. Intuitively, for any agent $i$ and any unilateral deviation of agent $i$ from some NS joint strategy, it ensures that the dataset contains at least as much information as a sufficiently large dataset generated according to the joint strategy induced by that unilateral deviation, so that the associated utilities can be reliably estimated. Formally:

\begin{assumption}[\textbf{Action Coverage}]
    \label{assump:action}
    There exist a universal constant ${c_{\act}}>0$ and an NS joint strategy $\bm{\varphi}^\star$ such that, for any agent $i$ and any strategy $\phi_i \in \Delta(\mathcal{A}_i)$, it holds that:
    \begin{equation}
        \label{eq:action coverage}
        V\succeq I+M {c_{\act}} \mathbb{E}_{\mathbf{a}\sim (\bm{\varphi}_{-i}^\star,\phi_i)}[\mathbf{z}_{i}(\mathbf{a})\mathbf{z}_{i}(\mathbf{a})^\top]
    \end{equation}
\end{assumption}

Here, $V$ is the covariance matrix of the dataset as defined in \eqref{eq:ridge}, while the right-hand side of \eqref{eq:action coverage} is the covariance matrix of joint actions sampled over $M {c_{\act}}$ episodes from the strategy $(\bm{\varphi}_{-i}^\star,\phi_i)$ induced by a unilateral deviation $\phi_i \in \Delta(\mathcal{A}_i)$ from an NS joint strategy $\bm{\varphi}^\star$. Thus, to enable accurate utility estimation, Assumption \ref{assump:action} requires that the dataset is at least as informative as a dataset of $M {c_{\act}}$ samples independently drawn from $(\bm{\varphi}_{-i}^\star, \phi_i)$.

Under Assumption \ref{assump:action}, we are now ready to prove that Algorithm \ref{alg:surrogate min} with utility estimators and exploration bonuses as in \eqref{eq:ridge} has a low duality gap. In particular, Theorem \ref{thm:bandit} quantifies the approximation guarantees to Nash stability obtained by Algorithm \ref{alg:surrogate min} with utility estimators and exploration bonuses as in \eqref{eq:ridge}. Formally:
\begin{theorem}
    \label{thm:bandit}
    Under {\normalfont bandit} feedback and Assumption \ref{assump:action}, for any $\delta \in (0,1]$ and dataset size $M \in \mathbb{N}$, Algorithm \ref{alg:surrogate min} with utility estimators and exploration bonuses as in \eqref{eq:ridge} outputs a joint {{\normalfont mixed}} strategy $\bm{\varphi}^{\out}$ satisfying $\gap(\varphi^{\out}) \leq 4\sqrt{\frac{n^2 k\beta}{{c_{\act}}M}}+\epsilon_{\opt}$ {with probability at least $1-\delta$.}
\end{theorem}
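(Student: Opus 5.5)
The plan is to combine Theorem \ref{thm:general duality gap} with two ingredients: a confidence-ellipsoid guarantee showing that $b_i^\delta$ in \eqref{eq:ridge} is a valid exploration bonus, and a bound on the expected bonus under any unilateral deviation from $\bm{\varphi}^\star$ obtained from Assumption \ref{assump:action}.

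\textbf{Step 1: $b_i^\delta$ is a valid exploration bonus.} I would invoke the self-normalized martingale bound for ridge regression (Abbasi-Yadkori et al.) applied to the stacked regression.

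Each observation $v_i(\mathbf{a}^m)$ equals $\langle \mathbf{z}_i(\mathbf{a}^m),\bm{\theta}\rangle$ plus noise. Since each $v^\ell_{i,j}\in[-1,1]$, this noise is a bounded sum of independent terms and hence sub-Gaussian.

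The parameter satisfies $\|\bm{\theta}\|_2\le \sqrt{n^2k}$. The log-determinant term gives $\log\det V\le n^2k\log(1+M/n)$, up to how the norm of $\mathbf{z}_i$ is bounded.

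Together these yield, with probability at least $1-\delta$,
\[
\|\hat{\bm{\theta}}-\bm{\theta}\|_V\le\sqrt{\beta}.
\]
By Cauchy–Schwarz, $|d_i(\mathbf{a})-\hat v_i(\mathbf{a})|=|\langle \mathbf{z}_i(\mathbf{a}),\hat{\bm{\theta}}-\bm{\theta}\rangle|\le \|\mathbf{z}_i(\mathbf{a})\|_{V^{-1}}\sqrt{\beta}=b_i^\delta(\mathbf{a})$. This is exactly \eqref{eq:bonus}.

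\textbf{Step 2: bounding the expected bonus.} Fix an agent $i$ and a strategy $\phi_i$, which may be $\varphi_i^\star$ itself, and write $\Sigma=\mathbb{E}_{\mathbf{a}\sim(\bm{\varphi}^\star_{-i},\phi_i)}[\mathbf{z}_i\mathbf{z}_i^\top]$. Jensen's inequality and the trace identity give
\[
\mathbb{E}[b_i^\delta(\mathbf{a})]\le\sqrt{\beta}\sqrt{\mathbb{E}[\mathbf{z}_i^\top V^{-1}\mathbf{z}_i]}=\sqrt{\beta\,\mathrm{tr}(V^{-1}\Sigma)}.
\]
Assumption \ref{assump:action} gives $V\succeq I+Mc_{\act}\Sigma$, hence $V^{-1}\preceq (I+Mc_{\act}\Sigma)^{-1}$. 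Therefore
\[
\mathrm{tr}(V^{-1}\Sigma)\le \mathrm{tr}\big((I+Mc_{\act}\Sigma)^{-1}\Sigma\big)=\sum_r\frac{\lambda_r}{1+Mc_{\act}\lambda_r}\le\frac{\mathrm{rank}}{Mc_{\act}}\le \frac{n^2k}{Mc_{\act}},
\]
where the $\lambda_r$ are the eigenvalues of $\Sigma$ and the dimension is $n^2k$. (The first inequality uses that $\mathrm{tr}(A\Sigma)$ is monotone in $A$ for PSD $\Sigma$.) Each expected bonus is therefore at most $\sqrt{n^2k\beta/(c_{\act}M)}$.

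\textbf{Step 3: plugging into Theorem \ref{thm:general duality gap}.} The inner maximization there ranges over deterministic deviations $\varphi_i'$, which are covered since Assumption \ref{assump:action} holds for every $\phi_i$. The bound of Theorem \ref{thm:general duality gap} is then $2\big(\sqrt{n^2k\beta/(c_{\act}M)}+\sqrt{n^2k\beta/(c_{\act}M)}\big)+\epsilon_{\opt}$, which is the stated $4\sqrt{n^2k\beta/(c_{\act}M)}+\epsilon_{\opt}$. A single high-probability event, the one from Step 1, suffices for all $i$ and all $\mathbf{a}$ simultaneously, because one ellipsoid is shared across agents.

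\textbf{Main obstacle.} Steps 2 and 3 are mechanical. The main obstacle is Step 1. The regression pools $n$ correlated observations per sample, one per agent, and these share utility draws by symmetry, so the noise is not a clean martingale difference sequence indexed by observation. The noise scale also depends on coalition sizes, since it is a sum of up to $nk$ bounded terms.

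To handle this, I would first treat the $n$ observations within a sample as one vector-valued observation. I would then apply the vector-valued self-normalized bound, with sub-Gaussian parameter absorbed into the constants, and accept that the exact constants in $\sqrt{\beta}$ come from bounding $\|\mathbf{z}_i\|^2\le nk$ in the determinant–trace inequality.
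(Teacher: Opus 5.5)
Your plan follows the paper's proof almost step for step. The paper also:
\begin{enumerate}
\item obtains a confidence ellipsoid for $\hat{\bm{\theta}}$ by citing Lattimore--Szepesv\'ari, Theorem 20.5;
\item bounds $\det V$ by AM--GM on the eigenvalues;
\item uses Cauchy--Schwarz to show $b_i^\delta$ is an exploration bonus;
\item bounds each expected bonus via Assumption~\ref{assump:action}, Jensen, and $\frac{1}{Mc_{\act}}\tr\bigl[I-(I+Mc_{\act}\Sigma)^{-1}\bigr]\le \frac{n^2k}{Mc_{\act}}$;
\item invokes Theorem~\ref{thm:general duality gap}.
\end{enumerate}
Your Steps 2 and 3 are correct, and slightly tidier than the paper's. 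You apply the assumption with the same $\phi_i$ you take the expectation over, and you rightly note that one shared ellipsoid needs no union bound. Your vector-valued formulation also handles the within-sample correlation correctly, since the samples are i.i.d.\ across $m$.

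\textbf{The gap is the conclusion of Step 1.} You cannot ``absorb the sub-Gaussian parameter into the constants,'' because $\sqrt{\beta}$ in \eqref{eq:ridge} is an explicit expression whose logarithmic term corresponds to conditionally $1$-sub-Gaussian noise.
\begin{enumerate}
\item \emph{Noise scale.} The noise $v_i(\mathbf{a}^m)-d_i(\mathbf{a}^m)$ is a sum of up to $(n-1)k$ independent centered terms of range $2$. With shared draws, $u^\top\bm{\eta}_m=\sum_{\ell}\sum_{\{i,j\}\subseteq C^{\mathbf{a}^m}_\ell}(u_i+u_j)\epsilon^{\ell}_{ij}$, whose variance proxy is up to $2(n-1)k\|u\|_2^2$. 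This is attained, for example, with Rademacher utilities when every agent joins all $k$ coalitions and $u=\mathbf{1}$.
\item \emph{Log-determinant.} $\|\mathbf{z}_i(\mathbf{a})\|_2^2=\sum_{\ell\in a_i}|C_\ell^{\mathbf{a}}|$ can equal $nk$. Putting that bound into the determinant--trace inequality gives $\log\det V\le n^2k\log(1+M)$, not $n^2k\log(1+M/n)$ as you assert.
\end{enumerate}
So the vector self-normalized bound yields only
$\|\hat{\bm{\theta}}-\bm{\theta}\|_V\le\sqrt{n^2k}+\sqrt{2(n-1)k}\,\sqrt{n^2k\log(1+M)+2\log(1/\delta)}$.
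For large $M$ this exceeds the stated $\sqrt{\beta}$ by a factor of order $\sqrt{nk}$. With the stated $\beta$ you have not shown that $b_i^\delta$ is a valid bonus, so your argument proves the theorem only with $\beta$ enlarged accordingly, which adds an extra $\sqrt{nk}$ factor to the rate.

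The paper's own proof reaches the stated $\beta$ in the same unjustified way. It applies a $1$-sub-Gaussian martingale-difference result directly to this pooled, correlated noise, and it asserts $\|\mathbf{z}_i\|_2^2\le k$. So the paper has the same hole. You were right to identify it as the main obstacle, but your proposed fix does not close it.
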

\begin{proof}
    (\textit{Sketch})
    In Appendix G, we first prove that $b_i^{{\delta}}$ {in} \eqref{eq:ridge} is an exploration bonus for the utility estimator $\hat{v}_i$ in \eqref{eq:ridge} for any agent $i$. After bounding $\mathbb{E}_{\mathbf{a}\sim (\bm{\varphi}_{-i}^\star,\varphi_i')} [b_i^\delta(\mathbf{a})]$ and $\mathbb{E}_{\mathbf{a}\sim \bm{\varphi}^\star} [b_i^\delta(\mathbf{a})]$ by $\sqrt{\frac{n^2 k\beta}{{c_{\act}}M}}$ {for any $\varphi_i \in \Delta(\mathcal{A}_i)$}, we obtain the desired result by Theorem \ref{thm:general duality gap}.
\end{proof}

For specific values of $\varepsilon> 0$, we now infer that our algorithm's sample complexity bound for reaching an $\varepsilon$-NS strategy \textit{\textbf{optimally}} depends on $\varepsilon$ up to logarithmic factors. {The proof is by arguments similar to Corollary \ref{coro:optimal}, and thus deferred to Appendix G.2.}
\begin{corollary}
    \label{coro:optimal-semi}
    Under {\normalfont bandit} feedback and Assumption \ref{assump:action}, for any $\delta \in (0,1]$, any $\varepsilon > \epsilon_{\opt}$ with $\epsilon_{\opt} = o(\varepsilon)$ and any NS joint strategy $\bm{\varphi}^\star$, Algorithm \ref{alg:surrogate min} with utility estimators and exploration bonuses as in \eqref{eq:ridge} has a sample complexity bound that {\normalfont \textbf{optimally}} depends on $\varepsilon$ (up to logarithmic factors): for a dataset of size $M \geq \frac{16 n^2 k\beta}{{c_{\act}}(\varepsilon -\epsilon_{\opt})^2}$, $\varphi^{\out}$ is $\varepsilon$-NS (i.e., $\gap(\varphi^{\out}) \leq \varepsilon$) with probability at least $1-\delta$.
\end{corollary}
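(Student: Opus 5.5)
The corollary follows from Theorem \ref{thm:bandit} by inverting its bound, together with a citation for the lower bound. It mirrors Corollary \ref{coro:optimal}.

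\textbf{Step 1 (upper bound).} Fix $\delta\in(0,1]$, $\varepsilon>\epsilon_{\opt}$ and the NS strategy $\bm{\varphi}^\star$ from Assumption \ref{assump:action}. By Theorem \ref{thm:bandit}, with probability at least $1-\delta$,
\begin{equation*}
\gap(\varphi^{\out}) \leq 4\sqrt{\frac{n^2k\beta}{c_{\act}M}}+\epsilon_{\opt}.
\end{equation*}
It therefore suffices that $4\sqrt{n^2k\beta/(c_{\act}M)}\leq \varepsilon-\epsilon_{\opt}$. Squaring and rearranging, this is equivalent to $M\geq 16n^2k\beta/(c_{\act}(\varepsilon-\epsilon_{\opt})^2)$, which is exactly the stated hypothesis, so $\varphi^{\out}$ is $\varepsilon$-NS.

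\textbf{Step 2 (the implicit dependence of $\beta$ on $M$).} By \eqref{eq:ridge}, $\beta$ contains the term $n^2k\log(1+M/n)$, so the condition on $M$ is implicit. I would note that the condition is still valid as stated. I would then observe that it is satisfied by some $M=\widetilde{O}\bigl(n^4k^2\log(1/\delta)/(c_{\act}(\varepsilon-\epsilon_{\opt})^2)\bigr)$. This follows from the standard fact that $M\geq A\log(1+M/n)+B$ holds once $M\gtrsim A\log A+B$. Since $\epsilon_{\opt}=o(\varepsilon)$, we have $(\varepsilon-\epsilon_{\opt})^2=\Theta(\varepsilon^2)$. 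The sample complexity is thus $\widetilde{O}(1/\varepsilon^2)$ in $\varepsilon$.

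\textbf{Step 3 (optimality).} The lower bound is imported rather than proved. It uses the known $\Omega(1/\varepsilon^2)$ minimax lower bound for learning an $\varepsilon$-approximate equilibrium, equivalently for estimating a single mean, from samples. These are the same references \cite{hassani2020stochastic,bai20provable} used in Corollary \ref{coro:optimal}. The required embedding is a one-agent-deviation instance in which a single mutual utility must be estimated to accuracy $\varepsilon$; any game can be padded to contain such an instance, so the bound applies here. The main point of care is justifying that the logarithmic term in $\beta$ only costs logarithmic factors, and that $\epsilon_{\opt}=o(\varepsilon)$ keeps the leading $\varepsilon^{-2}$ rate.
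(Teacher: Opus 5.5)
Your proof is correct and follows the paper's: the paper also just inverts the bound $\gap(\varphi^{\out}) \leq 4\sqrt{n^2k\beta/(c_{\act}M)}+\epsilon_{\opt}$ from Theorem \ref{thm:bandit} to get $M \geq 16n^2k\beta/(c_{\act}(\varepsilon-\epsilon_{\opt})^2)$, and attributes the $\varepsilon$-optimality to the same two references. Your Step 2 adds something the paper omits: since $\beta$ grows like $n^2k\log(1+M/n)$, the condition on $M$ is implicit, and you resolve it to an explicit $\widetilde{O}(\varepsilon^{-2})$ bound. Your padding/embedding remark in Step 3, however, is no more rigorous than the paper's bare citation.
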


\begin{remark}
    {The scaling with $n$ and $k$ in Theorem \ref{thm:bandit} and Corollary~\ref{coro:optimal-semi} is {\normalfont unavoidable}: Assumption \ref{assump:action} requires that the dataset's covariance matrix is as informative as if we had sampled from any of the $\Theta(nk)$ possible unilateral deviations. Further, the dependence on $c_{\act}$ is reasonable: even in games where each agent $i$ can join any possible non-empty subset of candidate coalitions (i.e., $|\mathcal{A}_i|=2^k-1$), we prove in Appendix G.1 that there is always a sufficiently large dataset such that Assumption \ref{assump:action} holds for $c_{\act}=\frac{1}{2nk^4}$ with high probability.}
\end{remark}

\section{{Empirical Evaluations}}
\label{sec:EMPIRICAL EVALUATIONS}
{We evaluate Algorithm \ref{alg:surrogate min} through extensive experiments on several synthetic datasets \cite{code_implementation}. Due to space constraints, we herein focus on \textit{semi-bandit} feedback, with \textit{bandit} feedback results deferred to Appendix H.2. Our experiments evaluate how well our algorithm recovers approximate NS outcomes across various settings.} 

{{\textbf{\textit{Setup}.}}} {For each run, we generate a game with $n$ agents who share the same action set of size at least $3$, sampled uniformly at random from $\mathcal{P}([k])\setminus \emptyset$. Given a joint action $\mathbf{a}$, we sample the mutual utility $v_{i,j}^{\ell}$ of each pair of distinct agents $i,j$ in the $\ell$-th candidate coalition $C_\ell^{\mathbf{a}}$ using one of the following two utility generation models inspired by \citet{boehmer2025causes}: (1) \textbf{Size-Dependent Uniform:} We first draw $u_{i,j}^{\ell}$ uniformly at random from $[-1,1]$ and then set $v_{i,j}^{\ell} = \frac{|C_\ell^{\mathbf{a}}|}{n+1}\cdot u_{i,j}^{\ell}$; (2) \textbf{Size-Dependent Gaussian:} We first draw a mean $\mu_{i,j}$ uniformly at random from $[-1,1]$. Then, $u_{i,j}^{\ell}$ is drawn from the Gaussian distribution with mean $\mu_{i,j}$ and standard deviation $1-\mu_{i,j}$ if $\mu_{i,j}\geq 0$ and $|-1-\mu_{i,j}|$ if $\mu_{i,j}< 0$, ensuring that $v_{i,j}^{\ell} \in [-1,1]$. Finally, we set $v_{i,j}^{\ell} = \frac{|C_\ell^{\mathbf{a}}|}{n+1}\cdot u_{i,j}^{\ell}$. We also considered size-independent and mixed size effects variants, which showed similar trends to the size-dependent versions; their results are thus deferred to Appendix H.1.}

{For each configuration of parameters, we consider two exploration policies for constructing an offline dataset of size $M$: (1) The \textit{\textbf{uniformly random}} policy $\rho^{\text{rand}}$, where joint actions are sampled uniformly at random (i.e., $\rho^{\text{rand}}(\mathbf{a}) = 1/|\mathcal{A}|$ for any joint action $\mathbf{a} \in \mathcal{A}$). $\rho^{\text{rand}}$ covers all coalition sizes, thus satisfying Assumption \ref{assump:unilateral deviation}; (2) To validate the need of Assumption \ref{assump:unilateral deviation}, we also consider a policy $\rho^{\text{1Rand}}$ that does not necessarily satisfy it, which induces a uniformly random strategy $\rho_1^{\text{1Rand}}$ for agent $1$ (i.e., $\rho_1^{\text{1Rand}}(a_1)=\frac{1}{|\mathcal{A}_1|}$ for any $a_1 \in \mathcal{A}_1$); others always deterministically follow the second action that was inserted to their action set during its random generation.}

{Algorithm~\ref{alg:surrogate min} under semi-bandit feedback then uses the exploration bonuses in~\eqref{eq:bonus semi} with confidence level $\delta = 10^{-2}$. As exactly solving \eqref{eq:approx NS} in Algorithm \ref{alg:surrogate min} is intractable (Footnote \ref{footnote:pls}), we employ a practical \textit{coordinate-descent} scheme, widely used in large-scale optimization for its efficient convergence properties to stationary points (e.g., \cite{nesterov2012efficiency,wright2015coordinate}). At each round, we update each agent's mixed strategy via a convex combination of her current strategy and an optimistic best response to the estimated empirical utilities of the other agents. We stop once improvements fall below $10^{-3}$. This yields a smoothed best-response dynamics with gradual convergence to per-agent optimistic best responses. Here, all expectations are estimated by Monte Carlo sampling with $100$ samples per term.}

\begin{figure}[t!]
    \begin{tabular}{cc}
        \begin{subfigure}[b]{0.5\textwidth}
            \centering
            \includegraphics[width=\linewidth]{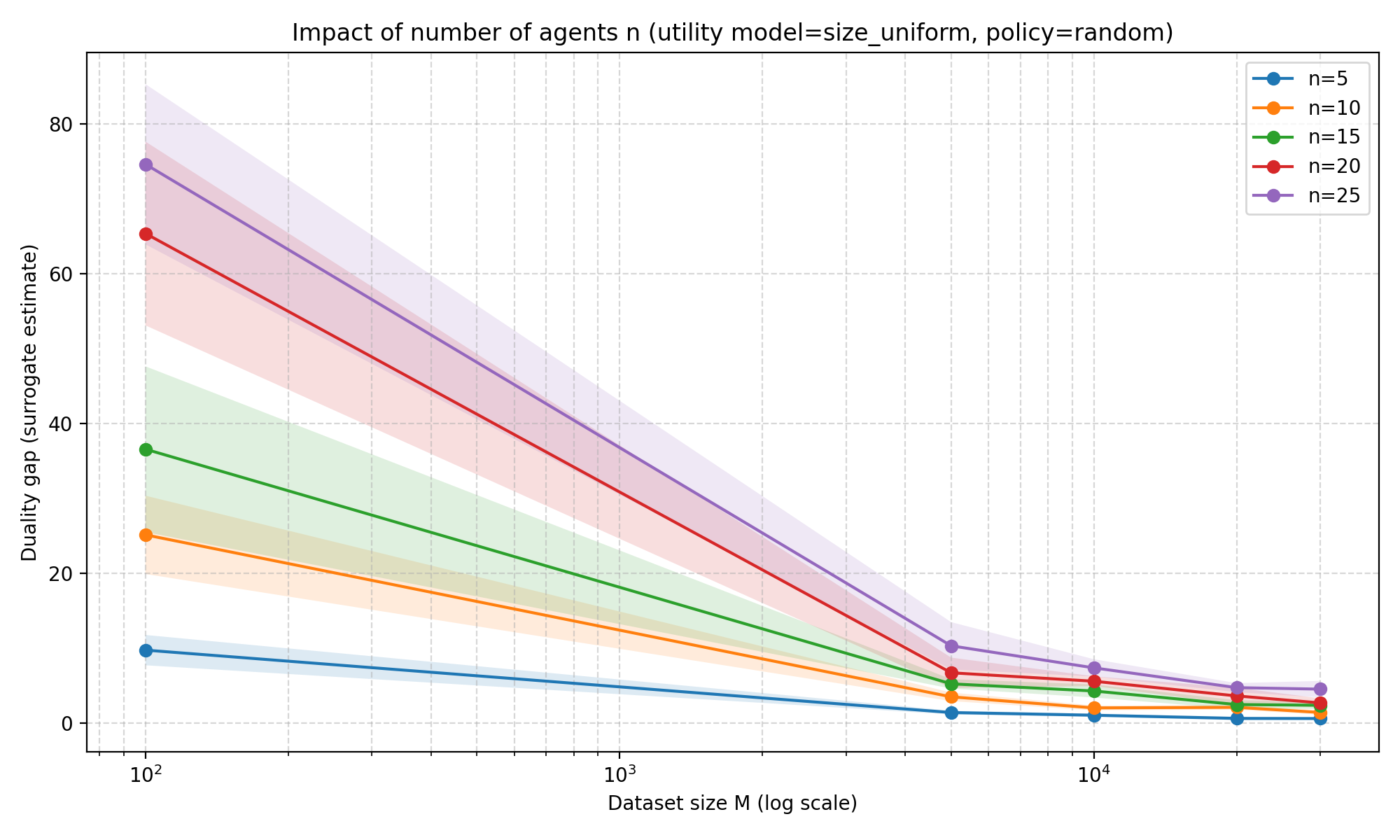}
        \end{subfigure}
        \begin{subfigure}[b]{0.5\textwidth}
            \centering
            \includegraphics[width=\linewidth]{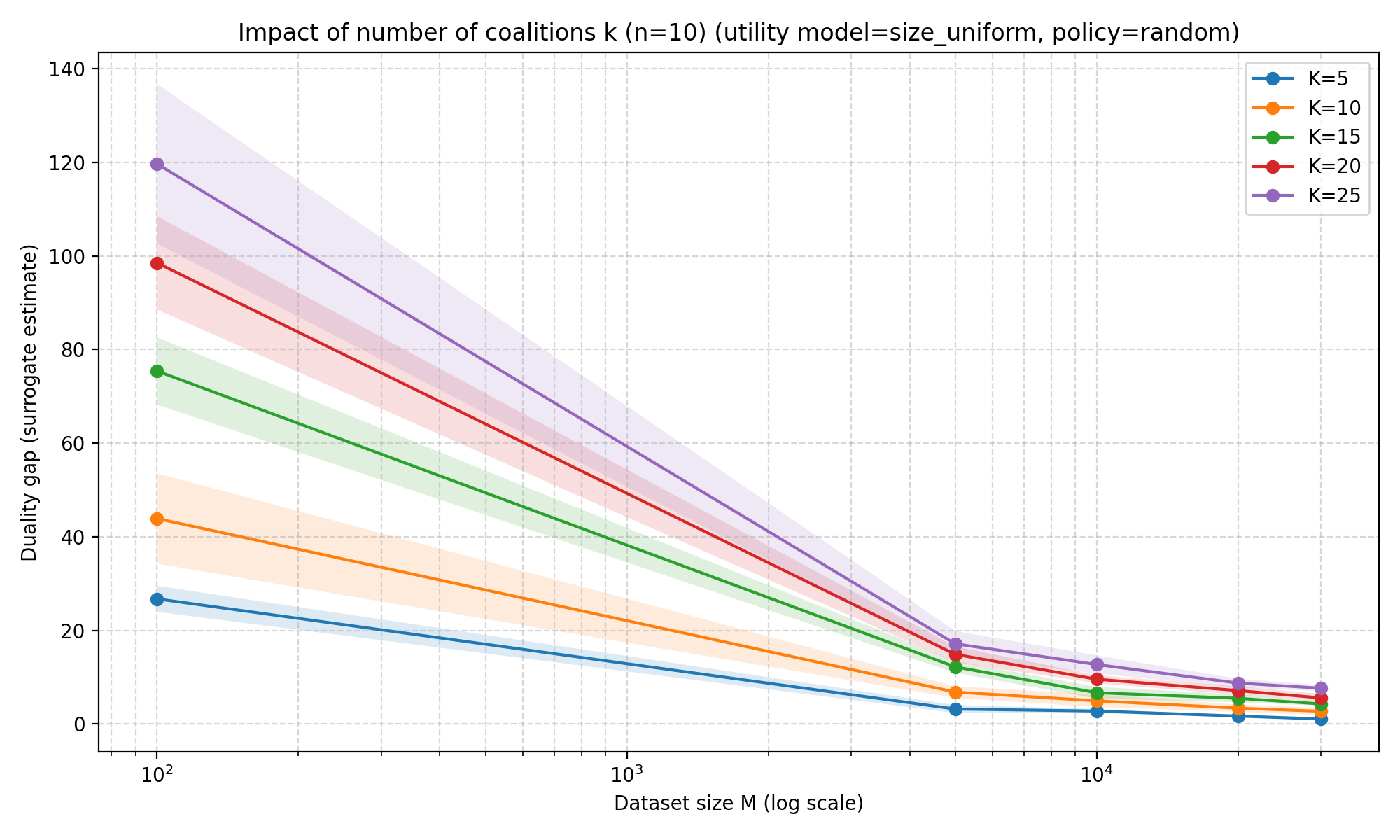}
        \end{subfigure}
        \\
        \begin{subfigure}[b]{0.5\textwidth}
            \centering
            \includegraphics[width=\linewidth]{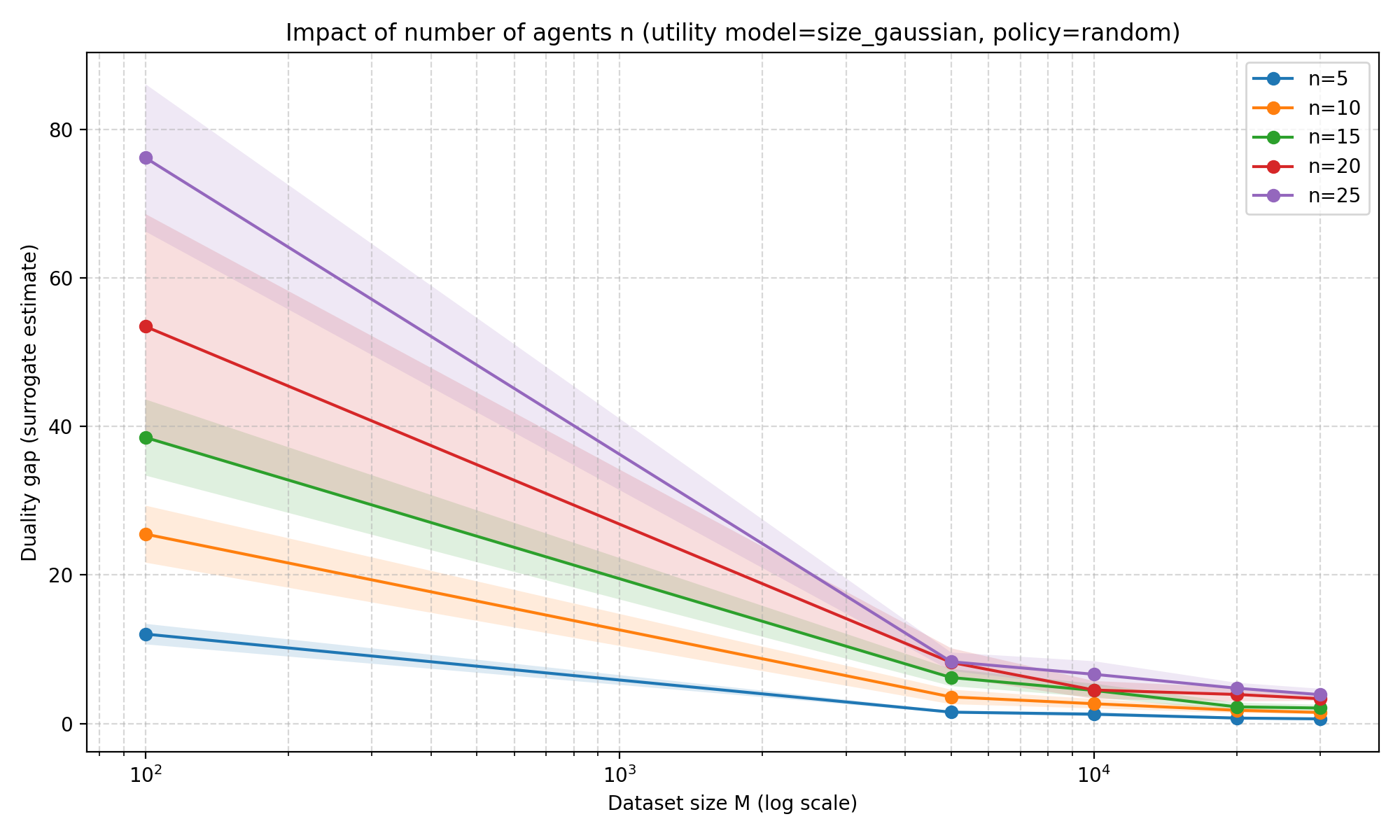}
        \end{subfigure}
        \begin{subfigure}[b]{0.5\textwidth}
            \centering
            \includegraphics[width=\linewidth]{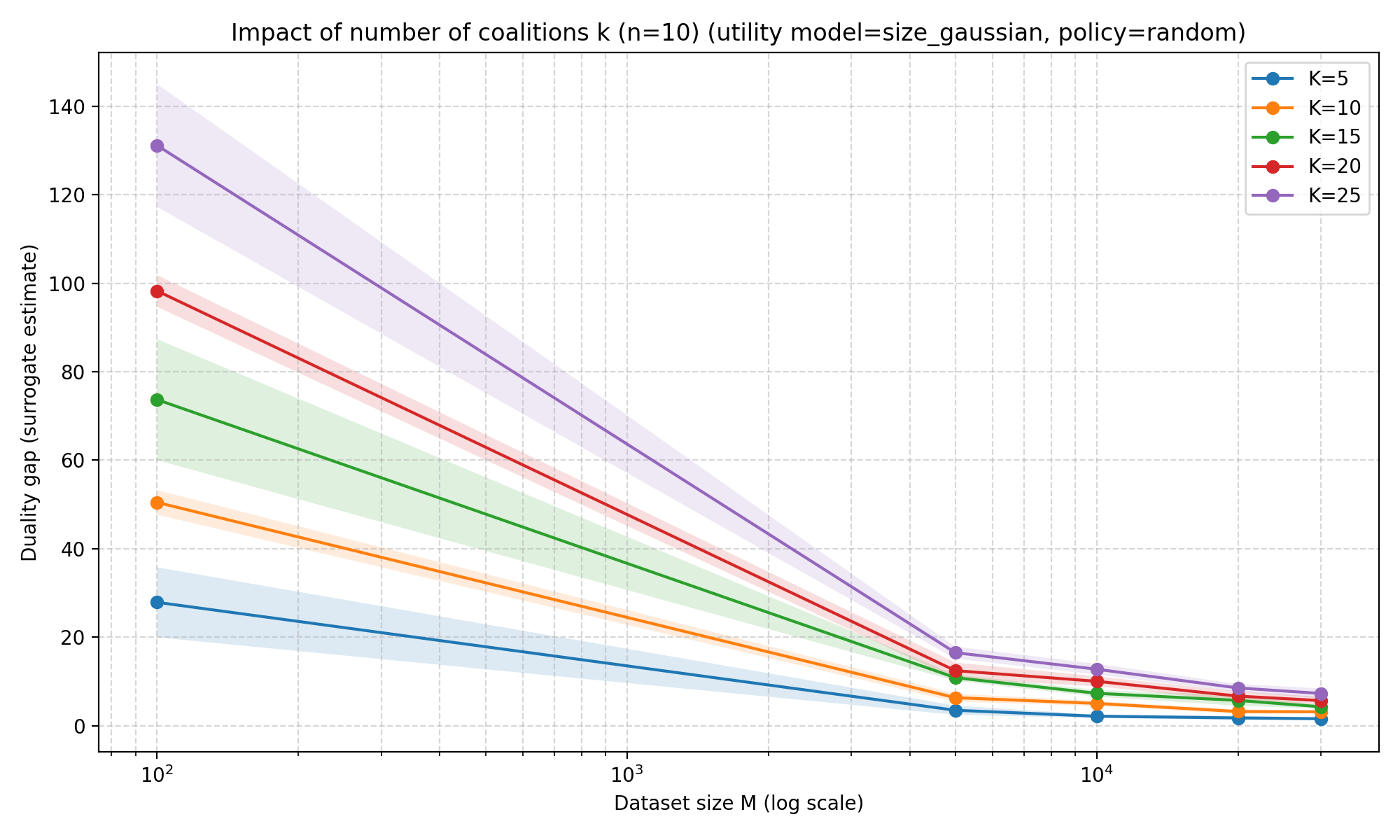}
        \end{subfigure}
        \\
        \begin{subfigure}[b]{0.5\textwidth}
            \centering
            \includegraphics[width=\linewidth]{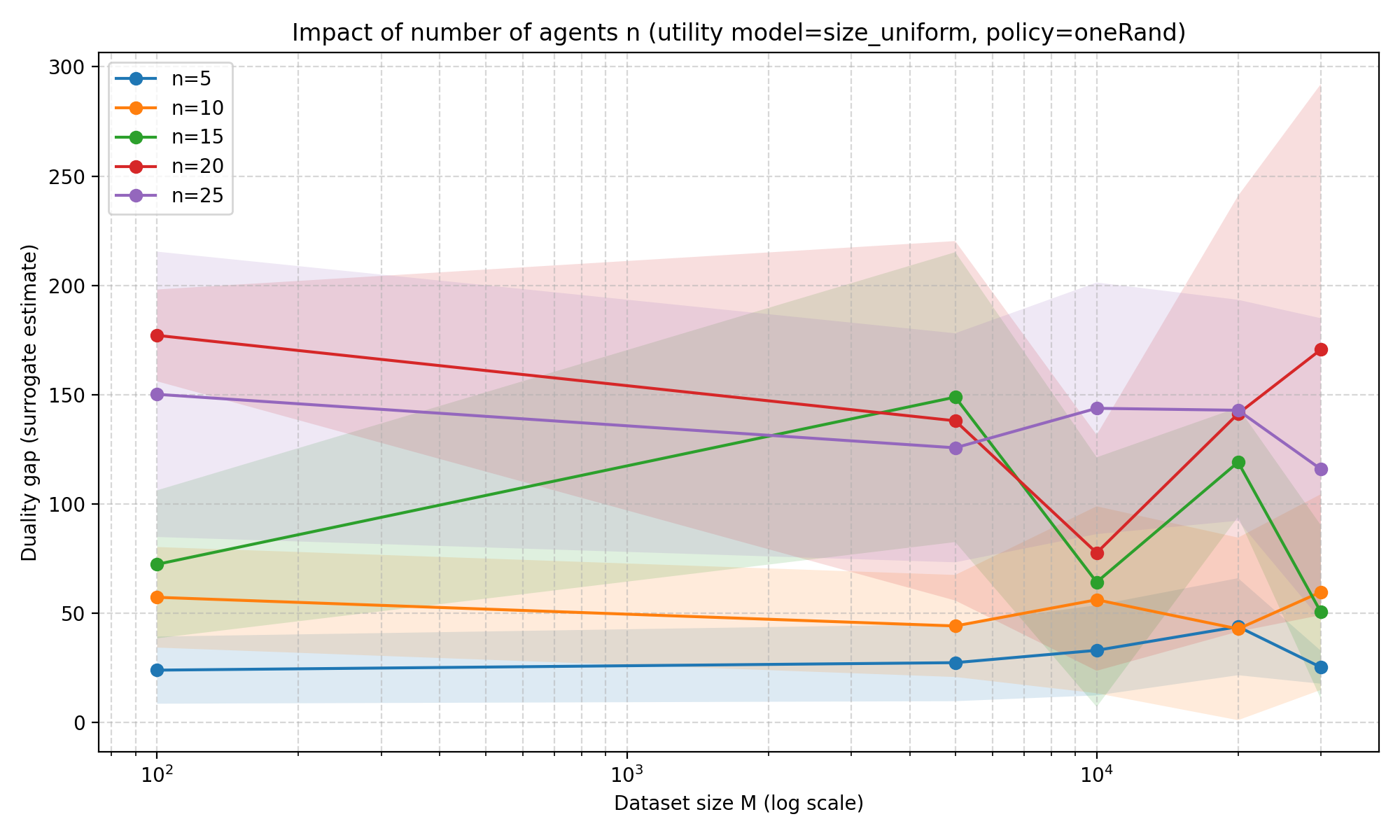}
        \end{subfigure}
        \begin{subfigure}[b]{0.5\textwidth}
            \centering
            \includegraphics[width=\linewidth]{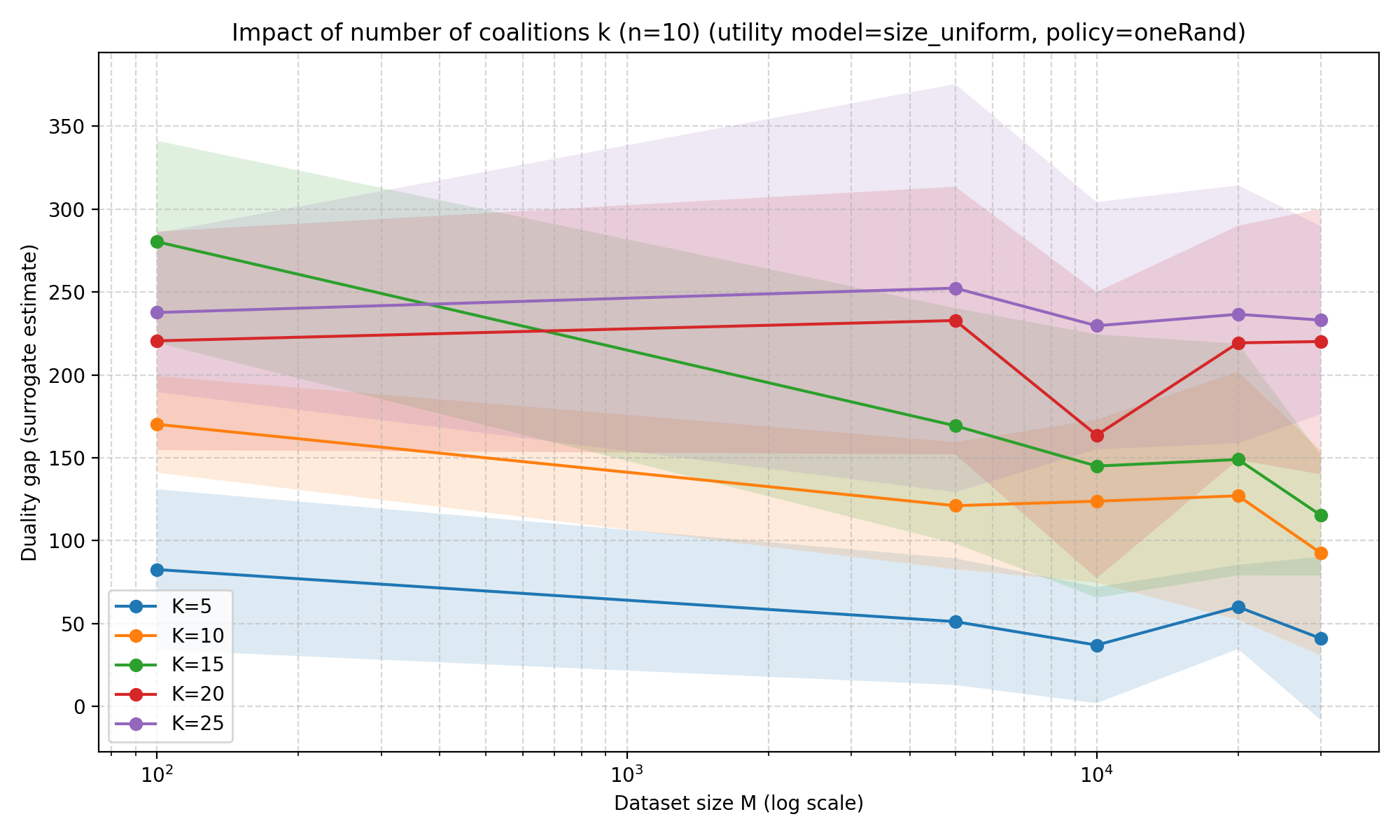}
        \end{subfigure}
    \end{tabular}
    \caption{{Mean approximate duality gap versus the size of datasets generated by $\rho^{\text{rand}}$ (top two rows) and $\rho^{\text{1Rand}}$ (last row) over $5$ runs with different seeds, for varying numbers of agents (left column) and candidate coalitions (right column). Shaded regions indicate standard deviations.}}
    \label{fig:experiments}
\end{figure} 

{\textbf{\textit{Results.}} Figure \ref{fig:experiments} reports the mean approximate duality gap $\widehat{\gap}^\delta(\bm{\varphi}^{\out})$ of the strategy $\bm{\varphi}^{\out}$ produced by Algorithm \ref{alg:surrogate min} versus the size of datasets generated by $\rho^{\text{rand}}$ (top two rows) and $\rho^{\text{1Rand}}$ (last row) over $5$ runs with different seeds. By Lemma \ref{lemma:surrogate}, $\widehat{\gap}^\delta(\bm{\varphi}^{\out})$ upper bounds the \textit{true} duality gap with high probability, thus quantifying how close the learned strategy is to Nash stability. The left column examines the effect of varying the number of agents $n \in \{5,10,15,20,25\}$ while fixing $k=5$, whereas the right column varies the number of candidate coalitions $k \in \{5,10,15,20,25\}$ while fixing $n=10$. In each experiment, we evaluate our algorithm on datasets of sizes $M \in \{10^2, 5\cdot 10^3, 10^4, 2\cdot 10^4, 3\cdot 10^4\}$. For $\rho^{\text{1Rand}}$, we report only size-independent uniform utilities; the Gaussian variant shows similar trends and is thus deferred to Appendix H.1.}

{Algorithm \ref{alg:surrogate min} consistently reaches a low approximation to Nash stability under $\rho^{\text{rand}}$, but fails to do so under $\rho^{\text{1Rand}}$. This supports the practical relevance of Assumption \ref{assump:unilateral deviation}: $\rho^{\text{rand}}$ satisfies it, allowing effective learning, whereas $\rho^{\text{1Rand}}$ may violate it, yielding poorer performance. For $\rho^{\text{rand}}$, the scaling in $n,k,M$ also matches Theorem \ref{thm:semi-bandit}, Corollary~\ref{coro:optimal} and Remark \ref{remark:minimum value coefficient}. For any fixed number of agents $n$, the gap decreases rapidly when the dataset size $M$ is small, but slowly for larger $M$, consistent with the $1/\sqrt{M}$ factor in Theorem \ref{thm:semi-bandit}. Further, obtaining a certain approximation requires larger datasets as $n$ increases, in line with Corollary~\ref{coro:optimal} and Remark~\ref{remark:minimum value coefficient}.}

\section{Conclusions and Future Work}
{We} presented a new model for studying coalition formation with \textit{possibly overlapping} coalitions under \textit{partial information}, where agents' preferences must be inferred from a fixed offline dataset. Under both semi-bandit and bandit feedback, we identified conditions under which the dataset covers sufficient information for an offline learning algorithm to infer preferences and use them to recover an approximately {NS} joint strategy. Under those conditions, we designed sample-efficient algorithms whose sample complexity bounds for {learning} an $\varepsilon$-approximate {NS} strategy \textit{\textbf{optimally}} depend on $\varepsilon$ up to logarithmic factors for certain values of $\varepsilon>0$.

Our research offers many promising directions for future works. {Immediate directions are exploring additional types of preferences, solution concepts and models of partial and/or noisy information. Finally, while we consider datasets with independent samples, common in offline learning (see, e.g., \cite{cui2022provably,cui2022offline}), studying \textit{correlated} samples remains an open challenge, even in single-agent offline reinforcement learning (see, e.g., \cite{cui2022provably,shi2023provably}).}



\section*{Acknowledgments}
The author acknowledges travel support from a Schmidt Sciences 2025 Senior Fellows award to Michael Wooldridge.

\appendix

\section{Proof of Lemma 1}
\label{supp:Lemma 1}

\begin{customlemma}{1}
    \label{supp:lemma:potential game}
    A symmetric POCF game with unknown and symmetric preferences is {\normalfont potential game}. Particularly, it contains both a pure and a mixed NS strategy.
\end{customlemma}
\begin{proof}
    Consider a joint mixed strategy $\bm{\varphi}$. Given a joint action $\mathbf{a} \sim \bm{\varphi}$, we will prove that $\Phi(\mathbf{a}) = \frac{1}{2}\sum_{i \in \mathcal{N}} v_i(\mathbf{a})$ is a potential function for \textit{pure} strategies by showing that $\Phi(\mathbf{a}_{-i},a_i) - \Phi(\mathbf{a}_{-i},a_i') = v_i(\mathbf{a}_{-i},a_i) - v_i(\mathbf{a}_{-i},a_i')$ for any agent $i$ playing another action $a_i' \in \mathcal{A}_i$. Indeed, consider an agent $i$ playing another action $a_i' \in \mathcal{A}_i$. We denote the partition induced by the joint action $\mathbf{a}$ as $\pi^{\mathbf{a}_{-i},a_i}=(C_1,\dots,C_k)$, while $\pi^{\mathbf{a}_{-i},a_i'}=(C_1',\dots,C_k')$ denotes the partition induced by after agent $i$ unilaterally deviates to the action $a_i'$. Note that:
    \begin{equation}
        \label{eq:diff potential}
        \Phi(\mathbf{a}_{-i},a_i) - \Phi(\mathbf{a}_{-i},a_i') = \frac{1}{2}\sum_{j \in \mathcal{N}} [v_j(\mathbf{a}_{-i},a_i) - v_j(\mathbf{a}_{-i},a_i')]
    \end{equation}
    Therefore, we need to analyze the utility gap $v_j(\mathbf{a}_{-i},a_i) - v_j(\mathbf{a}_{-i},a_i')$ incurred by each agent $i \neq j \in \mathcal{N}$ after agent $i$ unilaterally deviates. For any $\ell \in (a_i \cap a_i') \cup ([k]\setminus (a_i \cup a_i'))$, observe that the $\ell$th candidate coalition remains unchanged since agent $i$ remains in this coalition (i.e., $C_\ell=C_\ell'$), and thus the utility of any agent $j\neq i$ in that coalition is not affected after agent $i$ unilaterally deviates, that is, $v_j(\mathbf{a}_{-i},a_i) - v_j(\mathbf{a}_{-i},a_i')=0$ for any agent $i \neq j \in C_\ell$. Therefore, \eqref{eq:diff potential} decomposes as follows:
    \begin{subequations}
        \begin{align}
            \Phi(\mathbf{a}_{-i},a_i) - \Phi(\mathbf{a}_{-i},a_i') &= \frac{1}{2}[v_i(\mathbf{a}_{-i},a_i)-v_i(\mathbf{a}_{-i},a_i')] \label{eq:agent i} \\
            &+ \frac{1}{2}\underbrace{\sum_{i \neq j \in \cup_{\ell \in a_i \setminus a_i'} C_\ell} [v_j(\mathbf{a}_{-i},a_i)-v_j(\mathbf{a}_{-i},a_i')]}_{\text{Abandoned Coalitions}} \label{eq:Abandoned Coalitions} \\
            &+ \frac{1}{2}\underbrace{\sum_{i \neq j \in \cup_{\ell \in a_i'\setminus a_i} C_\ell'} [v_j(\mathbf{a}_{-i},a_i)-v_j(\mathbf{a}_{-i},a_i')]}_{\text{Welcoming Coalitions}} \label{eq:Welcoming Coalitions}
        \end{align}
    \end{subequations}
    Namely, the difference in potential decomposes into three terms: agent $i$'s utility gap (i.e., \eqref{eq:agent i}), the utility gap of each agent in a coalition abandoned by agent $i$ (i.e., \eqref{eq:Abandoned Coalitions}), and the utility gap of each agent in a coalition welcoming agent $i$ (i.e., \eqref{eq:Welcoming Coalitions}). Clearly, the term in \eqref{eq:Abandoned Coalitions} (resp. \eqref{eq:Welcoming Coalitions}) evaluates to zero if $a_i \setminus a_i' = \emptyset$ (resp. $a_i' \setminus a_i = \emptyset$). It thus remains to analyze the utility gap of any agent in the coalitions abandoned by agent $i$ and those welcoming her. Indeed, we consider each one separately:
    \begin{itemize}
        \item \underline{Abandoned Coalitions:} If $a_i \setminus a_i'\neq \emptyset$, then, for any $\ell \in a_i \setminus a_i'$, the coalition $C_\ell$ is abandoned by agent $i$, yielding that $C_\ell' = C_\ell \setminus\{i\}$. Now, consider an agent $i \neq j \in \cup_{\ell \in a_i \setminus a_i'} C_\ell$. Observe that any coalition $C\in \pi^{\mathbf{a}_{-i},a_i}(j)$ such that $i \notin C$ remains unchanged (i.e., $C\in \pi^{\mathbf{a}_{-i},a_i'}(j)$), and thus agent $j$'s utility from that coalition remains the same. Hence, agent $j$'s utility gap is the sum of differences between her utility from any coalition $C\in\pi^{\mathbf{a}_{-i},a_i}(j)$ abandoned by agent $i$ (i.e., $i\in C$) and her utility from that coalition after it is abandoned by agent $i$. That is:
        \begin{subequations}
            \begin{align}
                v_j(\mathbf{a}_{-i},a_i)-v_j(\mathbf{a}_{-i},a_i') &= \sum_{\ell \in a_j}\sum_{j \neq j' \in C_\ell} v_{j,j'}^\ell - \sum_{\ell \in a_j}\sum_{j \neq j' \in C_\ell'} v_{j,j'}^\ell \\
                &= \sum_{\ell \in a_j: i \in C_\ell}\sum_{j \neq j' \in C_\ell} v_{j,j'}^\ell - \sum_{\ell \in a_j: C_\ell' \cup\{i\}\in \pi^{\mathbf{a}}(j)}\sum_{j \neq j' \in C_\ell'} v_{j,j'}^\ell \\
                &= \sum_{\ell \in a_j:i\in C_\ell} v_{i,j}^\ell \\
                &\quad+ \sum_{\ell \in a_j:i\in C_\ell}\sum_{i,j \neq j' \in C_\ell}  v_{j,j'}^\ell - \sum_{\ell \in a_j: C_\ell' \cup\{i\}\in \pi^{\mathbf{a}}(j)}\sum_{j \neq j' \in C_\ell'} v_{j,j'}^\ell \\
                &= \sum_{\ell \in a_j:i\in C_\ell} v_{i,j}^\ell \\
                &\quad+ \sum_{\ell \in a_j:C_\ell'\cup\{i\}\in \pi^{\mathbf{a}}(j)}\sum_{i,j \neq j' \in C_\ell'}  v_{j,j'}^\ell - \sum_{\ell \in a_j: C_\ell' \cup\{i\}\in \pi^{\mathbf{a}}(j)}\sum_{j \neq j' \in C_\ell'} v_{j,j'}^\ell \label{eq:zero}\\
                &= \sum_{\ell \in a_j:i\in C_\ell} v_{i,j}^\ell 
            \end{align}
        \end{subequations}
        where \eqref{eq:zero} hold since $C_\ell=C_\ell'\cup\{i\}$ and we sum only over agent $j$'s utility from any agent $j'\in C_\ell$ with $j'\neq i,j$.

        \item \underline{Welcoming Coalitions:} If $a_i' \setminus a_i\neq \emptyset$, then, for any $\ell \in a_i' \setminus a_i$, the coalition $C_\ell$ welcomes agent $i$, yielding that $C_\ell' = C_\ell \cup \{i\}$. Now, consider an agent $i \neq j \in \cup_{\ell \in a_i' \setminus a_i} C_\ell$. Observe that any coalition $C'\in \pi^{\mathbf{a}_{-i},a_i'}(j)$ such that $i \notin C'$ remains unchanged (i.e., $C'\in \pi^{\mathbf{a}_{-i},a_i}(j)$), and thus agent $j$'s utility from that coalition remains the same. Hence, agent $j$'s utility gap is the sum of differences between her utility from any coalition $C'\in\pi^{\mathbf{a}_{-i},a_i'}(j)$ that welcomes agent $i$ (i.e., $i\in C$) and her utility from that coalition after it welcomes agent $i$. That is:
        \begin{subequations}
            \begin{align}
                v_j(\mathbf{a}_{-i},a_i)-v_j(\mathbf{a}_{-i},a_i') &= \sum_{\ell \in a_j}\sum_{j \neq j' \in C_\ell} v_{j,j'}^\ell - \sum_{\ell \in a_j}\sum_{j \neq j' \in C_\ell'} v_{j,j'}^\ell \\
                &= \sum_{\ell \in a_j: C_\ell \cup\{i\}\in \pi^{\mathbf{a}_{-i},a_i'}(j)}\sum_{j \neq j' \in C_\ell} v_{j,j'}^\ell - \sum_{\ell \in a_j: i \in C_\ell'}\sum_{j \neq j' \in C_\ell'} v_{j,j'}^\ell \\
                &= \sum_{\ell \in a_j: C_\ell \cup\{i\}\in \pi^{\mathbf{a}_{-i},a_i'}(j)}\sum_{j \neq j' \in C_\ell} v_{j,j'}^\ell - \sum_{\ell \in a_j: i \in C_\ell'}\sum_{i,j \neq j' \in C_\ell'} v_{j,j'}^\ell  \\
                &\quad- \sum_{\ell\in a_j:i\in C_\ell'} v_{i,j}^\ell\\
                &= \sum_{\ell \in a_j: i \in C_\ell'}\sum_{i,j \neq j' \in C_\ell'} v_{j,j'}^\ell - \sum_{\ell \in a_j: i \in C_\ell'}\sum_{i,j \neq j' \in C_\ell'} v_{j,j'}^\ell \label{eq:zero welcome} \\
                &\quad- \sum_{\ell\in a_j:i\in C_\ell'} v_{i,j}^\ell\\
                &= - \sum_{\ell\in a_j:i\in C_\ell'} v_{i,j}^\ell
            \end{align}
        \end{subequations}
        where \eqref{eq:zero welcome} hold since $C_\ell'=C_\ell\cup\{i\}$ and we sum only over agent $j$'s utility from any agent $j'\in C_\ell'$ with $j'\neq i,j$.
    \end{itemize}

    Next, note that agent $i$'s utility gap is the difference between her utility from her abandoned and welcoming coalitions, i.e.:
    \begin{equation}
        \label{eq:agent i gap}
        v_i(\mathbf{a}_{-i},a_i) - v_i(\mathbf{a}_{-i},a_i') =  \sum_{\ell \in a_i} \sum_{i \neq j\in C_\ell} v_{i,j}^\ell-\sum_{\ell \in a_i'} \sum_{i \neq j\in C_\ell'} v_{i,j}^\ell
    \end{equation}
    
    Similarly, by slight abuse of notation, $\Phi(\bm{\varphi}) = \frac{1}{2} \sum_{i \in \mathcal{N}} V_i(\bm{\varphi})$ is a potential function for \textit{mixed} strategies as $\Phi(\bm{\varphi}_{-i}, \varphi_i) - \Phi(\bm{\varphi}_{-i}, \phi_i) = V_i(\bm{\varphi}_{-i}, \varphi_i) - V_i(\bm{\varphi}_{-i}, \phi_i)$ for any agent $i$ playing another strategy $\phi_i \in \Delta(\mathcal{A}_i)$.
    
\end{proof}

\section{Proof of Lemma 2}
\label{supp:lemma 2}

\begin{customlemma}{2}
    \label{supp:lemma:surrogate}
    For any $\delta \in (0,1]$ and any joint strategy $\bm{\varphi}$, each of the following holds with probability at least $1-\delta$: 
     \begin{equation}
         \label{eq:part 1}
         \gap(\bm{\varphi}) \leq \widehat{\gap}^\delta(\bm{\varphi})
     \end{equation}
     \begin{equation}
         \label{eq:part 2}
         \gap(\varphi^{\out}) \leq \min_{\bm{\varphi}}\widehat{\gap}^\delta(\bm{\varphi}) + \epsilon_{\opt}
     \end{equation}
     where $\varphi^{\out}$ is the joint strategy produced by Algorithm 1.
\end{customlemma}
\begin{proof}
    We begin with proving \eqref{eq:part 1}. By equations (3) and (4), for any joint strategy $\bm{\varphi}$, note that $\underline{V}_{i}^\delta(\bm{\varphi})\leq V_{i}(\bm{\varphi}) \leq \overline{V}_{i}^\delta(\bm{\varphi})$ holds with probability at least $1-\delta$. Thus, by equation (1) in the full paper, the following is satisfied with probability at least $1-\delta$:
    \begin{equation}
        \label{eq:gap upper bound}
        \begin{aligned}
            \gap(\bm{\varphi}) &=\max_{i \in \mathcal{N}}\max_{\phi_i \in \Delta(\mathcal{A}_i)} [V_i(\bm{\varphi}_{-i},\phi_i) - V_i(\bm{\varphi})] \leq \max_{i \in \mathcal{N}}\max_{\phi_i \in \Delta(\mathcal{A}_i)} \left[\overline{V}_i^\delta(\bm{\varphi}_{-i},\phi_i) - \underline{V}_i^\delta(\bm{\varphi})\right] \\
            &= \max_{i \in \mathcal{N}}\left[\overline{V}_i^{\star,\delta}(\bm{\varphi}_{-i}) - \underline{V}_i^\delta(\bm{\varphi})\right] = \widehat{\gap}^\delta(\bm{\varphi})
        \end{aligned}
    \end{equation}
    
    We proceed to proving \eqref{eq:part 2}. Since the joint strategy $\bm{\varphi}^{\out}$ produced by Algorithm 1 satisfies that $\widehat{\gap}^\delta(\bm{\varphi}^{\out}) \leq \min_{\bm{\varphi}} \widehat{\gap}^\delta(\bm{\varphi}) + \epsilon_{\opt}$, then from \eqref{eq:gap upper bound} we achieve that the following holds with probability at least $1-\delta$:
    \begin{equation}
        \widehat{\gap}^\delta(\bm{\varphi}^{\out}) \leq \widehat{\gap}^\delta(\bm{\varphi}^{\out}) \leq \min_{\bm{\varphi}} \widehat{\gap}^\delta(\bm{\varphi}) + \epsilon_{\opt}
    \end{equation}
    as desired.
\end{proof}

\section{Proof of Theorem 1}

\begin{customthm}{1}
    \label{supp:thm:general duality gap}
    Let $\Gamma$ be the set of all {\normalfont pure} joint strategies and consider some NS (possibly mixed) joint strategy $\bm{\varphi}^\star$. Then, under each feedback model, letting $b_i^\delta$ be an exploration bonus for the utility estimator $\hat{v}_i$ of each agent $i$ for any $\delta \in (0,1]$, the duality gap of the joint strategy $\varphi^{\out}$ produced by Algorithm 1 is upper bounded as follows with probability at least $1-\delta$: 
    \begin{equation}
        \gap(\varphi^{\out}) \leq 2\max_{i \in \mathcal{N}} \left[\max_{\bm{\varphi}'\in\Gamma} \mathbb{E}_{\mathbf{a}\sim (\bm{\varphi}_{-i}^\star,\varphi_i')} [b_i^\delta(\mathbf{a})]+ \mathbb{E}_{\mathbf{a}\sim \bm{\varphi}^\star} [b_i^\delta(\mathbf{a})]\right]+ \epsilon_{\opt}
    \end{equation}
    which directly translates into Algorithm 1's approximation guarantees for Nash stability.
\end{customthm}
\begin{proof}
    For each agent $i$, since $V_i(\bm{\varphi}) := \mathbb{E}_{\mathbf{a} \sim \bm{\varphi}} [d_i(\mathbf{a})]$, $\overline{V}_{i}^\delta(\bm{\varphi}) := \mathbb{E}_{\mathbf{a} \sim \bm{\varphi}} [\overline{v}_i^\delta(\mathbf{a})] =\mathbb{E}_{\mathbf{a} \sim \bm{\varphi}} [\hat{v}_i(\mathbf{a})+b_i^\delta(\mathbf{a})]$ and $ \underline{V}_{i}^\delta(\bm{\varphi}) := \mathbb{E}_{\mathbf{a} \sim \bm{\varphi}} [\underline{v}_i^\delta(\mathbf{a})]=\mathbb{E}_{\mathbf{a} \sim \bm{\varphi}} [\hat{v}_i(\mathbf{a})-b_i^\delta(\mathbf{a})]$ for any joint strategy $\bm{\varphi}$, we obtain that:
    \begin{equation}
        \label{eq:diff opt pess}
        \begin{aligned}
            V_i(\bm{\varphi})-\underline{V}_{i}^\delta(\bm{\varphi}) = \mathbb{E}_{\mathbf{a} \sim \bm{\varphi}} [d_i(\mathbf{a})-\hat{v}_i(\mathbf{a})+b_i^\delta(\mathbf{a})] \leq 2\mathbb{E}_{\mathbf{a} \sim \bm{\varphi}} [b_i^\delta(\mathbf{a})] \\
            \overline{V}_{i}^\delta(\bm{\varphi})-V_i(\bm{\varphi}) = \mathbb{E}_{\mathbf{a} \sim \bm{\varphi}} [\hat{v}_i(\mathbf{a})-d_i(\mathbf{a})+b_i^\delta(\mathbf{a})] \leq 2\mathbb{E}_{\mathbf{a} \sim \bm{\varphi}} [b_i^\delta(\mathbf{a})] 
        \end{aligned}
    \end{equation}
    where the two last inequalities stem from the fact that $b_i^\delta$ is an exploration bonus for the utility estimator $\hat{v}_i$ of each agent $i$, meaning that, with probability at least $1-\delta$, it holds that $|d_i(\mathbf{a})-\hat{v}_i(\mathbf{a})| \leq b_i(\mathbf{a})$ for any joint action $\mathbf{a}\in\mathcal{A}$.

    Next, consider a joint strategy:
    \begin{equation}
        \label{eq:pure joint maxmax}
        \tilde{\bm{\varphi}} \in \max_{\bm{\phi} \in \prod_{i=1}^n \Delta(\mathcal{A}_i)} \max_{i \in \mathcal{N}} \left[\overline{V}_i^\delta(\bm{\varphi}_{-i},\phi_i) - \underline{V}_i^\delta(\bm{\varphi})\right]
    \end{equation}
    As $\overline{V}_i^\delta(\bm{\varphi}_{-i},\phi_i)$ and $\underline{V}_i^\delta(\bm{\varphi})$ are both linear in each entry of the joint strategy $\bm{\varphi}$ and a mixed strategy $\phi_i \in \Delta(\mathcal{A}_i)$ of each agent $i$, then the maximum of $\max_{i \in \mathcal{N}} \left[\overline{V}_i^\delta(\bm{\varphi}_{-i},\phi_i) - \underline{V}_i^\delta(\bm{\varphi})\right]$ over all joint strategies $\bm{\phi} \in \prod_{i=1}^n \Delta(\mathcal{A}_i)$ is obtained at its vertices, i.e., the maximum over $\prod_{i=1}^n \Delta(\mathcal{A}_i)$ corresponds to a \textit{deterministic} (i.e. pure) joint strategy. Accordingly, $\tilde{\bm{\varphi}}$ as selected in \eqref{eq:pure joint maxmax} is a \textit{deterministic} (i.e. pure) joint strategy.

    Now, by \eqref{eq:part 2} in Lemma \ref{supp:lemma:surrogate}, note that:
    \begin{subequations}
        \begin{align}
            \gap(\varphi^{\out}) &\leq \min_{\bm{\varphi}} \max_{i \in \mathcal{N}}\left[\overline{V}_i^{\star,\delta}(\bm{\varphi}_{-i}) - \underline{V}_i^\delta(\bm{\varphi})\right] + \epsilon_{\opt} \\
            &=\min_{\bm{\varphi}} \max_{i \in \mathcal{N}}\left[\overline{V}_i^{\delta}(\bm{\varphi}_{-i}, \tilde{\varphi}_i) - \underline{V}_i^\delta(\bm{\varphi})\right] + \epsilon_{\opt} \\
            &\leq \min_{\bm{\varphi}} \max_{i \in \mathcal{N}}\left[V_i^{\delta}(\bm{\varphi}_{-i}, \tilde{\varphi}_i) - V_i^\delta(\bm{\varphi}) +  2\mathbb{E}_{\mathbf{a} \sim (\bm{\varphi}_{-i}, \tilde{\varphi}_i)} [b_i^\delta(\mathbf{a})] + 2\mathbb{E}_{\mathbf{a} \sim \bm{\varphi}} [b_i^\delta(\mathbf{a})]\right] + \epsilon_{\opt} \label{eq:apply bonus bounds} \\
            &= \min_{\bm{\varphi}} \left[\max_{i \in \mathcal{N}}[V_i^{\delta}(\bm{\varphi}_{-i}, \tilde{\varphi}_i) - V_i^\delta(\bm{\varphi})] +  \max_{i \in \mathcal{N}}\left[2\mathbb{E}_{\mathbf{a} \sim (\bm{\varphi}_{-i}, \tilde{\varphi}_i)} [b_i^\delta(\mathbf{a})] + 2\mathbb{E}_{\mathbf{a} \sim \bm{\varphi}} [b_i^\delta(\mathbf{a})]\right]\right] + \epsilon_{\opt} \\
            &= \min_{\bm{\varphi}} \left[\gap(\bm{\varphi}) +  \max_{i \in \mathcal{N}} \left[2\max_{\bm{\varphi}'\in\Gamma} \mathbb{E}_{\mathbf{a} \sim (\bm{\varphi}_{-i}, \varphi'_i)} [b_i^\delta(\mathbf{a})] + 2\mathbb{E}_{\mathbf{a} \sim \bm{\varphi}} [b_i^\delta(\mathbf{a})]\right]\right] + \epsilon_{\opt} \label{eq:definitions} \\
            &\leq \gap(\bm{\varphi}^\star) + \max_{i \in \mathcal{N}} \left[2\max_{\bm{\varphi}'\in\Gamma} \mathbb{E}_{\mathbf{a} \sim (\bm{\varphi}_{-i}^\star, \varphi'_i)} [b_i^\delta(\mathbf{a})] + 2\mathbb{E}_{\mathbf{a} \sim \bm{\varphi}^\star} [b_i^\delta(\mathbf{a})]\right]+ \epsilon_{\opt} \label{eq:NS} \\
            &= 2\max_{i \in \mathcal{N}} \left[2\max_{\bm{\varphi}'\in\Gamma} \mathbb{E}_{\mathbf{a} \sim (\bm{\varphi}_{-i}^\star, \varphi'_i)} [b_i^\delta(\mathbf{a})] + 2\mathbb{E}_{\mathbf{a} \sim \bm{\varphi}^\star} [b_i^\delta(\mathbf{a})]\right]+ \epsilon_{\opt} \label{eq:last}
        \end{align}
    \end{subequations}
    where the inequality in \eqref{eq:apply bonus bounds} follows from \eqref{eq:diff opt pess}, \eqref{eq:definitions} is due to $\gap(\bm{\varphi}) = \max_{i \in \mathcal{N}}[V_i^{\delta}(\bm{\varphi}_{-i}, \tilde{\varphi}_i) - V_i^\delta(\bm{\varphi})]$ and the definition of $\tilde{\bm{\varphi}}$ in \eqref{eq:pure joint maxmax}, the inequality in \eqref{eq:NS} stems from the fact that $\bm{\varphi}^\star$ is a Nash-stable (possibly mixed) joint strategy. Finally, the equality in \eqref{eq:last} is due to $\gap(\bm{\varphi}^\star)=0$, as $\bm{\varphi}^\star$ is a Nash-stable (possibly mixed) joint strategy.
\end{proof}

\section{Proof of Theorem 2}

\begin{customthm}{2} 
    \label{supp:thm:half gap}
    Let $\mathcal{G}$ be the class of all pairs $(G,\rho)$ consisting of a POCF game $G$ and an exploration policy $\rho$ satisfying Assumption 1 in the full paper, except for at most one coalition size $\alpha\in[n]\cup\{0\}$. Then, under semi-bandit feedback, for any algorithm $\alg$, there is $(G,\rho)\in \mathcal{G}$ such that any joint strategy $\bm{\varphi}$ produced by $\alg$ satisfies $\gap(\bm{\varphi})\geq\frac{1}{2}$ for the POCF game $G$, regardless of the dataset size.
\end{customthm}
\begin{proof}
    We construct two symmetric POCF games $G_1$ and $G_2$, each with $6$ agents whose utility distributions are deterministic, where the number of coalitions is at most $2$ (i.e., $k=k$), the action space of each agent is $\{\{1\}, \{2\}\}$ and both games share the same exploration policy $\rho$. Formally, since any partition can contain at most two coalitions, any joint action $\mathbf{a}$ induces a partition of the agents $\pi^{\mathbf{a}} = (C_1^{\mathbf{a}},C_2^{\mathbf{a}})$, where $C_1^{\mathbf{a}}$ and $C_2^{\mathbf{a}}$ are the set of agents that joined the first and second candidate coalition, respectively. Therefore, we construct the two games and their common exploration policy as follows:
    
    \paragraph{Construction of the first game $G_1$.} In the first game, denoted as $G_1$, there are two types of pure NS strategies, which consist of all pure joint strategies where either only $2$ agents join the first candidate coalition or the grand coalition is formed within the first candidate coalition (i.e., all $6$ agents join the first candidate coalition). Specifically, for each pair of distinct agents $i,j$ and any joint action $\mathbf{a}$, the mutual utility of agents $i,j$ from interacting within the first and second candidate coalitions are determined deterministically via $\mathcal{D}^1_{i,j}(\mathbf{a})$ and $\mathcal{D}^2_{i,j}(\mathbf{a})$ as follows (respectively):
    \begin{equation}
    \label{eq:deterministic utilities}
        \begin{aligned}
            \mathcal{D}_{i,j}^1(\mathbf{a}) &=\begin{cases}
            1 &, \text{ if } i,j \in C_1^{\mathbf{a}} \text{ and } |C_1^{\mathbf{a}}|=2 \\
            -1 &, \text{ if } i,j \in C_1^{\mathbf{a}} \text{ and } |C_1^{\mathbf{a}}|=3 \\
            1 &, \text{ if } i,j \in C_1^{\mathbf{a}} \text{ and } |C_1^{\mathbf{a}}|=4 \\
            1 &, \text{ if } i,j \in C_1^{\mathbf{a}} \text{ and } |C_1^{\mathbf{a}}|=5 \\
            1 &, \text{ if } i,j \in C_1^{\mathbf{a}} \text{ and } |C_1^{\mathbf{a}}|=6 
            \end{cases} \\
            \mathcal{D}_{i,j}^2(\mathbf{a}) &=\begin{cases}
                -\frac{1}{2} &, \text{ if } i,j \in C_2^{\mathbf{a}} \text{ and } |C_2^{\mathbf{a}}|=2 \\
                -\frac{1}{4} &, \text{ if } i,j \in C_2^{\mathbf{a}} \text{ and } |C_2^{\mathbf{a}}|=3 \\
                -\frac{1}{6} &, \text{ if } i,j \in C_2^{\mathbf{a}} \text{ and } |C_2^{\mathbf{a}}|=4 \\
                -\frac{1}{8} &, \text{ if } i,j \in C_2^{\mathbf{a}} \text{ and } |C_2^{\mathbf{a}}|=5 \\
                -\frac{1}{10} &, \text{ if } i,j \in C_2^{\mathbf{a}} \text{ and } |C_2^{\mathbf{a}}|=6 
            \end{cases}
        \end{aligned}
    \end{equation}
    Next, we formally characterize all the pure NS strategies of the first game $G_1$:
    \begin{lemma}
        \label{supp:lemma:NS game 1}
        The pure NS strategies of the first game $G_1$ consist only of all pure joint strategies where either exactly $2$ agents join the first candidate coalition or the grand coalition is formed within the first candidate coalition.
    \end{lemma}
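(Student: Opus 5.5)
The plan is to reduce everything to the single statistic $s=|C_1^{\mathbf{a}}|\in\{0,\dots,6\}$. Since every agent's action space is $\{\{1\},\{2\}\}$, a pure joint strategy is determined by which agents pick $\{1\}$, and $|C_2^{\mathbf{a}}|=6-s$. The utilities in \eqref{eq:deterministic utilities} are deterministic and depend only on the size of the coalition, so each agent's utility depends only on which coalition she is in and on $s$. A member of a coalition of size $t$ sums the common pairwise value over her $t-1$ co-members. This gives two utility profiles:
\begin{itemize}
\item $U_1(t)=(t-1)\,\mathcal{D}^1(t)$ for a member of the first coalition, so $U_1(1),\dots,U_1(6)=0,1,-2,3,4,5$;
\item $U_2(t)=(t-1)\,\mathcal{D}^2(t)$ for a member of the second coalition, so $U_2(1)=0$ and $U_2(t)=-\tfrac12$ for all $t\in\{2,\dots,6\}$.
\end{itemize}

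By Observation \ref{obs:deviations from pure}, the only unilateral deviations are moves between the two coalitions, and each such move shifts $s$ by $\pm1$. So a joint action with $|C_1^{\mathbf{a}}|=s$ is NS if and only if two conditions hold:
\begin{itemize}
\item if $s\ge1$, a member of $C_1$ does not gain by moving, i.e. $U_1(s)\ge U_2(7-s)$;
\item if $s\le 5$, a member of $C_2$ does not gain by moving, i.e. $U_2(6-s)\ge U_1(s+1)$.
\end{itemize}
NS-ness is therefore a property of $s$ alone, and it suffices to check the seven values of $s$. A strict violation for one agent already gives a Nash deviation, so each failing case needs only one violated inequality.

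The case analysis is as follows.
\begin{itemize}
\item $s=0$: a $C_2$ member gets $U_2(6)=-\tfrac12<0=U_1(1)$, so she deviates.
\item $s=1$: a $C_2$ member gets $U_2(5)=-\tfrac12<1=U_1(2)$, so she deviates.
\item $s=2$: both conditions hold, since $1\ge U_2(5)=-\tfrac12$ and $U_2(4)=-\tfrac12\ge U_1(3)=-2$. Hence $s=2$ is NS.
\item $s=3$: a $C_1$ member gets $U_1(3)=-2<-\tfrac12=U_2(4)$, so she deviates.
\item $s=4$: a $C_2$ member gets $U_2(2)=-\tfrac12<4=U_1(5)$, so she deviates.
\item $s=5$: the lone $C_2$ member gets $U_2(1)=0<5=U_1(6)$, so she deviates.
\item $s=6$: there is no $C_2$ member, and a $C_1$ member gets $5\ge U_2(1)=0$. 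Hence $s=6$, the grand coalition in the first candidate coalition, is NS.
\end{itemize}
This yields exactly the two families claimed in the lemma.

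There is no real obstacle here; the argument is a finite check. The points needing care are bookkeeping ones. A deviator changes the size of the coalition she joins to $7-s$ or $s+1$, and the payoff of the target coalition must be evaluated at that new size, not the current one. The edge cases $s\in\{0,6\}$, where one coalition is empty, must be handled separately, using $U_\cdot(1)=0$ for a singleton. It is also worth stating explicitly that, by symmetry among agents, the identity of the members of $C_1$ is irrelevant, so every joint action with $s\in\{2,6\}$ is NS.
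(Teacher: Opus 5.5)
Your proof is correct and takes essentially the same route as the paper. The paper also runs a finite case analysis on $|C_1^{\mathbf{a}}|$: it checks that no agent gains by switching when the size is $2$ or $6$, and exhibits a profitable switch for every other size. Your reduction to the size-indexed payoffs $U_1,U_2$ is tidier, and it also covers $s=0$ (all agents in the second coalition). The paper's enumeration ($|C_1^{\mathbf{a}}|=1$ or $3\le|C_1^{\mathbf{a}}|\le 5$) omits that case.
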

    \begin{proof}
         We first prove that all pure joint strategies that are as in the above are Nash stable, and then show that no other pure joint strategy is Nash-stable. Formally:
        \begin{enumerate}
            \item \textbf{\underline{All pure strategies where only $2$ agents join the first candidate coalition are Nash-stable:}} Consider a joint action $\mathbf{a}$ with $C_1^{\mathbf{a}}=\{i,j\}$ for a pair of distinct agents $i,j$, meaning that any other agent $j'\in \mathcal{N}\setminus\{i,j\}$ does not join the first candidate coalition (i.e., $a_{j'}=\{2\}$). Hence, both agents $i,j$ receive a utility of $1$ from each other. Now, we will prove that no agent has an incentive to unilaterally deviate from $\mathbf{a}$:
            \begin{enumerate}
                \item \textbf{Both agents $i,j$ have no incentive to unilaterally deviate:} If agent $i$ unilaterally deviates, then she decides to join the second candidate coalition instead, which now becomes a singleton coalition consisting only of agent $j$, i.e., $C_1^{\mathbf{a}}=\{j\}$ while the second candidate coalition $C_2^{\mathbf{a}}$ is now of size $5$. Hence, each agent in $C_2^{\mathbf{a}}$ receives a utility of $-\frac{1}{8}$ from each other agent, meaning that she obtains a utility of $-\frac{1}{8}\cdot 4=-\frac{1}{2}$ from that coalition. Agent $i$'s unilateral deviation this decreases her utility from $1$ to $-\frac{1}{2}$. As such, agent $i$ has no incentive to unilaterally deviate. By \eqref{eq:deterministic utilities}, the analysis for the case where agent $j$ unilaterally deviates is identical.
    
                \item \textbf{No agent $j'\in \mathcal{N}\setminus\{i,j\}$ has an incentive to unilaterally deviate:} If agent $j'$ unilaterally deviates, then she decides to join only the first candidate coalition, which now consists of agents $i,j,j'$ (i.e., $C_1^{\mathbf{a}}=\{i,j,j'\}$). Before deviating, agent $j'$ obtains $-\frac{1}{6}\cdot3=-\frac{1}{2}$ utility from joining the second candidate coalition, as she received a utility of $-\frac{1}{6}$ from each of the other $3$ agents in the second candidate coalition. However, by \eqref{eq:deterministic utilities}, the utility of agent $j'$ decreases to $-1-1=-2$ since her utility from both agents $i,j$ is $-1$, and thus she has no incentive to unilaterally deviate.
            \end{enumerate}
            Overall, a joint action $\mathbf{a}$ with $C_1^{\mathbf{a}}=\{i,j\}$ for some pair of distinct agents $i,j$ is a pure Nash-stable strategy.
    
            \item \textbf{\underline{Forming the grand coalition within the first candidate coalition is a Nash-stable pure strategy:}} This strategy corresponds to the joint action $\mathbf{a}=(\{1\}, \{1\}, \{1\}, \{1\}, \{1\}, \{1\})$, where all $6$ agents decide to join the first candidate coalition. Note that each agent's utility from the grand coalition is $5\cdot 1=5$, as each agent's utility from interacting with any other agent is $1$ due to \eqref{eq:deterministic utilities}. However, no agent has an incentive to unilaterally deviate by deciding to join only the second candidate coalition because this would have decreased her utility to $0$ since an agent receives $0$ utility from being alone in a coalition. Accordingly, forming the grand coalition within the first candidate coalition is a Nash-stable pure strategy.
    
            \item \textbf{\underline{No other pure joint strategy is Nash-stable:}} Consider a joint action $\mathbf{a}$ with either $|C_1^{\mathbf{a}}|=1$ or $3\leq |C_1^{\mathbf{a}}| \leq 5$. We distinguish between the following cases:
            \begin{enumerate}
                \item \textbf{$|C_1^{\mathbf{a}}|=1$:} That is, the first candidate coalition consists of exactly $1$ agent, while all other agents choose to join the second candidate coalition. Each agent in the second candidate coalition receives a utility of $-\frac{1}{8}\cdot4=-\frac{1}{2}$ from that coalition. Further, as $v_{i,i}=0$ for any agent $i$, the agent in $C_1^{\mathbf{a}}$ gets $0$ utility from the partition induced by $\mathbf{a}$. By \eqref{eq:deterministic utilities}, any agent $i\in C_2^{\mathbf{a}}$ thus has an incentive to unilaterally deviate by joining the first candidate coalition. Particularly, this increases her utility to $1$ since she receives a utility of $1$ from the agent currently in $C_1^{\mathbf{a}}$.
                
                \item \textbf{$|C_1^{\mathbf{a}}|=3$:} That is, the first candidate coalition consists of exactly $3$ agents, while all other $3$ agents choose to join the second candidate coalition. By \eqref{eq:deterministic utilities}, any agent $i\in C_1^{\mathbf{a}}$ receives a utility of $-1-1=-2$ since her utility from any other agent is $-1$. As such, any agent $i\in C_1^{\mathbf{a}}$ has an incentive to unilaterally deviate by deciding to join the second candidate coalition instead, thus increasing her utility to $-\frac{1}{2}$. Further, any agent $i\in C_2^{\mathbf{a}}$ receives $-\frac{1}{2}$ utility, as she has a utility of $-\frac{1}{4}$ from any other agent in the second candidate coalition. By \eqref{eq:deterministic utilities}, any agent $i\in C_2^{\mathbf{a}}$ has an incentive to unilaterally deviate by joining the first candidate coalition. Particularly, this increases her utility to $2\cdot 1=1$ since she receives a utility of $1$ from any other coalition member.
    
                \item \textbf{$|C_1^{\mathbf{a}}|=4$:} That is, the first candidate coalition consists of exactly $4$ agents, while the remaining $2$ agents choose to join the second candidate coalition. Any agent $i\in C_2^{\mathbf{a}}$ receives $-\frac{1}{2}$ utility, as she has a utility of $-\frac{1}{2}$ from the other agent in the second candidate coalition. By \eqref{eq:deterministic utilities}, any agent $i\in C_2^{\mathbf{a}}$ has an incentive to unilaterally deviate by joining the first candidate coalition. Particularly, this increases her utility to $3\cdot 1=3$ since she receives a utility of $1$ from any other coalition member.
    
                \item \textbf{$|C_1^{\mathbf{a}}|=5$:} That is, the first candidate coalition consists of exactly $5$ agents, while one agent chooses to join the second candidate coalition. This one agent $i\in C_2^{\mathbf{a}}$ receives $0$ utility. By \eqref{eq:deterministic utilities}, the agent $i\in C_2^{\mathbf{a}}$ has an incentive to unilaterally deviate by joining the first candidate coalition. Particularly, this increases her utility to $5\cdot 1=5$ since she receives a utility of $1$ from any other coalition member. In fact, after the agent $i\in C_2^{\mathbf{a}}$ unilaterally deviates, the grand coalition is formed within the first candidate coalition, which we have already proven to be Nash-stable.
            \end{enumerate}
        \end{enumerate}
    
        In summary, the pure NS strategies of the first game $G_1$ consist only of all pure joint strategies where either exactly $2$ agents join the first candidate coalition or the grand coalition is formed within the first candidate coalition.
    \end{proof}

    \paragraph{Construction of the second game $G_2$.} In the second game, there is only one type of pure NS strategies, comprising all pure joint strategies where only $5$ agents join the first candidate coalition. Specifically, for each pair of distinct agents $i,j$ and any joint action $\mathbf{a}$, we define the mutual utility of agents $i,j$ within the first and second candidate coalitions is determined deterministically via $\tilde{\mathcal{D}}_{i,j}^1(\mathbf{a})$ and $\tilde{\mathcal{D}}_{i,j}^2(\mathbf{a})$ as follows (respectively):
    \begin{equation}
    \label{eq:deterministic utilities game 2}
        \begin{aligned}
            \tilde{\mathcal{D}}^1_{i,j}(\mathbf{a}) &=\begin{cases}
            1 &, \text{ if } i,j \in C_1^{\mathbf{a}} \text{ and } |C_1^{\mathbf{a}}|=2 \\
            1 &, \text{ if } i,j \in C_1^{\mathbf{a}} \text{ and } |C_1^{\mathbf{a}}|=3 \\
            1 &, \text{ if } i,j \in C_1^{\mathbf{a}} \text{ and } |C_1^{\mathbf{a}}|=4 \\
            1 &, \text{ if } i,j \in C_1^{\mathbf{a}} \text{ and } |C_1^{\mathbf{a}}|=5 \\
            -1 &, \text{ if } i,j \in C_1^{\mathbf{a}} \text{ and } |C_1^{\mathbf{a}}|=6 
        \end{cases}\\
        \tilde{\mathcal{D}}_{i,j}^2(\mathbf{a}) &=\begin{cases}
            -\frac{1}{2} &, \text{ if } i,j \in C_2^{\mathbf{a}} \text{ and } |C_2^{\mathbf{a}}|=2 \\
            -\frac{1}{4} &, \text{ if } i,j \in C_2^{\mathbf{a}} \text{ and } |C_2^{\mathbf{a}}|=3 \\
            -\frac{1}{6} &, \text{ if } i,j \in C_2^{\mathbf{a}} \text{ and } |C_2^{\mathbf{a}}|=4 \\
            -\frac{1}{8} &, \text{ if } i,j \in C_2^{\mathbf{a}} \text{ and } |C_2^{\mathbf{a}}|=5 \\
            -\frac{1}{10} &, \text{ if } i,j \in C_2^{\mathbf{a}} \text{ and } |C_2^{\mathbf{a}}|=6 
        \end{cases}
        \end{aligned}
    \end{equation}
    Next, we formally characterize all the pure NS strategies of the second game $G_2$:
    \begin{lemma}
        \label{supp:lemma:NS game 2}
        The pure NS strategies of the second game $G_2$ consist only of all pure joint strategies where exactly $5$ agents join the first candidate coalition.
    \end{lemma}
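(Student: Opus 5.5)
We will mirror the structure of the proof of Lemma \ref{supp:lemma:NS game 1}. Since each agent's action space is $\{\{1\},\{2\}\}$, a pure joint strategy is determined by the set $C_1^{\mathbf{a}}$, and the game is invariant under permutations of the agents. Stability therefore depends only on $s=|C_1^{\mathbf{a}}|\in\{0,\dots,6\}$. In a profile with first coalition of size $s$, an agent in $C_1^{\mathbf{a}}$ gets $(s-1)\,\tilde{d}^1(s)$, where $\tilde{d}^1(s)$ is the value in \eqref{eq:deterministic utilities game 2}. An agent in $C_2^{\mathbf{a}}$ gets $(5-s)\,\tilde{d}^2(6-s)$. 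The only possible unilateral deviations move one agent between the two candidate coalitions, changing $s$ by $\pm 1$ as in Observation \ref{obs:deviations from pure}. So the whole proof reduces to comparing these two quantities for $s$ and $s\pm1$. The convention is that a singleton coalition yields utility $0$.

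\textbf{Stability of $s=5$.} For an agent in $C_1^{\mathbf{a}}$, the current utility is $4\cdot 1=4$. Moving to the second coalition makes her a singleton there, giving utility $0$, so she does not deviate. For the unique agent in $C_2^{\mathbf{a}}$, the current utility is $0$. Joining the first coalition creates a coalition of size $6$, giving utility $5\cdot(-1)=-5<0$, so she does not deviate either. Hence every profile with exactly $5$ agents in the first candidate coalition is Nash stable.

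\textbf{Instability of all other $s$.} Each case exhibits one improving deviation.
\begin{itemize}
\item $s=6$: each agent gets $-5$, and moving to the second coalition gives $0$.
\item $s=0$: each agent gets $5\cdot(-\tfrac{1}{10})=-\tfrac12$, and moving to the first coalition gives $0$, which is strictly better.
\item $s=1$: an agent in $C_2^{\mathbf{a}}$ gets $4\cdot(-\tfrac18)=-\tfrac12$, and joining the first coalition gives $1$.
\item $s=2$: an agent in $C_2^{\mathbf{a}}$ gets $3\cdot(-\tfrac16)=-\tfrac12$, and joining the first coalition gives $2$.
\item $s=3$: an agent in $C_2^{\mathbf{a}}$ gets $2\cdot(-\tfrac14)=-\tfrac12$, and joining the first coalition gives $3$.
\item $s=4$: an agent in $C_2^{\mathbf{a}}$ gets $-\tfrac12$, and joining the first coalition gives $4$.
\end{itemize}
Every profile with $s\neq 5$ therefore admits a Nash deviation, which proves the lemma.

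There is no real obstacle; the proof is a finite case check. The points needing care are using the size of the coalition \emph{after} the deviation when evaluating the deviator's new utility, and treating the boundary cases $s=0$ and $s=6$, where one coalition is empty, consistently with the convention $v_{i,i}^\ell=0$.
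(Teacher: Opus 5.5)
Your proof is correct and follows the same exhaustive case check on $s=|C_1^{\mathbf{a}}|$ as the paper. It is in fact slightly more complete, since it also treats $s=0$, which the paper's case split omits. One small slip: at $s=5$, an agent leaving $C_1^{\mathbf{a}}$ joins the agent already in the second candidate coalition rather than becoming a singleton, so her new utility is $-\tfrac{1}{2}$ rather than $0$; this still does not beat her current utility $4$, so the conclusion is unaffected.
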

    \begin{proof}
         We first prove that all pure joint strategies that are as in the above are Nash stable, and then show that no other pure joint strategy is Nash-stable. Formally:
        \begin{enumerate}
        \item \textbf{\underline{All pure strategies where only $5$ agents join the first candidate coalition are Nash-stable:}} Consider a joint action $\mathbf{a}$ with $|C_1^{\mathbf{a}}| = 5$. That is, the first candidate coalition consists of exactly $5$ agents, while one agent chooses to join the second candidate coalition. Now, we will prove that no agent has an incentive to unilaterally deviate from $\mathbf{a}$:
        \begin{enumerate}
            \item \textbf{The single agent $i\in C_2^{\mathbf{a}}$ has no incentive to unilaterally deviate:} Agent $i$ receives $0$ utility from being alone in the second candidate coalition. By \eqref{eq:deterministic utilities game 2}, if agent $i$ unilaterally deviates, then she decreases her utility to to $5\cdot (-1)=-5$ since she receives a utility of $-1$ from any other coalition member, and thus agent $i$ has no incentive to unilaterally deviate.

            \item \textbf{No agent $i\in C_1^{\mathbf{a}}$ has an incentive to unilaterally deviate:} Each agent $i\in C_1^{\mathbf{a}}$ obtains a utility of $4 \cdot 1=4$, as she receives a utility of $1$ from any other coalition member. If an agent $i\in C_1^{\mathbf{a}}$ unilaterally deviates, then she decides to join the second candidate coalition instead, meaning that her utility decreases to $-\frac{1}{2}$ since she will obtain a utility of $-\frac{1}{2}$ from the single agent in $C_2^{\mathbf{a}}$. Therefore, each agent $i\in C_1^{\mathbf{a}}$ has no incentive to unilaterally deviate.
        \end{enumerate}

         \item \textbf{\underline{No other pure joint strategy is Nash-stable:}} Consider a joint action $\mathbf{a}$ with either $1\leq |C_1^{\mathbf{a}}| \leq 4$ or $|C_1^{\mathbf{a}}|=6$. We distinguish between the following cases:
         \begin{enumerate}
             \item \textbf{$|C_1^{\mathbf{a}}|=1$:} That is, the first candidate coalition consists of exactly $1$ agent, while all other agents choose to join the second candidate coalition. Each agent in the second candidate coalition receives a utility of $-\frac{1}{8}\cdot4=-\frac{1}{2}$ from that coalition. Further, as $v_{i,i}=0$ for any agent $i$, the agent in $C_1^{\mathbf{a}}$ gets $0$ utility from the partition induced by $\mathbf{a}$. By \eqref{eq:deterministic utilities}, any agent $i\in C_2^{\mathbf{a}}$ thus has an incentive to unilaterally deviate by joining the first candidate coalition. Particularly, this will increase her utility to $1$ since she will receive a utility of $1$ from the agent currently in $C_1^{\mathbf{a}}$.

             \item \textbf{$3 \leq |C_1^{\mathbf{a}}|\leq 5$:} That is, there are agents that choose to join the second candidate coalition. In any case, each such agent $i\in C_2^{\mathbf{a}}$ receives a utility of $-\frac{1}{2}$. By \eqref{eq:deterministic utilities}, any agent $i\in C_2^{\mathbf{a}}$ thus has an incentive to unilaterally deviate by joining the first candidate coalition instead. Particularly, this increases her utility to $2$ if $|C_1^{\mathbf{a}}|=3$, $3$ if $|C_1^{\mathbf{a}}|=4$ and $4$ if $|C_1^{\mathbf{a}}|=5$ since the utility that she receives from any other coalition member is $1$ in each case.
             
            \item \textbf{$|C_1^{\mathbf{a}}|=6$:} That is, the grand coalition is formed within the first candidate coalition (i.e., all agents join the first candidate coalition). Particularly, this strategy corresponds to the joint action given by $\mathbf{a}=(\{1\}, \{1\}, \{1\}, \{1\}, \{1\}, \{1\})$. Due to \eqref{eq:deterministic utilities}, each agent $i$ receives a utility of $5\cdot (-1)=-5$ from the partition induced by $\mathbf{a}$, as she obtains a utility of $-1$ from any other coalition member. Thereby, each agent $i$ has an incentive to unilaterally deviate by deciding to join the second candidate coalition, thus increasing her utility to $0$.
         \end{enumerate}
    \end{enumerate}

    In summary, the pure NS strategies of the second game $G_2$ consist solely of all pure joint strategies where only $5$ agents join the first candidate coalition.
    \end{proof}

    \paragraph{Construction of the exploration policy $\rho$.} For both games $G_1, G_2$, we construct the same exploration policy $\rho$, which picks a joint action uniformly at random from the set of all joint actions where the first candidate coalition consists of exactly $2$, $4$ or $5$ agents, whereas all other joint actions are assigned zero probability under $\rho$. Namely, the exploration policy $\rho$ picks joint action uniformly at random from $\mathcal{B}:=\{\mathbf{a} \in \mathcal{A}: |C_1^{\mathbf{a}}|=2 \text{ or } |C_1^{\mathbf{a}}|=4 \text{ or } |C_1^{\mathbf{a}}|=5\}$ and assigns zero probability to any joint action in $\mathcal{A}\setminus\mathcal{B}$. Since $|\mathcal{B}| = \binom{6}{2}+\binom{6}{4}+\binom{6}{5}=36$, the exploration policy $\rho$ is thereby given by:
    \begin{equation}
        \label{eq:exploration policy}
        \rho(\mathbf{a})=\begin{cases}
            \frac{1}{36} &, \mathbf{a} \in \mathcal{B}=\{\mathbf{a} \in \mathcal{A}: |C_1^{\mathbf{a}}|=2 \text{ or } |C_1^{\mathbf{a}}|=4 \text{ or } |C_1^{\mathbf{a}}|=5\} \\
            0 &, \text{ o.w.}
        \end{cases}
    \end{equation}

    \paragraph{Proving the statement in Theorem \ref{supp:thm:half gap}.} Subsequently, we prove the statement in Theorem \ref{supp:thm:half gap}. First, we show that the pairs $(G_1,\rho)$ and $(G_2,\rho)$ both belong to the class $\mathcal{G}$ stated in Theorem \ref{supp:thm:half gap}, i.e., both games with the exploration policy $\rho$ satisfy Assumption 1 in the full paper, except for at most one coalition size $\alpha\in[n]\cup\{0\}$. Note that the exploration policy $\rho$ in \eqref{eq:exploration policy} can only cover the following pure joint strategies:
    \begin{itemize}
        \item \underline{For the first game $G_1$:} The exploration policy $\rho$ in \eqref{eq:exploration policy} can only cover all NS pure joint strategies where exactly $2$ agents join the first candidate coalition, along with any unilateral deviation where one of those $2$ agents chooses to opt out of the coalition and instead join the second candidate coalition. However, the exploration policy $\rho$ does \textit{not} cover any unilateral deviation from such NS strategies that results in a pure joint strategy where exactly $3$ agents join the first candidate coalition. Hence, $(G_1,\rho)$ satisfies Assumption 1 in the full paper except for a coalition size of $3$, yielding that $(G_1,\rho) \in \mathcal{G}$.

        \item \underline{For the second game $G_2$:} The exploration policy $\rho$ in \eqref{eq:exploration policy} can only cover all NS pure joint strategies as specified in Lemma \ref{supp:lemma:NS game 2}, together with any unilateral deviation that leads to a pure joint strategy where exactly $4$ agents join the first candidate coalition. However, the exploration policy $\rho$ does \textit{not} cover any unilateral deviation that results in the pure joint strategy where the grand coalition is formed within the first candidate coalition. Hence, $(G_1,\rho)$ satisfies Assumption 1 in the full paper except for a coalition size of $6$, yielding that $(G_2,\rho) \in \mathcal{G}$.
    \end{itemize}  
    
    As such, consider any algorithm $\alg$. As mentioned earlier, the exploration policy $\rho$ fails to cover unilateral deviations from NS pure joint strategies that lead to a pure joint strategy where the first candidate coalition has size $3$ for game $G_1$, and size $6$ (i.e., the grand coalition) for game $G_2$. In particular, this also means that $\rho$ does \textit{not} cover the pure joint strategy of forming the grand coalition within the first candidate coalition, which is Nash-stable in game $G_1$ by Lemma \ref{supp:lemma:NS game 1}. Those unilateral deviations are crucial to differentiate the NS pure joint strategies of both games. In fact, by \eqref{eq:deterministic utilities} and \eqref{eq:deterministic utilities game 2}, any joint action sampled from $\rho$ will result in the same utility feedbacks for both games. Hence, \textbf{regardless of the size of the dataset obtained by $\alg$, the algorithm $\alg$ \textit{cannot} distinguish between the games $G_1,G_2$ and they appear behaviorally the same from the perspective of the data available under $\rho$}. 

    Therefore, we denote by $q$ the probability that a joint action $\mathbf{a}$ with $|C_1^{\mathbf{a}}|=5$ is sampled from the joint strategy $\bm{\varphi}$ produced by $\alg$. For game $G_1$, a joint action $\mathbf{a}$ with $|C_1^{\mathbf{a}}|=5$ is \textit{not} a Nash-stable pure joint strategy by Lemma \ref{supp:lemma:NS game 1}. Accordingly, since the joint strategy $\bm{\varphi}$ produced by $\alg$ plays such a joint action with probability $q$, then $\bm{\varphi}$ picks a joint action for which some agent has an incentive to unilaterally deviate with probability $q$, thereby inducing a non-zero duality gap. However, with probability $1-q$, the joint strategy $\bm{\varphi}$ produced by $\alg$ may pick one of the Nash-stable strategies dictated by Lemma \ref{supp:lemma:NS game 1}, and thus yielding zero duality gap. Formally, based on Section 1 in the full paper, the utility of each agent $i$ from her strategy $\varphi_i$ is denoted as $V_i^1(\bm{\varphi}) := \mathbb{E}_{\mathbf{a} \sim \bm{\varphi}} [d_i^1(\mathbf{a})]$ under the game $G_1$. As such, by arguments similar to \eqref{eq:gap upper bound} in the proof of Lemma \ref{supp:lemma:surrogate}, we obtain the following for the game $G_1$:
    \begin{subequations}
        \label{eq:gap game 1}
        \begin{align}
            \gap(\bm{\varphi})& = \max_{i \in \mathcal{N}} \left[\max_{\phi_i \in \Delta(\mathcal{A}_i)} V_i^1(\bm{\varphi}_{-i},\phi_i) - V_i^1(\bm{\varphi})\right] = \max_{i \in \mathcal{N}} \left[\max_{\phi_i \in \Delta(\mathcal{A}_i)}\mathbb{E}_{\mathbf{a} \sim (\bm{\varphi}_{-i},\phi_i)} [d_i^1(\mathbf{a})] -\mathbb{E}_{\mathbf{a} \sim \bm{\varphi}} [d_i^1(\mathbf{a})]\right] \\
            &= \max_{i \in \mathcal{N}}  \left[\max_{\phi_i \in \Delta(\mathcal{A}_i)}\mathbb{E}_{a_i' \sim \phi_i} [\mathbb{E}_{\mathbf{a}_{-i} \sim \bm{\varphi}_{-i}}[\mathbb{E}_{a_i \sim \varphi_i}[d_i^1(\mathbf{a}_{-i},a_i')|d_i^1(\mathbf{a}_{-i},a_i)]]] -\mathbb{E}_{\mathbf{a} \sim \bm{\varphi}} [d_i^1(\mathbf{a})]\right] \label{eq:law of total exp} \\
            &= \max_{i \in \mathcal{N}}  \max_{\phi_i \in \Delta(\mathcal{A}_i)}\mathbb{E}_{a_i' \sim \phi_i} \left[\mathbb{E}_{\mathbf{a} \sim \bm{\varphi}}[d_i^1(\mathbf{a}_{-i},a_i')|d_i^1(\mathbf{a})]-\mathbb{E}_{\mathbf{a} \sim \bm{\varphi}} [d_i^1(\mathbf{a})]\right] \\
            &= \max_{i \in \mathcal{N}} \max_{\phi_i \in \Delta(\mathcal{A}_i)} \mathbb{E}_{a_i' \sim \phi_i} \Big[\left[\mathbb{E}_{\mathbf{a} \sim \bm{\varphi}}[d_i^1(\mathbf{a}_{-i},a_i')|d_i^1(\mathbf{a}),|C_1^{\mathbf{a}}|=5]-\mathbb{E}_{\mathbf{a} \sim \bm{\varphi}} [d_i^1(\mathbf{a})||C_1^{\mathbf{a}}|=5]\right] \cdot \mathbb{P}_{\mathbf{a} \sim \bm{\varphi}}[|C_1^{\mathbf{a}}|=5] \\
            &+ \left[\mathbb{E}_{\mathbf{a} \sim \bm{\varphi}}[d_i^1(\mathbf{a}_{-i},a_i')|d_i^1(\mathbf{a}), |C_1^{\mathbf{a}}|\neq 5]-\mathbb{E}_{\mathbf{a} \sim \bm{\varphi}} [d_i^1(\mathbf{a})||C_1^{\mathbf{a}}|\neq 5]\right]\cdot \mathbb{P}_{\mathbf{a} \sim \bm{\varphi}}[|C_1^{\mathbf{a}}|\neq 5]\Big] \\
            &= q\max_{i \in \mathcal{N}} \max_{\phi_i \in \Delta(\mathcal{A}_i)} \mathbb{E}_{a_i' \sim \phi_i} \left[\left[\mathbb{E}_{\mathbf{a} \sim \bm{\varphi}}[d_i^1(\mathbf{a}_{-i},a_i')|d_i^1(\mathbf{a}),|C_1^{\mathbf{a}}|=5]-\mathbb{E}_{\mathbf{a} \sim \bm{\varphi}} [d_i^1(\mathbf{a})||C_1^{\mathbf{a}}|=5]\right] \right] \label{eq:prob 1} \\
            &+ (1-q) \max_{i \in \mathcal{N}} \max_{\phi_i \in \Delta(\mathcal{A}_i)} \mathbb{E}_{a_i' \sim \phi_i} \left[\left[\mathbb{E}_{\mathbf{a} \sim \bm{\varphi}}[d_i^1(\mathbf{a}_{-i},a_i')|d_i^1(\mathbf{a}), |C_1^{\mathbf{a}}|\neq 5]-\mathbb{E}_{\mathbf{a} \sim \bm{\varphi}} [d_i^1(\mathbf{a})||C_1^{\mathbf{a}}|\neq 5]\right] \right]\\
            &\geq q\label{eq:equality 5}
        \end{align}
    \end{subequations}
    where \eqref{eq:law of total exp} is due to the law of total expectation and rearranging the external expectation. Moreover, \eqref{eq:prob 1} follows from $\mathbb{P}_{\mathbf{a} \sim \bm{\varphi}}[|C_1^{\mathbf{a}}|=5]=q$ and $\mathbb{P}_{\mathbf{a} \sim \bm{\varphi}}[|C_1^{\mathbf{a}}|\neq 5]=1-q$. Further, the inequality in \eqref{eq:equality 5} stems from the fact that, for any joint action $\mathbf{a}$ with $|C_1^{\mathbf{a}}| = 5$, each agent $i$ obtains a utility of $4\cdot 1=4$ since she receives a utility of $1$ from any other coalition member due to \eqref{eq:deterministic utilities}, yielding that $ \mathbb{E}_{\mathbf{a} \sim \bm{\varphi}} \left[d_i^1(\mathbf{a})\middle| |C_1^{\mathbf{a}}|=5\right]=4$. However, for game $G_1$, a joint action $\mathbf{a}$ with $|C_1^{\mathbf{a}}|=5$ is \textit{not} a Nash-stable pure joint strategy by Lemma \ref{supp:lemma:NS game 1}. Particularly, the single agent that remains in $N\setminus C_1^{\mathbf{a}}$, can unilaterally deviate by joining the first candidate coalition, thus forming the grand coalition within the first candidate coalition and increasing her utility to $5\cdot 1=5$, as her utility from any other coalition member is $1$ due to \eqref{eq:deterministic utilities}. This means that $\max_{i \in \mathcal{N}} \max_{\phi_i \in \Delta(\mathcal{A}_i)} \mathbb{E}_{a_i' \sim \phi_i} \left[\left[\mathbb{E}_{\mathbf{a} \sim \bm{\varphi}}[d_i^1(\mathbf{a}_{-i},a_i')|d_i^1(\mathbf{a}),|C_1^{\mathbf{a}}|=5]-\mathbb{E}_{\mathbf{a} \sim \bm{\varphi}} [d_i^1(\mathbf{a})||C_1^{\mathbf{a}}|=5]\right] \right] \geq 1$, which implies \eqref{eq:equality 5}. Conversely, consider any joint action $\mathbf{a}$ with $|C_1^{\mathbf{a}}| \neq 5$. If $\mathbf{a}$ induces an NS partition, then the duality gap is $0$; otherwise, the duality gap is positive by argument similar to the above. As such, it holds that $\max_{i \in \mathcal{N}} \max_{\phi_i \in \Delta(\mathcal{A}_i)} \mathbb{E}_{a_i' \sim \phi_i} \left[\left[\mathbb{E}_{\mathbf{a} \sim \bm{\varphi}}[d_i^1(\mathbf{a}_{-i},a_i')|d_i^1(\mathbf{a}), |C_1^{\mathbf{a}}|\neq 5]-\mathbb{E}_{\mathbf{a} \sim \bm{\varphi}} [d_i^1(\mathbf{a})||C_1^{\mathbf{a}}|\neq 5]\right] \right] \geq 0$, yielding \eqref{eq:equality 5}.

    We therefore infer that \textbf{the joint strategy $\bm{\varphi}$ produced by $\alg$ satisfies $\gap(\bm{\varphi})\geq q$ the game $G_1$.}

    In contrast, for game $G_2$, a joint action $\mathbf{a}$ with $|C_1^{\mathbf{a}}|=5$ is a Nash-stable pure joint strategy by Lemma \ref{supp:lemma:NS game 2}. Hence, as the joint strategy $\bm{\varphi}$ produced by $\alg$ plays such a joint action with probability $q$, then $\bm{\varphi}$ picks an NS strategy with probability $q$, resulting in zero duality gap. Yet, with probability $1-q$, the joint strategy $\bm{\varphi}$ produced by $\alg$ is \textit{not} a Nash-stable strategy as specified by Lemma \ref{supp:lemma:NS game 2}, meaning that some agent has an incentive to unilaterally deviate and yield a non-zero duality gap. Thus, by arguments similar to \eqref{eq:gap game 1}, \textbf{the joint strategy $\bm{\varphi}$ produced by $\alg$ satisfies $\gap(\bm{\varphi})\geq 1-q$ for the game $G_2$.}

    Finally, since either $q\geq \frac{1}{2}$ or $1-q\geq \frac{1}{2}$, the joint strategy $\bm{\varphi}$ produced by $\alg$ is always satisfies $\gap(\bm{\varphi})\geq \frac{1}{2}$ for at least one game, regardless of the dataset size, as desired.   
    
\end{proof}

\section{Proof of Theorem 3}
The proof of Theorem 3 under semi-bandit feedback relies on Lemma \ref{supp:lemma:bound on diffs semi bandit}, which leverages the following well-known results concerning Hoeffding’s inequality to prove that, for any agent $i$, our constructed $b_i^\delta$ is an exploration bonus for the utility estimator $\hat{v}_i$. Beforehand, we recall that a joint action $\mathbf{a}$ induces a partition of the agents $\pi^{\mathbf{a}} = (C_\ell^{\mathbf{a}})_{\ell \in [k]}$ with possibly overlapping coalitions, where, for any $\ell \in [k]$, $C_\ell^{\mathbf{a}}$ is the set of agents joining the $\ell$th candidate coalition, i.e., $C_\ell^{\mathbf{a}} = \{i \in \mathcal{N}: \ell\in a_i \}$.
\begin{lemma}[\textbf{Hoeffding’s Inequality}]
    \label{supp:lemma:Hoeffding’s Inequality}
    Let $\mathcal{Y}_1, \dots, \mathcal{Y}_K$ be $K$ i.i.d. random variables and set $\bar{\mathcal{Y}}_K = \frac{1}{K} \sum_{k\in [K]} \mathcal{Y}_k$. For any $k \in [K]$, assume that there are constants $a,b$ such that $Pr[a \leq \mathcal{Y}_k \leq b] = 1$ and denote $\mu := \mathbb{E}[\mathcal{Y}_k]; c := (b-a)^2$. Then, for any $\delta \in (0,1]$, the following holds with probability at least $1-\delta$:
    \begin{equation}
        \label{eq:Hoeffding semi bandit}
        |\bar{\mathcal{Y}}_K - \mu| \leq \sqrt{\frac{c \log(2/\delta)}{2K}}
    \end{equation}
\end{lemma}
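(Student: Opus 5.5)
The plan is the standard Chernoff-type argument. First bound the moment generating function of each centered summand, then apply Markov's inequality to the exponentiated sum, and finally union-bound the two tails. Throughout, write $Z_k := \mathcal{Y}_k - \mu$. Each $Z_k$ satisfies $\mathbb{E}[Z_k]=0$ and lies almost surely in $[a-\mu,\, b-\mu]$, an interval of length $b-a=\sqrt{c}$. If $c=0$, then $\mathcal{Y}_k=\mu$ almost surely and the claim is trivial, so assume $c>0$.

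\textbf{Step 1 (Hoeffding's lemma).} I will show $\mathbb{E}[e^{sZ_k}] \le \exp(s^2 c/8)$ for every $s\in\mathbb{R}$. Set $\alpha := a-\mu$ and $\beta := b-\mu$.
\begin{itemize}
\item By convexity of $x\mapsto e^{sx}$ on $[\alpha,\beta]$,
\begin{equation*}
e^{sZ_k} \le \frac{\beta - Z_k}{\beta-\alpha}\, e^{s\alpha} + \frac{Z_k-\alpha}{\beta-\alpha}\, e^{s\beta}.
\end{equation*}
\item Taking expectations and using $\mathbb{E}[Z_k]=0$, the right-hand side equals $e^{L(h)}$. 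Here $h := s(\beta-\alpha)$, $p := -\alpha/(\beta-\alpha)\in[0,1]$, and $L(h) := -hp + \log(1-p+pe^{h})$.
\item One checks $L(0)=L'(0)=0$. Moreover, $L''(h) = u(1-u)\le 1/4$, where $u = pe^h/(1-p+pe^h)$.
\item Taylor's theorem with remainder then gives $L(h)\le h^2/8 = s^2 c/8$.
\end{itemize}
This is the only non-routine step, and I expect it to be the main obstacle. The care lies in the reparametrization and the second-derivative bound.

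\textbf{Step 2 (Chernoff bound for the upper tail).} Fix $t>0$ and $s>0$. By independence of the $\mathcal{Y}_k$, Markov's inequality, and Step 1,
\begin{equation*}
\Pr[\bar{\mathcal{Y}}_K-\mu \ge t] = \Pr\Big[e^{s\sum_{k} Z_k} \ge e^{sKt}\Big] \le e^{-sKt}\prod_{k\in[K]}\mathbb{E}[e^{sZ_k}] \le \exp\Big(-sKt + \frac{K s^2 c}{8}\Big).
\end{equation*}
The choice $s = 4t/c$ minimizes the exponent and yields the bound $\exp(-2Kt^2/c)$.

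\textbf{Step 3 (two-sided bound).} Applying Step 2 to $-Z_k$, which lies in an interval of the same length, gives the same bound for $\Pr[\mu-\bar{\mathcal{Y}}_K\ge t]$. A union bound then yields
\begin{equation*}
\Pr\big[|\bar{\mathcal{Y}}_K-\mu|\ge t\big]\le 2\exp(-2Kt^2/c).
\end{equation*}
Setting the right-hand side equal to $\delta$ and solving gives $t=\sqrt{c\log(2/\delta)/(2K)}$. Hence \eqref{eq:Hoeffding semi bandit} holds with probability at least $1-\delta$.
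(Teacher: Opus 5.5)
Your proof is correct and complete. It is the standard Chernoff-bound argument via Hoeffding's lemma, and each step checks out: the reparametrization giving $L(h)=-hp+\log(1-p+pe^{h})$, the bound $L''=u(1-u)\le 1/4$, the optimal choice $s=4t/c$ giving the tail $2e^{-2Kt^2/c}$, and the resulting threshold $t=\sqrt{c\log(2/\delta)/(2K)}$. The paper gives no proof of this lemma and simply invokes it as a well-known result, so your argument supplies exactly the classical derivation that citation stands for.
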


\begin{lemma}
    \label{supp:lemma:bound on diffs semi bandit}
    For any $\delta \in (0,1]$, the following holds with probability at least $1-\delta$ simultaneously for each agent $i$ and any joint action $\mathbf{a}$:
    \begin{equation}
        \label{eq:variance approx lemma 4 semi bandit}
        |\hat{v}_i(\mathbf{a}) - d_i(\mathbf{a})| \leq  \sum_{\ell\in a_i} \sum_{i \neq j \in C_\ell^{\mathbf{a}}: N_{i,j}^\ell \geq 1} \sqrt{\frac{2  \log(4(n+1)k/\delta)}{N_{i,j}^\ell\vee 1}} = b_i(\mathbf{a})
    \end{equation}
\end{lemma}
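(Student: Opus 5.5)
The plan is to reduce the statement to a family of scalar concentration events, one for each triple $(i,j,\ell)$, apply Lemma~\ref{supp:lemma:Hoeffding’s Inequality} to each, take a union bound, and then sum back up with the triangle inequality.

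\textbf{Decomposition.} By \eqref{eq:empirical mean utility} and the definition of $d_i(\mathbf{a})$, for every agent $i$ and joint action $\mathbf{a}$,
\[
\hat{v}_i(\mathbf{a}) - d_i(\mathbf{a}) = \sum_{\ell\in a_i}\sum_{i\neq j\in C_\ell^{\mathbf{a}}} \bigl(\hat{v}_i^\ell(j) - d_{i,j}^\ell\bigr).
\]
The triangle inequality then gives
\[
|\hat{v}_i(\mathbf{a}) - d_i(\mathbf{a})| \leq \sum_{\ell\in a_i}\sum_{i\neq j\in C_\ell^{\mathbf{a}}} \bigl|\hat{v}_i^\ell(j) - d_{i,j}^\ell\bigr|.
\]
So it suffices to show that, on a single event of probability at least $1-\delta$, each summand with $N_{i,j}^\ell\geq 1$ is at most $\sqrt{2\log(4(n+1)k/\delta)/N_{i,j}^\ell}$. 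This event does not depend on $\mathbf{a}$, so the bound will hold simultaneously for all joint actions at no extra cost.

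\textbf{Per-triple concentration.} Fix $(i,j,\ell)$. First condition on the sequence of sampled joint actions $\mathbf{a}^1,\dots,\mathbf{a}^M$, which fixes $N_{i,j}^\ell$. Given the actions, the $N_{i,j}^\ell$ observations $v_{i,j}^{\ell,m}$ with $\ell\in a_i^m\cap a_j^m$ are independent, lie in $[-1,1]$, and (following the paper's convention that $d_{i,j}^\ell$ is the mean of $\mathcal{D}_{i,j}^\ell$) have mean $d_{i,j}^\ell$. Lemma~\ref{supp:lemma:Hoeffding’s Inequality} with $c=(1-(-1))^2=4$ and $K=N_{i,j}^\ell$ then gives, with probability at least $1-\delta'$,
\[
|\hat{v}_i^\ell(j)-d_{i,j}^\ell| \leq \sqrt{\frac{2\log(2/\delta')}{N_{i,j}^\ell}}.
\]
This holds conditionally on every action sequence, and hence also unconditionally.

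\textbf{Union bound and the expected obstacle.} I would choose $\delta'$ so that $2/\delta' = 4(n+1)k/\delta$, i.e.\ $\delta' = \delta/(2(n+1)k)$, and take a union bound over the index set. The main difficulty is this bookkeeping. A naive union over all $n^2k$ ordered triples would require $\log(2n^2k/\delta)$, which is larger than the $\log(4(n+1)k/\delta)$ appearing in \eqref{eq:bonus semi}. I would first try to reduce the number of distinct events:
\begin{itemize}
\item use symmetry $d_{i,j}^\ell=d_{j,i}^\ell$ with a common estimator per unordered pair, which halves the count;
\item for a fixed agent $i$, control all $j$ and $\ell$ together, for instance with a vector Hoeffding or sub-Gaussian bound, and then take a union over agents only.
\end{itemize}
If neither matches the stated constant, I would instead absorb the discrepancy into the logarithm, which only changes constants inside the $\log$ and so does not affect the rates in Theorem~\ref{thm:semi-bandit}.

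\textbf{Conclusion.} On the resulting event, summing the per-pair bounds over $\ell\in a_i$ and over the $j\in C_\ell^{\mathbf{a}}$ with $N_{i,j}^\ell\geq 1$ yields \eqref{eq:variance approx lemma 4 semi bandit}, simultaneously for all $i$ and $\mathbf{a}$.

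There is one gap between the restricted sum in \eqref{eq:variance approx lemma 4 semi bandit} and the full bonus \eqref{eq:bonus semi}. For pairs with $N_{i,j}^\ell=0$ we have $\hat{v}_i^\ell(j)=0$, so the error term is $|d_{i,j}^\ell|\leq 1$. This is at most $\sqrt{2\log(4(n+1)k/\delta)}$, because $4(n+1)k/\delta\geq 8 > e^{1/2}$. These terms are therefore dominated by the corresponding summands of $b_i^\delta(\mathbf{a})$ in \eqref{eq:bonus semi}. In the downstream use via Assumption~\ref{assump:unilateral deviation}, the relevant pairs are covered anyway. This case should be recorded so that $b_i^\delta$ is a valid exploration bonus in the sense of \eqref{eq:bonus}.
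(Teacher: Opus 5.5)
Your route is the paper's: decompose $\hat v_i(\mathbf a)-d_i(\mathbf a)$ over the pairs $(j,\ell)$, apply Hoeffding to each pair, take a union bound, and finish with the triangle inequality. You are more careful than the paper in two places. First, you condition on the sampled actions so that $N_{i,j}^\ell$ is fixed; the paper skips this. Note that, given the actions, the observations are independent but not identically distributed, so you need the general form of Hoeffding (same bound), not the paper's i.i.d.\ statement. Second, the two issues you flag are genuine defects of the lemma as written, so your fixes are required rather than optional.

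\textbf{The restricted sum.} The restricted-sum inequality is false. Suppose $N_{i,j}^\ell=0$, $d_{i,j}^\ell=1$, $a_i=\{\ell\}$ and $C_\ell^{\mathbf a}=\{i,j\}$. Then the left side is $1$ and the right side is $0$. The paper hides this in two places:
\begin{itemize}
\item it asserts that the all-uncovered case is trivial;
\item in \eqref{eq:diff estimate mean} it drops $-\sum_{N_{i,j}^\ell=0}d_{i,j}^\ell$, which need not be nonpositive, and then treats a one-sided bound as a bound on the absolute value.
\end{itemize}
What is true is what your last paragraph proves: $|\hat v_i(\mathbf a)-d_i(\mathbf a)|\le b_i^\delta(\mathbf a)$, with the \emph{unrestricted} sum of \eqref{eq:bonus semi}. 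Its $N_{i,j}^\ell=0$ summands dominate $|d_{i,j}^\ell|\le 1$. State your conclusion that way, not as ``summing over $N_{i,j}^\ell\ge 1$ yields \eqref{eq:variance approx lemma 4 semi bandit}''.

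\textbf{The constant.} You are right that a union bound over the $\binom n2 k$ events gives $\log(n(n-1)k/\delta)$. This exceeds $\log(4(n+1)k/\delta)$ once $n\ge 6$.

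Neither of your rescue ideas can recover the stated constant, because it is not attainable. Take the following instance:
\begin{itemize}
\item $k=2$, with $\rho$ always playing $a_i=\{1\}$ for every agent;
\item i.i.d.\ Rademacher utilities (mean $0$), independent across unordered pairs;
\item the joint action in which only $i$ and $j$ choose $\{1\}$.
\end{itemize}
For that action the lemma reduces to $|\hat v_i^1(j)|\le\sqrt{2L/M}$, where $L=\log(4(n+1)k/\delta)$. For large $M$, each of these $\binom n2$ independent events fails with probability about $e^{-L}/\sqrt{\pi L}=\delta/(4(n+1)k\sqrt{\pi L})$. The expected number of failures is therefore of order $n\delta/(k\sqrt{\log n})$, and some pair fails with probability tending to $1$ as $n\to\infty$. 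A vector or sub-Gaussian bound cannot help: per-coordinate control over $(n-1)k$ coordinates always costs their logarithm, and halving via symmetry still leaves $\binom n2 k$ events. Your fallback is therefore the correct version of the lemma: redefine the bonus with $\log(n^2k/\delta)$. This changes Theorem~\ref{thm:semi-bandit} only by a constant factor.

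\textbf{A remark to drop.} Remove the claim that under Assumption~\ref{assump:unilateral deviation} ``the relevant pairs are covered anyway''. That assumption constrains coalition \emph{sizes}, not which pairs co-occur. For example, $\rho$ may realize size-$2$ coalitions in $\ell$ only with one fixed pair. So it does not force $N_{i,j}^\ell\ge 1$. The lemma must handle uncovered pairs itself, and your full-bonus argument does.
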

\begin{proof}
    Recall that our utility estimator under semi-bandit feedback is given by:
    \begin{equation}
        \label{eq:grad approx recall lemma 2 semi bandit}
        \begin{aligned}
            \hat{v}_i(\mathbf{a}) &= \sum_{\ell\in a_i} \sum_{i \neq j \in C_\ell^{\mathbf{a}}} \hat{v}_i^\ell(j)  = \sum_{\ell\in a_i} \sum_{i \neq j \in C_\ell^{\mathbf{a}}} \frac{\sum_{m=1}^M v_{i,j}^{\ell,m} \mathds{1}\{\ell \in a_i^m\cap a_j^m \}}{N_{i,j}^\ell \vee 1} \\
            &=  \sum_{\ell\in a_i} \sum_{i \neq j \in C_\ell^{\mathbf{a}}: N_{i,j}^\ell \geq 1} \frac{\sum_{m=1}^M v_{i,j}^{\ell,m} \mathds{1}\{\ell \in a_i^m\cap a_j^m \}}{N_{i,j}^\ell \vee 1} 
        \end{aligned}
    \end{equation}
    where the last equality is due to the fact that each addend for which $N_{i,j}^\ell = 0$ is essentially zero, i.e., if $N_{i,j}^\ell=0$ for some agent $j \neq i$, then $\frac{\sum_{m=1}^M v_{i,j}^{\ell,m} \mathds{1}\{\ell \in a_i^m\cap a_j^m \}}{N_{i,j}^\ell \vee 1}  = 0$. Specifically, if $N_{i,j}^\ell = 0$ for any agent $i \neq j \in \cup_{C \in \pi^{\mathbf{a}}(i)} C$ and $\ell \in a_i$, then our statement clearly holds since $\hat{v}_i(\mathbf{a}) = 0$. Hereafter, for any $\ell \in a_i$, we therefore assume that $N_{i,j}^\ell \geq 1$ for at least one agent $i \neq j \in C_\ell^{\mathbf{a}}$. As such, we will apply Hoeffding's inequality from Lemma \ref{supp:lemma:Hoeffding’s Inequality} to agent $i$'s utility estimator for each other agent $i \neq j \in \cup_{C \in \pi^{\mathbf{a}}(i)} C$ and then sum those bounds to obtain the desired result. As such, denoting $\mathcal{Y}_{i,j}^{\ell,k} = v_{i,j}^{\ell,m}$ for each  $m\in[M]$ where $\mathds{1}\{\ell \in a_i^m\cap a_j^m \} = 1$, then we can apply Lemma \ref{supp:lemma:Hoeffding’s Inequality} to the sequence $\{\mathcal{Y}_{i,j}^{\ell,k}\}_{m\in[M]: \mathds{1}\{\ell \in a_i^m\cap a_j^m \} = 1}$ for each fixed $\ell \in a_i^m$ and any fixed agent $i \neq j \in C_\ell^{\mathbf{a}}$, which comprises of $N_{i,j}^\ell$ elements. Further, recall that $\mathbb{E}[\hat{v}_{i}^\ell(j)] = d_{i,j}^\ell$, where $\mathbb{E}[\cdot]$ is the expectation over all the utilities' randomness. For any $m\in[M]$, observe that $Pr[-1 \leq \mathcal{Y}_{i,j}^{\ell,k} \leq 1] = 1$ due to our assumption that $v_{i,j}^{\ell,m} \in [-1,1]$. 
    Accordingly, for each fixed $m\in[M]$ and for any $\delta \in (0,1]$ we can apply Hoeffding’s inequality as in \eqref{eq:Hoeffding semi bandit} to obtain that the following holds with probability at least $1-\delta$ for each fixed agent $i$ and fixed joint action $\mathbf{a}$:
    \begin{equation}
        \label{eq:variance grad proof lemma 2 semi bandit}
        |\hat{v}_i^\ell(j) - d_{i,j}^\ell| \leq \sqrt{\frac{4 \log(2/\delta)}{2 N_{i,j}^\ell}} = \sqrt{\frac{2 \log(2/\delta)}{N_{i,j}^\ell\vee 1}}
    \end{equation}
    Further, note that:
    \begin{equation}
        \label{eq:diff estimate mean}
        \begin{aligned}
            \sum_{\ell\in a_i} \sum_{i \neq j \in C_\ell^{\mathbf{a}}} [\hat{v}_i^\ell(j) - d_{i,j}^\ell] &=  \sum_{\ell\in a_i} \sum_{i \neq j \in C_\ell^{\mathbf{a}}: N_{i,j}^\ell \geq 1}  [\hat{v}_i^\ell(j) - d_{i,j}^\ell] - \sum_{\ell\in a_i} \sum_{i \neq j \in C: N_{i,j}^\ell = 0}  d_{i,j}^\ell \\
            &\leq  \sum_{\ell\in a_i} \sum_{i \neq j \in C_\ell^{\mathbf{a}}: N_{i,j}^\ell \geq 1} [\hat{v}_i^\ell(j) - d_{i,j}^\ell]
        \end{aligned}
    \end{equation}
    since $\hat{v}_i^\ell(j) = \frac{\sum_{m=1}^M v_{i,j}^{\ell,m} \mathds{1}\{\ell \in a_i^m\cap a_j^m \}}{N_{i,j}^\ell \vee 1}=0$ when $N_{i,j}^\ell=0$. Hence, by \eqref{eq:grad approx recall lemma 2 semi bandit}, the following holds with probability at least $1-\delta$ for each fixed agent $i$ and fixed joint action $\mathbf{a}$:
    \begin{equation}
        \label{eq:variance grad proof lemma 2 semi bandit 2}
        \begin{aligned}
            |\hat{v}_i(\mathbf{a}) - d_i(\mathbf{a})| = \Bigg|\sum_{\ell\in a_i} \sum_{i \neq j \in C} [\hat{v}_i^\ell(j) - d_{i,j}^\ell]\Bigg| \leq \Bigg| \sum_{\ell\in a_i} \sum_{i \neq j \in C_\ell^{\mathbf{a}}: N_{i,j}^\ell \geq 1}[\hat{v}_i^\ell(j) - d_{i,j}^\ell]\Bigg| \leq \\
            \leq  \sum_{\ell\in a_i} \sum_{i \neq j \in C_\ell^{\mathbf{a}}: N_{i,j}^\ell \geq 1} |\hat{v}_i^\ell(j) - d_{i,j}^\ell| \leq  \sum_{\ell\in a_i} \sum_{i \neq j \in C_\ell^{\mathbf{a}}: N_{i,j}^\ell \geq 1} \sqrt{\frac{2 \log(2/\delta)}{N_{i,j}^\ell\vee 1}}
        \end{aligned}
    \end{equation}
    where the first inequality follows from \eqref{eq:diff estimate mean} and in the second one we used the triangle inequality. One can then apply a union bound to obtain \eqref{eq:variance approx lemma 4 semi bandit}. 
\end{proof}

Theorem 3's proof also relies on Lemma \ref{supp:lemma:counter}, which is derived from Lemma \ref{supp:lemma:binomial concentration}:

\begin{lemma}[\textbf{\cite[Lemma A.1]{xie2021policy}}]
    \label{supp:lemma:binomial concentration}
    Suppose $N\sim \text{Bin}(m,p)$ where $m\geq 1$ and $p\in[0,1]$. Then, for any $\delta\in(0,1]$, we have $\frac{p}{N\vee 1} \leq \frac{8\log(1/\delta)}{M}$ with probability at least $1-\delta$.
\end{lemma}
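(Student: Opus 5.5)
The plan is the standard two-regime argument: compare the mean $mp$ with the threshold $8\log(1/\delta)$. Throughout I read the $M$ in the conclusion as the number of trials $m$, i.e.\ the goal is $\frac{p}{N\vee 1}\le \frac{8\log(1/\delta)}{m}$.

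\emph{Small-mean regime.} Suppose $mp\le 8\log(1/\delta)$. Since $N\vee 1\ge 1$ deterministically,
\[
\frac{p}{N\vee 1}\le p\le \frac{8\log(1/\delta)}{m}
\]
holds with probability one. No concentration is needed here, and this also covers $p=0$.

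\emph{Large-mean regime.} Suppose $mp> 8\log(1/\delta)$. I will use the multiplicative Chernoff lower-tail bound $\Pr[N\le (1-\eta)mp]\le \exp(-\eta^2 mp/2)$ with $\eta=\tfrac12$. This gives
\[
\Pr[N\le mp/2]\le \exp(-mp/8)<\exp(-\log(1/\delta))=\delta .
\]
So with probability at least $1-\delta$ we have $N> mp/2$. In particular $N\ge 1$, and then $\frac{p}{N\vee 1}=\frac{p}{N}<\frac{2}{m}$. It remains to check $\frac{2}{m}\le \frac{8\log(1/\delta)}{m}$, i.e.\ $\log(1/\delta)\ge \tfrac14$, which holds whenever $\delta\le e^{-1/4}$. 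Combining the two regimes proves the claim in that range of $\delta$.

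\emph{Main obstacle: $\delta\in(e^{-1/4},1]$.} Here the target $8\log(1/\delta)<2$ is below the constant $2$ produced by the half-mean event, so the argument above does not close. I would try three things, in this order:
\begin{enumerate}
\item Use a less aggressive deviation. Choose $\eta$ so that $(1-\eta)^{-1}$ matches $8\log(1/\delta)$ wherever possible, and exploit that only probability $1-\delta$ (which is tiny when $\delta\approx 1$) is required. For example, a median-type fact for the binomial, $\Pr[N\ge \lfloor mp\rfloor]\ge \tfrac12$, may suffice in part of the range, since $1-\delta< 1-e^{-1/4}<\tfrac12$.
\item Otherwise, restrict the statement to $\delta\le e^{-1/4}$, which is harmless for our use: in Lemma~\ref{supp:lemma:counter} it is applied with $\delta$ replaced by $\delta/(\mathrm{poly}(n,k))$, which is small.
\item Alternatively, enlarge the constant to $\max\{8\log(1/\delta),2\}$, which the downstream bounds absorb into their $\log$ factors.
\end{enumerate}
I expect the small-$\delta$ argument to be routine and all the care to go into this boundary regime of $\delta$.
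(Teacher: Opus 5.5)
Your two-regime argument is the standard proof of this lemma: the trivial bound $p/(N\vee 1)\le p$ when $mp\le 8\log(1/\delta)$, and the multiplicative Chernoff bound with $\eta=\tfrac12$ otherwise. The paper gives no proof of its own and simply imports Lemma A.1 of Xie et al. Your reading of $M$ as $m$ is the right one, and the argument is complete for $\delta\le e^{-1/4}$, exactly as you say.

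What you call the main obstacle is not a gap in your proof but a defect in the statement, so your first option cannot succeed in full. Take $p=1$. Then $N=m$ almost surely and $p/(N\vee 1)=1/m$, which exceeds $8\log(1/\delta)/m$ whenever $\delta>e^{-1/8}$. The event in the lemma then has probability $0<1-\delta$, so the claim is false for every $\delta\in(e^{-1/8},1)$.

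Your second or third option (restricting $\delta$, or replacing $8\log(1/\delta)$ by $\max\{8\log(1/\delta),2\}$) is therefore necessary, not merely convenient. As you observe, nothing downstream breaks: Lemma~\ref{supp:lemma:counter} invokes the bound at level $\delta/(2(n+1)k)\le\tfrac14<e^{-1/4}$.

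On the intermediate range $(e^{-1/4},e^{-1/8}]$ the claim does hold, but the median fact $\Pr[N\ge\lfloor mp\rfloor]\ge\tfrac12$ is too weak there. When $mp\in(8\log(1/\delta),2)$ the event requires $N\ge 2$, whereas $\lfloor mp\rfloor=1$. What works is the anti-concentration bound of Greenberg and Mohri, $\Pr[N\ge mp]>\tfrac14$ for $p>1/m$, combined with $1-\delta<1-e^{-1/4}<\tfrac14$.
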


\begin{lemma}
    \label{supp:lemma:counter}
    Given a dataset of size $M$, for any $\delta \in (0,1]$, the following holds with probability at least $1-\delta$ simultaneously for each agent $i$ and any joint action $\mathbf{a}$:
    \begin{equation}
        \label{eq:frac counter lemma}
        \frac{1}{N_{i,j}^\ell\vee 1} \leq \frac{8  \log(2(n+1)k/\delta)}{M d_\ell^{\rho}(i,j)}
    \end{equation}
    where $d_\ell^{\rho}(i,j) = \sum_{\mathbf{a}\in\mathcal{A}: i,j\in C_\ell^{\mathbf{a}}} \rho(\mathbf{a})$ for any pair of distinct agents $i,j$.
\end{lemma}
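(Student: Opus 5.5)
The plan is to apply Lemma~\ref{supp:lemma:binomial concentration} to each count $N_{i,j}^\ell$ separately and then take a union bound over all relevant triples $(i,j,\ell)$.

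\textbf{Step 1: each count is binomial.} Fix distinct agents $i,j$ and an index $\ell\in[k]$. The dataset samples $\mathbf{a}^1,\dots,\mathbf{a}^M$ are drawn i.i.d.\ from $\rho$, so the indicators $\mathds{1}\{\ell\in a_i^m\cap a_j^m\}$, $m\in[M]$, are i.i.d.\ Bernoulli variables. Their success probability is
\[
\Pr_{\mathbf{a}\sim\rho}[i,j\in C_\ell^{\mathbf{a}}]=\sum_{\mathbf{a}\in\mathcal{A}:\, i,j\in C_\ell^{\mathbf{a}}}\rho(\mathbf{a})=d_\ell^{\rho}(i,j).
\]
Hence $N_{i,j}^\ell\sim\mathrm{Bin}(M,d_\ell^\rho(i,j))$.

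\textbf{Step 2: concentration for one triple.} Invoke Lemma~\ref{supp:lemma:binomial concentration} with $m=M$ and $p=d_\ell^\rho(i,j)$, reading its right-hand side as $8\log(1/\delta')/M$, where $M$ is the number of trials. With probability at least $1-\delta'$,
\[
\frac{d_\ell^\rho(i,j)}{N_{i,j}^\ell\vee 1}\le \frac{8\log(1/\delta')}{M}.
\]
Whenever $d_\ell^\rho(i,j)>0$, dividing by it gives the claimed form. If $d_\ell^\rho(i,j)=0$, the right-hand side of \eqref{eq:frac counter lemma} is $+\infty$ under the usual convention, so the bound holds trivially.

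\textbf{Step 3: union bound.} The bound on $1/(N_{i,j}^\ell\vee 1)$ does not depend on the joint action $\mathbf{a}$. So "simultaneously for every $\mathbf{a}$" only requires simultaneity over the finitely many triples $(i,j,\ell)$ that can appear in any $C_\ell^{\mathbf{a}}$ with $\ell\in a_i$.

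The stated constant $\log(2(n+1)k/\delta)$ matches the paper's convention of union bounding over $(n+1)k$ events, with an extra factor $2$. I would take $\delta'=\delta/(2(n+1)k)$ and count events indexed by partner $j$ (or coalition size) and coalition $\ell$.

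\textbf{Main obstacle.} The step I expect to be delicate is justifying that only $(n+1)k$ events suffice rather than the $n^2k$ one gets by indexing over all pairs $(i,j)$. My first attempt would be to count pairs per fixed agent $i$, as the statement is read "for each agent $i$". If that fails, I would absorb the extra factor into the logarithm, since $\log(2n^2k/\delta)\le 2\log(2(n+1)k/\delta)$, at the cost of a constant. This would leave the downstream bound in Theorem~\ref{thm:semi-bandit} unchanged up to constants.
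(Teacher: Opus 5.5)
Your proposal is correct and follows the paper's proof, which consists of exactly this: plug $p=d_\ell^\rho(i,j)$ into Lemma~\ref{supp:lemma:binomial concentration} and take a union bound. The counting obstacle you flag is real and the paper does not address it. The constant $\log(2(n+1)k/\delta)$ is justified by a union over the $(n-1)k$ pairs $(j,\ell)$ for a fixed agent $i$. Simultaneity over all agents, which Theorem~\ref{thm:general duality gap} needs because it maximizes over $i$, involves $\binom{n}{2}k$ events, and this exceeds $2(n+1)k$ once $n\ge 6$. There your fallback of doubling the constant is what actually makes the argument go through, at no cost downstream.
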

\begin{proof}
    The following holds with probability at least $1-\delta$ by plugging $p=d_\ell^{\rho}(i,j)$ in Lemma \ref{supp:lemma:binomial concentration} and applying a union bound:
    \begin{equation}
        \label{eq:frac counter}
        \frac{1}{N_{i,j}^\ell\vee 1} \leq \frac{8  \log(2(n+1)k/\delta)}{M d_\ell^{\rho}(i,j)}
    \end{equation}

\end{proof}

We are now ready to prove Theorem 3:
\begin{customthm}{3}
    \label{supp:thm:semi-bandit}
    Under semi-bandit feedback and Assumption 1, for any $\delta \in (0,1]$, any dataset size $M \in \mathbb{N}$ and any NS joint strategy $\bm{\varphi}^\star$, the joint mixed strategy $\varphi^{\out}$ produced by Algorithm 1 with utility estimators and exploration bonuses defined in equations (8) and (9) within the full paper satisfies the following with probability at least $1-\delta$: 
    \begin{equation}
        \gap(\varphi^{\out}) \leq 8kn(n+1) c_{\size}^{\bm{\varphi}^\star} \log(4(n+1)k/\delta) \sqrt{\frac{2(n-1)}{M}} \left[\frac{n-1}{2} +\sqrt{\frac{n}{2}}\right] +\epsilon_{\opt}
    \end{equation}
    where $c_{\size}^{\bm{\varphi}^\star}$ is defined in equation (7) within the full paper. 
\end{customthm}
\begin{proof}
    By Theorem \ref{supp:thm:general duality gap}, we need to bound $\mathbb{E}_{\mathbf{a}\sim (\bm{\varphi}_{-i}^\star,\varphi_i')} [b_i^\delta(\mathbf{a})]$ and $ \mathbb{E}_{\mathbf{a}\sim \bm{\varphi}^\star} [b_i^\delta(\mathbf{a})]$ for some NS (possibly mixed) joint strategy $\bm{\varphi}^\star$ and any strategy $\varphi_i \in \Delta(\mathcal{A}_i)$ (in particular, we can $\varphi_i$ choose to be deterministic by Theorem \ref{supp:thm:general duality gap}). Indeed, for any $\delta \in (0,1]$, the following holds with probability at least $1-\delta$ simultaneously for each agent $i$:

    \begin{subequations}
        \label{eq:bound bonus}
        \begin{align}
            \mathbb{E}_{\mathbf{a}\sim (\bm{\varphi}_{-i}^\star,\varphi_i')} [b_i^\delta(\mathbf{a})] &= \mathbb{E}_{\mathbf{a}\sim (\bm{\varphi}_{-i}^\star,\varphi_i')}\left[ \sum_{\ell\in a_i} \sum_{i \neq j \in C_\ell^{\mathbf{a}}: N_{i,j}^\ell \geq 1} \sqrt{\frac{2  \log(4(n+1)k/\delta)}{N_{i,j}^\ell\vee 1}}\right] \\
            &\leq  \mathbb{E}_{\mathbf{a}\sim (\bm{\varphi}_{-i}^\star,\varphi_i')}\left[\sum_{\ell\in [k]}\sum_{i \neq j \in \mathcal{N}: N_{i,j}^\ell \geq 1}\sqrt{\frac{2  \log(4(n+1)k/\delta)}{N_{i,j}^\ell\vee 1}}\right] \\
            &= \sum_{\ell\in [k]} \sum_{i \neq j \in \mathcal{N}: N_{i,j}^\ell \geq 1}\mathbb{E}_{\mathbf{a}\sim (\bm{\varphi}_{-i}^\star,\varphi_i')}\left[\sqrt{\frac{2  \log(4(n+1)k/\delta)}{N_{i,j}^\ell\vee 1}}\right] \label{eq:at most k} \\
            &\leq \sum_{\ell\in [k]} \sum_{i \neq j \in \mathcal{N}: N_{i,j}^\ell \geq 1}\sum_{\alpha=0}^n d^{(\bm{\varphi}_{-i}^\star,\varphi_i')}_\ell(\alpha) \sqrt{\frac{8  \log^2(4(n+1)k/\delta)}{M d_\ell^{\rho}(i,j)}}  \label{eq:apply lemma} \\
            &\leq c_{\size}^{\bm{\varphi}^\star} \sum_{\ell\in [k]}\sum_{i \neq j \in \mathcal{N}: N_{i,j}^\ell \geq 1}\sum_{\alpha=0}^n d_\ell^{\rho}(\alpha) \sqrt{\frac{8  \log^2(4(n+1)k/\delta)}{M d_\ell^{\rho}(i,j)}} \label{eq:assumption 1}
        \end{align}
    \end{subequations}
    where the inequality in \eqref{eq:apply lemma} is due to $d^{(\bm{\varphi}_{-i}^\star,\varphi_i')}_\ell(\alpha) = \sum_{\mathbf{a}\in\mathcal{A}: |C_\ell^{\mathbf{a}}|=\alpha} (\bm{\varphi}_{-i}^\star,\varphi_i')(\mathbf{a})$ and Lemma \ref{supp:lemma:counter}. Further, the inequality in \eqref{eq:assumption 1} is derived from Assumption 1.
    
    Now, recalling that $d_\ell^{\rho}(\alpha) = \sum_{\mathbf{a}\in\mathcal{A}:|C_\ell^{\mathbf{a}}|=\alpha} \rho(\mathbf{a})$ for any coalition of size $\alpha\in[n]\cup\{0\}$, observe that:
    \begin{equation}
        \label{eq:bound rho}
        d_\ell^{\rho}(\alpha) = \sum_{\mathbf{a}\in\mathcal{A}:|C_\ell^{\mathbf{a}}|=\alpha} \rho(\mathbf{a}) \leq \sum_{i,j \in \mathcal{N}: i \neq j} \text{ } \sum_{\mathbf{a}\in\mathcal{A}: |C_\ell^{\mathbf{a}}|=\alpha \wedge i,j\in C_\ell^{\mathbf{a}}} \rho(\mathbf{a}) \leq \sum_{i,j \in \mathcal{N}: i \neq j}d_\ell^{\rho}(i,j)
    \end{equation}
    where the first inequality follows from the fact that, for each $\mathbf{a}\in\mathcal{A}$ with $ C \in \pi^{\mathbf{a}}$ such that $|C|=\alpha$, $\rho(\mathbf{a})$ may be summed more than once in $\sum_{i,j \in \mathcal{N}: i \neq j} \text{ } \sum_{\mathbf{a}\in\mathcal{A}: |C_\ell^{\mathbf{a}}|=\alpha \wedge i,j\in C_\ell^{\mathbf{a}}} \rho(\mathbf{a})$. Moreover, the second inequality in \eqref{eq:bound rho} follows from the fact that, for any pair of distinct agents $i,j$, it holds that: 
    \begin{equation}
        d_\ell^{\rho}(i,j) = \sum_{\mathbf{a}\in\mathcal{A}: i,j\in C_\ell^{\mathbf{a}}} \rho(\mathbf{a}) \geq \sum_{\mathbf{a}\in\mathcal{A}: |C_\ell^{\mathbf{a}}|=\alpha \wedge i,j\in C_\ell^{\mathbf{a}}} \rho(\mathbf{a})
    \end{equation}
    
    Combining \eqref{eq:bound bonus} with \eqref{eq:bound rho}, we obtain:
    \begin{subequations}
        \label{eq:bound bonus cont}
        \begin{align}
            \mathbb{E}_{\mathbf{a}\sim (\bm{\varphi}_{-i}^\star,\varphi_i')} [b_i^\delta(\mathbf{a})] &\leq c_{\size}^{\bm{\varphi}^\star} \sum_{\ell\in [k]} \sum_{i \neq j \in \mathcal{N}: N_{i,j}^\ell \geq 1} \sqrt{\frac{8  \log^2(4(n+1)k/\delta)}{M d_\ell^{\rho}(i,j)}}  \cdot\left[\sum_{\alpha=0}^n \sum_{i',j' \in \mathcal{N}: i' \neq j'}d_\ell^{\rho}(i',j')\right] \\
            &=(n+1)c_{\size}^{\bm{\varphi}^\star} \sum_{\ell\in [k]} \sum_{i \neq j \in \mathcal{N}: N_{i,j}^\ell \geq 1} \sqrt{\frac{8  \log^2(4(n+1)k/\delta)}{M d_\ell^{\rho}(i,j)}}  \cdot \left[\sum_{i',j' \in \mathcal{N}: i' \neq j' \wedge d_\ell^{\rho}(i',j') \leq d_\ell^{\rho}(i,j)} d_\ell^{\rho}(i',j')\right] \\
            &+ (n+1)c_{\size}^{\bm{\varphi}^\star} \sum_{\ell\in [k]} \sum_{i \neq j \in \mathcal{N}: N_{i,j}^\ell \geq 1} \left[\sum_{i',j' \in \mathcal{N}: i' \neq j' \wedge d_\ell^{\rho}(i',j') > d_\ell^{\rho}(i,j)} \sqrt{\frac{8  \log^2(4(n+1)k/\delta)}{M d_\ell^{\rho}(i,j)}} \cdot d_\ell^{\rho}(i',j')\right] \\
            &\leq (n+1)c_{\size}^{\bm{\varphi}^\star} \sum_{\ell\in [k]} \sum_{i \neq j \in \mathcal{N}: N_{i,j}^\ell \geq 1} \sqrt{\frac{8  \log^2(4(n+1)k/\delta)}{M d_\ell^{\rho}(i,j)}}  \cdot \left[\sum_{i',j' \in \mathcal{N}: i' \neq j' \wedge d_\ell^{\rho}(i',j') \leq d_\ell^{\rho}(i,j)} d_\ell^{\rho}(i,j)\right] \\
            &+ (n+1)c_{\size}^{\bm{\varphi}^\star} \sum_{\ell\in [k]} \sum_{i \neq j \in \mathcal{N}: N_{i,j}^\ell \geq 1} \left[\sum_{i',j' \in \mathcal{N}: i' \neq j' \wedge d_\ell^{\rho}(i',j') > d_\ell^{\rho}(i,j)} \sqrt{\frac{8  \log^2(4(n+1)k/\delta)}{M d_\ell^{\rho}(i',j')}} \cdot d_\ell^{\rho}(i',j')\right]  \\
            &=(n+1)c_{\size}^{\bm{\varphi}^\star} \sum_{\ell\in [k]} \sum_{i \neq j \in \mathcal{N}: N_{i,j}^\ell \geq 1} \sum_{i',j' \in \mathcal{N}: i' \neq j' \wedge d_\ell^{\rho}(i',j') \leq d_\ell^{\rho}(i,j)} \sqrt{\frac{8  \log^2(4(n+1)k/\delta)d_\ell^{\rho}(i,j)}{M}} \\
            &+ (n+1)c_{\size}^{\bm{\varphi}^\star} \sum_{\ell\in [k]} \sum_{i \neq j \in \mathcal{N}: N_{i,j}^\ell \geq 1} \sum_{i',j' \in \mathcal{N}: i' \neq j' \wedge d_\ell^{\rho}(i',j') > d_\ell^{\rho}(i,j)} \sqrt{\frac{8  \log^2(4(n+1)k/\delta) d_\ell^{\rho}(i',j')}{M}} \\
            &\leq (n+1)c_{\size}^{\bm{\varphi}^\star} \sum_{\ell\in [k]} \sum_{i \neq j \in \mathcal{N}} \sum_{i',j' \in \mathcal{N}: i' \neq j'} \sqrt{\frac{8  \log^2(4(n+1)k/\delta)d_\ell^{\rho}(i,j)}{M}} \\
            &+ (n+1)c_{\size}^{\bm{\varphi}^\star} \sum_{\ell\in [k]} \sum_{i \neq j \in \mathcal{N}} \sum_{i',j' \in \mathcal{N}: i' \neq j'} \sqrt{\frac{8  \log^2(4(n+1)k/\delta) d_\ell^{\rho}(i',j')}{M}} \\
            &= \binom{n}{2} (n+1) c_{\size}^{\bm{\varphi}^\star} \sum_{\ell\in[k]} \sum_{i \neq j \in \mathcal{N}} \sqrt{\frac{8  \log^2(4(n+1)k/\delta)d_\ell^{\rho}(i,j)}{M}}\\
            &+ n (n+1)c_{\size}^{\bm{\varphi}^\star} \sum_{\ell\in [k]} \sum_{i',j' \in \mathcal{N}: i' \neq j'} \sqrt{\frac{8  \log^2(4(n+1)k/\delta)d_\ell^{\rho}(i',j')}{M}} \\
            &\leq \frac{n(n-1)}{2} (n+1) c_{\size}^{\bm{\varphi}^\star} \sum_{\ell\in[k]} \sqrt{\frac{8 (n-1) \log^2(4(n+1)k/\delta)}{M} \sum_{i \neq j \in \mathcal{N}} d_\ell^{\rho}(i,j)} \label{eq:cauchy1}\\
            &+ n (n+1)c_{\size}^{\bm{\varphi}^\star} \sum_{\ell\in [k]} \sqrt{\frac{8  \log^2(4(n+1)k/\delta) \binom{n}{2}}{M}\sum_{i',j' \in \mathcal{N}: i' \neq j'} d_\ell^{\rho}(i',j')}\label{eq:cauchy}\\
            &\leq 2kn(n+1) c_{\size}^{\bm{\varphi}^\star} \log(4(n+1)k/\delta) \sqrt{\frac{2(n-1)}{M}} \left[\frac{n-1}{2} +\sqrt{\frac{n}{2}}\right]\label{eq:1}
        \end{align}
    \end{subequations}
    where the inequalities in \eqref{eq:cauchy1} and \eqref{eq:cauchy} follow from the Cauchy-Schwarz inequality, according to which $\sum_{\ell=1}^n \sqrt{a_\ell} \leq \sqrt{n \sum_{\ell=1}^n a_\ell}$ for any sequence $a_1, \dots, a_n$. The last inequality in \eqref{eq:1} is due to:
    \begin{equation}
        \begin{aligned}
            \sum_{i \neq j \in \mathcal{N}} d_\ell^{\rho}(i,j) &\leq \sum_{i',j' \in \mathcal{N}: i'\neq j'} \text{ } \sum_{\mathbf{a}\in\mathcal{A}: \exists C \in \pi^{\mathbf{a}} \text{ s.t. } i',j'\in C} \rho(\mathbf{a}) = \sum_{\mathbf{a}\in\mathcal{A}} \rho(\mathbf{a})=1 \\
            \sum_{i',j' \in \mathcal{N}: i' \neq j'} d_\ell^{\rho}(i',j') &\leq \sum_{i',j' \in \mathcal{N}: i' \neq j'} \text{ } \sum_{\mathbf{a}\in\mathcal{A}: \exists C \in \pi^{\mathbf{a}} \text{ s.t. } i',j'\in C} \rho(\mathbf{a}) = \sum_{\mathbf{a}\in\mathcal{A}} \rho(\mathbf{a}) =1
        \end{aligned}
    \end{equation}
    By arguments similar to the above:
    \begin{equation}
        \mathbb{E}_{\mathbf{a}\sim \bm{\varphi}^\star} [b_i^\delta(\mathbf{a})]\leq 2kn(n+1) c_{\size}^{\bm{\varphi}^\star} \log(4(n+1)k/\delta) \sqrt{\frac{2(n-1)}{M}} \left[\frac{n-1}{2} +\sqrt{\frac{n}{2}}\right]
    \end{equation}
    Since our constructed $b_i^\delta$ is an exploration bonus for the utility estimator $\hat{v}_i$ by Lemma \ref{supp:lemma:bound on diffs semi bandit}, applying Theorem \ref{supp:thm:general duality gap} implies that:
    \begin{equation}
    \label{eq:gap out upper bound}
        \gap(\varphi^{\out}) \leq 8kn(n+1) c_{\size}^{\bm{\varphi}^\star} \log(4(n+1)k/\delta) \sqrt{\frac{2(n-1)}{M}} \left[\frac{n-1}{2} +\sqrt{\frac{n}{2}}\right] +\epsilon_{\opt}
    \end{equation}
    as desired.
\end{proof}

\subsection{The Minimum Value of Coalition Size Coefficient is at most $3$}

\begin{lemma}
    For a Nash stable joint strategy $\bm{\varphi}^\star$, the minimum value of $c_{\size}^{\bm{\varphi}^\star}$ is at most $3$. 
\end{lemma}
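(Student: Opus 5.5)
The plan is to read ``minimum value'' as the minimum of $c_{\size}^{\bm{\varphi}^\star}$ over exploration policies $\rho$, and to exhibit one $\rho$ achieving at most $3$. By Lemma \ref{lemma:potential game} (see the footnote to Assumption \ref{assump:unilateral deviation}), we may take $\bm{\varphi}^\star$ to be a \emph{pure} NS joint strategy, i.e., a single joint action $\mathbf{a}^\star$. The coefficient only depends on $\rho$ through the size marginals $d_\ell^{\rho}(\cdot)$. So the goal is to build $\rho$ whose size marginal for every $\ell$ dominates, up to a factor $3$, every size marginal reachable by a unilateral deviation from $\mathbf{a}^\star$.

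First, I will use Observation \ref{obs:deviations from pure}. For each $\ell\in[k]$, let $s_\ell=|C_\ell^{\mathbf{a}^\star}|$. For any agent $i$ and any $\phi_i\in\Delta(\mathcal{A}_i)$, the joint strategy $(\mathbf{a}^\star_{-i},\phi_i)$ only produces joint actions differing from $\mathbf{a}^\star$ in agent $i$'s coordinate. Hence $|C_\ell^{\mathbf{a}}|\in S_\ell:=\{s_\ell-1,s_\ell,s_\ell+1\}\cap([n]\cup\{0\})$, so $d_\ell^{\bm{\varphi}^\star_{-i},\phi_i}$ is supported on $S_\ell$, a set of at most $3$ sizes. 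Trivially, $d_\ell^{\bm{\varphi}^\star_{-i},\phi_i}(\alpha)\le 1$ for every $\alpha$.

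Second, I will construct $\rho$ so that $d_\ell^{\rho}(\alpha)\ge 1/3$ for every $\ell$ and every $\alpha\in S_\ell$ that is actually attained by some unilateral deviation. The natural choice is a mixture of three ``layers'' $\rho=\frac13(\mathbf{a}^{(-)}+\mathbf{a}^{(0)}+\mathbf{a}^{(+)})$ of joint actions, with $\mathbf{a}^{(0)}=\mathbf{a}^\star$. Here $\mathbf{a}^{(-)}$ is chosen so that, simultaneously for all $\ell$, coalition $\ell$ has size $s_\ell-1$ (when such a deviation is attainable), and $\mathbf{a}^{(+)}$ similarly gives size $s_\ell+1$. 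Then every ratio in \eqref{eq:coalition coefficient} is at most $1/(1/3)=3$. Sizes with $d^\rho_\ell(\alpha)>0$ but no deviation mass contribute ratio $0$, so they are harmless.

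The main obstacle is the simultaneity in the second step. A single joint action may not realize size $s_\ell\pm1$ for all $\ell$ at once, given the action-space constraints $\mathcal{A}_i$. If it cannot, I would fall back on the argument the paper presumably intends. For each fixed $\ell$ one only needs three size values, so $\rho$ should be assembled coalition-by-coalition, placing mass $1/3$ on each realized size per $\ell$, treating the $k$ candidate coalitions' size marginals separately. Alternatively, I would note that the ratio is bounded by $\max_\alpha 1/d_\ell^\rho(\alpha)$ over at most three values of $\alpha$, with $d^\rho_\ell$ uniform over them. If full simultaneity fails, I would state the bound under the mild realizability convention that the needed joint actions exist, which is how I expect the appendix to handle it. The reasoning would mirror the ``factor $3$ versus $n+1$'' discussion in Section \ref{sec:Learning Approximate NS Pure Strategies}.
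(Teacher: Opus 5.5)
There is a genuine gap, and it is exactly the simultaneity step you flagged. Neither fallback closes it. The exploration policy $\rho$ is a single distribution on $\mathcal{A}$, so you cannot prescribe its size marginals $d^{\rho}_\ell$, $\ell\in[k]$, ``coalition-by-coalition''. They are coupled through the action sets: for example, if every $\mathcal{A}_i=\{\{1\},\dots,\{k\}\}$, then $\sum_{\ell}|C_\ell^{\mathbf{a}}|=n$ for every joint action $\mathbf{a}$. This coupling can make your target unattainable. Take $n=2$, $k=3$, singleton action sets, and all mutual utilities equal to $0$, so every joint action is NS. Let $\mathbf{a}^\star=(\{1\},\{2\})$. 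Every reachable pair $(\ell,\alpha)$ is realized with probability $1$ by some pure deviation. Hence any $\rho$ satisfying Assumption~\ref{assump:unilateral deviation} with $c_{\size}^{\bm{\varphi}^\star}\le 3$ must satisfy $d^\rho_\ell(\alpha)\ge \tfrac13$ for every reachable pair. This forces $d^\rho_1$ and $d^\rho_2$ to be uniform on $\{0,1,2\}$ and forces $d^\rho_3(1)\ge\tfrac13$. It follows that $\mathbb{E}_{\mathbf{a}\sim\rho}\sum_\ell|C_\ell^{\mathbf{a}}|\ge 1+1+\tfrac13>2$, a contradiction. So no ``mild realizability convention'' rescues the argument: for this $\mathbf{a}^\star$ the inequality you are after is false. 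Choosing $\mathbf{a}^\star$ more cleverly does not help in general either. With $n=2$ and singleton actions, the same counting gives $c_{\size}^{\bm{\varphi}^\star}\ge (k+2)/2$ for every pure $\mathbf{a}^\star$: both agents together gives mass at least $(3+(k-1))/c$, and apart gives at least $(6+(k-2))/c$, against a total of $2$.

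Falling back on ``the argument the paper presumably intends'' does not help, because that argument has the same defect. The paper takes $\rho$ uniform over all joint actions reachable from $\mathbf{a}^\star$ by one unilateral deviation. It then infers from ``at most three sizes per coalition'' that every nonzero $d^\rho_\ell(\alpha)$ is at least $\tfrac13$. But three possible values need not each receive a third of the mass. With singleton actions, agents $1,\dots,n-1$ in coalition $1$ and agent $n$ in coalition $2$, size $n$ of coalition $1$ arises from exactly one of the $1+n(k-1)$ reachable joint actions. That ratio is therefore $1+n(k-1)$. What the definition as written does give is trivial. The maximum in \eqref{eq:coalition coefficient} only ranges over $\alpha$ with $d^\rho_\ell(\alpha)>0$, so the point mass $\rho(\mathbf{a}^\star)=1$ already yields $c_{\size}^{\bm{\varphi}^\star}\le 1$. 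However, that $\rho$ violates Assumption~\ref{assump:unilateral deviation}, so it does not support the intended use in Remark~\ref{remark:minimum value coefficient}. A constant bound under the assumption would require extra hypotheses, for instance on the action sets or on $k$, and a different argument.
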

\begin{proof}
    Recall that:
    \begin{equation}
        \label{eq:coefficient}
        c_{\size}^{\bm{\varphi}^\star} = \max_{i\in \mathcal{N}, \ell \in [k], \phi_i\in\Delta(\mathcal{A}_i), \alpha\in [n]\cup\{0\}:d_\ell^{\rho}(\alpha) > 0} \frac{d_\ell^{\bm{\varphi}_{-i}^\star,\phi_i}(\alpha)}{d_\ell^{\rho}(\alpha)}
    \end{equation}
    By Theorem \ref{supp:thm:general duality gap}, $\bm{\varphi}^\star$ can be chosen to be \textit{deterministic}. Thus, assume that the exploration policy $\rho$ picks a joint action uniformly at random from the set of all joint actions reachable from $\bm{\varphi}^\star$ through unilateral deviations. Since $\bm{\varphi}^\star$ induces a pure joint strategy, then any unilateral deviation may increase or decrease the number of agents within a given coalition by at most $1$, as the agent that unilaterally deviates either remains in her current coalition, leaves that coalition or joins a new coalition. Formally, for any joint action $\mathbf{a}$, any agent $i$ and any $\ell\in[k]$, if agent $i$ unilaterally deviates to some action $a_i'$, then the size of the $\ell$-th candidate coalition can only be either of the following:
    \begin{enumerate}
        \item If agent $i$ originally joined the $\ell$-th candidate coalition (i.e., $\ell \in a_i$), but now she decides to leave it (i.e., $\ell \notin a_i'$), then the size of the $\ell$-th candidate coalition decreases by exactly $1$, meaning that $|C_\ell^{\mathbf{a}_{-i},a_i'}|=|C_\ell^{\mathbf{a}}|-1$.

        \item If agent $i$ originally joined the $\ell$-th candidate coalition (i.e., $\ell \in a_i$), and she decides to remain in that coalition (i.e., $\ell \in a_i'$), then the size of the $\ell$-th candidate coalition remains unchanged, meaning that $|C_\ell^{\mathbf{a}_{-i},a_i'}|=|C_\ell^{\mathbf{a}}|$.

        \item If agent $i$ originally did not join the $\ell$-th candidate coalition (i.e., $\ell \notin a_i$), but now she decides to join it (i.e., $\ell \in a_i'$), then the size of the $\ell$-th candidate coalition increases by exactly $1$, meaning that $|C_\ell^{\mathbf{a}_{-i},a_i'}|=|C_\ell^{\mathbf{a}}|+1$.
    \end{enumerate}

    Therefore, for any $\ell\in[k]$, any unilateral deviation only affects the size of the $\ell$-th candidate coalition within a range of $3$ possible values. As such, the minimum possible non-zero value of $d_\ell^{\rho}(\alpha)$ is at least $\frac{1}{3}$. Since the numerator in \eqref{eq:coefficient} is at most $1$, then the minimum value of $ c_{\size}^{\bm{\varphi}^\star}$ is at most $3$.
\end{proof}

\subsection{Proof of Corollary 1}

\begin{customcoro}{1}
    \label{supp:coro:optimal}
    Under {\normalfont semi-bandit} feedback and Assumption 1, for any $\delta \in (0,1]$, any $\varepsilon > \epsilon_{\opt}$ with $\epsilon_{\opt} = o(\varepsilon)$ and any NS joint {\normalfont mixed} strategy $\bm{\varphi}^\star$, Algorithm 1 with utility estimators and exploration bonuses defined in equations (8) and (9) within the full paper has a sample complexity bound with {\normalfont \textbf{optimal}} dependence on $\varepsilon$ (up to logarithmic factors): if the dataset size is at least
    \begin{equation}
        M \geq \frac{128k^2n^2(n+1)^2(n-1) (c_{\size}^{\bm{\varphi}^\star})^2}{(\varepsilon -\epsilon_{\opt})^2} \log^2(4(n+1)k/\delta)  \left[\frac{n-1}{2} +\sqrt{\frac{n}{2}}\right]^2,
    \end{equation}
    then the joint strategy $\varphi^{\out}$ is $\varepsilon$-NS (i.e., $\gap(\varphi^{\out}) \leq \varepsilon$) with probability at least $1-\delta$.
\end{customcoro}
\begin{proof}
    The proof is a continuation of the proof for Theorem \ref{supp:thm:semi-bandit}. Indeed, if the right-hand side in \eqref{eq:gap out upper bound} is at most $\varepsilon$ for some $\varepsilon \geq \epsilon_{\opt}$ with $\epsilon_{\opt} = o(\varepsilon)$, then Algorithm 1 produces an $\varepsilon$-NS joint strategy with probability at least $1-\delta$ (i.e., $\gap(\varphi^{\out}) \leq \varepsilon$). This is equivalent to requiring that:
    \begin{equation}
        \begin{aligned}
            &\varepsilon \sqrt{M}\geq 8kn(n+1) c_{\size}^{\bm{\varphi}^\star} \log(4(n+1)k/\delta) \sqrt{{2(n-1)}} \left[\frac{n-1}{2} +\sqrt{\frac{n}{2}}\right] +\epsilon_{\opt}\sqrt{M} \\
            &\Leftrightarrow  (\varepsilon -\epsilon_{\opt})\sqrt{M}\geq 8kn(n+1) c_{\size}^{\bm{\varphi}^\star} \log(4(n+1)k/\delta) \sqrt{{2(n-1)}} \left[\frac{n-1}{2} +\sqrt{\frac{n}{2}}\right]\\
            &\Leftrightarrow \sqrt{M} \geq \frac{8kn(n+1)\sqrt{{2(n-1)}} c_{\size}^{\bm{\varphi}^\star}}{\varepsilon -\epsilon_{\opt}} \log(4(n+1)k/\delta)  \left[\frac{n-1}{2} +\sqrt{\frac{n}{2}}\right] \\
            &\Leftrightarrow M \geq \frac{128k^2n^2(n+1)^2(n-1) (c_{\size}^{\bm{\varphi}^\star})^2}{(\varepsilon -\epsilon_{\opt})^2} \log^2(4(n+1)k/\delta)  \left[\frac{n-1}{2} +\sqrt{\frac{n}{2}}\right]^2
        \end{aligned} 
    \end{equation}
    as desired. Since $\epsilon_{\opt} = o(\varepsilon)$, we have {\normalfont \textbf{optimal}} dependence on $\varepsilon$ (up to logarithmic factors) due to \cite{hassani2020stochastic,bai20provable}. 
\end{proof}

\subsection{Learning Approximate Nash-Stable Pure Strategies}
\label{supp:Learning Approximate Nash-Stable Pure Strategies - semi-bandit}
\setcounter{algorithm}{1}
\begin{algorithm}[t!]
    \caption{\textbf{Surrogate Minimization in POCF Games (Pure Strategies)}}
    \label{alg:surrogate min pure}   
    \textbf{Input:} An offline dataset $\mathcal{S}=\{(\mathbf{a}^m, \mathbf{v}^m)\}_{m=1}^M$.
    \begin{algorithmic}[1] 
        \State{Construct $\hat{v}_i$ and $b_i^\delta$ for any agent $i$ based on $\mathcal{S}$.}
        \State{Return an deterministic strategy $\bm{\varphi}^{\out}$ that \textit{approximately} solves:
        \begin{equation}
            \label{supp:eq:approx NS}
            \min_{\bm{\varphi}\in \Gamma}\text{ }\max_{i \in \mathcal{N}} \left[\overline{V}_i^{\star,\delta}(\bm{\varphi}_{-i}) - \underline{V}_{i}^\delta(\bm{\varphi})\right]
        \end{equation}
        where $\overline{V}_{i}(\bm{\varphi}), \underline{V}_{i}^\delta(\bm{\varphi}),\overline{V}_i^{\star,\delta}(\bm{\varphi}_{-i})$ are as in equations (4)-(5) in the main text.}
    \end{algorithmic}
\end{algorithm}

In this section, we concentrate on learning an approximate NS \textit{pure} strategy under semi-bandit feedback. As mentioned in Section 4.2.2 in the main text, since we focus on \textit{symmetric} POCF games, they always admit a pure NS strategy by Lemma \ref{supp:lemma:potential game}, which legitimates learning an approximate NS \textit{pure} strategy. Hence, we can adapt Algorithm 1 to produce a pure strategy. Indeed, let $\Gamma_i \subset \Delta(\mathcal{A}_i)$ be the set of agent $i$'s \textit{deterministic} strategies, i.e., each deterministic strategy $\varphi_i \in \Gamma_i$ corresponds to exactly one joint \textit{pure} strategy $a_i \in \mathcal{A}_i$, such that $\varphi_i(a_i)=1$ for any agent $i$ and $\varphi_i(a_i')=0$ for any other pure strategy $a_i \neq a_i' \in \mathcal{A}_i$. Thus, instead of approximately solving \eqref{supp:eq:approx NS} over \textit{mixed} strategies, Algorithm 1 in the main text can be modified to find a joint \textit{deterministic} strategy $\bm{\varphi}^{\out}\in \Gamma := \prod_{i=1}^n \Gamma_i$ that approximately solves the minimization problem $\min_{\bm{\varphi}\in\Gamma} \max_{i\in \mathcal{N}} [\overline{V}_i^{\star,\delta}(\bm{\varphi}_{-i})-\underline{V}_{i}^\delta(\bm{\varphi})]$. Algorithm \ref{alg:surrogate min pure} summarizes the resulting algorithm.

Now, similarly to (7) in the full paper, we can measure how well the dataset covers all coalition sizes through \textit{deterministic} unilateral deviations from a joint deterministic strategy $\bm{\varphi}$ via:
\begin{equation*}
    \label{eq:coalition coefficient pure}
    c_{\size}^{\text{pure}}(\bm{\varphi}) = \max_{i\in \mathcal{N}, \ell \in [k], \phi_i\in\Gamma_i, \alpha\in [n]\cup\{0\}:d_\ell^{\rho}(\alpha) > 0} \frac{d_\ell^{\bm{\varphi}_{-i},\phi_i}(\alpha)}{d_\ell^{\rho}(\alpha)}
\end{equation*}

Next, we establish Algorithm \ref{alg:surrogate min pure}’s approximation to Nash stability under \textit{pure} strategies, which replaces the factor $n+1$ in equation (10) of the main text with the constant $3$:

\begin{theorem}
    \label{supp:thm:semi-bandit pure}
    Under semi-bandit feedback, Assumption 1 and pure strategies, for any $\delta \in (0,1]$, any dataset size $M \in \mathbb{N}$ and any NS joint strategy $\bm{\varphi}^\star$, the joint pure strategy $\varphi^{\out}$ produced by Algorithm \ref{alg:surrogate min pure} with utility estimators and exploration bonuses defined in equations (8) and (9) within the full paper satisfies the following with probability at least $1-\delta$: 
    \begin{equation}
        \gap(\varphi^{\out}) \leq 24kn c_{\size}^{\text{pure}}(\bm{\varphi}^\star) \log(4(n+1)k/\delta) \sqrt{\frac{2(n-1)}{M}} \left[\frac{n-1}{2} +\sqrt{\frac{n}{2}}\right] +\epsilon_{\opt}
    \end{equation}
    where $c_{\size}^{\text{pure}}(\bm{\varphi}^\star)$ is defined in equation (7) within the full paper. 
\end{theorem}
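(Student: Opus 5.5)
The proof mirrors that of Theorem \ref{supp:thm:semi-bandit}, with the mixed-strategy objects replaced by their deterministic counterparts. First, the analogue of Theorem \ref{supp:thm:general duality gap} must hold for Algorithm \ref{alg:surrogate min pure}. Restricting the minimisation in \eqref{supp:eq:approx NS} to $\Gamma$ still keeps the comparison point $\bm{\varphi}^\star$ feasible, since by Lemma \ref{supp:lemma:potential game} we may take $\bm{\varphi}^\star$ pure. Lemma \ref{supp:lemma:surrogate} holds for any joint strategy, so it applies to $\bm{\varphi}^{\out}\in\Gamma$. The linearity argument reducing the inner maximisation to deterministic deviations $\varphi_i'\in\Gamma_i$ is unchanged. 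Together these give, with probability at least $1-\delta$,
\[
\gap(\varphi^{\out}) \leq 2\max_{i\in\mathcal{N}}\Big[\max_{\varphi_i'\in\Gamma_i}\mathbb{E}_{\mathbf{a}\sim(\bm{\varphi}^\star_{-i},\varphi_i')}[b_i^\delta(\mathbf{a})]+\mathbb{E}_{\mathbf{a}\sim\bm{\varphi}^\star}[b_i^\delta(\mathbf{a})]\Big]+\epsilon_{\opt}.
\]
Lemma \ref{supp:lemma:bound on diffs semi bandit} certifies that $b_i^\delta$ is a valid exploration bonus, and Lemma \ref{supp:lemma:counter} provides the counter bound. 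Both are stated on a single good event, so the probability budget is the same as before.

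Next, I bound each bonus expectation, following \eqref{eq:bound bonus}. I expand the expectation over coalition sizes via $d_\ell^{(\bm{\varphi}^\star_{-i},\varphi_i')}(\alpha)$, substitute Lemma \ref{supp:lemma:counter}, and replace each density by $c_{\size}^{\text{pure}}(\bm{\varphi}^\star)\,d_\ell^\rho(\alpha)$. Assumption 1 ensures that every size with positive deviation density has $d_\ell^\rho(\alpha)>0$.

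The key change is here. Because $\bm{\varphi}^\star$ and $\varphi_i'$ are deterministic, $(\bm{\varphi}^\star_{-i},\varphi_i')$ is a point mass on a single joint action. By Observation \ref{obs:deviations from pure}, for each $\ell$ the coalition size $|C_\ell|$ takes one of at most three values, namely $|C_\ell^{\bm{\varphi}^\star}|-1$, $|C_\ell^{\bm{\varphi}^\star}|$ and $|C_\ell^{\bm{\varphi}^\star}|+1$. So the sum $\sum_{\alpha=0}^n$ in \eqref{eq:bound bonus}, and afterwards in \eqref{eq:bound bonus cont}, ranges over at most $3$ nonzero terms rather than $n+1$. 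Each surviving summand is bounded exactly as before by \eqref{eq:bound rho}, splitting pairs $(i',j')$ by whether $d_\ell^\rho(i',j')\le d_\ell^\rho(i,j)$. The Cauchy--Schwarz steps \eqref{eq:cauchy1}--\eqref{eq:cauchy} then apply, together with $\sum_{i\neq j}d_\ell^\rho(i,j)\le 1$. The result is
\[
\mathbb{E}[b_i^\delta(\mathbf{a})]\le 6kn\,c_{\size}^{\text{pure}}(\bm{\varphi}^\star)\log(4(n+1)k/\delta)\sqrt{\tfrac{2(n-1)}{M}}\Big[\tfrac{n-1}{2}+\sqrt{\tfrac n2}\Big]
\]
for both the deviation term and the $\bm{\varphi}^\star$ term. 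Plugging into the display above gives $2(6+6)=24$, which is the stated constant.

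The main obstacle is making the ``at most three sizes'' reduction rigorous inside the chain. The $\alpha$-sum must be restricted \emph{before} the step that replaces $d^{(\cdot)}_\ell(\alpha)$ by $c_{\size}^{\text{pure}}d_\ell^\rho(\alpha)$ and then bounds $d_\ell^\rho(\alpha)$ by the pairwise sum. Otherwise one silently re-sums over all $n+1$ sizes. I would therefore first write $\sum_\alpha d^{(\cdot)}_\ell(\alpha)X=\sum_{\alpha\in S_\ell}d^{(\cdot)}_\ell(\alpha)X$ with $|S_\ell|\le3$, where $S_\ell$ is determined by $\bm{\varphi}^\star$ alone. Only then would I apply the coefficient and \eqref{eq:bound rho}. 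For the $\bm{\varphi}^\star$ term, $|S_\ell|=1$, so that term is covered a fortiori.
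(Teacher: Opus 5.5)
Your route is the paper's own: specialise Theorem~\ref{supp:thm:general duality gap} to $\Gamma$ with a pure $\bm{\varphi}^\star$, and restrict the size sum to a window $S_\ell$ of at most three sizes. You then bound each bonus expectation by $6kn\,c_{\size}^{\text{pure}}(\bm{\varphi}^\star)\log(4(n+1)k/\delta)\sqrt{2(n-1)/M}\,[\tfrac{n-1}{2}+\sqrt{n/2}]$ and finish with $2(6+6)=24$.

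The problem is the part you import ``exactly as before''. Once the sums are enlarged to all $\ell\in[k]$ and all $j$, the summand $\sqrt{2\log(4(n+1)k/\delta)/(N_{i,j}^\ell\vee 1)}$ depends neither on $\mathbf{a}$ nor on $\alpha$. So the valid steps (Lemma~\ref{supp:lemma:counter}, the coefficient, and $\sum_{\alpha\in S_\ell}d_\ell^\rho(\alpha)\le 1$) give at best $\mathbb{E}[b_i^\delta(\mathbf{a})]\lesssim c_{\size}^{\text{pure}}(\bm{\varphi}^\star)\log(4(n+1)k/\delta)\sum_{\ell}\sum_{j}(M\,d_\ell^\rho(i,j))^{-1/2}$. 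The three-size window you single out as the key change is therefore immaterial.

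The only place the chain removes $d_\ell^\rho(i,j)^{-1/2}$ is the split in \eqref{eq:bound bonus cont}. For pairs with $d_\ell^\rho(i',j')>d_\ell^\rho(i,j)$ it replaces $d_\ell^\rho(i',j')/\sqrt{d_\ell^\rho(i,j)}$ by $\sqrt{d_\ell^\rho(i',j')}$. That is \emph{smaller}, so the inequality runs the wrong way. Two lesser problems: \eqref{eq:bound rho} fails for $\alpha\in\{0,1\}$, which $S_\ell$ contains whenever $|C_\ell^{\mathbf{a}^\star}|\le 2$. Also $\sum_{i\neq j}d_\ell^\rho(i,j)\le 1$ is false in general, since over ordered pairs this sum equals $\mathbb{E}_\rho[|C_\ell|(|C_\ell|-1)]$. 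Those cost only constants or polynomial factors; the split cannot be repaired.

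It cannot be repaired because Assumption~\ref{assump:unilateral deviation} and $c_{\size}^{\text{pure}}$ constrain only the law of $|C_\ell|$ under $\rho$, not which pairs co-occur. Meanwhile $b_i^\delta$ shrinks only through $N_{i,j}^\ell$. Take $n=4$, $k=2$, $\mathcal{A}_i=\{\{1\},\{2\}\}$, and $\rho$ uniform over the four joint actions with $C_1\in\{\emptyset,\{1\},\{1,3\},\{1,3,4\}\}$. For $\mathbf{a}^\star$ with $(C_1,C_2)=(\{1,2\},\{3,4\})$, every coalition size reachable by a unilateral deviation is covered and $c_{\size}^{\text{pure}}=4$. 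Yet $N_{1,2}^1=0$ for every $M$. With the bonus \eqref{eq:bonus semi}, $b_1^\delta(\mathbf{a}^\star)\ge\sqrt{2\log(40/\delta)}$ never decays. If the $N_{i,j}^\ell=0$ terms are dropped as in Lemma~\ref{supp:lemma:bound on diffs semi bandit}, $b_1^\delta$ stops being a valid bonus, since $\hat v_1^1(2)=0$ while $d_{1,2}^1$ is unconstrained.

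In fact the statement itself fails here. Set $d_{1,2}^1=\pm1$, $d_{1,2}^2=-1$, $d_{3,4}^2=1$ and all other means $0$. This yields two games that generate identical data under $\rho$. Both satisfy the assumption, with NS outcomes $(\{1,2\},\{3,4\})$ and $(\{1\},\{2,3,4\})$ respectively, each with $c_{\size}^{\text{pure}}=4$. In them, every pure joint strategy has gap at least $1$ in one of the two games. So once the right-hand side is below $1$, the claimed bound fails with probability at least $\tfrac12$ in one of them.

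What is missing is a pair-level coverage condition: a lower bound on $d_\ell^\rho(i,j)$ for pairs sharing coalition $\ell$ under $\bm{\varphi}^\star$ and its unilateral deviations, with the rate depending on it. The same faulty split appears in the paper's proof of Theorem~\ref{supp:thm:semi-bandit}, which your proposal faithfully mirrors.
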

\begin{proof}
    By Theorem \ref{supp:thm:general duality gap}, we need to bound $\mathbb{E}_{\mathbf{a}\sim (\bm{\varphi}_{-i}^\star,\varphi_i')} [b_i^\delta(\mathbf{a})]$ and $ \mathbb{E}_{\mathbf{a}\sim \bm{\varphi}^\star} [b_i^\delta(\mathbf{a})]$ for some NS pure joint strategy $\bm{\varphi}^\star$ and any strategy $\varphi_i \in \Gamma_i$. Particularly, there exists a joint action $\mathbf{a}^\star \in \mathcal{A}$ such that, for any agent $i$, $\varphi_i^\star(a_i^\star)=1$ and $\varphi_i^\star(a_i)=0$ for any other action $a_i^\star \neq a_i \in \mathcal{A}$. As such, for any $\delta \in (0,1]$, the following holds with probability at least $1-\delta$ simultaneously for each agent $i$:

    \begin{subequations}
        \label{eq:bound bonus pure}
        \begin{align}
            \mathbb{E}_{\mathbf{a}\sim (\bm{\varphi}_{-i}^\star,\varphi_i')} [b_i^\delta(\mathbf{a})] &= \mathbb{E}_{\mathbf{a}\sim (\bm{\varphi}_{-i}^\star,\varphi_i')}\left[ \sum_{\ell\in a_i} \sum_{i \neq j \in C_\ell^{\mathbf{a}}: N_{i,j}^\ell \geq 1} \sqrt{\frac{2  \log(4(n+1)k/\delta)}{N_{i,j}^\ell\vee 1}}\right] \\
            &\leq  \mathbb{E}_{\mathbf{a}\sim (\bm{\varphi}_{-i}^\star,\varphi_i')}\left[\sum_{\ell\in [k]}\sum_{i \neq j \in \mathcal{N}: N_{i,j}^\ell \geq 1}\sqrt{\frac{2  \log(4(n+1)k/\delta)}{N_{i,j}^\ell\vee 1}}\right] \\
            &= \sum_{\ell\in [k]} \sum_{i \neq j \in \mathcal{N}: N_{i,j}^\ell \geq 1} \mathbb{E}_{\mathbf{a}\sim (\bm{\varphi}_{-i}^\star,\varphi_i')}\left[\sqrt{\frac{2  \log(4(n+1)k/\delta)}{N_{i,j}^\ell\vee 1}}\right] \label{eq:at most k pure} \\
            &\leq \sum_{\ell\in [k]} \sum_{i \neq j \in \mathcal{N}: N_{i,j}^\ell \geq 1} \sum_{\alpha=|C_\ell^{\mathbf{a}}|-1}^{|C_\ell^{\mathbf{a}}|+1} d^{(\bm{\varphi}_{-i}^\star,\varphi_i')}_\ell(\alpha) \sqrt{\frac{8  \log^2(4(n+1)k/\delta)}{M d_\ell^{\rho}(i,j)}}  \label{eq:apply lemma pure} \\
            &\leq c_{\size}^{\text{pure}}(\bm{\varphi}^\star) \sum_{\ell\in [k]}\sum_{i \neq j \in \mathcal{N}: N_{i,j}^\ell \geq 1}\sum_{\alpha=|C_\ell^{\mathbf{a}}|-1}^{|C_\ell^{\mathbf{a}}|+1} d_\ell^{\rho}(\alpha) \sqrt{\frac{8  \log^2(4(n+1)k/\delta)}{M d_\ell^{\rho}(i,j)}} \label{eq:assumption 1 pure}
        \end{align}
    \end{subequations}
    where the inequality in \eqref{eq:apply lemma pure} is due to $d^{(\bm{\varphi}_{-i}^\star,\varphi_i')}_\ell(\alpha) = \sum_{\mathbf{a}\in\mathcal{A}: |C_\ell^{\mathbf{a}}|=\alpha} (\bm{\varphi}_{-i}^\star,\varphi_i')(\mathbf{a})$ and Lemma \ref{supp:lemma:counter}. Further, the inequality in \eqref{eq:assumption 1 pure} is derived from Assumption 1 and the definition of $c_{\size}^{\text{pure}}(\bm{\varphi}^\star)$. Moreover, unlike \eqref{eq:bound bonus pure} from the proof of Theorem \ref{supp:thm:semi-bandit} for mixed strategies, note that the summation over $\alpha$ is from $|C_\ell^{\mathbf{a}}|-1$ to $|C_\ell^{\mathbf{a}}|+1$. The key intuition is as follows. For any joint action $\mathbf{a}$ sampled from some joint strategy $\bm{\varphi}$ and coalition $C\in\pi^{\mathbf{a}}$, consider a unilateral deviation of some agent $i$. If $i \in C$, then agent $i$ may either remain in or leave $C$ after she deviates. Otherwise, if $i \notin C$, then agent $i$ may either join $C$ or stay out after deviating. Thus, each unilateral deviation affects coalition sizes by either increasing or decreasing them by $1$, or leaving them unchanged.

    Combining \eqref{eq:bound bonus pure} with \eqref{eq:bound rho} from the proof of Theorem \ref{supp:thm:semi-bandit}, we obtain:
    \begin{subequations}
        \label{eq:bound bonus cont pure}
        \begin{align}
            \mathbb{E}_{\mathbf{a}\sim (\bm{\varphi}_{-i}^\star,\varphi_i')} [b_i^\delta(\mathbf{a})] &\leq3 c_{\size}^{\text{pure}}(\bm{\varphi}^\star) \sum_{\ell\in [k]} \sum_{i \neq j \in \mathcal{N}: N_{i,j}^\ell \geq 1} \sqrt{\frac{8  \log^2(4(n+1)k/\delta)}{M d_\ell^{\rho}(i,j)}}  \cdot\left[\sum_{\alpha=|C_\ell^{\mathbf{a}}|-1}^{|C_\ell^{\mathbf{a}}|+1} \sum_{i',j' \in \mathcal{N}: i' \neq j'}d_\ell^{\rho}(i',j')\right] \\
            &=3c_{\size}^{\text{pure}}(\bm{\varphi}^\star) \sum_{\ell\in [k]} \sum_{i \neq j \in \mathcal{N}: N_{i,j}^\ell \geq 1} \sqrt{\frac{8  \log^2(4(n+1)k/\delta)}{M d_\ell^{\rho}(i,j)}}  \cdot \left[\sum_{i',j' \in \mathcal{N}: i' \neq j' \wedge d_\ell^{\rho}(i',j') \leq d_\ell^{\rho}(i,j)} d_\ell^{\rho}(i',j')\right] \\
            &+ 3c_{\size}^{\text{pure}}(\bm{\varphi}^\star) \sum_{\ell\in [k]} \sum_{i \neq j \in \mathcal{N}: N_{i,j}^\ell \geq 1} \left[\sum_{i',j' \in \mathcal{N}: i' \neq j' \wedge d_\ell^{\rho}(i',j') > d_\ell^{\rho}(i,j)} \sqrt{\frac{8  \log^2(4(n+1)k/\delta)}{M d_\ell^{\rho}(i,j)}} \cdot d_\ell^{\rho}(i',j')\right] \\
            &\leq 3c_{\size}^{\text{pure}}(\bm{\varphi}^\star) \sum_{\ell\in [k]} \sum_{i \neq j \in \mathcal{N}: N_{i,j}^\ell \geq 1} \sqrt{\frac{8  \log^2(4(n+1)k/\delta)}{M d_\ell^{\rho}(i,j)}}  \cdot \left[\sum_{i',j' \in \mathcal{N}: i' \neq j' \wedge d_\ell^{\rho}(i',j') \leq d_\ell^{\rho}(i,j)} d_\ell^{\rho}(i,j)\right] \\
            &+ 3c_{\size}^{\text{pure}}(\bm{\varphi}^\star) \sum_{\ell\in [k]} \sum_{i \neq j \in \mathcal{N}: N_{i,j}^\ell \geq 1} \left[\sum_{i',j' \in \mathcal{N}: i' \neq j' \wedge d_\ell^{\rho}(i',j') > d_\ell^{\rho}(i,j)} \sqrt{\frac{8  \log^2(4(n+1)k/\delta)}{M d_\ell^{\rho}(i',j')}} \cdot d_\ell^{\rho}(i',j')\right]  \\
            &=3c_{\size}^{\text{pure}}(\bm{\varphi}^\star) \sum_{\ell\in [k]} \sum_{i \neq j \in \mathcal{N}: N_{i,j}^\ell \geq 1} \sum_{i',j' \in \mathcal{N}: i' \neq j' \wedge d_\ell^{\rho}(i',j') \leq d_\ell^{\rho}(i,j)} \sqrt{\frac{8  \log^2(4(n+1)k/\delta)d_\ell^{\rho}(i,j)}{M}} \\
            &+ 3c_{\size}^{\text{pure}}(\bm{\varphi}^\star) \sum_{\ell\in [k]} \sum_{i \neq j \in \mathcal{N}: N_{i,j}^\ell \geq 1} \sum_{i',j' \in \mathcal{N}: i' \neq j' \wedge d_\ell^{\rho}(i',j') > d_\ell^{\rho}(i,j)} \sqrt{\frac{8  \log^2(4(n+1)k/\delta) d_\ell^{\rho}(i',j')}{M}} \\
            &\leq 3c_{\size}^{\text{pure}}(\bm{\varphi}^\star) \sum_{\ell\in [k]} \sum_{i \neq j \in \mathcal{N}} \sum_{i',j' \in \mathcal{N}: i' \neq j'} \sqrt{\frac{8  \log^2(4(n+1)k/\delta)d_\ell^{\rho}(i,j)}{M}} \\
            &+ 3c_{\size}^{\text{pure}}(\bm{\varphi}^\star) \sum_{\ell\in [k]} \sum_{i \neq j \in \mathcal{N}} \sum_{i',j' \in \mathcal{N}: i' \neq j'} \sqrt{\frac{8  \log^2(4(n+1)k/\delta) d_\ell^{\rho}(i',j')}{M}} \\
            &= \binom{n}{2} 3 c_{\size}^{\text{pure}}(\bm{\varphi}^\star) \sum_{\ell\in[k]} \sum_{i \neq j \in \mathcal{N}} \sqrt{\frac{8  \log^2(4(n+1)k/\delta)d_\ell^{\rho}(i,j)}{M}}\\
            &+ 3nc_{\size}^{\text{pure}}(\bm{\varphi}^\star) \sum_{\ell\in [k]} \sum_{i',j' \in \mathcal{N}: i' \neq j'} \sqrt{\frac{8  \log^2(4(n+1)k/\delta)d_\ell^{\rho}(i',j')}{M}} \\
            &\leq \frac{n(n-1)}{2} 3 c_{\size}^{\text{pure}}(\bm{\varphi}^\star) \sum_{\ell\in[k]} \sqrt{\frac{8 (n-1) \log^2(4(n+1)k/\delta)}{M} \sum_{i \neq j \in \mathcal{N}} d_\ell^{\rho}(i,j)} \label{eq:cauchy1 pure}\\
            &+ 3nc_{\size}^{\text{pure}}(\bm{\varphi}^\star) \sum_{\ell\in [k]} \sqrt{\frac{8  \log^2(4(n+1)k/\delta) \binom{n}{2}}{M}\sum_{i',j' \in \mathcal{N}: i' \neq j'} d_\ell^{\rho}(i',j')}\label{eq:cauchy pure}\\
            &\leq 6kn c_{\size}^{\text{pure}}(\bm{\varphi}^\star) \log(4(n+1)k/\delta) \sqrt{\frac{2(n-1)}{M}} \left[\frac{n-1}{2} +\sqrt{\frac{n}{2}}\right]\label{eq:1 pure}
        \end{align}
    \end{subequations}
    where the inequalities in \eqref{eq:cauchy1 pure} and \eqref{eq:cauchy pure} follow from the Cauchy-Schwarz inequality, according to which $\sum_{\ell=1}^n \sqrt{a_\ell} \leq \sqrt{n \sum_{\ell=1}^n a_\ell}$ for any sequence $a_1, \dots, a_n$. The last inequality in \eqref{eq:1 pure} is due to:
    \begin{equation}
        \begin{aligned}
            \sum_{i \neq j \in \mathcal{N}} d_\ell^{\rho}(i,j) &\leq \sum_{i',j' \in \mathcal{N}: i'\neq j'} \text{ } \sum_{\mathbf{a}\in\mathcal{A}: \exists C \in \pi^{\mathbf{a}} \text{ s.t. } i',j'\in C} \rho(\mathbf{a}) = \sum_{\mathbf{a}\in\mathcal{A}} \rho(\mathbf{a})=1 \\
            \sum_{i',j' \in \mathcal{N}: i' \neq j'} d_\ell^{\rho}(i',j') &\leq \sum_{i',j' \in \mathcal{N}: i' \neq j'} \text{ } \sum_{\mathbf{a}\in\mathcal{A}: \exists C \in \pi^{\mathbf{a}} \text{ s.t. } i',j'\in C} \rho(\mathbf{a}) = \sum_{\mathbf{a}\in\mathcal{A}} \rho(\mathbf{a}) =1
        \end{aligned}
    \end{equation}
    By arguments similar to the above:
    \begin{equation}
        \mathbb{E}_{\mathbf{a}\sim \bm{\varphi}^\star} [b_i^\delta(\mathbf{a})]\leq 6kn c_{\size}^{\text{pure}}(\bm{\varphi}^\star) \log(4(n+1)k/\delta) \sqrt{\frac{2(n-1)}{M}} \left[\frac{n-1}{2} +\sqrt{\frac{n}{2}}\right]
    \end{equation}
    Since our constructed $b_i^\delta$ is an exploration bonus for the utility estimator $\hat{v}_i$ by Lemma \ref{supp:lemma:bound on diffs semi bandit}, applying Theorem \ref{supp:thm:general duality gap} implies that:
    \begin{equation}
    \label{eq:gap out upper bound pure}
        \gap(\varphi^{\out}) \leq 24kn c_{\size}^{\text{pure}}(\bm{\varphi}^\star) \log(4(n+1)k/\delta) \sqrt{\frac{2(n-1)}{M}} \left[\frac{n-1}{2} +\sqrt{\frac{n}{2}}\right] +\epsilon_{\opt}
    \end{equation}
    as desired.
\end{proof}

\subsection{Sample Complexity of Algorithm \ref{alg:surrogate min pure}}

\begin{corollary}
    \label{supp:coro:optimal pure}
    Under {\normalfont semi-bandit} feedback and Assumption 1, for any $\delta \in (0,1]$, any $\varepsilon > \epsilon_{\opt}$ with $\epsilon_{\opt} = o(\varepsilon)$ and any NS joint {\normalfont pure} strategy $\bm{\varphi}^\star$, Algorithm 1 with utility estimators and exploration bonuses defined in equations (8) and (9) within the full paper has a sample complexity bound with {\normalfont \textbf{optimal}} dependence on $\varepsilon$ (up to logarithmic factors): if the dataset size is at least
    \begin{equation}
        M \geq \frac{576k^2n^2(n-1) (c_{\size}^{\text{pure}}(\bm{\varphi}^\star))^2}{(\varepsilon -\epsilon_{\opt})^2} \log^2(4(n+1)k/\delta)  \left[\frac{n-1}{2} +\sqrt{\frac{n}{2}}\right]^2,
    \end{equation}
    then the joint strategy $\varphi^{\out}$ is $\varepsilon$-NS (i.e., $\gap(\varphi^{\out}) \leq \varepsilon$) with probability at least $1-\delta$.
\end{corollary}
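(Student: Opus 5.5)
The argument mirrors the proof of Corollary \ref{supp:coro:optimal}: it is a direct inversion of the high-probability duality-gap bound for pure strategies. By Theorem \ref{supp:thm:semi-bandit pure}, with probability at least $1-\delta$ the deterministic output of Algorithm \ref{alg:surrogate min pure} satisfies
\begin{equation*}
\gap(\varphi^{\out}) \leq \frac{F}{\sqrt{M}} + \epsilon_{\opt}, \qquad F := 24kn\, c_{\size}^{\text{pure}}(\bm{\varphi}^\star)\log(4(n+1)k/\delta)\sqrt{2(n-1)}\left[\frac{n-1}{2}+\sqrt{\frac{n}{2}}\right].
\end{equation*}
It therefore suffices to find the smallest $M$ for which $F/\sqrt{M}+\epsilon_{\opt}\leq \varepsilon$.

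Since $\varepsilon>\epsilon_{\opt}$, the quantity $\varepsilon-\epsilon_{\opt}$ is positive, and we have a chain of equivalences:
\begin{equation*}
\frac{F}{\sqrt{M}}+\epsilon_{\opt}\leq\varepsilon \iff (\varepsilon-\epsilon_{\opt})\sqrt{M}\geq F \iff M\geq \frac{F^2}{(\varepsilon-\epsilon_{\opt})^2}.
\end{equation*}
Expanding $F^2$ gives $24^2 k^2n^2 (c_{\size}^{\text{pure}}(\bm{\varphi}^\star))^2\log^2(4(n+1)k/\delta)\cdot 2(n-1)\left[\frac{n-1}{2}+\sqrt{\frac{n}{2}}\right]^2$. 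Here $24^2\cdot 2=1152$, so the exact square is a factor $2$ larger than the constant $576$ in the statement. The plan is to carry the constant honestly through this step. The stated constant follows if the per-term bound \eqref{eq:1 pure} is read with the $\sqrt{2}$ absorbed, i.e.\ if $\sqrt{8\log^2(\cdot)}$ is grouped as $2\sqrt{2}\log(\cdot)$ in the Cauchy--Schwarz step. Otherwise the threshold should read $1152$, which changes nothing qualitatively. Under this threshold the bound of Theorem \ref{supp:thm:semi-bandit pure} holds on the same $1-\delta$ event, so $\gap(\varphi^{\out})\leq\varepsilon$ there.

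For the optimality claim, note that since $\epsilon_{\opt}=o(\varepsilon)$, we have $(\varepsilon-\epsilon_{\opt})^{-2}=\Theta(\varepsilon^{-2})$. The sample complexity therefore scales as $\tilde O(\varepsilon^{-2})$, and this matches the $\Omega(\varepsilon^{-2})$ lower bounds for learning approximate equilibria invoked in Corollary \ref{supp:coro:optimal} \cite{hassani2020stochastic,bai20provable}, up to logarithmic factors.

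The only real obstacle is reconciling the numerical constant, as noted above; the rest is algebra identical to Corollary \ref{supp:coro:optimal}, with $4(n+1)$ replaced by $12$ through the pure-strategy factor $3$.
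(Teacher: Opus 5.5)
Your route is the paper's: take the $1-\delta$ event of Theorem~\ref{supp:thm:semi-bandit pure}, write the bound as $F/\sqrt{M}+\epsilon_{\opt}$, and invert $(\varepsilon-\epsilon_{\opt})\sqrt{M}\ge F$. The optimality remark likewise just reuses the lower bounds cited for the mixed-strategy corollary. Your arithmetic is right where the paper's is not. The paper's last equivalence squares $24kn\sqrt{2(n-1)}$ as $576k^2n^2(n-1)$, which drops the factor $2$ coming from $(\sqrt{2(n-1)})^2$. The correct square is $1152\,k^2n^2(n-1)$. The mixed-strategy corollary does this step correctly: $(8kn(n+1))^2\cdot 2=128\,k^2n^2(n+1)^2$.

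Your proposed reconciliation does not work, and you should drop it. Writing $\sqrt{8\log^2(\cdot)}$ as $2\sqrt{2}\log(\cdot)$ is an identity. That $\sqrt{2}$ is exactly what already produces the factor $\sqrt{2(n-1)}$ in the per-term bound
$6kn\,c_{\size}^{\text{pure}}(\bm{\varphi}^\star)\log(4(n+1)k/\delta)\sqrt{2(n-1)/M}\,\bigl[\tfrac{n-1}{2}+\sqrt{n/2}\bigr]$.
The Cauchy--Schwarz step yields precisely this bound, with no spare $\sqrt{2}$ to absorb.

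With the threshold $576$, Theorem~\ref{supp:thm:semi-bandit pure} only certifies $\gap(\varphi^{\out})\le\sqrt{2}(\varepsilon-\epsilon_{\opt})+\epsilon_{\opt}$. This bound is strictly larger than $\varepsilon$ under the hypothesis $\varepsilon>\epsilon_{\opt}$. So the statement as written is not implied by the theorem, whether by your argument or by the paper's.

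What the inversion actually proves is the version with $1152$ in place of $576$, i.e.\ $M\ge F^2/(\varepsilon-\epsilon_{\opt})^2$. As you note, this leaves the $\tilde{O}(\varepsilon^{-2})$ dependence, and hence the optimality claim, unchanged. State that corrected threshold outright rather than hedging between the two constants.
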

\begin{proof}
    The proof is a continuation of the proof for Theorem \ref{supp:thm:semi-bandit pure}. Indeed, if the right-hand side in \eqref{eq:gap out upper bound pure} is at most $\varepsilon$ for some $\varepsilon \geq \epsilon_{\opt}$ with $\epsilon_{\opt} = o(\varepsilon)$, then Algorithm 1 produces an $\varepsilon$-NS joint strategy with probability at least $1-\delta$ (i.e., $\gap(\varphi^{\out}) \leq \varepsilon$). This is equivalent to requiring that:
    \begin{equation}
        \begin{aligned}
            &\varepsilon \sqrt{M}\geq 24kn c_{\size}^{\text{pure}}(\bm{\varphi}^\star) \log(4(n+1)k/\delta) \sqrt{{2(n-1)}} \left[\frac{n-1}{2} +\sqrt{\frac{n}{2}}\right] +\epsilon_{\opt}\sqrt{M} \\
            &\Leftrightarrow  (\varepsilon -\epsilon_{\opt})\sqrt{M}\geq 24kn c_{\size}^{\text{pure}}(\bm{\varphi}^\star) \log(4(n+1)k/\delta) \sqrt{{2(n-1)}} \left[\frac{n-1}{2} +\sqrt{\frac{n}{2}}\right]\\
            &\Leftrightarrow \sqrt{M} \geq \frac{24kn\sqrt{{2(n-1)}} c_{\size}^{\text{pure}}(\bm{\varphi}^\star)}{\varepsilon -\epsilon_{\opt}} \log(4(n+1)k/\delta)  \left[\frac{n-1}{2} +\sqrt{\frac{n}{2}}\right] \\
            &\Leftrightarrow M \geq \frac{576k^2n^2(n-1) (c_{\size}^{\text{pure}}(\bm{\varphi}^\star))^2}{(\varepsilon -\epsilon_{\opt})^2} \log^2(4(n+1)k/\delta)  \left[\frac{n-1}{2} +\sqrt{\frac{n}{2}}\right]^2
        \end{aligned} 
    \end{equation}
    as desired. Since $\epsilon_{\opt} = o(\varepsilon)$, we have {\normalfont \textbf{optimal}} dependence on $\varepsilon$ (up to logarithmic factors) due to \cite{hassani2020stochastic,bai20provable}. 
\end{proof}

\section{Proof of Theorem 4}

\begin{customthm}{4}
    \label{supp:thm:1/8 gap}
    Let $\mathcal{G}'$ be the class of all pairs $(G,\rho)$ consisting of a POCF game $G$ and an exploration policy $\rho$ satisfying Assumption 1. Then, for any algorithm $\alg$ with {\normalfont bandit} feedback, there is $(G,\rho)\in \mathcal{G}$ such that any joint strategy $\bm{\varphi}$ produced by $\alg$ satisfies $\gap(\bm{\varphi})\geq\frac{1}{20}$ for the POCF game $G$, regardless of the dataset size.
\end{customthm}
\begin{proof}
    We construct two symmetric POCF games $G_1$ and $G_2$, each with $3$ agents whose utility distributions are deterministic, where the number of coalitions is at most $3$ (i.e., $k=3$), the action space of each agent is $\mathcal{A}_1=\mathcal{A}_2=\mathcal{A}_3=\{\{1\}, \{2\}, \{3\}, \{1,2\}\}$, and both games share the same exploration policy $\rho$. Formally, since any partition can contain at most three coalitions, any joint action $\mathbf{a}$ induces a partition of the agents $\pi^{\mathbf{a}} = (C_1^{\mathbf{a}},C_2^{\mathbf{a}}, C_3^{\mathbf{a}})$, where $C_1^{\mathbf{a}}$, $C_2^{\mathbf{a}}$ and $C_3^{\mathbf{a}}$ are the sets of agents that joined the first, second and third candidate coalition, respectively. If no agent joins the $\ell$-th coalition for $\ell \in \{1,2,3\}$ (i.e., $\ell \notin a_i$ for any agent $i$), then that coalition is not formed. Therefore, we construct the two games and their common exploration policy as follows:

    \paragraph{Construction of the first game $G_1$.} In the first game, denoted as $G_1$, all pure NS strategies are strategies where either: (1) two agents choose to join the first and second candidate coalitions while the third one decides to join only the first candidate coalition, or (2) two agents join the second candidate coalition and the third agent only joins the third candidate coalition. Specifically, for each pair of distinct agents $i,j$ and any joint action $\mathbf{a}$, the mutual utility of agents $i,j$ within the first, second and third candidate coalition are determined deterministically via $\mathcal{D}_{i,j}^{1}(\mathbf{a})$, $\mathcal{D}_{i,j}^{2}(\mathbf{a})$ and $\mathcal{D}_{i,j}^{3}(\mathbf{a})$ as follows (respectively):
    \begin{equation}
    \label{eq:deterministic utilities semi bandit}
        \begin{aligned}
            \mathcal{D}_{i,j}^1(\mathbf{a}) =\begin{cases}
            1 &, \text{ if } i,j \in C_1^{\mathbf{a}} \text{ and } |C_1^{\mathbf{a}}|=2 \\
            \frac{1}{2} &, \text{ if } i,j \in C_1^{\mathbf{a}} \text{ and } |C_1^{\mathbf{a}}|=3 
        \end{cases}
        \\        
        \mathcal{D}_{i,j}^2(\mathbf{a}) =\begin{cases}           
            1 &, \text{ if } i,j \in C_2^{\mathbf{a}} \text{ and } |C_2^{\mathbf{a}}|=2 \\    
            -1 &, \text{ if } i,j \in C_2^{\mathbf{a}} \text{ and } |C_2^{\mathbf{a}}|=3 
        \end{cases} \\
        \mathcal{D}_{i,j}^3(\mathbf{a}) =\begin{cases}           
            -\frac{1}{2} &, \text{ if } i,j \in C_2^{\mathbf{a}} \text{ and } |C_2^{\mathbf{a}}|=2 \\    
            -\frac{1}{4} &, \text{ if } i,j \in C_2^{\mathbf{a}} \text{ and } |C_2^{\mathbf{a}}|=3 
        \end{cases}
        \end{aligned}
    \end{equation}
    For instance, if agents $1$ and $2$ join the first and second candidate coalitions while agent $3$ only joins the first one, then each pair of agents receives a utility of $\frac{1}{2}$ from each other within the first candidate coalition, while agents $1$ and $2$ obtain a utility of $1$ from each other within the second candidate coalition.
    
    Next, we formally characterize all the pure NS strategies of the first game $G_1$:
    \begin{lemma}
        \label{supp:lemma:NS game 1 bandit}
        The pure NS strategies of the first game $G_1$ consist only of all pure joint strategies where either: 
        \begin{enumerate}
            \item two agents choose to join the first and second candidate coalitions, while the third one decides to join only the first candidate coalition, or 
            \item two agents join the second candidate coalition and the third agent only joins the third candidate coalition.
        \end{enumerate}       
    \end{lemma}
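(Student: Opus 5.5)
The plan is to reduce Nash stability in $G_1$ to a per-agent best-response condition that depends only on counts. I read the third case of \eqref{eq:deterministic utilities semi bandit} as referring to $C_3^{\mathbf{a}}$, which is evidently intended. Fix a pure joint action $\mathbf{a}$ and an agent $i$. Let $n_\ell\in\{0,1,2\}$ be the number of \emph{other} agents $j\neq i$ with $\ell\in a_j$.

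By \eqref{eq:deterministic utilities semi bandit}, agent $i$'s utility from each of her four actions is
\begin{equation*}
u(\{1\})=f_1(n_1),\quad u(\{2\})=f_2(n_2),\quad u(\{3\})=f_3(n_3),\quad u(\{1,2\})=f_1(n_1)+f_2(n_2),
\end{equation*}
where the per-coalition payoffs are:
\begin{itemize}
\item $f_1(0)=0$, $f_1(1)=1$, $f_1(2)=2\cdot\frac12=1$;
\item $f_2(0)=0$, $f_2(1)=1$, $f_2(2)=-2$;
\item $f_3(0)=0$, $f_3(1)=-\frac12$, $f_3(2)=-\frac12$.
\end{itemize}
A profile is NS if and only if every agent's action attains the maximum of these four numbers given her own $(n_1,n_2,n_3)$. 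This turns the lemma into a finite check on small tables instead of a $64$-profile enumeration.

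\textbf{Sufficiency.} I verify the two families directly.

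In family (1), the two agents playing $\{1,2\}$ face $(n_1,n_2)=(2,1)$. They get $1+1=2$, against $1$, $1$ and at most $0$ from $\{1\}$, $\{2\}$, $\{3\}$. The third agent, playing $\{1\}$, faces $(n_1,n_2,n_3)=(2,2,0)$. She gets $1$, against $-2$, $-1$ and $0$.

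In family (2), the two agents playing $\{2\}$ face $(n_1,n_2,n_3)=(0,1,1)$. They get $1$, against $0$, $1$ (not a strict improvement) and $-\frac12$. The agent playing $\{3\}$ faces $(0,2,0)$. She gets $0$, against $0$, $-2$ and $-2$.

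So no strict improvement exists in either family, and both are NS.

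\textbf{Necessity.} I derive structural facts from best responses, splitting on whether someone plays $\{3\}$.

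First, $\{3\}$ can be a best response only if $f_3(n_3)\ge f_1(n_1)\ge 0$. This forces $n_3=0$ and $n_1=0$, and then $f_2(n_2)\le 0$ forces $n_2\in\{0,2\}$.

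Suppose some agent plays $\{3\}$ in an NS profile. Then no other agent uses coalitions $1$ or $3$, so the other two play $\{2\}$. The alternative $n_2=0$ would leave them nothing to play: they would have to play $\{3\}$, but then $n_3\neq0$. This is exactly family (2).

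Suppose instead that no agent plays $\{3\}$, so every action lies in $\{\{1\},\{2\},\{1,2\}\}$. I case on the number $t$ of agents using coalition $2$.

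\begin{itemize}
\item If $t=3$, some agent in coalition $2$ has $n_2=2$ and payoff $f_2(2)\le -1$ plus at most $1$. Switching to $\{1\}$ or to $\{3\}$ strictly helps her.
\item If $t=0$, everyone plays $\{1\}$ and has $n_2=0$. Adding coalition $2$ via $\{1,2\}$ gives $f_2(0)=0$, so no one gains. I still need to rule this profile out: each agent gets $f_1(2)=1$, and switching to $\{2\}$ gives $0$. This needs a separate check. If it survives, I must recheck the table, since it would contradict the lemma.
\item If $t=1$, the lone coalition-$2$ agent has $n_2=0$. She gains nothing from coalition $2$, so $\{1,2\}$ and $\{1\}$ tie and deviation is not strict. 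A non-coalition-$2$ agent with $n_2=1$ gains $1$ by adding coalition $2$. This shows instability.
\item If $t=2$, the two coalition-$2$ agents must also be in coalition $1$ whenever $n_1\ge1$, since $f_1\ge 0$ and $f_1(n_1)>0$ when $n_1\ge 1$. The third agent must play $\{1\}$. This yields family (1), or, if the pair plays only $\{2\}$ without any $\{3\}$-player, a profile in which a $\{2\}$-agent gains by adding coalition $1$ next to the $\{1\}$-agent.
\end{itemize}

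The main obstacle is exactly the $t=0$ case and the boundary cases in which the only available deviation yields a tie rather than a strict gain, together with the typos in the utility table. The all-$\{1\}$ profile in particular appears to give every agent $1$ with no strictly better reply. I would first recompute that case carefully against \eqref{eq:deterministic utilities semi bandit}. If the profile is genuinely stable, the characterization needs an extra family, or the intended utilities, such as $f_1(2)$ for size-$3$ coalitions, must differ. I would settle this before finalizing the exhaustive case split.
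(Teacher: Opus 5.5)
Your suspicion in the $t=0$ case is correct, and you should resolve it against the lemma: the statement is false as written. In the profile where all three agents play $\{1\}$, each agent faces $(n_1,n_2,n_3)=(2,0,0)$ and gets $f_1(2)=2\cdot\frac12=1$. Her alternatives give:
\begin{itemize}
\item $f_2(0)=0$ for $\{2\}$;
\item $f_3(0)=0$ for $\{3\}$;
\item $f_1(2)+f_2(0)=1$ for $\{1,2\}$.
\end{itemize}
No strict improvement exists, so this profile is a pure NS strategy outside both families. This does not depend on how the $C_2$/$C_3$ typo in the third utility table is read, since only $f_3(0)=0$ enters.

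The paper's proof dismisses this case by asserting that each agent gains by switching to $\{1,2\}$ ``since the resulting partition is Nash stable''. That step fails. A single deviation creates only one $\{1,2\}$-player, so the resulting profile is not in family (1). The deviator is alone in coalition $2$ and keeps exactly $2\cdot\frac12+0=1$, which is precisely the tie you found. Even if the resulting profile were NS, that alone would not make the deviation strictly profitable.

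The rest of your proposal checks out:
\begin{itemize}
\item the sufficiency computations for both families, including the $\{2\}$-players in family (2), whose deviations the paper never examines;
\item the argument that a $\{3\}$-player forces $n_1=n_3=0$, and hence the profile $(\{2\},\{2\},\{3\})$;
\item the cases $t=1,2,3$.
\end{itemize}
Your count-based reduction to $f_1,f_2,f_3$ is a cleaner and genuinely exhaustive replacement for the paper's profile-by-profile case split. Once you commit on $t=0$, it proves the corrected statement: the pure NS strategies of $G_1$ are exactly the two listed families together with the all-$\{1\}$ profile. No argument can establish the lemma as stated.

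This is not a cosmetic issue. The proof of Theorem 4 defines its quantity $q$ through exactly this all-$\{1\}$ profile, so that argument inherits the error.
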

    \begin{proof}
         We first prove that all pure joint strategies that are as in the above are Nash stable, and then show that no other pure joint strategy is Nash-stable. Formally:
        \begin{enumerate}
            \item \textbf{{All pure strategy where two agents join the first and second candidate coalitions, while the third joins only the first candidate coalition, are Nash-stable:}} Without loss of generality, consider the joint action $\mathbf{a}$ where $a_1=\{1,2\}, a_2=\{1,2\}, a_3=\{1\}$, i.e., $C_1^{\mathbf{a}}=\{1,2,3\}$, $C_2^{\mathbf{a}}=\{1,2\}$ and $C_3^{\mathbf{a}}=\emptyset$ (other cases are symmetric by \eqref{eq:deterministic utilities semi bandit}). Now, we will prove that no agent has an incentive to unilaterally deviate from $\mathbf{a}$:
            \begin{enumerate}
                \item \textbf{Agents $1$ and $2$ have no incentive to unilaterally deviate:} We supply the proof for agent $1$, while the proof for agent $2$ is identical due to \eqref{eq:deterministic utilities semi bandit}. Specifically:
                \begin{enumerate}
                    \item If agent $1$ unilaterally deviates to $a_1'=\{1\}$, then the resulting coalitions are $C_1^{(a_1',a_2,a_3)}=\{1,2,3\}$, $C_2^{(a_1',a_2,a_3)}=\{1\}$ and $C_3^{(a_1',a_2,a_3)}=\emptyset$. Thus, agent $1$'s utility from the resulting partition is $1+1=2$ as she receives a utility of $1$ from both agents $2,3$ within the first candidate coalition. However, before deviating, agent $1$'s utility was $1+1+1=3$, as she obtained a utility of $1$ from agent $2$ within the second candidate coalition. Namely, agent $1$'s utility decreased after deviating, meaning that she has no incentive to unilaterally deviate.

                    \item If agent $1$ unilaterally deviates to $a_1'=\{2\}$, then the resulting coalitions are $C_1^{(a_1',a_2,a_3)}=\{2,3\}$, $C_1^{(a_1',a_2,a_3)}=\{1,2\}$ and $C_3^{(a_1',a_2,a_3)}=\emptyset$. Thus, agent $1$'s utility from the resulting partition is $1$ as she receives a utility of $1$ from agent $2$ within the second candidate coalition. However, before deviating, agent $1$'s utility was $3$, as discussed in the previous case. Namely, agent $1$'s utility decreased after deviating, meaning that she has no incentive to unilaterally deviate.

                    \item If agent $1$ unilaterally deviates to $a_1'=\{3\}$, then the resulting coalitions are $C_1^{(a_1',a_2,a_3)}=\{2,3\}$,  $C_1^{(a_1',a_2,a_3)}=\{2\}$ and $C_3^{(a_1',a_2,a_3)}=\{1\}$. Thus, agent $1$'s utility from the resulting partition is $0$ as she is alone in the third candidate coalition. However, before deviating, agent $1$'s utility was $3$, as discussed earlier. Namely, agent $1$'s utility decreased after deviating, meaning that she has no incentive to unilaterally deviate.
                \end{enumerate}

                \item \textbf{Agent $3$ has no incentive to unilaterally deviate:} Before deviating, agent $3$'s utility is $1+1=2$ since she obtains a utility of $1$ from both agents $2,3$ within the first candidate coalition. However, agent $3$ has not incentive to deviate. 
                \begin{enumerate}
                    \item By deviating to $a_3'=\{3\}$, agent $3$'s utility decreases to $0$, as she is alone in the third candidate coalition.
                    
                    \item If agent $3$ deviates to $a_3'=\{1,2\}$, then the resulting coalitions are $C_1^{(a_1,a_2,a_3')}=\{1,2,3\}$, $C_1^{(a_1,a_2,a_3')}=\{1,2,3\}$ and $C_3^{(a_1,a_2,a_3')}=\emptyset$. Hence, agent $3$'s utility from the resulting partition is $2(\frac{1}{2}-1)=-1$, as she receives a utility of $\frac{1}{2}$ from both agents $1,2$ within the first candidate coalition and a utility of $-1$ from both agents $1,2$ within the second candidate coalition. That is, agent $3$'s utility decrease after unilaterally deviating.

                    \item If agent $3$ deviates to $a_3'=\{2\}$, then the resulting coalitions are $C_1^{(a_1,a_2,a_3')}=\{1,2\}$, $C_1^{(a_1,a_2,a_3')}=\{1,2,3\}$ and $C_3^{(a_1,a_2,a_3')}=\emptyset$. Hence, agent $3$'s utility from the resulting partition is $2(-1)=-2$, as she receives a utility of a utility of $-1$ from both agents $1,2$ within the second candidate coalition. That is, agent $3$'s utility decrease after unilaterally deviating.
                \end{enumerate}                
            \end{enumerate}

            \item \textbf{All pure strategies where two agents join the second candidate coalition and the third agent only joins the third candidate coalition are Nash stable:}  Without loss of generality, assume that agent $1$ only joins the third candidate coalition, i.e., $a_1=\{3\}$ and $a_2=a_3=\{2\}$, meaning that $C_1^{\mathbf{a}}=\emptyset$, $C_2^{\mathbf{a}}=\{2,3\}$ and $C_3^{\mathbf{a}}=\{1\}$ (other cases are symmetric by \eqref{eq:deterministic utilities semi bandit}). Thus, agent $1$ obtains zero utility from being alone. However, she has no incentive to deviate. Indeed, if agent $1$ unilaterally deviates by joining the first candidate coalition, then her utility remains unchanged. If agent $1$ unilaterally deviates by joining the second candidate coalition, then her utility decreases to $-2$ since she obtains a utility of $-1$ from each other agent.

            \item \textbf{No other pure joint strategy is Nash-stable:} Consider any joint action $\mathbf{a}$ where at most one agent joins both coalitions. We distinguish between the following cases:
            \begin{enumerate}
                \item \textbf{One agent joins the first and second candidate coalitions, while the others join exactly one coalition:} Without loss of generality, assume that $a_1=\{1,2\}$ (other cases are symmetric by \eqref{eq:deterministic utilities semi bandit}). The following holds:
                \begin{enumerate}
                    \item If $a_2=\{1\}, a_3=\{1\}$, then $C_1^{\mathbf{a}}=\{1,2,3\}$, $C_2^{\mathbf{a}}=\{1\}$ and $C_3^{\mathbf{a}}=\emptyset$. By arguments similar to the above, both agents $2,3$ have an incentive to unilaterally deviate by joining the first and second candidate coalitions (i.e., picking the action $\{1,2\}$). For instance, such unilateral deviation would increase agent $2$'s utility from $2$ to $3$.

                    \item If $a_2=\{2\}, a_3=\{2\}$, then $C_1^{\mathbf{a}}=\{1\}$, $C_2^{\mathbf{a}}=\{1,2,3\}$ and $C_3^{\mathbf{a}}=\emptyset$. By arguments similar to the above, all agents $1,2,3$ have an incentive to unilaterally deviate by joining the first candidate coalitions. For instance, if agent $2$ unilaterally deviates to $a_2'=\{1\}$, then the resulting coalitions are $C_1^{(a_1,a_2',a_3)}=\{1,2\}$, $C_1^{(a_1,a_2',a_3)}=\{1,3\}$ and $C_3^{(a_1,a_2',a_3)}=\emptyset$. This increases agent $2$'s utility from $-2$ to $1$ by \eqref{eq:deterministic utilities semi bandit}.

                    \item If either $a_2=\{1\}, a_3=\{2\}$ or $a_2=\{2\}, a_3=\{1\}$, then the first and second candidate coalitions are of size $2$. Without loss of generality, assume that $a_2=\{1\}, a_3=\{2\}$, i.e., $C_1^{\mathbf{a}}=\{1,2\}$, $C_2^{\mathbf{a}}=\{1,3\}$  and $C_3^{\mathbf{a}}=\emptyset$ (the other case is symmetric by \eqref{eq:deterministic utilities semi bandit}). Thus, both agents $1,2$ receive a utility of $1$ from each other by \eqref{eq:deterministic utilities semi bandit}. As we have proven earlier, agent $2$ has an incentive to unilaterally deviate by joining the first and second candidate coalitions (i.e., unilaterally deviating to the action $a_2'=\{1,2\}$ instead).

                    \item If either agent $2$, agent $3$ or both only joins the third candidate coalition, then she clearly has an incentive to deviate.
                \end{enumerate}
                \item \textbf{Each agent joins exactly one coalition:} We distinguish between the following cases:
                \begin{enumerate}
                    \item \textbf{All agents only join the first candidate coalition:} If $a_1=a_2=a_3=\{1\}$, then $C_1^{\mathbf{a}}=\{1,2,3\}$ and $C_2^{\mathbf{a}}=C_3^{\mathbf{a}}=\emptyset$. Each agent obtains a utility of $2\cdot\frac{1}{2}=1$, as she receives a utility of $\frac{1}{2}$ from every other agent within the first candidate coalition. However, each agent has an incentive to unilaterally deviate by joining the first and second candidate coalitions since we have already proven that the resulting partition is Nash stable.

                    \item \textbf{All agents only join the second candidate coalition:} If $a_1=a_2=a_3=\{2\}$, then $C_1^{\mathbf{a}}=C_3^{\mathbf{a}}=\emptyset$ and $C_2^{\mathbf{a}}=\{1,2,3\}$. Each agent obtains a utility of $2\cdot(-1)=-2$, as she receives a utility of $-1$ from every other agent within the second candidate coalition. However, each agent has an incentive to unilaterally deviate by, e.g., not joining any coalition, thus increasing her utility to $0$.

                    \item \textbf{Two agents join the first candidate coalition and the third agent joins the second candidate coalition:} Without loss of generality, assume that $a_1=a_2=\{1\}$ and $a_3=\{2\}$, i.e., $C_1^{\mathbf{a}}=\{1,2\}$, $C_2^{\mathbf{a}}=\{3\}$ and $C_3^{\mathbf{a}}=\emptyset$ (other cases are symmetric by \eqref{eq:deterministic utilities semi bandit}). Agent $3$ obtains a utility of $0$ as she is in a singleton coalition. Thus, she has an incentive to unilaterally deviate by, e.g., joining the first and second candidate coalitions, thereby increasing her utility to $2\cdot\frac{1}{2}=1$, as she receives a utility of $\frac{1}{2}$ from every other agent within the first candidate coalition.

                    \item \textbf{Two agents join the second candidate coalition and the third agent joins the first candidate coalition:} Without loss of generality, assume that $a_1=a_2=\{2\}$ and $a_3=\{1\}$, i.e., $C_1^{\mathbf{a}}=\{3\}$, $C_2^{\mathbf{a}}=\{1,2\}$  and $C_3^{\mathbf{a}}=\emptyset$ (other cases are symmetric by \eqref{eq:deterministic utilities semi bandit}). Agent $1$ obtains a utility of $1$, as she receives a utility of $1$ from agent $2$ within the second candidate coalition. Thus, she has an incentive to unilaterally deviate by, e.g., joining the first and second candidate coalitions, thereby increasing her utility to $2$, as she will then further receive a utility of $1$ from agent $3$ within the first candidate coalition.

                    \item \textbf{One agent only joins the third candidate coalition, while the others join either the first or the second candidate coalition:} Without loss of generality, assume that agent $1$ only joins the third candidate coalition and the others join the first candidate coalition, i.e., $a_1=\{3\}$ and $a_2=a_3=\{1\}$, meaning that $C_1^{\mathbf{a}}=\{2,3\}$, $C_2^{\mathbf{a}}=\emptyset$ and $C_2^{\mathbf{a}}=\{1\}$ (other cases are symmetric by \eqref{eq:deterministic utilities semi bandit}). Thus, agent $1$ has an incentive to unilaterally deviate by also joining the first candidate coalition, thereby increasing her utility from $0$ to $1$.

                    \item \textbf{Two agents only join the third candidate coalition, while the third joins either the first or the second candidate coalition:} In any case, note that each of the two agents that only join the third candidate coalition always has an incentive to join the third agent's coalition, thus increasing her utility from $-\frac{1}{2}$ to $1$.
    
                    \item \textbf{All agents only join the third candidate coalition:} In any case, note that each agent always has an incentive to join either the first or the second candidate coalition, thus increasing her utility from $-\frac{1}{2}$ to $0$ since she will be alone.
                \end{enumerate}

            \end{enumerate}               
        \end{enumerate}
    \end{proof}

    \paragraph{Construction of the second game $G_2$.} In the second game, denoted as $G_2$, all pure NS strategies are strategies where either: (1) both the first and second candidate coalitions are of size exactly $2$, or (2) two agents join either the first or second candidate coalitions, while the third agent only joins the third candidate coalition. Specifically, for each pair of distinct agents $i,j$ and any joint action $\mathbf{a}$, the mutual utility of agents $i,j$ within the first, second and third candidate coalition are determined deterministically via $\tilde{\mathcal{D}}_{i,j}^{1}(\mathbf{a})$, $\tilde{\mathcal{D}}_{i,j}^{2}(\mathbf{a})$ and $\tilde{\mathcal{D}}_{i,j}^{3}(\mathbf{a})$ as follows (respectively):
    \begin{equation}
    \label{eq:deterministic utilities semi bandit 2}
        \begin{aligned}
            \tilde{\mathcal{D}}_{i,j}^1(\mathbf{a}) =\begin{cases}
            1 &, \text{ if } i,j \in C_1^{\mathbf{a}} \text{ and } |C_1^{\mathbf{a}}|=2 \\
            -\frac{1}{4} &, \text{ if } i,j \in C_1^{\mathbf{a}} \text{ and } |C_1^{\mathbf{a}}|=3 
        \end{cases}
        \\        
        \tilde{\mathcal{D}}_{i,j}^2(\mathbf{a}) =\begin{cases}           
            1 &, \text{ if } i,j \in C_2^{\mathbf{a}} \text{ and } |C_2^{\mathbf{a}}|=2 \\    
            -\frac{1}{4} &, \text{ if } i,j \in C_2^{\mathbf{a}} \text{ and } |C_2^{\mathbf{a}}|=3 
        \end{cases}\\
        \tilde{\mathcal{D}}_{i,j}^3(\mathbf{a}) =\begin{cases}           
            -\frac{1}{2} &, \text{ if } i,j \in C_2^{\mathbf{a}} \text{ and } |C_2^{\mathbf{a}}|=2 \\    
            -\frac{1}{4} &, \text{ if } i,j \in C_2^{\mathbf{a}} \text{ and } |C_2^{\mathbf{a}}|=3 
        \end{cases}
        \end{aligned}
    \end{equation}
    For instance, if agents $1$ and $2$ join the first candidate coalition while agent $3$ only joins the second one, then agents $1$ and $2$ both receive a utility of $1$ from each other within the first candidate coalition, while agent $3$ obtains a utility of $0$.
    
    Next, we formally characterize all the pure NS strategies of the second game $G_2$:
    \begin{lemma}
        \label{supp:lemma:NS game 2 bandit}
        The pure NS strategies of the second game $G_2$ consist only of all pure joint strategies where either:
        \begin{enumerate}
            \item both the first and second candidate coalitions are of size exactly $2$, or

            \item  two agents join either the first or second candidate coalitions, while the third agent only joins the third candidate coalition.
        \end{enumerate}
    \end{lemma}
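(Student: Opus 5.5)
The plan is a direct verification that mirrors the proof of Lemma~\ref{supp:lemma:NS game 1 bandit}: first show that every joint action of types (1) and (2) is Nash stable, then show that every other joint action admits a profitable unilateral deviation. Two symmetries of $G_2$ keep the case analysis small. Permuting agents preserves \eqref{eq:deterministic utilities semi bandit 2}. Swapping candidate coalitions $1$ and $2$ also preserves it, since $\tilde{\mathcal{D}}^1$ and $\tilde{\mathcal{D}}^2$ coincide and the action set $\{\{1\},\{2\},\{3\},\{1,2\}\}$ is invariant under the swap. I will read the third case of \eqref{eq:deterministic utilities semi bandit 2} as referring to $C_3^{\mathbf{a}}$, as clearly intended.

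I will begin by tabulating an agent's utility from a single coalition as a function of its size. For $\ell\in\{1,2\}$ the contribution is $0$ at size $1$, $1$ at size $2$, and $2\cdot(-\tfrac14)=-\tfrac12$ at size $3$. For $\ell=3$ it is $0$ at size $1$, $-\tfrac12$ at size $2$, and $-\tfrac12$ at size $3$. Two consequences follow.
\begin{itemize}
\item Coalition $3$ never contributes positively, and the utility of any agent is at most $2$. The value $2$ is attained exactly by an agent playing $\{1,2\}$ when both $C_1^{\mathbf{a}}$ and $C_2^{\mathbf{a}}$ have size $2$.
\item An agent whose utility equals the global maximum $2$ can never deviate profitably.
\end{itemize}

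\textbf{Stability of type (1).} Both $C_1^{\mathbf{a}}$ and $C_2^{\mathbf{a}}$ have size exactly $2$, and up to symmetry the possibilities are
\[
\mathbf{a}=(\{1,2\},\{1\},\{2\}), \qquad (\{1,2\},\{1,2\},\{3\}).
\]
In the first profile, agent $1$ has utility $2$ and cannot improve. Agent $2$ has utility $1$; each of her alternatives $\{1,2\},\{2\},\{3\}$ either creates a size-$3$ coalition (contribution $-\tfrac12$) or isolates her, giving utilities $\tfrac12$, $-\tfrac12$ and $0$ respectively. Agent $3$ is symmetric to agent $2$.

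In the second profile, agents $1$ and $2$ are at the maximum $2$. Agent $3$ has utility $0$, and her alternatives give $-\tfrac12$ (join $\{1\}$ or $\{2\}$, making a size-$3$ coalition) or $-1$ (join $\{1,2\}$). This profile is also of type (2).

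\textbf{Stability of type (2).} Up to symmetry, type (2) consists of $(\{1\},\{1\},\{3\})$ together with profiles in which the two agents play $\{1,2\}$ (already handled above). In $(\{1\},\{1\},\{3\})$, agents $1$ and $2$ have utility $1$. Switching to $\{1,2\}$ keeps $C_1^{\mathbf{a}}$ of size $2$ and adds a singleton $C_2^{\mathbf{a}}$, so the utility stays $1$; switching to $\{2\}$ or $\{3\}$ gives $0$. Agent $3$ has utility $0$ and her alternatives give $-\tfrac12$, $0$ or $-\tfrac12$. Only weak ties occur, so there is no Nash deviation, since a Nash deviation requires a strict improvement.

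\textbf{Non-stability of all other profiles.} I will organize the remaining profiles by the size vector $(|C_1^{\mathbf{a}}|,|C_2^{\mathbf{a}}|,|C_3^{\mathbf{a}}|)$ and exhibit an explicit improving deviation in each case.
\begin{itemize}
\item If some coalition among $1,2$ has size $3$, a member of it whose utility is negative moves to an action avoiding it. For example, in $(\{1,2\},\{1,2\},\{1,2\})$ every agent has utility $-1$, and moving to $\{3\}$ gives $0$.
\item If some agent sits in $C_3^{\mathbf{a}}$ of size at least $2$, she has negative utility, and moving to an empty or singleton coalition among $1,2$ yields at least $0$.
\item If all agents are isolated, as in $(\{1\},\{2\},\{3\})$, any agent joining another's coalition gets $1$.
\item If exactly one of $C_1^{\mathbf{a}},C_2^{\mathbf{a}}$ has size $2$ and the remaining agent is alone in the other coalition, or plays $\{1,2\}$ with a singleton, then a pair member or the lone agent can upgrade to $\{1,2\}$ so that both coalitions become of size $2$, gaining $1$. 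For instance, in $(\{1\},\{1\},\{2\})$ agent $1$ switches to $\{1,2\}$ and her utility goes from $1$ to $2$.
\end{itemize}

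The main obstacle is keeping this enumeration exhaustive without omissions. I would handle it by first quotienting by the symmetry group above and listing orbit representatives over $4^3=64$ profiles, checking each representative once against the tabulated per-coalition utilities. I will also be careful that in type (2) and in the second type (1) profile the relevant deviations produce only weak ties and never strict improvements, since that is exactly what separates stability from non-stability in these cases.
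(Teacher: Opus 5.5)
Your proposal is correct and follows essentially the paper's route: you check the two families directly, then exhibit a profitable deviation in every other profile. Your non-stability part is in fact more complete than the paper's, which treats only $(\{1\},\{1\},\{2\})$ and $(\{1,2\},\{1,2\},\{1,2\})$, and in the former credits agent $3$'s move to $\{1,2\}$, which actually yields $-\tfrac12$; your deviation of a pair member to $\{1,2\}$ is the right one.

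Only cosmetic repairs are needed.
\begin{itemize}
\item In $(\{1\},\{1\},\{3\})$, agent $1$'s switch to $\{3\}$ gives $-\tfrac12$, not $0$.
\item Your fourth bullet should simply name the orbits $(\{1\},\{1\},\{2\})$ and $(\{1\},\{3\},\{1,2\})$, since its second clause as written is hard to parse.
\item You should state explicitly that in type (2) the two agents play the same action. Otherwise $(\{1\},\{2\},\{3\})$ and $(\{1\},\{1,2\},\{3\})$ would fit a loose reading of the statement, yet neither is stable.
\end{itemize}
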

    \begin{proof}
         We first prove that all pure joint strategies that are as in the above are Nash stable, and then show that no other pure joint strategy is Nash-stable. Formally:
        \begin{enumerate}
            \item \textbf{All pure joint strategies where both the first and second candidate coalitions are of size exactly $2$ are Nash stable:} Without loss of generality, consider the joint action $\mathbf{a}$ where $a_1=\{1,2\}$, $a_2=\{1\}$ and $a_3=\{2\}$, i.e., $C_1^{\mathbf{a}}=\{1,2\}$, $C_2^{\mathbf{a}}=\{1,3\}$ and $C_3^{\mathbf{a}}=\emptyset$ (other cases are symmetric by \eqref{eq:deterministic utilities semi bandit 2}). Note that each agent obtains a utility of $1$ since her utility from the other agent in her coalition is $1$. We therefore distinguish between the following cases:
            \begin{enumerate}
                \item If agent $1$ unilaterally deviates to $a_1'=\{1\}$, then the resulting coalitions are $C_1^{(a_1',a_2,a_3)}=\{1,2\}$, $C_2^{(a_1',a_2,a_3)}=\{3\}$ and $C_3^{(a_1',a_2,a_3)}=\emptyset$. Thus, agent $1$'s utility from the resulting partition is $1$ as she receives a utility of $1$ from agent $2$ within the first candidate coalition. However, before deviating, agent $1$'s utility was $1+1=2$, as she obtained a utility of $1$ from agent $2$ within the first candidate coalition and a utility of $1$ from agent $3$ within the second candidate coalition. Namely, agent $1$'s utility decreased after deviating, meaning that she has no incentive to unilaterally deviate to $a_1'=\{1\}$. Agent $1$ has no incentive to unilaterally deviate to $a_1'=\{2\}$ by similar arguments.

                \item If agent $1$ unilaterally deviates to $a_1'=\{3\}$, then the resulting coalitions are $C_1^{(a_1',a_2,a_3)}=\{2\}$, $C_2^{(a_1',a_2,a_3)}=\{3\}$ and $C_3^{(a_1',a_2,a_3)}=\{1\}$. Thus, agent $1$'s utility decreased from $2$ to $0$, meaning that she has no incentive to unilaterally deviate to $a_1'=\{3\}$. 

                \item If agent $2$ unilaterally deviates to $a_2'=\{2\}$, then the resulting coalitions are $C_1^{(a_1,a_2',a_3)}=\{1\}$, $C_1^{(a_1,a_2',a_3)}=\{1,2,3\}$ and $C_3^{(a_1,a_2',a_3)}=\emptyset$. Thus, agent $2$'s utility from the resulting partition is $-\frac{1}{2}$ as she receives a utility of $-\frac{1}{4}$ from agents $1$ and $3$ within the second candidate coalition. However, before deviating, agent $2$'s utility was $1$, as discussed in the previous case. Namely, agent $2$'s utility decreased after deviating, meaning that she has no incentive to unilaterally deviate to $a_2'=\{2\}$. Agent $3$ has no incentive to unilaterally deviate to $a_3'=\{1\}$ by similar arguments.

                \item If agent $2$ unilaterally deviates to $a_2'=\{3\}$, then the resulting coalitions are $C_1^{(a_1,a_2',a_3)}=\{1\}$, $C_1^{(a_1,a_2',a_3)}=\{1,3\}$ and $C_3^{(a_1,a_2',a_3)}=\{2\}$. Thus, agent $2$'s utility decreased from $1$ to $0$, meaning that she has no incentive to unilaterally deviate to $a_2'=\{3\}$. Agent $3$ has no incentive to unilaterally deviate to $a_3'=\{3\}$ by similar arguments.

                \item If agent $2$ unilaterally deviates to $a_2'=\{1,2\}$, then the resulting coalitions are $C_1^{(a_1,a_2',a_3)}=\{1,2\}$, $C_1^{(a_1,a_2',a_3)}=\{1,2,3\}$ and $C_3^{(a_1,a_2',a_3)}=\emptyset$. Thus, agent $2$'s utility from the resulting partition is $1-2\cdot\frac{1}{4}=\frac{1}{2}$, as she obtains a utility of $1$ from agent $1$ within the first candidate coalition and a utility of $-\frac{1}{4}$ from both agents $1$ and $3$ within the second candidate coalition. However, before deviating, agent $2$'s utility was $1$. Namely, agent $2$'s utility decreased after deviating, meaning that she has no incentive to unilaterally deviate to $a_2'=\{1,2\}$. Agent $3$ has no incentive to unilaterally deviate to $a_3'=\{1,2\}$ by similar arguments.

            \end{enumerate}

            \item \textbf{All pure joint strategies where two agents join either the first or second candidate coalitions while the third agent only joins the third candidate coalition are Nash-stable:} Without loss of generality, consider the joint action $\mathbf{a}$ where $a_1=\{1\}$, $a_2=\{1\}$ and $a_3=\{3\}$, i.e., $C_1^{\mathbf{a}}=\{1,2\}$, $C_2^{\mathbf{a}}=\emptyset$ and $C_3^{\mathbf{a}}=\{3\}$ (other cases are symmetric by \eqref{eq:deterministic utilities semi bandit 2}). Note that both agents $1,2$ receive a utility of $1$ from each other, while agent $3$ obtains $0$ utility. We distinguish between the following cases:
            \begin{enumerate}
                \item If agent $1$ unilaterally deviates to $a_1'=\{2\}$, then the resulting coalitions are $C_1^{(a_1',a_2,a_3)}=\{2\}$, $C_2^{(a_1',a_2,a_3)}=\{1\}$ and $C_3^{(a_1',a_2,a_3)}=\{3\}$. Thus, agent $1$'s utility decreased from $1$ to $0$, meaning that she has no incentive to unilaterally deviate to $a_1'=\{2\}$. Agent $2$ has no incentive to unilaterally deviate to $a_2'=\{2\}$ by similar arguments.

                \item If agent $1$ unilaterally deviates to $a_1'=\{3\}$, then the resulting coalitions are $C_1^{(a_1',a_2,a_3)}=\{2\}$, $C_2^{(a_1',a_2,a_3)}=\emptyset$ and $C_3^{(a_1',a_2,a_3)}=\{1,3\}$. Thus, agent $1$'s utility from the resulting partition is $-\frac{1}{2}$ as she receives a utility of $-\frac{1}{2}$ from agent $3$ within the third candidate coalition. However, before deviating, agent $1$'s utility was $1$. Namely, agent $1$'s utility decreased after deviating, meaning that she has no incentive to unilaterally deviate to $a_1'=\{3\}$. Agent $2$ has no incentive to unilaterally deviate to $a_2'=\{3\}$ by similar arguments.

                \item If agent $1$ unilaterally deviates to $a_1'=\{1,2\}$, then the resulting coalitions are $C_1^{(a_1',a_2,a_3)}=\{1,2\}$, $C_2^{(a_1',a_2,a_3)}=\{1\}$ and $C_3^{(a_1',a_2,a_3)}=\{3\}$. Thus, agent $1$'s utility from the resulting partition remains $1$, as she only receives a utility of $1$ from agent $2$ within the first candidate coalition. Namely, agent $1$'s utility remained unchanged after deviating, meaning that she has no incentive to unilaterally deviate to $a_1'=\{1,2\}$. Agent $2$ has no incentive to unilaterally deviate to $a_2'=\{1,2\}$ by similar arguments.

                \item If agent $3$ unilaterally deviates to $a_3'=\{1\}$, then the resulting coalitions are $C_1^{(a_1,a_2,a_3')}=\{1,2,3\}$, $C_2^{(a_1,a_2,a_3')}=\emptyset$ and $C_3^{(a_1,a_2,a_3')}=\emptyset$.  Thus, agent $3$'s utility from the resulting partition is $-2\cdot\frac{1}{4}=-\frac{1}{2}$, as she obtains a utility of $-\frac{1}{4}$ from both agents $1$ and $2$ within the first candidate coalition. However, before deviating, agent $3$'s utility was $0$. Namely, agent $3$'s utility decreased after deviating, meaning that she has no incentive to unilaterally deviate to $a_3'=\{1\}$.

                \item If agent $3$ unilaterally deviates to $a_3'=\{2\}$, then the resulting coalitions are $C_1^{(a_1,a_2,a_3')}=\{1,2\}$, $C_2^{(a_1,a_2,a_3')}=\{3\}$ and $C_3^{(a_1,a_2,a_3')}=\emptyset$. Thus, agent $3$'s remains $0$, meaning that she has no incentive to unilaterally deviate to $a_3'=\{2\}$. 

                \item If agent $3$ unilaterally deviates to $a_3'=\{1,2\}$, then the resulting coalitions are $C_1^{(a_1,a_2,a_3')}=\{1,2,3\}$, $C_2^{(a_1,a_2,a_3')}=\{3\}$ and $C_3^{(a_1,a_2,a_3')}=\emptyset$. Thus, agent $3$'s utility from the resulting partition is $-2\cdot\frac{1}{4}=-\frac{1}{2}$, as she obtains a utility of $-\frac{1}{4}$ from both agents $1$ and $2$ within the first candidate coalition and she gets zero utility from being alone in the second candidate coalition. However, before deviating, agent $3$'s utility was $0$. Namely, agent $3$'s utility decreased after deviating, meaning that she has no incentive to unilaterally deviate to $a_3'=\{1\}$.
            \end{enumerate}

            \item \textbf{No other pure joint strategy is Nash-stable:} We distinguish between the following cases:
            \begin{enumerate}
                \item \textbf{Two agents join one of the first two candidate coalitions, while the third agent joins the other among these two coalitions:} Without loss of generality, consider the joint action $\mathbf{a}$ with $a_1=a_2=\{1\}$ and $a_3=\{2\}$, i.e., $C_1^{\mathbf{a}}=\{1,2\}$, $C_2^{\mathbf{a}}=\{3\}$ and $C_3^{\mathbf{a}}=\emptyset$ (other cases are symmetric by \eqref{eq:deterministic utilities semi bandit 2}). While both agents $1$ and $2$ obtain a utility of $1$, agent $3$ obtain zero utility. As we have already proven, this means that agent $3$ has an incentive to unilaterally deviate by joining both the first and second candidate coalitions.

                \item \textbf{Both the first and second candidate coalitions are of size exactly $3$:} Suppose the partition induced by $\mathbf{a}$ contains two coalitions of size exactly $3$, i.e., $a_1=a_2=a_3=\{1,2\}$, meaning that $C_1^{\mathbf{a}}=\{1,2,3\}$, $C_2^{\mathbf{a}}=\{1,2,3\}$ and $C_3^{\mathbf{a}}=\emptyset$. Each agent obtains a utility of $4\cdot (-\frac{1}{4})=-1$, as she receives a utility of $-\frac{1}{4}$ from any other agent within each coalition. Each agent therefore has an incentive to unilaterally deviate by, e.g., only joining one coalition, thus increasing her utility to $2\cdot (-\frac{1}{4})=-\frac{1}{2}$.
            \end{enumerate}
        \end{enumerate}
    \end{proof}

    \paragraph{Construction of the exploration policy $\rho$.} For both games $G_1, G_2$, we construct the same exploration policy $\rho$, which picks a joint action uniformly at random from the set of all joint actions where either: (1) all agents join both the first and second candidate coalitions; (2) only two agents join both the first and second candidate coalitions while the third agent joins the third candidate coalition; or (3) only two agents join the third candidate coalition while the third agent joins either the first or the second candidate coalition. 
    Note that there are exactly $10$ such joint actions, as listed later in \eqref{eq:exploration policy semi}. All other joint actions are assigned zero probability under $\rho$. Hence, the exploration policy $\rho$ is given by:
    \begin{equation}
        \label{eq:exploration policy semi}
        \rho(\mathbf{a})=\begin{cases}
            \frac{1}{10} &, a_1=a_2=a_3=\{1,2\}\\
            \frac{1}{10} &, a_1=a_2=\{1,2\},a_3=\{3\} \text{ \textbf{or} } a_1=a_3=\{1,2\},a_2=\{3\} \text{ \textbf{or} } a_2=a_3=\{1,2\},a_1=\{3\}\\
            \frac{1}{10} &, a_1=a_2=\{3\},a_3=\{1\} \text{ \textbf{or} } a_1=a_2=\{3\},a_3=\{2\} \text{ \textbf{or} } a_1=a_3=\{3\},a_2=\{1\} \text{ \textbf{or} } \\ & \text{ } \text{ } a_1=a_3=\{3\},a_2=\{2\} \text{ \textbf{or} } a_2=a_3=\{3\},a_1=\{1\} \text{ \textbf{or} } a_2=a_3=\{3\},a_1=\{2\} \\
            0 &,\text{ o.w.}
        \end{cases}
    \end{equation}

    \paragraph{Proving the statement in Theorem \ref{supp:thm:1/8 gap}.} Subsequently, we prove the statement in Theorem \ref{supp:thm:1/8 gap}. First, we note that the pairs $(G_1,\rho)$ and $(G_2,\rho)$ both belong to the class $\mathcal{G}'$ stated in Theorem \ref{supp:thm:1/8 gap}, i.e., both games with the exploration policy $\rho$ satisfy Assumption 1 in the full paper. Indeed:
    \begin{itemize}
        \item \underline{For the first game $G_1$:} By Lemma \ref{supp:lemma:NS game 1 bandit}, consider a Nash stable pure joint strategy $\bm{\varphi}^\star$ where two agents join the first and second candidate coalitions, while the third joins only the first candidate coalition. Further, consider a Nash stable pure joint strategy $\tilde{\bm{\varphi}}^\star$ where two agents join the second candidate coalition and the third agent only joins the third candidate coalition. We will show that, for any agent $i$, any $\ell \in [3]$ and any coalition size $\alpha\in[n]\cup\{0\}$, if there is a strategy $\phi_i \in \Delta(\mathcal{A}_i)$ with either $d_\ell^{\bm{\varphi}_{-i}^\star,\phi_i}(\alpha)>0$ or $d_\ell^{\tilde{\bm{\varphi}}_{-i}^\star,\phi_i}(\alpha)>0$, then $d_\ell^{\rho}(\alpha) > 0$. Indeed:
        \begin{enumerate}
            \item Considering $\bm{\varphi}^\star$, this already holds for $\ell= 1, \alpha=3$ and $\ell= 2, \alpha=2$ and $\ell=3, \alpha=0$ by picking $\phi_i=\varphi_i^\star$ for any agent $i$.

            \item Considering $\tilde{\bm{\varphi}}^\star$, this already holds for $\ell= 1, \alpha=0$ and $\ell= 2, \alpha=2$ and $\ell=3, \alpha=1$ by picking $\phi_i=\tilde{\varphi}_i^\star$ for any agent $i$. The case $\ell=3, \alpha=1$ is also covered when each agent $i$ unilaterally deviates from $\bm{\varphi}^\star$ by only joining the third candidate coalition, and it is also covered by $\rho$.

            \item $\ell=1, \alpha=1$ is covered when each agent $i$ unilaterally deviates from $\bm{\varphi}^\star$ by only joining the first candidate coalition, and it is also covered by $\rho$.

            \item $\ell=1, \alpha=2$ is covered when each agent $i$ unilaterally deviates from $\bm{\varphi}^\star$ by only joining the second or the third candidate coalition, and it is also covered by $\rho$.  


            \item $\ell=2, \alpha=0$ is not covered by any unilateral deviation, and it is also uncovered by $\rho$. 

            \item $\ell=2, \alpha=1$ is covered when one of the agents in the second candidate coalition induced by ${\bm{\varphi}}^\star$ unilaterally deviates from $\bm{\varphi}^\star$ by only joining, e.g., the first candidate coalition, and it is also covered by $\rho$.

            \item $\ell=2, \alpha=3$ is not covered by any unilateral deviation, and it is also uncovered by $\rho$.    
            
            \item $\ell=3, \alpha=2$ is covered when one of the agents in the second candidate coalition induced by $\tilde{\bm{\varphi}}^\star$ unilaterally deviates from $\tilde{\bm{\varphi}}^\star$ by joining the third candidate coalition, and it is also covered by $\rho$.

            \item $\ell=3, \alpha=3$ is not covered by any unilateral deviation, and it is also uncovered by $\rho$.    
        \end{enumerate}
        Hence, $(G_1,\rho)$ satisfies Assumption 1 in the full paper, yielding that $(G_1,\rho) \in \mathcal{G}$.

        \item \underline{For the second game $G_2$:} By Lemma \ref{supp:lemma:NS game 2 bandit}, consider a Nash stable pure joint strategy $\bm{\varphi}^\star$ where both the first and second candidate coalitions are of size exactly $2$. Further, consider a Nash stable pure joint strategy $\tilde{\bm{\varphi}}^\star$ where two agents join either the first or second candidate coalitions, while the third agent only joins the third candidate coalition. We will show that, for any agent $i$, any $\ell \in [3]$ and any coalition size $\alpha\in[n]\cup\{0\}$, if there is a strategy $\phi_i \in \Delta(\mathcal{A}_i)$ with either $d_\ell^{\bm{\varphi}_{-i}^\star,\phi_i}(\alpha)>0$ or $d_\ell^{\tilde{\bm{\varphi}}_{-i}^\star,\phi_i}(\alpha)>0$, then $d_\ell^{\rho}(\alpha) > 0$. Indeed:
        \begin{enumerate}
            \item Considering $\bm{\varphi}^\star$, this already holds for $\ell= 1, \alpha=2$ and $\ell= 2, \alpha=2$ and $\ell=3, \alpha=0$ by picking $\phi_i=\varphi_i^\star$ for any agent $i$.

            \item Considering $\tilde{\bm{\varphi}}^\star$, this already holds for $\ell= 1, \alpha=2$ and $\ell= 1, \alpha=0$ and $\ell= 2, \alpha=2$ and $\ell= 2, \alpha=0$ and $\ell=3, \alpha=1$ by picking $\phi_i=\tilde{\varphi}_i^\star$ for any agent $i$. 

            \item $\ell=1, \alpha=1$ is covered when each agent in the first candidate coalition induced by $\bm{\varphi}^\star$ unilaterally deviates from $\bm{\varphi}^\star$ by only joining the second or third candidate coalition, and it is also covered by $\rho$.

            \item $\ell=1, \alpha=3$ is covered when the agent that only joins the second candidate coalition induced by $\bm{\varphi}^\star$ unilaterally deviates from $\bm{\varphi}^\star$ by only joining both the first and the second candidate coalition, and it is also covered by $\rho$.  


            \item $\ell=2, \alpha=1$ is covered when each agent in the second candidate coalition induced by $\bm{\varphi}^\star$ unilaterally deviates from $\bm{\varphi}^\star$ by only joining the first or third candidate coalition, and it is also covered by $\rho$.

            \item $\ell=2, \alpha=3$ is covered when the agent that only joins the first candidate coalition induced by $\bm{\varphi}^\star$ unilaterally deviates from $\bm{\varphi}^\star$ by only joining both the first and the second candidate coalition, and it is also covered by $\rho$.      
            
            \item $\ell=3, \alpha=2$ is covered when one of the agents that joins the first or second candidate coalition induced by $\tilde{\bm{\varphi}}^\star$ unilaterally deviates from $\tilde{\bm{\varphi}}^\star$ by joining the third candidate coalition, and it is also covered by $\rho$.

            \item $\ell=3, \alpha=3$ is not covered by any unilateral deviation, and it is also uncovered by $\rho$.    
        \end{enumerate}
        Hence, $(G_2,\rho)$ satisfies Assumption 1 in the full paper, yielding that $(G_2,\rho) \in \mathcal{G}$.
    \end{itemize}

    As such, consider any algorithm $\alg$. All the utility information that we may obtain from any dataset sampled from the exploration policy $\rho$ is: 
    \begin{enumerate}
        \item Consider the joint action $\mathbf{a}$ for which $a_1=a_2=a_3=\{1,2\}$ sampled from $\rho$ with probability $\frac{1}{10}$. In the first game $G_1$, by \eqref{eq:deterministic utilities semi bandit}, the utility information observed for each agent $i$ is $\sum_{i \neq j \in \mathcal{N}} [v_{i,j}^1+v_{i,j}^2] = \sum_{i \neq j \in \mathcal{N}} [\frac{1}{2}-1]= 2\cdot(-\frac{1}{2})=-1$. In the second game $G_2$, by \eqref{eq:deterministic utilities semi bandit 2}, the utility information for each agent $i$ is $\sum_{i \neq j \in \mathcal{N}} [v_{i,j}^1+v_{i,j}^2] = \sum_{i \neq j \in \mathcal{N}} [-\frac{1}{4}-\frac{1}{4}]= 2\cdot(-\frac{1}{2})=-1$. 

        \item Consider the joint action $\mathbf{a}$ where two agents join both the first and second candidate coalitions while the third agent joins the third candidate coalition, where $\mathbf{a}$ is sampled from $\rho$ with probability $\frac{1}{10}$. Without loss of generality, assume that $a_1=a_2=\{1,2\},a_3=\{3\}$ (other cases are symmetric). In both games, by \eqref{eq:deterministic utilities semi bandit} and \eqref{eq:deterministic utilities semi bandit 2}, the utility information observed for agents $1$ and $2$ is the same and equals $4$, while agent $3$'s utility in both games is $0$.

        \item Consider the joint action $\mathbf{a}$ where two agents join both the third candidate coalition while the third agent joins either the first or the second candidate coalition, where $\mathbf{a}$ is sampled from $\rho$ with probability $\frac{1}{10}$. Without loss of generality, assume that $a_1=a_2=\{3\},a_3=\{1\}$ (other cases are symmetric). In both games, by \eqref{eq:deterministic utilities semi bandit} and \eqref{eq:deterministic utilities semi bandit 2}, the utility information observed for agents $1$ and $2$ is the same and equals $-\frac{1}{2}$, while agent $3$'s utility in both games is $0$.
    \end{enumerate}
    That is, \textbf{in any case, the utility feedbacks in both games are the same, meaning that, regardless of the size of the dataset obtained by $\alg$, the algorithm $\alg$ \textit{cannot} distinguish between the games $G_1,G_2$ and they appear behaviorally the same from the perspective of the data available under $\rho$}. 

    Therefore, we denote by $q$ the probability that the joint action $\mathbf{a}$ where all $3$ agents join the first candidate coalition is sampled from the joint strategy $\bm{\varphi}$ produced by $\alg$. Since it has a probability of $\frac{1}{10}$ for being sample from $\rho$, then we can obtain that $\gap(\varphi)\geq \frac{1-p}{10}$ in the first game $G_1$ and $\gap(\varphi)\geq \frac{p}{10}$ in the first game $G_2$ by arguments similar to Theorem \ref{supp:thm:half gap}. since either $q\geq \frac{1}{2}$ or $1-q\geq \frac{1}{2}$, the joint strategy $\bm{\varphi}$ produced by $\alg$ is always satisfies $\gap(\bm{\varphi})\geq \frac{1}{20}$ for at least one game, regardless of the dataset size, as desired.   
\end{proof}

\section{Proof of Theorem 5}
\label{supp:thm 5}
The proof of Theorem 5 under semi-bandit feedback relies on Lemma \ref{supp:lemma:bonus bandit}, which proves that, for any agent $i$, our constructed $b_i^\delta$ is an exploration bonus for the utility estimator $\hat{v}_i$.
\begin{lemma}
    \label{supp:lemma:bonus bandit}
    For any $\delta \in (0,1]$, the following holds with probability at least $1-\delta$ simultaneously for each agent $i$ and any joint action $\mathbf{a}$:
    \begin{equation}
        \label{eq:bonus estimation}
        |d_i(\mathbf{a})-\hat{v}_i(\mathbf{a})|\leq b_i^\delta(\mathbf{a})
    \end{equation}
\end{lemma}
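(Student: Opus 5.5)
The plan is to treat this as a standard self-normalized confidence bound for ridge regression, in the style of Abbasi-Yadkori et al., applied to the stacked linear model $d_i(\mathbf{a})=\langle \mathbf{z}_i(\mathbf{a}),\bm{\theta}\rangle$. First I would write the observations as $v_i(\mathbf{a}^m)=\langle \mathbf{z}_i(\mathbf{a}^m),\bm{\theta}\rangle+\eta_i^m$, where $\eta_i^m=v_i(\mathbf{a}^m)-d_i(\mathbf{a}^m)$ is zero-mean conditional on $\mathbf{a}^m$. Since $v_i(\mathbf{a})$ is a sum of at most $k(n-1)$ terms in $[-1,1]$, each noise term is bounded, hence sub-Gaussian. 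The bounded-range constant must be absorbed into $\beta$, possibly after normalizing utilities; the stated $\beta$ suggests a unit sub-Gaussian scale is intended.

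Second, I would decompose the estimation error. By the definitions of $V$ and $\hat{\bm{\theta}}$ in \eqref{eq:ridge},
\begin{equation*}
\hat{\bm{\theta}}-\bm{\theta}=V^{-1}\Big(\sum_{m,j}\mathbf{z}_j(\mathbf{a}^m)\eta_j^m\Big)-V^{-1}\bm{\theta}.
\end{equation*}
Cauchy--Schwarz in the $V$-norm then gives, for every $\mathbf{a}$,
\begin{equation*}
|\hat{v}_i(\mathbf{a})-d_i(\mathbf{a})|\le \|\mathbf{z}_i(\mathbf{a})\|_{V^{-1}}\big(\|S\|_{V^{-1}}+\|\bm{\theta}\|_{V^{-1}}\big),
\end{equation*}
where $S$ is the noise sum. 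Because $V\succeq I$, the bias term satisfies $\|\bm{\theta}\|_{V^{-1}}\le\|\bm{\theta}\|_2\le\sqrt{n^2k}$, since every coordinate $d_{i,j}^\ell$ lies in $[-1,1]$. This accounts for the $2\sqrt{n^2k}$ part of $\sqrt{\beta}$, with slack left over for constants.

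Third, I would bound the self-normalized noise term $\|S\|_{V^{-1}}$. The self-normalized martingale inequality gives, with probability at least $1-\delta$,
\begin{equation*}
\|S\|_{V^{-1}}^2\le 2\log\big(\det(V)^{1/2}/\delta\big).
\end{equation*}
The log-determinant would then be bounded by AM--GM on the trace. Each $\mathbf{z}_j(\mathbf{a})$ has at most $nk$ ones, so $\operatorname{tr}(V)\le n^2k+Mn\cdot nk$ and
\begin{equation*}
\log\det V\le n^2k\log(1+M/n).
\end{equation*}
That matches the $\sqrt{n^2k\log(1+M/n)+\iota}$ term. Finally, the event is uniform over all $\mathbf{a}$ automatically, because the bound is through $\hat{\bm{\theta}}$; a union bound over agents (or the factor inside $\iota$) handles the "each agent $i$" quantifier.

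The main obstacle is the noise structure. The $n$ per-agent observations within one sample $m$ are not independent across agents, since they share the $v_{i,j}^\ell$ draws; symmetry even makes them correlated. As a result, the ordering needed for a martingale filtration is delicate. I would first order the pairs $(m,j)$ lexicographically and check the conditional zero-mean property given past pairs and the current action. If that fails because of the within-sample correlation, I would instead apply the inequality per agent $j$ (with a separate filtration for each) and sum. That fallback would cost a factor of $\sqrt{n}$, which I would hope to absorb into the slack available in the constants.
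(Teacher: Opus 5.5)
Your route is the paper's route. The paper cites Theorem 20.5 of Lattimore and Szepesv\'ari, which is exactly your bias-plus-self-normalized-noise bound $\|\hat{\bm\theta}-\bm\theta\|_V\le\|\bm\theta\|_2+\sqrt{\log\det V+2\log(1/\delta)}$, and then applies Cauchy--Schwarz and AM--GM on $\det V$. As written, however, two of your steps do not deliver the stated $\beta$.

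\emph{The log-determinant.} Your count that $\mathbf z_j(\mathbf a)$ has up to $nk$ ones is correct. It gives $\operatorname{tr}(V)\le n^2k(1+M)$, hence only $\log\det V\le n^2k\log(1+M)$, not $n^2k\log(1+M/n)$. The paper reaches $1+M/n$ only through the claim $\|\mathbf z_i(\mathbf a)\|_2^2\le k$, which is false in general because $\|\mathbf z_i(\mathbf a)\|_2^2=\sum_{\ell\in a_i}|C_\ell^{\mathbf a}|$.

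\emph{The correlation fallback.} It loses a factor that $\beta$ has no room for.

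\emph{The missing observation.} Each $\mathbf z_j(\mathbf a)$ is supported on the $j$-th block of $nk$ coordinates. So $V=\mathrm{diag}(V_1,\dots,V_n)$ with $V_j=I_{nk}+\sum_m\mathbf y_j(\mathbf a^m)\mathbf y_j(\mathbf a^m)^\top$, and the ridge estimate decouples into $n$ separate regressions. Then $\hat v_i(\mathbf a)-d_i(\mathbf a)=\langle\mathbf y_i(\mathbf a),V_i^{-1}(S_i-\bm\theta_i)\rangle$ with $S_i=\sum_m\mathbf y_i(\mathbf a^m)\eta_i^m$, and $\|\mathbf z_i(\mathbf a)\|_{V^{-1}}=\|\mathbf y_i(\mathbf a)\|_{V_i^{-1}}$. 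This fixes both problems:
\begin{itemize}
\item For fixed $i$ the pairs $(\mathbf a^m,\eta_i^m)$ are i.i.d.\ over $m$. The filtration indexed by $m$ is therefore clean, and the cross-agent correlation never enters.
\item The union bound over agents costs $2\log(n/\delta)\le\iota$, and the bias term is $\|\bm\theta_i\|_2\le\sqrt{nk}$.
\item $\log\det V_i\le nk\log(1+M)\le n^2k\log(1+M/n)$, by Bernoulli's inequality $(1+M/n)^n\ge 1+M$. This recovers exactly the stated log-determinant term.
\end{itemize}

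\emph{The remaining gap: noise scale.} You flagged it, but it cannot be closed by absorption, and the paper's proof has the same gap. Given $\mathbf a^m$, $\eta_i^m$ is a sum of up to $k(n-1)$ independent centred terms of range $2$. Its sub-Gaussian constant is therefore $\sigma=\sqrt{k(n-1)}$ in general, not $1$, and the self-normalized inequality gives a noise radius $\sigma\sqrt{\log\det V_i+2\log(n/\delta)}$.

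The slack in $\sqrt\beta$ is only additive: the bias needs $\sqrt{nk}$ while $\sqrt\beta$ allots $2\sqrt{n^2k}$. Meanwhile, the confidence term enters $\sqrt\beta$ through $\iota$ with coefficient one. So for small $\delta$ the stated radius is too small by roughly a factor $\sigma$. This is not just looseness of the analysis: when the conditional variance of $\eta_i^m$ is of order $k(n-1)$, the error at a fixed $\mathbf a$ is asymptotically Gaussian with standard deviation of order $\sqrt{k(n-1)}\,\|\mathbf z_i(\mathbf a)\|_{V^{-1}}$.

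``Normalizing utilities'' would change the model rather than prove the lemma. Theorem 20.5 assumes conditionally $1$-sub-Gaussian noise, and the paper invokes it as a ``special case'' without checking this. So the lemma with the stated $\beta$ holds only under the additional assumption that $v_i(\mathbf a)-d_i(\mathbf a)$ is $1$-sub-Gaussian. Otherwise the noise part of $\sqrt\beta$ must carry the factor $\sqrt{k(n-1)}$.
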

\begin{proof}
    Under the notation of Section 5.1 in the full paper, the following holds with probability at least $1-\delta$ as a special case of Theorem 20.5 by Lattimore and Szepesv\'{a}ri \cite{Lattimore_Szepesvari_2020}:
    \begin{equation}
        \label{eq:lat}
        \|\hat{\bm{\theta}}-\bm{\theta}\|_V \leq \|\bm{\theta}\|_2+\sqrt{\log(\det(V))+2\log(1/\delta)}
    \end{equation}
    Thus, for any agent $i$ and any joint action $\mathbf{a}$, the following is satisfied with probability at least $1-\delta$:
    \begin{equation}
        \label{eq:diffush}
        |d_i(\mathbf{a})-\hat{v}_i(\mathbf{a})|= |\langle\mathbf{z}_{i}(\mathbf{a}), \hat{\bm{\theta}}-\bm{\theta} \rangle| \leq \|\mathbf{z}_{i}(\mathbf{a})\|_{V^{-1}} \|\hat{\bm{\theta}}-\bm{\theta}\|_V \leq \|\mathbf{z}_{i}(\mathbf{a})\|_{V^{-1}}  \left[\|\bm{\theta}\|_2+\sqrt{\log(\det(V))+2\log(1/\delta)}\right]
    \end{equation}
    where the first inequality is by Cauchy-Schwartz inequality and the last inequality is by \eqref{eq:lat}. Next, we upper bound $\det(V)$, which is the determinant of the covariance matrix $V=I+\sum_{m\in[M]} \sum_{i\in \mathcal{N}} \mathbf{z}_{i}(\mathbf{a}^k)\mathbf{z}_{i}(\mathbf{a}^k)^\top$ of the dataset. Note that $V$ is a matrix of order $(n^2k)\times (n^2k)$. Thus, letting $\lambda_1,\dots,\lambda_{n^2k}$ be the eigenvalues of $V$, it holds that:
    \begin{equation}
        \label{eq:bound det}
        \det(V) = \prod_{\kappa=1}^{n^2k} \lambda_\kappa \leq \left(\frac{\tr(V)}{n^2k}\right)^{n^2k} = \left(\frac{\tr(I)+\sum_{m\in[M]} \sum_{i\in \mathcal{N}} \|\mathbf{z}_{i}(\mathbf{a}^k)\|_2^2}{n^2k}\right)^{n^2k} \leq \left(1+\frac{nkM}{n^2k}\right)^{n^2k} = \left(1+\frac{M}{n}\right)^{n^2k}
    \end{equation}
    where the first inequality is by the AM-GM inequality and the last inequality is due to $\|\mathbf{z}_{i}(\mathbf{a}^k)\|_2^2 \leq k$.

    Combining \eqref{eq:bound det} with \eqref{eq:diffush} and $\|\bm{\theta}\|_2\leq 2\sqrt{n^2k}$, we obtain the desired by applying a union bound.
\end{proof}

We are now ready to prove Theorem 5:
\begin{customthm}{5}
    \label{supp:thm:bandit}
    Under {\normalfont bandit} feedback and Assumption 2, for any $\delta \in (0,1]$ and dataset size $M \in \mathbb{N}$, Algorithm 1 with utility estimators and exploration bonuses defined in equation (11) within the full paper outputs a joint strategy $\bm{\varphi}^{\out}$ satisfying $\gap(\varphi^{\out}) \leq 4\sqrt{\frac{n^2k\beta}{c_{\act}M}}+\epsilon_{\opt}$ with probability at least $1-\delta$, where $\sqrt{\beta}$ is as in equation (11) within the full paper.
\end{customthm}
\begin{proof}
    By Theorem \ref{supp:thm:general duality gap}, we need to bound $\mathbb{E}_{\mathbf{a}\sim (\bm{\varphi}_{-i}^\star,\varphi_i')} [b_i^\delta(\mathbf{a})]$ and $ \mathbb{E}_{\mathbf{a}\sim \bm{\varphi}^\star} [b_i^\delta(\mathbf{a})]$ for some NS (possibly mixed) joint strategy $\bm{\varphi}^\star$ and any strategy $\varphi_i \in \Delta(\mathcal{A}_i)$ (in particular, we can $\varphi_i$ choose to be deterministic by Theorem \ref{supp:thm:general duality gap}). Indeed, for any $\delta \in (0,1]$, the following holds with probability at least $1-\delta$ simultaneously for each agent $i$:
    \begin{subequations}
        \label{eq:bound bonus bandit}
        \begin{align}
             \mathbb{E}_{\mathbf{a}\sim (\bm{\varphi}_{-i}^\star,\varphi_i')} [b_i^\delta(\mathbf{a})] &= \sqrt{\beta} \mathbb{E}_{\mathbf{a}\sim (\bm{\varphi}_{-i}^\star,\varphi_i')} \sqrt{\mathbf{z}_{i}(\mathbf{a})^\top V^{-1} \mathbf{z}_{i}(\mathbf{a})} \\
             &\leq \sqrt{\beta}\mathbb{E}_{\mathbf{a}\sim (\bm{\varphi}_{-i}^\star,\varphi_i')} \sqrt{\mathbf{z}_{i}(\mathbf{a})^\top [I+M c_{\act} \mathbb{E}_{\mathbf{a}'\sim (\bm{\varphi}_{-i}^\star,\phi_i)}\left[\mathbf{z}_{i}(\mathbf{a}')\mathbf{z}_{i}(\mathbf{a}')^\top]\right]^{-1} \mathbf{z}_{i}(\mathbf{a})} \label{eq:assump2}  \\
             &=\sqrt{\beta}\mathbb{E}_{\mathbf{a}\sim (\bm{\varphi}_{-i}^\star,\varphi_i')} \sqrt{\tr[ [I+M c_{\act} \mathbb{E}_{\mathbf{a}'\sim (\bm{\varphi}_{-i}^\star,\phi_i)}\left[\mathbf{z}_{i}(\mathbf{a}')\mathbf{z}_{i}(\mathbf{a}')^\top]\right]^{-1} \mathbf{z}_{i}(\mathbf{a}^k)\mathbf{z}_{i}(\mathbf{a}^k)^\top]} \\
             &\leq\sqrt{\beta} \sqrt{\tr[ [I+M c_{\act} \mathbb{E}_{\mathbf{a}'\sim (\bm{\varphi}_{-i}^\star,\phi_i)}\left[\mathbf{z}_{i}(\mathbf{a}')\mathbf{z}_{i}(\mathbf{a}')^\top]\right]^{-1} \mathbb{E}_{\mathbf{a}\sim (\bm{\varphi}_{-i}^\star,\varphi_i')}[\mathbf{z}_{i}(\mathbf{a}^k)\mathbf{z}_{i}(\mathbf{a}^k)^\top]]} \label{eq:jensen} \\
             &= \sqrt{\frac{\beta}{M c_{\act}}} \sqrt{\tr[ I-(I+M c_{\act} \mathbb{E}_{\mathbf{a}'\sim (\bm{\varphi}_{-i}^\star,\phi_i)}\left[\mathbf{z}_{i}(\mathbf{a}')\mathbf{z}_{i}(\mathbf{a}')^\top])^{-1}\right]} \leq \sqrt{\frac{\beta n^2 k}{M c_{\act}}}
        \end{align}
    \end{subequations}
    where the inequality in \eqref{eq:assump2} follows from Assumption 2 and the inequality in \eqref{eq:jensen} is by Jensen's inequality. As $\mathbb{E}_{\mathbf{a}\sim (\bm{\varphi}_{-i}^\star,\varphi_i')} [b_i^\delta(\mathbf{a})] \leq \sqrt{\frac{\beta n^2 k}{M c_{\act}}}$, we easily obtain the desired from Theorem \ref{supp:thm:general duality gap}.
\end{proof}

\subsection{Analysis of the Universal Constant in the Action Coverage Assumption (Assumption 2)}
Consider a POCF game $G$ where each agent $i$ can join any possible non-empty subset of candidate coalitions. That is, the action space of each agent $i$ is $\mathcal{A}_i = \mathcal{P}([k])\setminus \{\emptyset\}$, which is exponential in the number of candidate coalitions (i.e., $|\mathcal{A}_i|=2^k-1$). By Theorem \ref{supp:thm:general duality gap}, the Nash stable joint strategy $\bm{\varphi}^\star$ in Assumption 2 can be chosen to be \textit{deterministic}. Consider an exploration policy $\rho$, which picks a joint action uniformly at random from the set of all joint actions where only one agent unilaterally deviates from $\bm{\varphi}^\star$ by either joining or leaving at most one coalition. For example, if agent $1$'s action according to the pure NS strategy $\bm{\varphi}^\star$ is $\{1,2\}$, then the exploration policy $\rho$ covers the joint actions induced by agent $1$'s unilateral deviations to $\{1,2\}, \{1\}, \{2\}, \{1,2,3\}, \{1,2,4\}, \dots, \{1,2,k\}$. Besides the action dictated by $\bm{\varphi}^\star$ for each agent, there are $k$ possible unilateral deviations for every agent, and thus the exploration policy $\rho$ selects a joint action uniformly at random from a set of $kn+1$ action. That is, for any joint action $\mathbf{a}$ for which $\rho$ assigns a positive probability, it holds that $\rho(\mathbf{a})=\frac{1}{kn+1}$.

Given a dataset $\mathcal{S}=\{(\mathbf{a}^m, \mathbf{v}^m)\}_{m=1}^M$ of $M$ samples independently drawn from $\rho$, we denote the number of samples where a joint action $\mathbf{a}$ appears by $N_\mathcal{S}(\mathbf{a}):=|\{k \in [M] : \mathbf{a}=\mathbf{a}^k\}|$. Hence, the following holds:
\begin{lemma}
    \label{supp:lemma:lower bound counter}
    For any $\delta \in (0,1]$ and any joint action $\mathbf{a} \in \mathcal{A}$ with $\rho(\mathbf{a})>0$, if the dataset size is at least $M \geq 8 (nk+1)\log(\frac{nk+1}{\delta})$, then joint action $\mathbf{a}$ appears at least $\frac{\rho(\mathbf{a})M}{2}=\frac{M}{2(nk+1)}$ times in the dataset (i.e., $N_\mathcal{S}(\mathbf{a}) \geq \frac{M}{2(nk+1)}$) with probability at least $1-\delta$.
\end{lemma}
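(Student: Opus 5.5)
Fix a joint action $\mathbf{a}$ with $\rho(\mathbf{a})>0$; by construction of $\rho$ we have $\rho(\mathbf{a})=p:=\frac{1}{nk+1}$. The count is $N_\mathcal{S}(\mathbf{a})=\sum_{m=1}^M \mathds{1}\{\mathbf{a}^m=\mathbf{a}\}$, a sum of $M$ i.i.d.\ Bernoulli$(p)$ variables, since the samples are drawn independently from $\rho$. Hence $N_\mathcal{S}(\mathbf{a})\sim\text{Bin}(M,p)$ with mean $pM$. The event to control is the lower-tail deviation $N_\mathcal{S}(\mathbf{a})<\frac{pM}{2}$, which is a relative deviation of one half below the mean. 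I will therefore use a multiplicative Chernoff bound rather than Hoeffding's inequality (Lemma \ref{supp:lemma:Hoeffding’s Inequality}), because Hoeffding would give a rate in $p^2M$ instead of $pM$.

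The key step is the Chernoff lower tail
\[
\Pr\!\left[N_\mathcal{S}(\mathbf{a})\le (1-\eta)pM\right]\le \exp\!\left(-\tfrac{\eta^2 pM}{2}\right)
\]
with $\eta=\tfrac12$. This gives
\[
\Pr\!\left[N_\mathcal{S}(\mathbf{a})<\tfrac{pM}{2}\right]\le \exp\!\left(-\tfrac{pM}{8}\right)=\exp\!\left(-\tfrac{M}{8(nk+1)}\right).
\]
It then suffices that $\exp(-M/(8(nk+1)))\le\delta$, i.e.\ $M\ge 8(nk+1)\log(1/\delta)$. The stated hypothesis $M\ge 8(nk+1)\log\frac{nk+1}{\delta}$ implies this, so the single-action statement holds with probability at least $1-\delta$.

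The hypothesis carries the extra factor $nk+1$ inside the logarithm, and I would use that slack to make the bound uniform over the support. The support of $\rho$ has exactly $nk+1$ joint actions, since there are $k$ single-coalition toggles per agent plus $\bm{\varphi}^\star$ itself. Applying the bound above at confidence $\delta/(nk+1)$ to each of these actions and taking a union bound gives $N_\mathcal{S}(\mathbf{a})\ge \frac{M}{2(nk+1)}$ simultaneously for all covered $\mathbf{a}$, with probability at least $1-\delta$, under exactly the stated sample size. This simultaneous version is presumably what the subsequent verification of Assumption \ref{assump:action} needs.

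The main obstacle is bookkeeping rather than analysis. I need to confirm the support size $nk+1$: some toggles might coincide or be disallowed, for example leaving the only coalition would produce $\emptyset\notin\mathcal{A}_i$. Such cases only shrink the support, so $p\ge\frac{1}{nk+1}$ and the bound still holds if I state it with $\rho(\mathbf{a})M/2$. I also need to make sure the Chernoff constant matches the factor $8$. If a cruder tail such as $\exp(-pM/8)$ is all I can cite, the constant still works out exactly as above.
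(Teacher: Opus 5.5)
Your proof is correct and follows the paper's route: $N_\mathcal{S}(\mathbf{a})\sim\text{Bin}(M,\rho(\mathbf{a}))$, then the multiplicative Chernoff lower tail with deviation parameter $\tfrac12$, which gives failure probability $\exp(-M/(8(nk+1)))$, then a union bound over the $nk+1$ covered joint actions, which accounts exactly for the $\log\frac{nk+1}{\delta}$ in the sample-size condition. Your extra remark is also right: invalid toggles (to $\emptyset$) only shrink the support, so $\rho(\mathbf{a})\ge\frac{1}{nk+1}$ and the bound still holds, a point the paper does not address.
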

\begin{proof}
    Note that $N_\mathcal{S}(\mathbf{a})$ is the number of times a joint action $\mathbf{a}$ appears in a sample of $M$ independently drawn joint actions from $\rho$, meaning that $N_\mathcal{S}(\mathbf{a})$ follows a binomial distribution with parameters $M$ and $\rho(\mathbf{a})$, i.e., $N_\mathcal{S}(\mathbf{a}) \sim \text{Binomial}(M,\rho(\mathbf{a}))$. Using the Chernoff bound, the following is satisfied for any $\gamma >0$:
    \begin{equation}
        \label{eq:chernoff}
        \mathbb{P}[N_\mathcal{S}(\mathbf{a}) \geq (1-\gamma)M\rho(\mathbf{a})]\geq 1-\exp\left(-\frac{\gamma^2 M\rho(\mathbf{a})}{2}\right)
    \end{equation}
    Therefore, for any $\mathbf{a} \in \mathcal{A}$ with $\rho(\mathbf{a})>0$, if $\gamma^2 \geq -\frac{2\log(\delta)}{M\rho(\mathbf{a})}$, then the following holds:
    \begin{equation}
        \mathbb{P}[N_\mathcal{S}(\mathbf{a}) \geq (1-\gamma)M\rho(\mathbf{a})]\geq 1-\exp\left(\log(\delta)\right)=1-\delta
    \end{equation}
    Using a union bound and the fact that $\rho(\mathbf{a})=\frac{1}{kn+1}$ for any $\mathbf{a} \in \mathcal{A}$ with $\rho(\mathbf{a})>0$, we obtain the desired.
\end{proof}

Next, we prove that, if the dataset is sufficiently large, then Assumption 2 in the full paper holds for $c_{\act}=\frac{1}{2nk^4}$ with high probability.
\begin{lemma}
    For any $\delta \in (0,1]$, if the dataset size is at least $M \geq 8 (nk+1)\log(\frac{nk+1}{\delta})$ and $c_{\act}=\frac{1}{2nk^4}$, then Assumption 2 in the full paper holds for $c_{\act}=\frac{1}{2nk^4}$ with probability at least $1-\delta$.
\end{lemma}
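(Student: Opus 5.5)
Throughout, work on the event of Lemma \ref{supp:lemma:lower bound counter}. Under that lemma, every joint action $\mathbf{b}$ in the support $\mathcal{B}$ of $\rho$ (where $|\mathcal{B}|=nk+1$) satisfies $N_\mathcal{S}(\mathbf{b}) \geq \frac{M}{2(nk+1)}$. The approach is to lower bound $V$ by the sum over the support and then compare that sum, matrix by matrix, with each deviation covariance.

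\textbf{Step 1: discard all but the support.} Since every summand $\mathbf{z}_{i}(\mathbf{a}^{m})\mathbf{z}_{i}(\mathbf{a}^{m})^\top$ is PSD, dropping terms and using the count bound gives
\begin{equation*}
V \succeq I + \frac{M}{2(nk+1)} \sum_{\mathbf{b}\in\mathcal{B}} \sum_{j\in\mathcal{N}} \mathbf{z}_{j}(\mathbf{b})\mathbf{z}_{j}(\mathbf{b})^\top \succeq I + \frac{M}{2(nk+1)} \sum_{\mathbf{b}\in\mathcal{B}} \mathbf{z}_{i}(\mathbf{b})\mathbf{z}_{i}(\mathbf{b})^\top
\end{equation*}
for each fixed agent $i$.

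\textbf{Step 2: express each deviation covariance through the support.} Fix $i$ and any $\phi_i\in\Delta(\mathcal{A}_i)$. By convexity it suffices to treat a deterministic deviation $a_i'$, because $\mathbb{E}_{\phi_i}[\mathbf{z}_i\mathbf{z}_i^\top]$ is a convex combination of such terms and the bound must hold uniformly. Write $\mathbf{a}'=(\mathbf{a}^\star_{-i},a_i')$. Its vector $\mathbf{z}_i(\mathbf{a}')$ has nonzero blocks only for $\ell\in a_i'$, with entries $\mathds{1}\{\ell\in a_j^\star\}$. For each $\ell\in a_i'$, choose a single-step deviation $\mathbf{b}_\ell\in\mathcal{B}$ of agent $i$ that contains $\ell$: either the action $a_i^\star\cup\{\ell\}$, or $a_i^\star$ itself when $\ell\in a_i^\star$. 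The $\ell$-block of $\mathbf{z}_i(\mathbf{b}_\ell)$ equals the $\ell$-block of $\mathbf{z}_i(\mathbf{a}')$.

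The difficulty is that $\mathbf{z}_i(\mathbf{b}_\ell)$ also carries other blocks, namely those in $a_i^\star$. These extra blocks must be separated out, since we need $\mathbf{u}_\ell\mathbf{u}_\ell^\top$, where $\mathbf{u}_\ell$ denotes the $\ell$-block of $\mathbf{z}_i(\mathbf{a}')$ padded with zeros. To do this, combine $\mathbf{b}_\ell$ with the deviation that removes $\ell$ (or with $\mathbf{a}^\star$ itself). The difference of the two $\mathbf{z}$ vectors isolates $\mathbf{u}_\ell$. Then apply $(\mathbf{x}-\mathbf{y})(\mathbf{x}-\mathbf{y})^\top \preceq 2(\mathbf{x}\mathbf{x}^\top+\mathbf{y}\mathbf{y}^\top)$, which yields $\mathbf{u}_\ell\mathbf{u}_\ell^\top \preceq 2\sum_{\mathbf{b}\in\mathcal{B}}\mathbf{z}_i(\mathbf{b})\mathbf{z}_i(\mathbf{b})^\top$.

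For blocks $\ell\in a_i^\star\cap a_i'$ with $|a_i^\star|=1$, removal is not permitted, since actions must be nonempty. In that case use the $+\ell''$ deviations together with $\mathbf{a}^\star$ to isolate the block instead.

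\textbf{Step 3: assemble.} Since $\mathbf{z}_i(\mathbf{a}')=\sum_{\ell\in a_i'}\mathbf{u}_\ell$ has at most $k$ summands, Cauchy--Schwarz in PSD form gives
\begin{equation*}
\mathbf{z}_i(\mathbf{a}')\mathbf{z}_i(\mathbf{a}')^\top \preceq k\sum_{\ell}\mathbf{u}_\ell\mathbf{u}_\ell^\top \preceq 2k^2 \sum_{\mathbf{b}\in\mathcal{B}}\mathbf{z}_i(\mathbf{b})\mathbf{z}_i(\mathbf{b})^\top,
\end{equation*}
where the isolation step may cost one further factor of $k$. Comparing with Step 1, we need
\begin{equation*}
M c_{\act}\cdot 2k^{3} \le \frac{M}{2(nk+1)}.
\end{equation*}
This holds with $c_{\act}=\frac{1}{2nk^4}$, since $nk+1\le 2nk$ and the slack of a factor $k$ absorbs the extra constant.

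The main obstacle is Step 2, specifically the block-isolation argument and the bookkeeping of constants. I would first try the two-term difference for each block, and fall back on the $k$-factor slack to absorb any additional loss. The probability statement then follows directly from Lemma \ref{supp:lemma:lower bound counter}.
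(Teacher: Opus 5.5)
Your architecture is essentially the paper's. On the event of Lemma~\ref{supp:lemma:lower bound counter}, you lower-bound $V$ by $I+\frac{M}{2(nk+1)}\sum_{\mathbf{b}\in\mathcal{B}}\mathbf{z}_i(\mathbf{b})\mathbf{z}_i(\mathbf{b})^\top$, write $\mathbf{z}_i(\mathbf{a}')$ as a linear combination of feature vectors of single-coalition toggles of $\mathbf{a}^\star$, and apply Cauchy--Schwarz. Step 1 and the reduction to deterministic $\phi_i$ are fine.

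The gap is the final constant check, which is the whole quantitative content of the lemma. As written, you require $Mc_{\act}\cdot 2k^3\le \frac{M}{2(nk+1)}$. With $c_{\act}=\frac{1}{2nk^4}$ the left side equals $\frac{M}{nk}$, which exceeds $\frac{M}{2(nk+1)}$ for every $n,k$, so there is no ``slack of a factor $k$''. Even with the factor $2k^2$ you actually derived, you need $nk^2\ge 2(nk+1)$, i.e.\ $nk(k-2)\ge 2$. This holds for $k\ge 3$ but fails for $k=2$ for every $n$, so your argument only covers $k\ge 3$. The loss comes from your decomposition in three places:
\begin{itemize}
\item you toggle off and back on the blocks $\ell\in a_i'\cap a_i^\star$ that are already present in $\mathbf{z}_i(\mathbf{a}^\star)$;
\item you bound each block separately by the whole sum $2\sum_{\mathbf{b}}$;
\item you then pay another factor $k$ for the block-wise Cauchy--Schwarz.
\end{itemize}

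To close it, decompose over the affected coalitions $S=a_i'\triangle a_i^\star$, $s=|S|\le k$, as the paper does.
\begin{itemize}
\item Extend $\mathbf{u}_\ell$ to every $\ell\in[k]$ (entries $\mathds{1}\{\ell\in a_j^\star\}$ for $j\neq i$, and $1$ at $j=i$). Then $\mathbf{z}_i(\mathbf{a}^\star_{-i},a)=\sum_{\ell\in a}\mathbf{u}_\ell$.
\item Hence $\mathbf{z}_i(\mathbf{a}')=(1-s)\mathbf{z}_i(\mathbf{a}^\star)+\sum_{\ell\in S}\mathbf{z}_i(\mathbf{a}^\star_{-i},a_i^\star\triangle\{\ell\})$, a combination of $s+1$ distinct support points.
\item Cauchy--Schwarz on the coefficient vector gives $(\mathbf{x}^\top\mathbf{z}_i(\mathbf{a}'))^2\le \bigl((s-1)^2+s\bigr)\sum_{\mathbf{b}\in\mathcal{B}}(\mathbf{x}^\top\mathbf{z}_i(\mathbf{b}))^2$, with $(s-1)^2+s\le k^2-k+1$.
\item This factor must be paid on the right-hand side. $V$ gives $\mathbf{z}_i(\mathbf{a}^\star)\mathbf{z}_i(\mathbf{a}^\star)^\top$ weight $N_\mathcal{S}(\mathbf{a}^\star)$, not $(1-s)^2$ times that. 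Do not copy the paper's intermediate display on this point.
\item The only invalid toggle is removing $\ell_0$ when $a_i^\star=\{\ell_0\}$ and $\ell_0\notin a_i'$. There, use $\mathbf{z}_i(\mathbf{a}')=\sum_{\ell\in a_i'}\bigl(\mathbf{z}_i(\mathbf{a}^\star_{-i},a_i^\star\cup\{\ell\})-\mathbf{z}_i(\mathbf{a}^\star)\bigr)$, with factor $|a_i'|^2+|a_i'|\le k^2-k$.
\item This also replaces your vague handling of $|a_i^\star|=1$. In that case $\mathbf{z}_i(\mathbf{a}^\star)$ already equals the single block $\mathbf{u}_\ell$, so no isolation is needed.
\end{itemize}

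The requirement becomes $c_{\act}(k^2-k+1)\le\frac{1}{2(nk+1)}$. For $k\ge 2$ this holds with $c_{\act}=\frac{1}{2nk^4}$, since $(k^2-k+1)(nk+1)\le nk^3+k^2\le nk^4$. For $k=1$, every agent has the single action $\{1\}$ and the support is $\{\mathbf{a}^\star\}$ with $N_\mathcal{S}(\mathbf{a}^\star)=M$, so the claim is immediate.
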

\begin{proof}
    Recall that Assumption 2 requires the existence of a universal constant $c_{\act}>0$ and an NS joint strategy $\bm{\varphi}^\star$ such that, for any agent $i$ and any strategy $\phi_i \in \Delta(\mathcal{A}_i)$, it holds that:
    \begin{equation}
        \label{supp:eq:action coverage}
        V\succeq I+M c_{\act} \mathbb{E}_{\mathbf{a}\sim (\bm{\varphi}_{-i}^\star,\phi_i)}[\mathbf{z}_{i}(\mathbf{a})\mathbf{z}_{i}(\mathbf{a})^\top]
    \end{equation}
    where $V=I+\sum_{m\in[M]} \sum_{i\in \mathcal{N}} \mathbf{z}_{i}(\mathbf{a}^k)\mathbf{z}_{i}(\mathbf{a}^k)^\top$ is the covariance matrix of the dataset. By Theorem \ref{supp:thm:general duality gap}, the Nash stable joint strategy $\bm{\varphi}^\star$ in Assumption 2 can be chosen to be \textit{deterministic}. Further, as the right-hand side in \eqref{supp:eq:action coverage} is linear in each entry of any strategy $\phi_i \in \Delta(\mathcal{A}_i)$ of agent $i$, we need to focus only on pure strategies. 
    
    That is, we hereafter consider some agent $i$ that unilaterally deviates from $\bm{\varphi}^\star$ to a pure strategy $\phi_i \in \Delta(\mathcal{A}_i)$, resulting in a pure joint strategy $(\bm{\varphi}^\star_{-i}, \phi_i)$ from which a joint action $\mathbf{a}'$ is deterministically picked. Further, let $\mathbf{a}^\star$ be the joint action deterministically picked by $\bm{\varphi}^\star$. Without loss of generality, assume that the candidate coalitions affected by the unilateral deviation from $\bm{\varphi}^\star$ that leads to the joint action $\mathbf{a}'$ are those that correspond to the indices $1,2,\dots,\ell'$ for some $\ell' \in [k]$. To estimate the change in utility resulting from such a deviation, we require some joint action $\mathbf{a}^\ell$ resulting from a unilateral deviation from $\mathbf{a}^\star$ that solely affects the $\ell$-th candidate coalition for any $\ell \in [\ell']$.

    Now, recall that agent $i$'s mean utility from a joint action $\mathbf{a} \in \mathcal{A}$ can be written as $d_i(\mathbf{a}) =\langle\mathbf{z}_{i}(\mathbf{a}), \bm{\theta} \rangle$, where $\bm{\theta}$ concatenates all agents' individual utilities into an $n^2 k$-dimensional vector. Without loss of generality, we assume that the change in utility is given by $\langle\mathbf{z}_{i}(\mathbf{a}^\ell)-\mathbf{z}_{i}(\mathbf{a}^\star), \bm{\theta} \rangle$, and thus $\mathbf{z}_{i}(\mathbf{a}') = \mathbf{z}_{i}(\mathbf{a}^\star)+\sum_{\ell'\in[\ell]} (\mathbf{z}_{i}(\mathbf{a}^{\ell'})-\mathbf{z}_{i}(\mathbf{a}^\star))= (1-\ell)\mathbf{z}_{i}(\mathbf{a}^\star)+ \sum_{\ell'\in[\ell]} \mathbf{z}_{i}(\mathbf{a}^{\ell'})$. Thereby, by Lemma \ref{supp:lemma:lower bound counter} and since $(\bm{\varphi}^\star_{-i}, \phi_i)$ is deterministic, we need to require that the following holds with probability at least $1-\delta$ for any $\delta \in (0,1]$ so that the condition in \eqref{eq:action coverage} of Assumption 2 to hold:
    \begin{equation}
       I+\frac{M(1-\ell)^2}{2(nk+1)}\mathbf{z}_{i}(\mathbf{a}^\star)\mathbf{z}_{i}(\mathbf{a}^\star)^\top+ \frac{M}{2(nk+1)}\sum_{\ell'\in[\ell]} \mathbf{z}_{i}(\mathbf{a}^{\ell'}) \mathbf{z}_{i}(\mathbf{a}^{\ell'})^\top \succeq I+M c_{\act} \mathbf{z}_{i}(\mathbf{a}')\mathbf{z}_{i}(\mathbf{a}')^\top
    \end{equation}
    which is equivalent to requiring the following for any vector $\mathbf{x}\in \mathbb{R}^{n^2k}$:
    \begin{equation}
        \label{eq:semidefinite}
        \frac{M(1-\ell)^2}{2(nk+1)} \mathbf{x}^\top \mathbf{z}_{i}(\mathbf{a}^\star)\mathbf{z}_{i}(\mathbf{a}^\star)^\top \mathbf{x}+ \frac{M}{2(nk+1)}\sum_{\ell'\in[\ell]} \mathbf{x}^\top \mathbf{z}_{i}(\mathbf{a}^{\ell'}) \mathbf{z}_{i}(\mathbf{a}^{\ell'})^\top \mathbf{x} \geq M c_{\act} \mathbf{x}^\top \mathbf{z}_{i}(\mathbf{a}')\mathbf{z}_{i}(\mathbf{a}')^\top \mathbf{x}
    \end{equation}
    Letting $x_{\ell'}=\mathbf{x}^\top \mathbf{z}_{i}(\mathbf{a}^{\ell'})$ and $x_{\star}=\mathbf{x}^\top \mathbf{z}_{i}(\mathbf{a}^\star)$, \eqref{eq:semidefinite} can be rephrase as follows:
    \begin{equation}
        \label{eq:semidefinite2}
        \frac{M(1-\ell)^2}{2(nk+1)} x_{\star}^2+ \frac{M}{2(nk+1)}\sum_{\ell'\in[\ell]} x_{\ell'}^2 \geq M c_{\act} \left[(1-\ell)x_{\star}+\sum_{\ell'\in[\ell]} x_{\ell'}\right]^2 \geq M c_{\act} (\ell+1) \left[(1-\ell)^2 x_{\star}^2+\sum_{\ell'\in[\ell]} x_{\ell'}^2\right]^2
    \end{equation}
    where the last inequality is by Jensen’s inequality. Accordingly, it is sufficient to require:
    \begin{equation}
        \label{eq:semidefinite3}
         c_{\act} (k+1)(k-1)^2 \leq \frac{1}{2(nk+1)}
    \end{equation}
    from which the desired readily follows.
\end{proof}

\subsection{Proof of Corollary 2}
\begin{customcoro}{2}
    \label{supp:coro:optimal2}
    Under {\normalfont bandit} feedback and Assumption 2, for any $\delta \in (0,1]$, any $\varepsilon > \epsilon_{\opt}$ with $\epsilon_{\opt} = o(\varepsilon)$ and any NS joint strategy $\bm{\varphi}^\star$, Algorithm 1 with utility estimators and exploration bonuses as in equation (11) in the full paper has a sample complexity bound that {\normalfont \textbf{optimally}} depends on $\varepsilon$ (up to logarithmic factors): for a dataset of size $M \geq \frac{16n^2 k\beta}{c_{\act}(\varepsilon -\epsilon_{\opt})^2}$, $\varphi^{\out}$ is $\varepsilon$-NS (i.e., $\gap(\varphi^{\out}) \leq \varepsilon$) with probability at least $1-\delta$.
\end{customcoro}
\begin{proof}
    The proof is a continuation of the proof for Theorem \ref{supp:thm:bandit}. Indeed, if the right-hand side in $\gap(\varphi^{\out}) \leq 4\sqrt{\frac{n^2k\beta}{c_{\act}M}}+\epsilon_{\opt}$ is at most $\varepsilon$ for some $\varepsilon \geq \epsilon_{\opt}$ with $\epsilon_{\opt} = o(\varepsilon)$, then Algorithm 1 produces an $\varepsilon$-NS joint strategy with probability at least $1-\delta$ (i.e., $\gap(\varphi^{\out}) \leq \varepsilon$). This is equivalent to requiring that:
    \begin{equation}
        \begin{aligned}
            \varepsilon \sqrt{M}\geq 4\sqrt{\frac{n^2k\beta}{c_{\act}}}+\sqrt{M}\epsilon_{\opt} \Leftrightarrow  (\varepsilon -\epsilon_{\opt})\sqrt{M}\geq 4\sqrt{\frac{n^2k\beta}{c_{\act}}} \Leftrightarrow \sqrt{M} \geq \frac{4\sqrt{\frac{n^2k\beta}{c_{\act}}}}{\varepsilon -\epsilon_{\opt}} \Leftrightarrow M \geq \frac{16 n^2k\beta}{c_{\act}(\varepsilon -\epsilon_{\opt})^2}
        \end{aligned} 
    \end{equation}
    as desired. Since $\epsilon_{\opt} = o(\varepsilon)$, we have {\normalfont \textbf{optimal}} dependence on $\varepsilon$ (up to logarithmic factors) due to \cite{hassani2020stochastic,bai20provable}. 
\end{proof}

\subsection{Learning Approximate Nash-Stable Pure Strategies}
\label{supp:Learning Approximate Nash-Stable Pure Strategies - bandit}

Similarly to Appendix \ref{supp:Learning Approximate Nash-Stable Pure Strategies - bandit}, we herein concentrate on learning an approximate NS \textit{pure} strategy under \textit{bandit} feedback. As mentioned in Appendix \ref{supp:Learning Approximate Nash-Stable Pure Strategies - bandit}, since we focus on \textit{symmetric} POCF games, they always admit a pure NS strategy by Lemma \ref{supp:lemma:potential game}, which legitimates learning an approximate NS \textit{pure} strategy. Hence, we can adapt Algorithm 1 to produce a pure strategy. Indeed, let $\Gamma_i \subset \Delta(\mathcal{A}_i)$ be the set of agent $i$'s \textit{deterministic} strategies, i.e., each deterministic strategy $\varphi_i \in \Gamma_i$ corresponds to exactly one joint \textit{pure} strategy $a_i \in \mathcal{A}_i$, such that $\varphi_i(a_i)=1$ for any agent $i$ and $\varphi_i(a_i')=0$ for any other pure strategy $a_i \neq a_i' \in \mathcal{A}_i$. Thus, instead of approximately solving \eqref{eq:approx NS} over \textit{mixed} strategies, Algorithm 1 in the main text can be modified to find a joint \textit{deterministic} strategy $\bm{\varphi}^{\out}\in \Gamma := \prod_{i=1}^n \Gamma_i$ that approximately solves the minimization problem $\min_{\bm{\varphi}\in\Gamma} \max_{i\in \mathcal{N}} [\overline{V}_i^{\star,\delta}(\bm{\varphi}_{-i})-\underline{V}_{i}^\delta(\bm{\varphi})]$. Algorithm \ref{alg:surrogate min pure} summarizes the resulting algorithm.

Under bandit feedback, Algorithm \ref{alg:surrogate min pure}’s approximation to Nash stability under \textit{pure} strategies is identical to its approximation to Nash stability under \textit{mixed} strategies due to Theorem \ref{supp:thm:bandit}. Accordingly, our algorithm's sample complexity bound under bandit feedback is also identical for both \textit{pure} and \textit{mixed} strategies, which is proven in Corollary \ref{coro:optimal}. We formulate this in the following corollaries:
\begin{corollary}
    Under {\normalfont bandit} feedback and Assumption 2, for any $\delta \in (0,1]$ and dataset size $M \in \mathbb{N}$, Algorithm \ref{alg:surrogate min pure} with utility estimators and exploration bonuses defined in equation (11) within the full paper outputs a joint strategy $\bm{\varphi}^{\out}$ satisfying $\gap(\varphi^{\out}) \leq 4\sqrt{\frac{n^2k\beta}{c_{\act}M}}+\epsilon_{\opt}$ with probability at least $1-\delta$, where $\sqrt{\beta}$ is as in equation (11) within the full paper.
\end{corollary}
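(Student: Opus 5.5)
The plan is to reuse the proof of Theorem~\ref{supp:thm:bandit} almost verbatim, since the bandit estimator, bonus and Assumption~2 are not tied to mixed strategies. The argument has three steps.

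First, I would establish a pure-strategy analogue of Lemma~\ref{supp:lemma:surrogate} and Theorem~\ref{supp:thm:general duality gap} for Algorithm~\ref{alg:surrogate min pure}. Lemma~\ref{supp:lemma:bonus bandit} still gives, with probability at least $1-\delta$, that $b_i^\delta$ is an exploration bonus for $\hat v_i$ simultaneously for all $i$ and $\mathbf{a}$. On that event, $\underline V_i^\delta(\bm\varphi)\le V_i(\bm\varphi)\le \overline V_i^\delta(\bm\varphi)$ for every joint strategy, and in particular for deterministic ones. Hence $\gap(\bm\varphi)\le\widehat{\gap}^\delta(\bm\varphi)$ for every $\bm\varphi\in\Gamma$. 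The output of Algorithm~\ref{alg:surrogate min pure} satisfies $\widehat{\gap}^\delta(\bm\varphi^{\out})\le\min_{\bm\varphi\in\Gamma}\widehat{\gap}^\delta(\bm\varphi)+\epsilon_{\opt}$.

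Second, I would bound $\min_{\bm\varphi\in\Gamma}\widehat{\gap}^\delta(\bm\varphi)$ by evaluating it at a specific $\bm\varphi^\star\in\Gamma$. Lemma~\ref{supp:lemma:potential game} guarantees that a pure NS strategy exists, and I would take it to be one for which Assumption~2 holds; the footnote to Assumption~\ref{assump:unilateral deviation} and the appendix argument allow the NS strategy to be chosen pure. The chain of inequalities in the proof of Theorem~\ref{supp:thm:general duality gap} then applies unchanged: the optimistic best response is attained at a vertex, the estimate \eqref{eq:diff opt pess} holds, and $\gap(\bm\varphi^\star)=0$. This yields
\[
\gap(\bm\varphi^{\out})\le 2\max_i\Big[\max_{\varphi_i'\in\Gamma_i}\mathbb{E}_{(\bm\varphi^\star_{-i},\varphi_i')}[b_i^\delta]+\mathbb{E}_{\bm\varphi^\star}[b_i^\delta]\Big]+\epsilon_{\opt}.
\]

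Third, I would bound each expected bonus by $\sqrt{\beta n^2k/(Mc_{\act})}$ exactly as in \eqref{eq:bound bonus bandit}. Assumption~2 gives $V^{-1}\preceq(I+Mc_{\act}\Sigma)^{-1}$, Jensen's inequality moves the expectation inside the square root, and the trace bound $\tr[\Sigma(I+Mc_{\act}\Sigma)^{-1}]\le n^2k/(Mc_{\act})$ finishes it. Summing the four terms gives the factor $4$.

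The step I expect to require the most care is the second one. Because the minimization now ranges only over $\Gamma$, the comparator must itself be pure, and it must be the same NS strategy for which Assumption~2 is assumed to hold. I would address this by reading Assumption~2 with a pure $\bm\varphi^\star$, which is without loss of generality by Lemma~\ref{supp:lemma:potential game}. Once that is settled, the remainder is a routine transcription of the existing proofs.
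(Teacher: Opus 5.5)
Your route is the paper's own (its proof of this corollary is only ``identical to Theorem 5''). Your steps 1 and 3 faithfully transcribe Lemma 2, Theorem 1 and the bonus bound from Theorem 5.

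\textbf{The gap.} It is exactly the step you flagged, and your resolution of it does not work. Reading Assumption 2 with a \emph{pure} $\bm{\varphi}^\star$ is not without loss of generality.
\begin{itemize}
\item Lemma 1 only produces \emph{some} pure NS profile. Assumption 2 asserts covariance coverage of the unilateral deviations of one particular, possibly mixed, NS strategy, and nothing transfers that coverage to a pure NS.
\item Theorem 1 does not help either: its vertex argument makes the deviation $\varphi_i'$ pure, not $\bm{\varphi}^\star$.
\item Pure profiles $\mathbf{a}$ in the support of a mixed NS are in general not NS. They also only inherit the degraded constant $c_{\act}\,\bm{\varphi}^\star_{-i}(\mathbf{a}_{-i})$.
\end{itemize}
Because Algorithm 2 minimizes only over $\Gamma$, a mixed $\bm{\varphi}^\star$ cannot serve as the comparator. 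So step 2 has no point at which to evaluate $\widehat{\gap}^\delta$.

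\textbf{The bound actually fails under the literal reading.} Take $n=2$ and $k=9$, and arrange the candidate coalitions in a $3\times 3$ grid. Let agent 1's actions $A,B,C$ be the rows and agent 2's the columns. Then $d_1(x,y)=d_2(x,y)=U(x,y):=d^{\ell(x,y)}_{1,2}$ can be any identical-interest game with entries in $[-1,1]$. With deterministic utilities, put $U(C,A)=U(B,C)=1$, $U(A,A)=U(B,B)=\tfrac12$, $U(A,C)=U(C,B)=-\tfrac12$, and $U=0$ otherwise.
\begin{itemize}
\item Both agents playing $\tfrac12A+\tfrac12B$ is a mixed NS, since every action earns $\tfrac14$ against it.
\item None of its unilateral deviations produces $(C,C)$. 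Hence data from $\rho$ uniform on the other eight profiles satisfies Assumption 2 for it with some constant $c_{\act}$ (w.h.p., for large $M$).
\item The only pure NS are $(C,A)$ and $(B,C)$, each one deviation away from $(C,C)$.
\item The coordinates of $\bm{\theta}$ holding $d^{\ell(C,C)}_{1,2}$ and $d^{\ell(C,C)}_{2,1}$ appear only in $\mathbf{z}_i(C,C)$. So $b_i^\delta(C,C)\ge\sqrt{\beta}$, and $\widehat{\gap}^\delta\ge\sqrt{\beta}-O(1)$ at both pure NS.
\item At $(A,A)$, all of whose deviations are covered, the surrogate tends to $\gap(A,A)=\tfrac12$. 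Every non-NS pure profile has gap at least $\tfrac12$.
\end{itemize}
Hence for large $M$ and small $\epsilon_{\opt}$, Algorithm 2 returns a profile with gap at least $\tfrac12$, while $4\sqrt{n^2k\beta/(c_{\act}M)}\to 0$.

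\textbf{The fix.} The paper tacitly makes the same move: its footnote to Assumption 1 and Appendix G.1 assert that the NS strategy ``can be chosen'' pure. The corollary should therefore be stated with Assumption 2 holding for a \emph{pure} NS $\bm{\varphi}^\star$. With that added hypothesis, your three steps are a complete proof.
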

\begin{corollary}
    Under {\normalfont bandit} feedback and Assumption 2, for any $\delta \in (0,1]$, any $\varepsilon > \epsilon_{\opt}$ with $\epsilon_{\opt} = o(\varepsilon)$ and any NS joint strategy $\bm{\varphi}^\star$, Algorithm \ref{alg:surrogate min pure} with utility estimators and exploration bonuses as in equation (11) in the full paper has a sample complexity bound that {\normalfont \textbf{optimally}} depends on $\varepsilon$ (up to logarithmic factors): for a dataset of size $M \geq \frac{16n^2 k\beta}{c_{\act}(\varepsilon -\epsilon_{\opt})^2}$, $\varphi^{\out}$ is $\varepsilon$-NS (i.e., $\gap(\varphi^{\out}) \leq \varepsilon$) with probability at least $1-\delta$.
\end{corollary}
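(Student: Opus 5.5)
The plan is to reduce the pure-strategy bandit statement to the already-proven mixed-strategy analysis, namely Theorem \ref{supp:thm:general duality gap} combined with the bonus bounds in the proof of Theorem \ref{supp:thm:bandit}. The claim itself is then obtained by solving the resulting gap bound for $M$, exactly as in Corollary \ref{supp:coro:optimal2}.

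\textbf{Step 1 (validity of the bonus).} Lemma \ref{supp:lemma:bonus bandit} gives, with probability at least $1-\delta$ and simultaneously for all agents $i$ and joint actions $\mathbf{a}$, that $|d_i(\mathbf{a})-\hat v_i(\mathbf{a})|\le b_i^\delta(\mathbf{a})$. This statement does not depend on whether the output strategy is pure or mixed, since it is pointwise in $\mathbf{a}$. Consequently Lemma \ref{supp:lemma:surrogate} still holds when both the search space and the minimizer in \eqref{supp:eq:approx NS} range over $\Gamma$. Two facts make this work: $\widehat{\gap}^\delta(\bm{\varphi})\ge\gap(\bm{\varphi})$ for every $\bm{\varphi}$, and $\bm{\varphi}^{\out}\in\Gamma$ is an $\epsilon_{\opt}$-approximate minimizer of $\widehat{\gap}^\delta$ over $\Gamma$.

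\textbf{Step 2 (general gap bound restricted to $\Gamma$).} I then rerun the chain in the proof of Theorem \ref{supp:thm:general duality gap}, with the outer minimum taken over $\Gamma$ instead of $\prod_i\Delta(\mathcal{A}_i)$. This is the step where the restriction matters. The comparator in the chain must lie in the feasible set $\Gamma$, so $\bm{\varphi}^\star$ has to be a \emph{pure} NS strategy. Such a strategy exists by Lemma \ref{supp:lemma:potential game}. I also need Assumption 2 to hold for that pure $\bm{\varphi}^\star$; following the footnote after Assumption \ref{assump:unilateral deviation} and the discussion in Appendix G.1, I will take the NS strategy in Assumption 2 to be pure. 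The optimistic best response can still range over mixed $\phi_i$, but by linearity its maximum is attained at a deterministic $\varphi_i'$. The chain therefore yields
\[
\gap(\bm{\varphi}^{\out})\le 2\max_i\Big[\max_{\varphi_i'\in\Gamma_i}\mathbb{E}_{(\bm{\varphi}^\star_{-i},\varphi_i')}[b_i^\delta]+\mathbb{E}_{\bm{\varphi}^\star}[b_i^\delta]\Big]+\epsilon_{\opt}.
\]

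\textbf{Step 3 (bounding the bonuses).} Each expectation is bounded exactly as in \eqref{eq:bound bonus bandit}. First, Assumption 2 gives $V\succeq I+Mc_{\act}\Sigma$, where $\Sigma$ is the covariance under the deviation. Applying Jensen's inequality to $\sqrt{\cdot}$ and then the trace identity $\tr[(I+Mc\Sigma)^{-1}\Sigma]\le n^2k/(Mc)$ gives $\sqrt{\beta n^2k/(c_{\act}M)}$. Since $\varphi_i'$ and $\bm{\varphi}^\star$ are deterministic, the expectations are single evaluations, so the bound is inherited directly. Hence $\gap(\bm{\varphi}^{\out})\le 4\sqrt{n^2k\beta/(c_{\act}M)}+\epsilon_{\opt}$.

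\textbf{Step 4 (solving for $M$).} Requiring the right-hand side to be at most $\varepsilon$ and rearranging, as in Corollary \ref{supp:coro:optimal2}, gives $M\ge 16n^2k\beta/(c_{\act}(\varepsilon-\epsilon_{\opt})^2)$. The optimal dependence on $\varepsilon$ then follows from the same cited lower bounds.

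The main obstacle is Step 2: making sure the comparator $\bm{\varphi}^\star$ is feasible in the restricted program and covered by Assumption 2. A minor technical point is that $\beta$ depends on $M$ through $\log(1+M/n)$, so the condition on $M$ is implicit; I would state it as is, as the paper does.
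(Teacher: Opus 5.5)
Your proposal is correct and follows the same route as the paper. The paper simply asserts that the pure-strategy guarantee under bandit feedback is identical to Theorem 5 (via the general bound of Theorem 1 and the bonus bound under Assumption 2), then solves $4\sqrt{n^2k\beta/(c_{\act}M)}+\epsilon_{\opt}\le\varepsilon$ for $M$ exactly as in Corollary 2. Your Step 2 is more careful than the paper: you make explicit that the comparator in the minimization restricted to $\Gamma$ must be a pure NS strategy satisfying Assumption 2, a point the paper leaves implicit behind its remark that $\bm{\varphi}^\star$ can be taken deterministic.
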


\section{Additional Experimental Results}

\subsection{Additional Empirical Evaluations under Semi-Bandit Feedback}
In this section, we supply additional experimental results for semi-bandit feedback. In Appendices \ref{supp:setup}--\ref{supp:results}, we consider both size-independent and size-dependent utility generation models, which show trends similar to the size-dependent versions discussed in the main text. In Appendix \ref{supp:mixed effects}, we further consider a utility generation model with mixed coalition-size-effects, which also exhibits trends similar to prior models.
\subsubsection{Experimental Setup}
\label{supp:setup}
For each run, we generate a game with $n$ agents with the same action set of size at least $3$, sampled uniformly at random from $\mathcal{P}([k])\setminus \emptyset$. Given a joint action $\mathbf{a}$, we sample the mutual utility $v_{i,j}^{\ell}$ of each pair of distinct agents $i,j$ in the $\ell$-th candidate coalition $C_\ell^{\mathbf{a}}$ using one of the following four utility generation models inspired by Boehmer et al. \cite{boehmer2025causes}: 
\begin{enumerate}
    \item {\textbf{Uniform:} $v_{i,j}^{\ell}$ is drawn uniformly at random from $[-1,1]$.}
    \item {\textbf{Gaussian:} We first draw a mean $\mu_{i,j}$ uniformly at random from $[-1,1]$. Then, $v_{i,j}^{\ell}$ is drawn from the Gaussian distribution with mean $\mu_{i,j}$ and standard deviation $1-\mu_{i,j}$ if $\mu_{i,j}\geq 0$ and $|-1-\mu_{i,j}|$ if $\mu_{i,j}< 0$, ensuring that $v_{i,j}^{\ell} \in [-1,1]$.}\label{model:gaussian}
    \item {\textbf{Size-Dependent Uniform:} We first draw $u_{i,j}^{\ell}$ uniformly at random from $[-1,1]$ and then set $v_{i,j}^{\ell} = \frac{|C_\ell^{\mathbf{a}}|}{n+1}\cdot u_{i,j}^{\ell}$.}
    \item {\textbf{Size-Dependent Gaussian:} We first draw $u_{i,j}^{\ell}$ from the Gaussian distribution constructed in the Gaussian model in \ref{model:gaussian} and then set $v_{i,j}^{\ell} = \frac{|C_\ell^{\mathbf{a}}|}{n+1}\cdot u_{i,j}^{\ell}$.}
\end{enumerate}

{For each configuration of parameters, we consider two exploration policies for constructing an offline dataset of size $M$:}
\begin{enumerate}[label={(\arabic*)}]
    \item {A \textit{\textbf{uniformly random}} policy $\rho^{\text{rand}}$, where joint actions are sampled uniformly at random (i.e., $\rho^{\text{rand}}(\mathbf{a}) = \frac{1}{|\mathcal{A}|}$ for any joint action $\mathbf{a} \in \mathcal{A}$). $\rho^{\text{rand}}$ satisfies $c_{\size}^{\bm{\varphi}^\star}=|\mathcal{A}|^n$ by equation (7) in the main text and covers all coalition sizes, thus satisfying Assumption 1.}
    \item {To validate the need of Assumption 1, we also consider a policy $\rho^{\text{1Rand}}$ that does not necessarily satisfy it, which induces a uniformly random strategy $\rho_1^{\text{1Rand}}$ for agent $1$ (i.e., $\rho_1^{\text{1Rand}}(a_1)=\frac{1}{|\mathcal{A}_1|}$ for any $a_1 \in \mathcal{A}_1$); others always deterministically follow the second action that was inserted to their action set during its random generation.}
\end{enumerate}

{Algorithm 1 under semi-bandit feedback is then implemented using the exploration bonuses in equation (9) of the main text with a confidence level of $\delta = 10^{-2}$. As exactly solving the optimization problem (6) in Algorithm 1 is computationally infeasible (see Footnote 3), especially for large joint action spaces, we employ a practical \textit{coordinate-descent} update scheme, known to converge to stationary points even in some non-convex settings and is widely used due to high efficiency in large-scale optimization problems (see, e.g., \cite{nesterov2012efficiency,wright2015coordinate}). It approximates the global surrogate minimization via iterative smoothing toward per-agent optimistic best responses. Formally, at each round, we sequentially update each agent's mixed strategy by taking a convex combination between her current strategy and an optimistic best response to the estimated empirical utilities of the other agents, yielding a smoothed best-response or fictitious-play-like dynamics with gradual convergence. Here, expectations of utility estimators and exploration bonuses are computed by Monte Carlo sampling with $100$ samples per term.}

\subsubsection{Empirical Results}
\label{supp:results}
Figure \ref{supp:fig:experiments} reports the mean approximate duality gap $\widehat{\gap}^\delta(\bm{\varphi}^{\out})$ of the strategy $\bm{\varphi}^{\out}$ produced by Algorithm 1 versus the size of datasets generated by $\rho^{\text{rand}}$ over $5$ runs with different seeds. Further, Figure \ref{fig:experiments2} reports the mean approximate duality gap $\widehat{\gap}^\delta(\bm{\varphi}^{\out})$ of the strategy $\bm{\varphi}^{\out}$ produced by Algorithm 1 versus the size of datasets generated by $\rho^{\text{1Rand}}$ over $5$ runs with different seeds. By Lemma \ref{supp:lemma:surrogate}, $\widehat{\gap}^\delta(\bm{\varphi}^{\out})$ upper bounds the \textit{true} duality gap with high probability, thus quantifying how close the learned strategy is to Nash stability. The top two rows examine the effect of varying the number of agents $n \in \{5,10,15,20,25\}$ while fixing $k=5$, whereas the bottom two rows vary the number of candidate coalitions $k \in \{5,10,15,20,25\}$ while fixing $n=10$. In each experiment, we evaluate our algorithm on datasets of sizes $M \in \{10^2, 5\cdot 10^3, 10^4, 2\cdot 10^4, 3\cdot 10^4\}$.

By Figure \ref{supp:fig:experiments}, Algorithm 1 consistently reaches a low approximation to Nash stability under $\rho^{\text{rand}}$, but fails to do so under $\rho^{\text{1Rand}}$ as noted in Figure \ref{fig:experiments2}. This supports the practical relevance of Assumption 1 in the main text: $\rho^{\text{rand}}$ satisfies it, allowing effective learning, whereas $\rho^{\text{1Rand}}$ may violate it, yielding poorer performance. For $\rho^{\text{rand}}$, the scaling in $n,k,M$ also matches Theorem \ref{supp:thm:semi-bandit}, Corollary~\ref{supp:coro:optimal} and Remark 5 in the main text. For any fixed number of agents $n$, the gap decreases rapidly when the dataset size $M$ is small, but slowly for larger $M$, consistent with the $1/\sqrt{M}$ factor in Theorem \ref{supp:thm:semi-bandit}. Further, obtaining a certain approximation requires larger datasets as $n$ increases, in line with Corollary~\ref{supp:coro:optimal} and Remark 5 in the main text.

\begin{figure}[hp!]
    \begin{tabular}{cc}
        \begin{subfigure}[b]{0.5\textwidth}
            \centering
            \includegraphics[width=\linewidth]{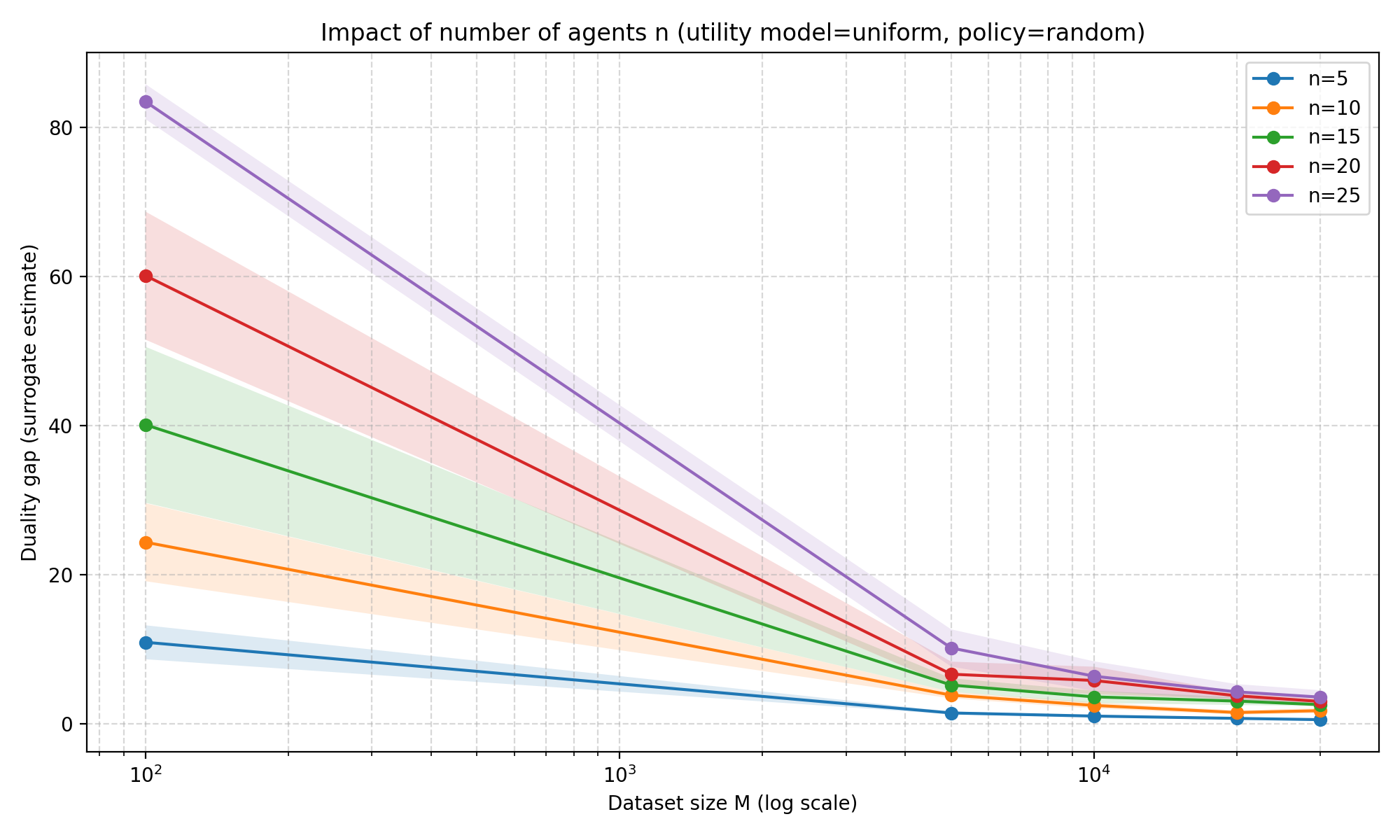}
        \end{subfigure}
        \begin{subfigure}[b]{0.5\textwidth}
            \centering
            \includegraphics[width=\linewidth]{figures/impact_n_size_uniform_random.png}
        \end{subfigure}
        \\
        \begin{subfigure}[b]{0.5\textwidth}
            \centering
            \includegraphics[width=\linewidth]{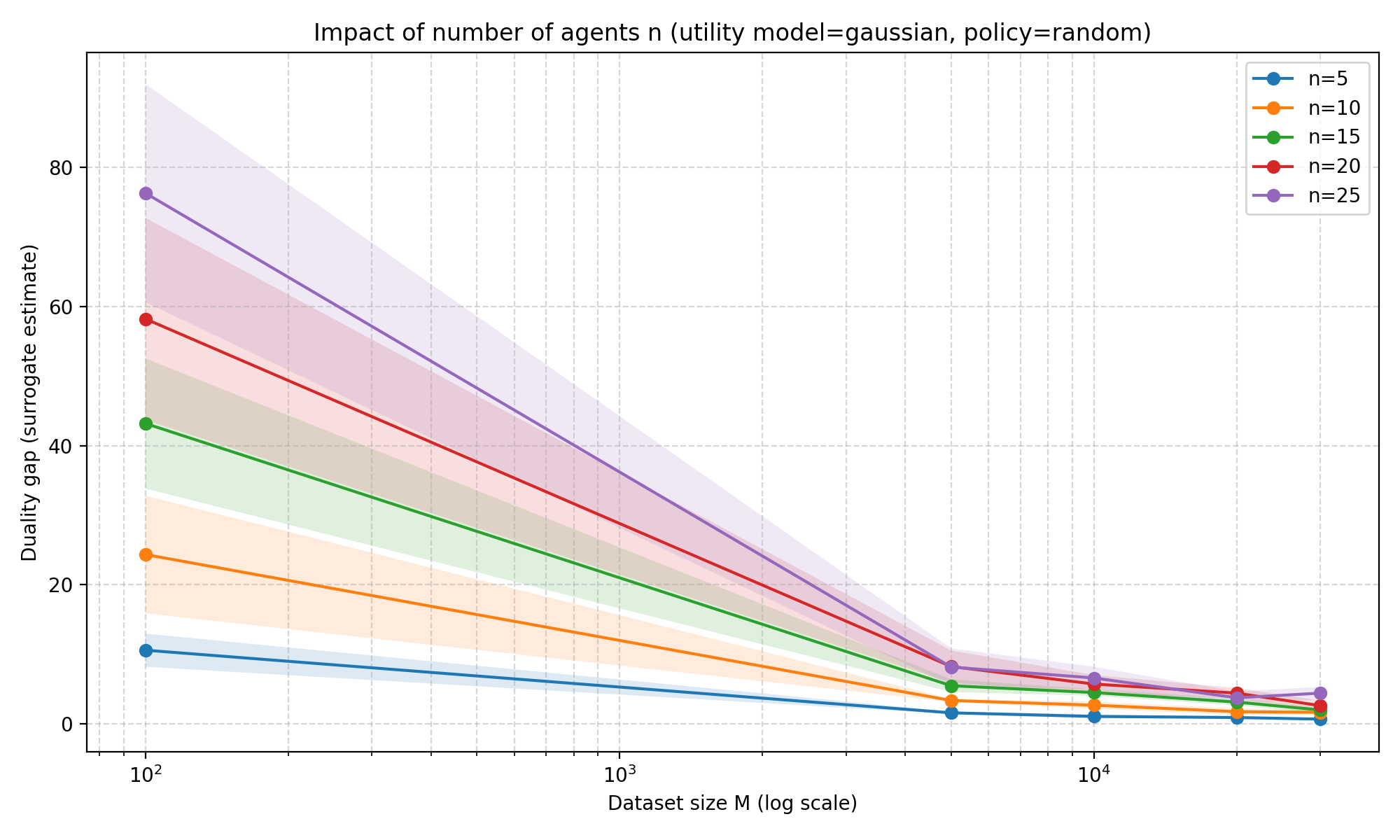}
        \end{subfigure}
        \begin{subfigure}[b]{0.5\textwidth}
            \centering
            \includegraphics[width=\linewidth]{figures/impact_n_size_gaussian_random.png}
        \end{subfigure}
        \\
        \begin{subfigure}[b]{0.5\textwidth}
            \centering
            \includegraphics[width=\linewidth]{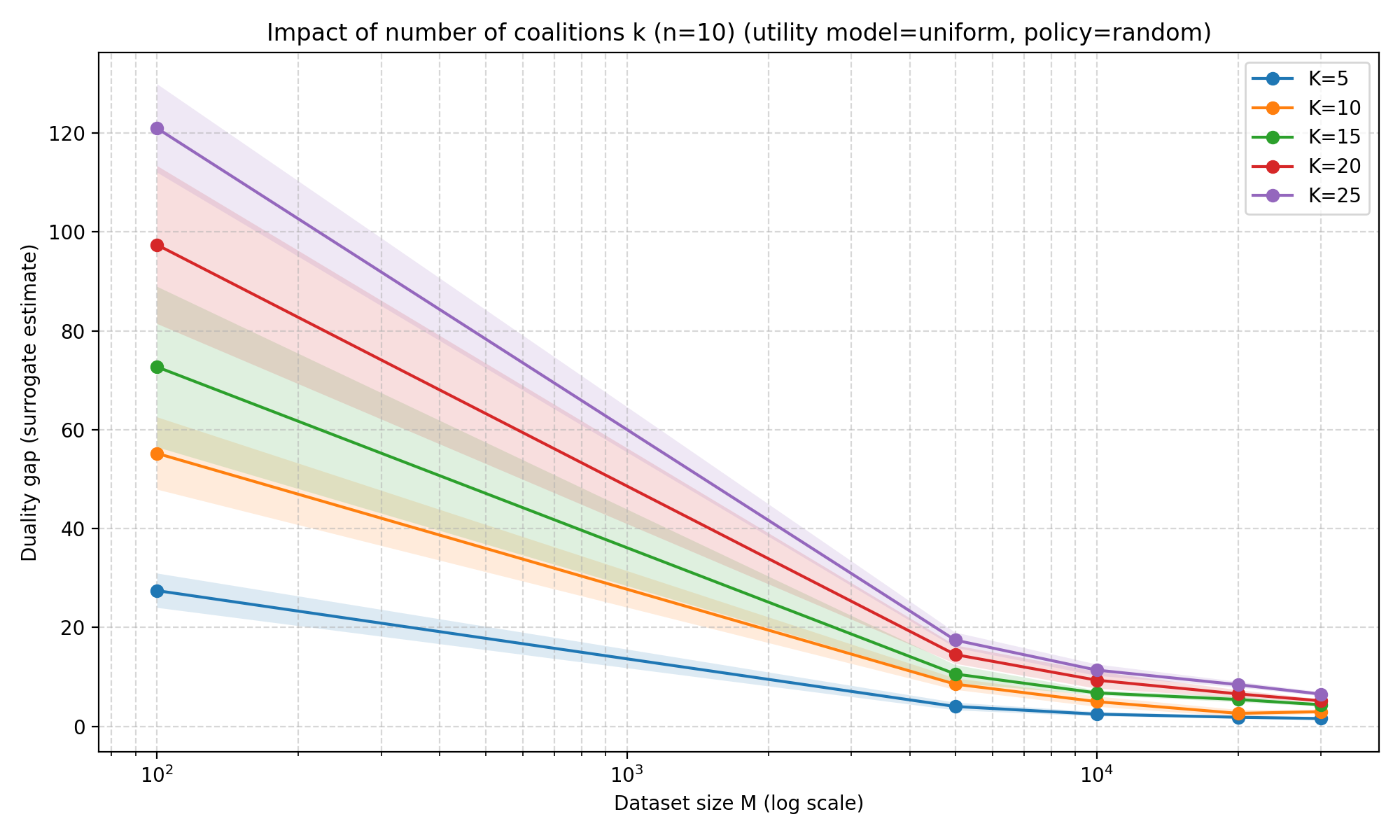}
        \end{subfigure}
        \begin{subfigure}[b]{0.5\textwidth}
            \centering
            \includegraphics[width=\linewidth]{figures/impact_K_n10_size_uniform_random.png}
        \end{subfigure}
        \\
        \begin{subfigure}[b]{0.5\textwidth}
            \centering
            \includegraphics[width=\linewidth]{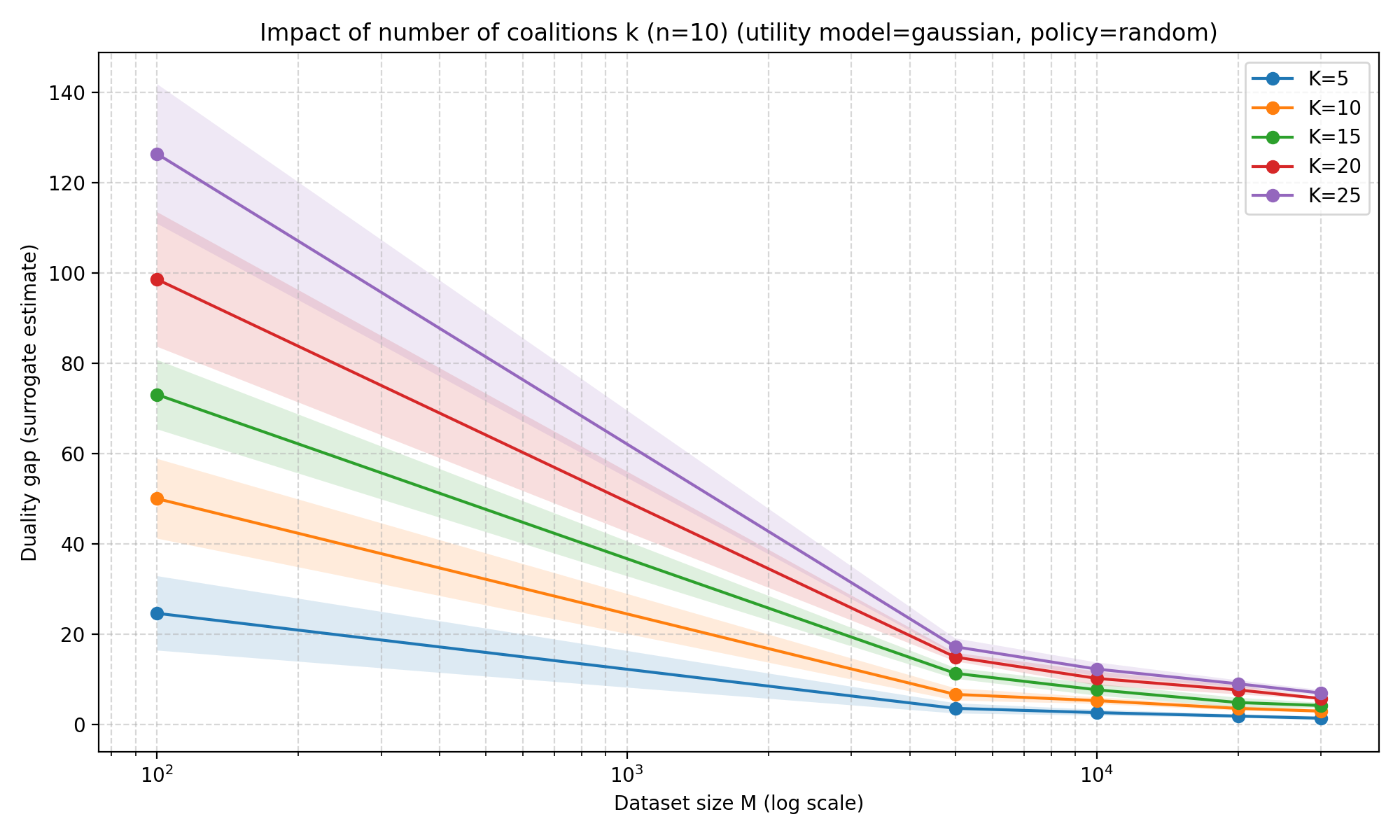}
        \end{subfigure}
        \begin{subfigure}[b]{0.5\textwidth}
            \centering
            \includegraphics[width=\linewidth]{figures/impact_K_n10_size_gaussian_random.png}
        \end{subfigure}
    \end{tabular}
    \caption{{Mean approximate duality gap versus the size of datasets generated by $\rho^{\text{rand}}$ over $5$ runs with different seeds, for varying numbers of agents (top two rows) and candidate coalitions (bottom two rows). Shaded regions indicate standard deviations.}}
    \label{supp:fig:experiments}
\end{figure} 

\begin{figure}[hp!]
    \begin{tabular}{cc}
        \begin{subfigure}[b]{0.5\textwidth}
            \centering
            \includegraphics[width=\linewidth]{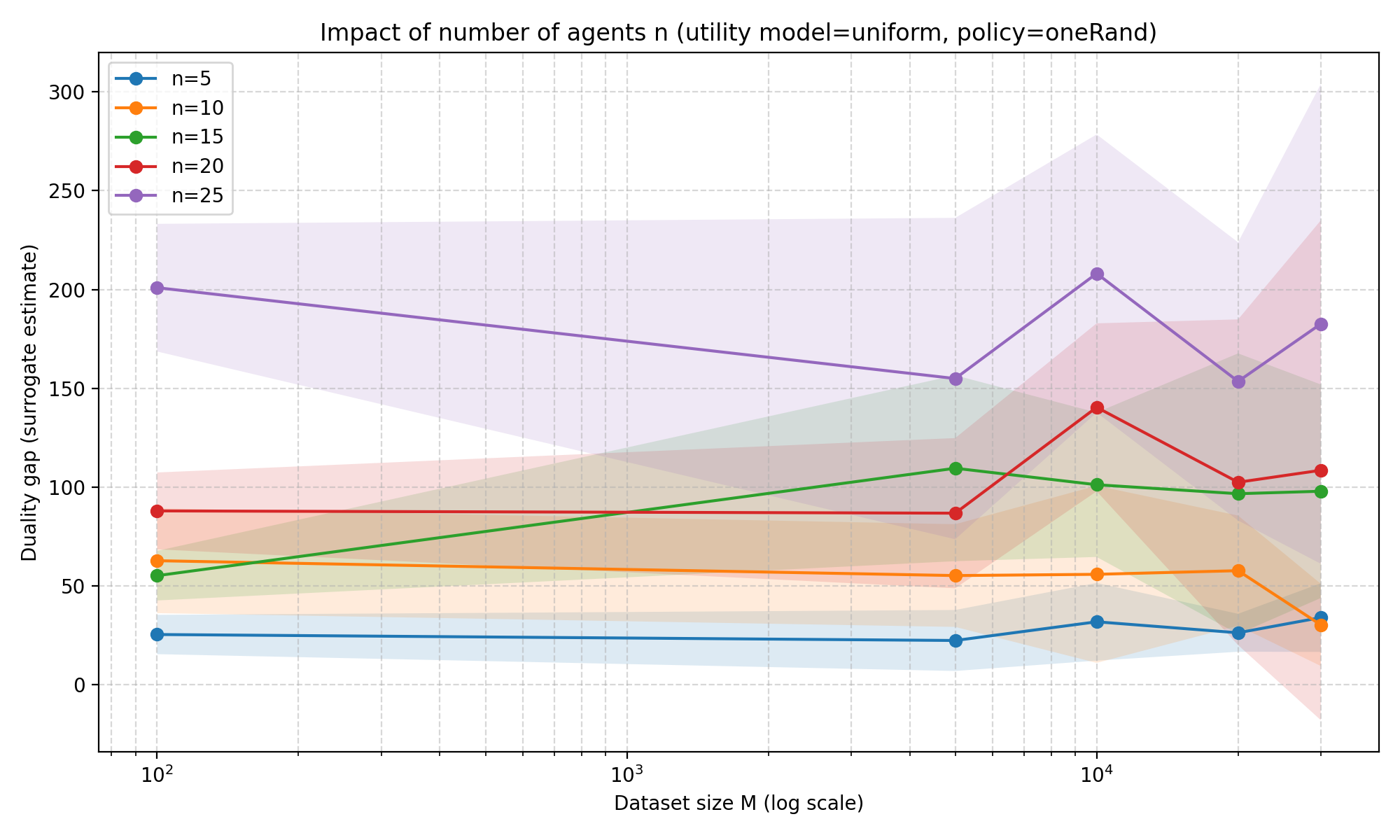}
        \end{subfigure}
        \begin{subfigure}[b]{0.5\textwidth}
            \centering
            \includegraphics[width=\linewidth]{figures/impact_n_size_uniform_oneRand.png}
        \end{subfigure}
        \\
        \begin{subfigure}[b]{0.5\textwidth}
            \centering
            \includegraphics[width=\linewidth]{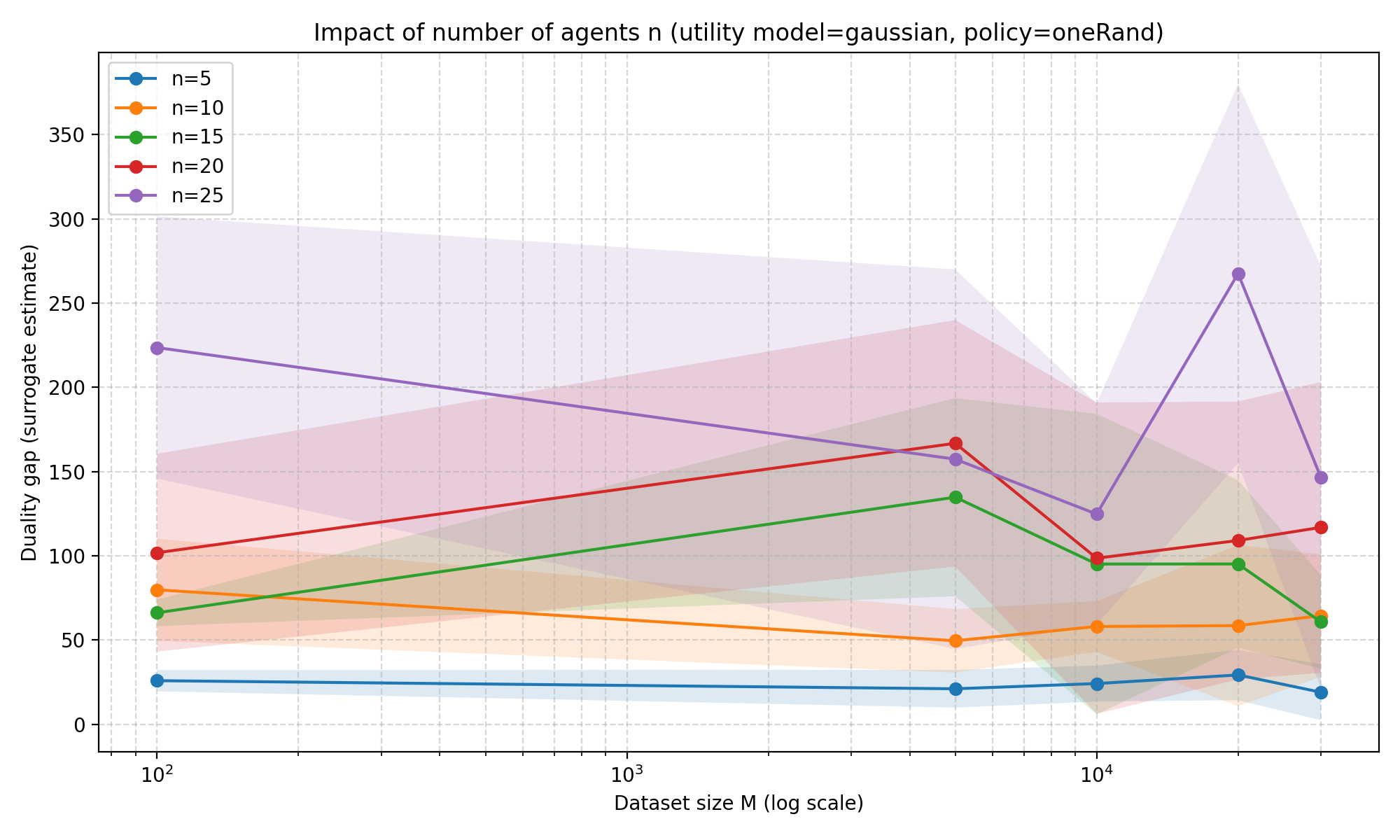}
        \end{subfigure}
        \begin{subfigure}[b]{0.5\textwidth}
            \centering
            \includegraphics[width=\linewidth]{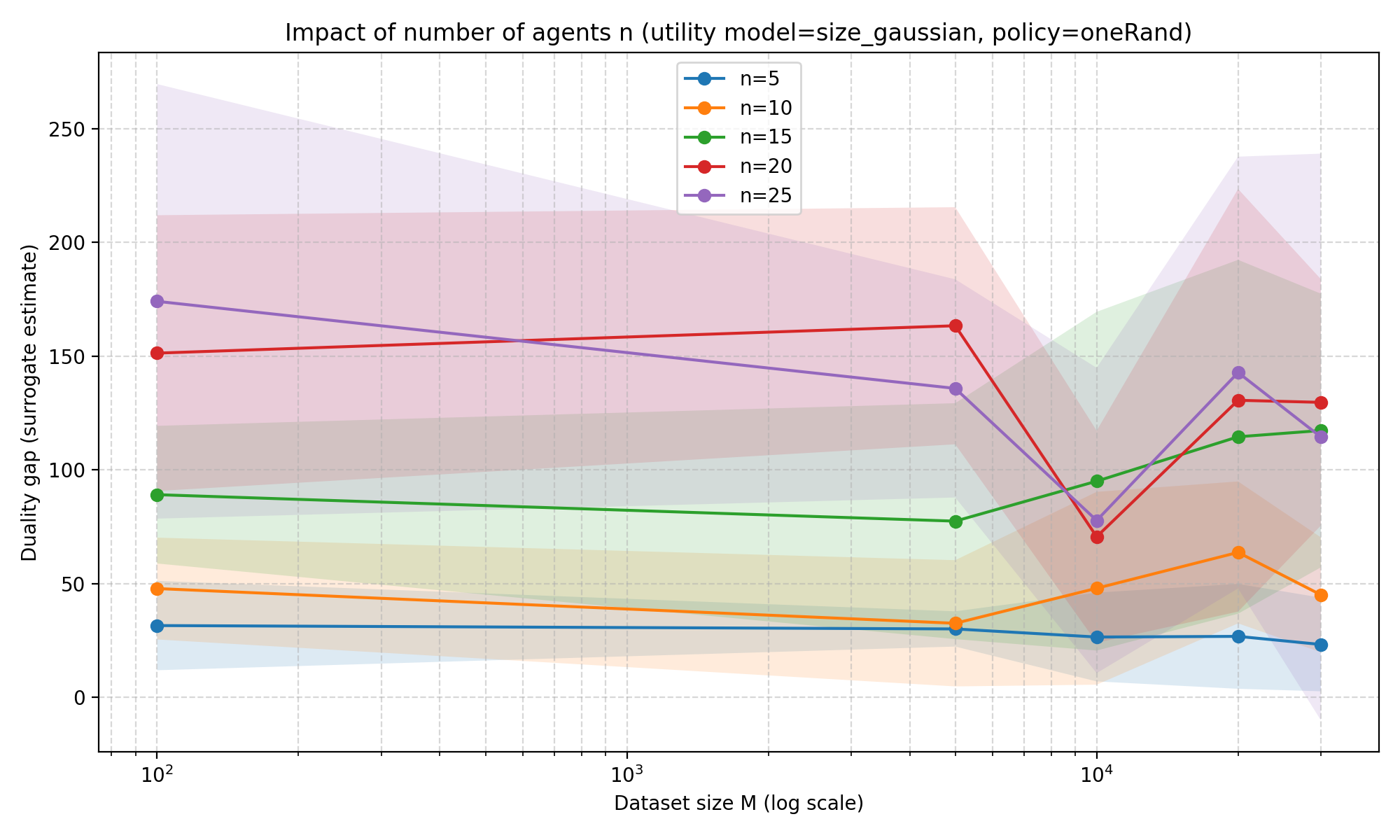}
        \end{subfigure}
        \\
        \begin{subfigure}[b]{0.5\textwidth}
            \centering
            \includegraphics[width=\linewidth]{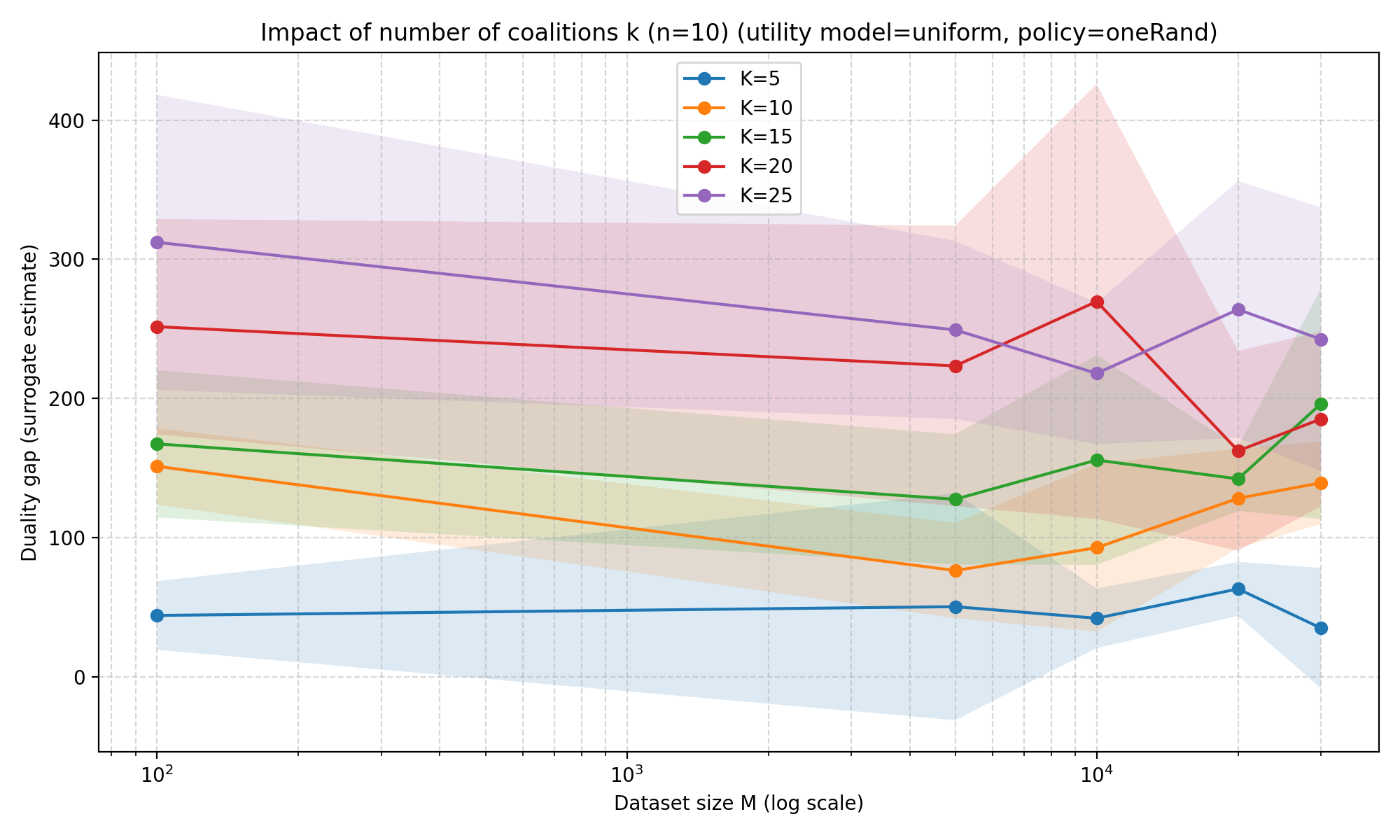}
        \end{subfigure}
        \begin{subfigure}[b]{0.5\textwidth}
            \centering
            \includegraphics[width=\linewidth]{figures/impact_K_n10_size_uniform_oneRand.png}
        \end{subfigure}
        \\
        \begin{subfigure}[b]{0.5\textwidth}
            \centering
            \includegraphics[width=\linewidth]{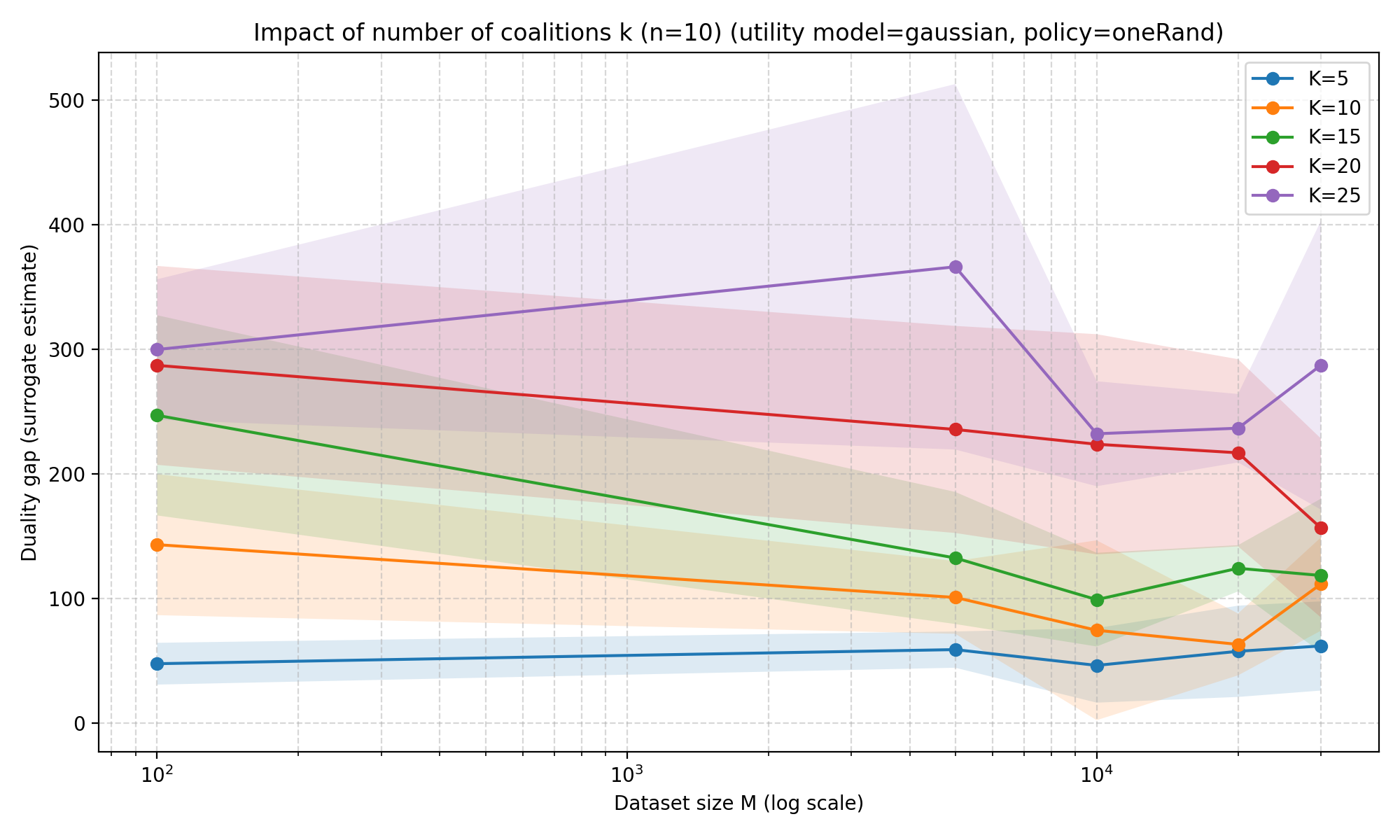}
        \end{subfigure}
        \begin{subfigure}[b]{0.5\textwidth}
            \centering
            \includegraphics[width=\linewidth]{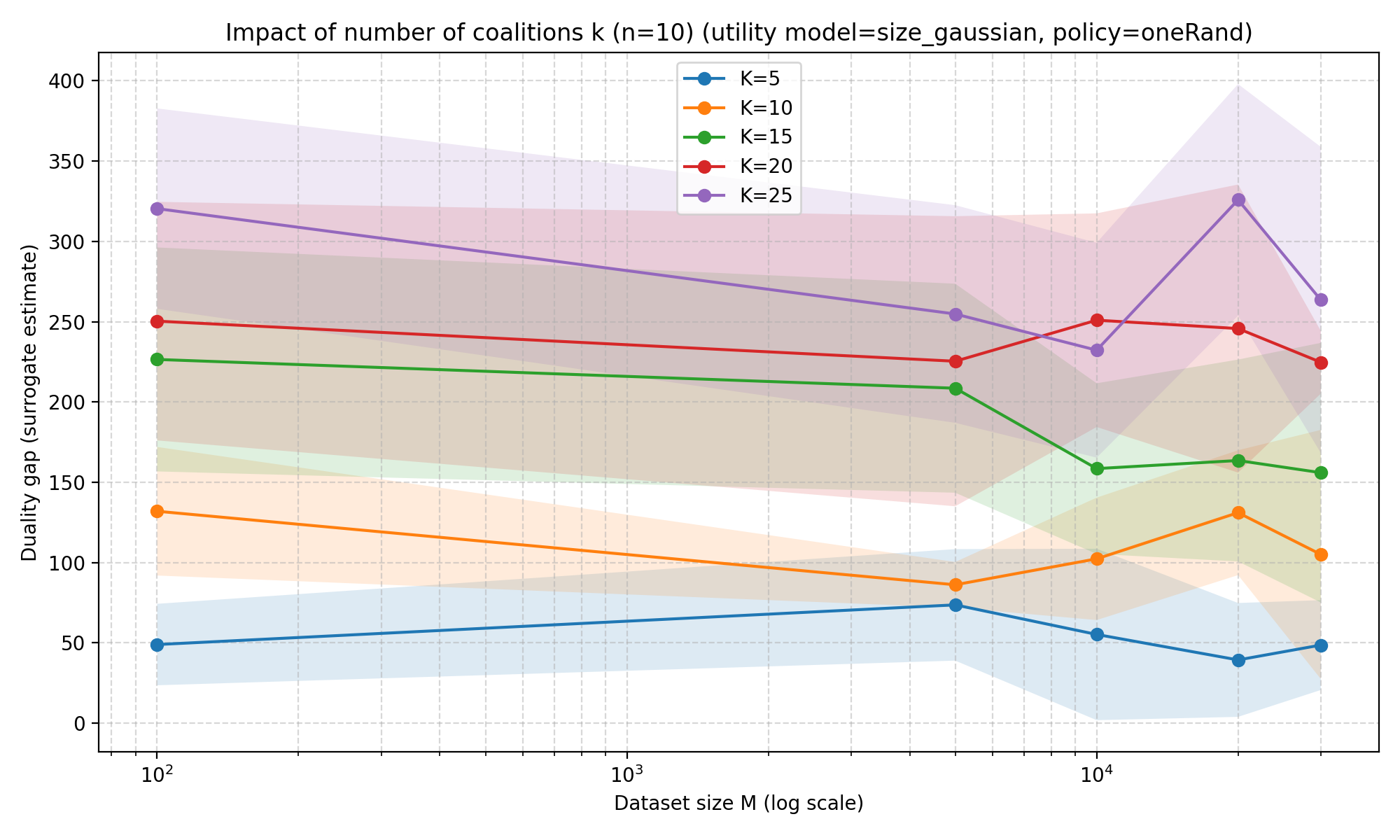}
        \end{subfigure}
    \end{tabular}
    \caption{{Mean approximate duality gap versus the size of datasets generated by$\rho^{\text{1Rand}}$ over $5$ runs with different seeds, for varying numbers of agents (top two rows) and candidate coalitions (bottom two rows). Shaded regions indicate standard deviations.}}
    \label{fig:experiments2}
\end{figure} 

\newpage
\subsubsection{Mixed Coalition Size Effects}
\label{supp:mixed effects}
We also construct stylized games where candidate coalitions differ in how their sizes affect agents’ utilities. There are $k=5$ candidate coalitions and the action set of each agent $i$ is $\mathcal{A}_i=\{\{1,2\}, \{1,3,5\}, \{4,5\}\}$. Given a joint action $\mathbf{a}$, five i.i.d. Gaussian random variables $\mu_1,\mu_2,\mu_3,\mu_4,\mu_5\sim \mathcal{N}(0,1)$ and a pair of agents $i,j$ with $a_i=a_j=\ell$, we generate the mutual utility $v_{i,j}^\ell$ based on the value of $\ell$:
\begin{enumerate}
    \item \underline{$\ell=1$:} Coalition $1$ penalizes overcrowding, i.e., $v_{i,j}^1=\mu_1-\frac{|C_1^{\mathbf{a}}|}{n+1}$

    \item \underline{$\ell=5$:} Coalition $5$ rewards coordination, i.e., $v_{i,j}^5=\mu_5+\frac{|C_1^{\mathbf{a}}|}{n+1}$.
    
    \item \underline{$\ell \in\{2,4\}$:} Coalitions $2$ and $4$ represent persistently costly environments, i.e., $v_{i,j}^\ell=\mu_\ell-1$ for $\ell \in\{2,4\}$.

    \item \underline{$\ell=3$:} Coalition 3 serves as a neutral stochastic baseline, i.e., $v_{i,j}^3=\mu_3$.
\end{enumerate}
Thus, the pure Nash-stable outcome is $\{1,3,5\}^n$.

This design reflects common trade-offs in coalition formation: agents must decide whether to join small, high-quality groups to avoid overcrowding, or larger, more cooperative ones that benefit from coordination, while avoiding consistently poor environments. It thereby provides a controlled and interpretable testbed for evaluating our algorithm. 

In fact, for this set of experiments with $n\geq 2$, both the uniformly random exploration policy $\rho^{\text{rand}}$ and the exploration policy $\rho^{\text{1Rand}}$ satisfy Assumption 1. Indeed,  the exploration policy $\rho^{\text{1Rand}}$ now satisfies Assumption 1 because it covers all coalition sizes resulting from unilateral deviations from the pure NS strategy $\{1,3,5\}^n$. Further, since each agent's action set is of size exactly $3$, the coalition size coefficient of $\rho^{\text{rand}}$ is $c_{\size}^{\bm{\varphi}^\star}=3^n$, while the coalition size coefficient of $\rho^{\text{rand}}$ is $c_{\size}^{\bm{\varphi}^\star}=3$. Therefore, to clarify the properties of $\rho^{\text{1Rand}}$, we rename it as the \textit{\textbf{coalition size exploration policy}}, and thus hereafter replace the notation $\rho^{\text{1Rand}}$ with $\rho^{\text{coalitionSize}}$.

\paragraph{Empirical Results.}

Figure \ref{fig:mixed effects} reports the mean approximate duality gap $\widehat{\gap}^\delta(\bm{\varphi}^{\out})$ of the strategy $\bm{\varphi}^{\out}$ produced by Algorithm 1 versus the size of datasets generated by $\rho^{\text{rand}}$ over $5$ runs with different seeds. Further, Figure \ref{fig:mixed effects2} reports the mean approximate duality gap $\widehat{\gap}^\delta(\bm{\varphi}^{\out})$ of the strategy $\bm{\varphi}^{\out}$ produced by Algorithm 1 versus the size of datasets generated by $\rho^{\text{coalitionSize}}$ over $5$ runs with different seeds. By Lemma \ref{supp:lemma:surrogate}, $\widehat{\gap}^\delta(\bm{\varphi}^{\out})$ upper bounds the \textit{true} duality gap with high probability, thus quantifying how close the learned strategy is to Nash stability. In both figures, we examine the effect of varying the number of agents $n \in \{3,4,5,6\}$ while fixing $k=5$, while evaluating our algorithm on datasets of sizes $M \in \{100, 2000, 4000, 6000, 8000, 10000, 20000, 30000\}$ for each fixed number of agents.

Unlike Figures \ref{supp:fig:experiments}--\ref{fig:experiments2}, Algorithm 1 consistently reaches a low approximation to Nash stability under both $\rho^{\text{rand}}$ and $\rho^{\text{coalitionSize}}$, opposed to our prior experiment where our algorithm fails to do so for $\rho^{\text{coalitionSize}}$, as noted in Figure \ref{fig:experiments2}. This supports the practical relevance of Assumption 1 in the main text: both $\rho^{\text{rand}}$ and $\rho^{\text{coalitionSize}}$ satisfy it, allowing effective learning. For both exploration policies, the scaling in $n,k,M$ also matches Theorem \ref{supp:thm:semi-bandit}, Corollary~\ref{supp:coro:optimal} and Remark 5 in the main text, similarly to our empirical results in the main text and Appendices \ref{supp:setup}--\ref{supp:results}. 

\begin{figure}[hp!]
    \begin{tabular}{cc}
        \begin{subfigure}[b]{0.5\textwidth}
            \centering
            \includegraphics[width=\linewidth]{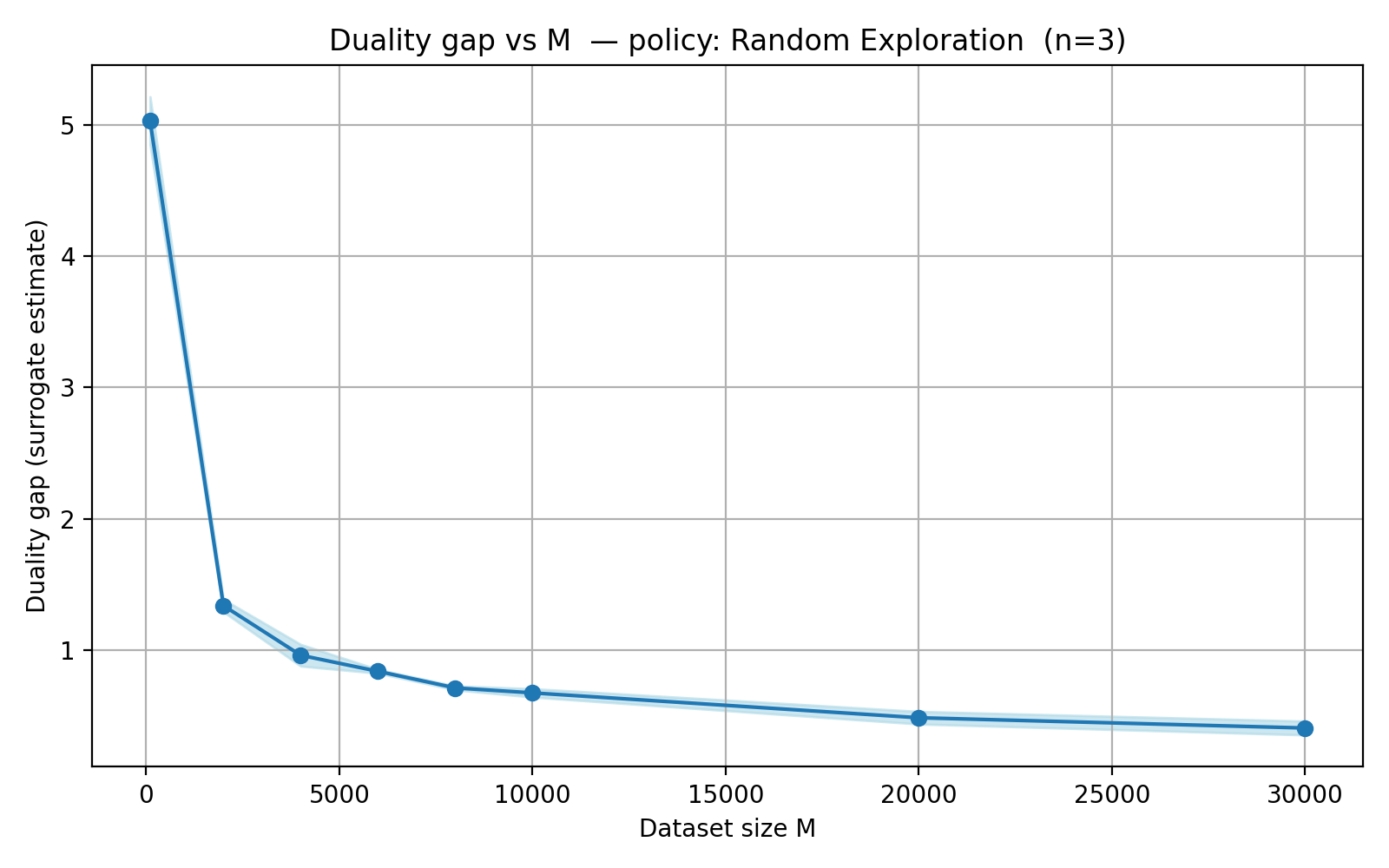}
        \end{subfigure}
        \begin{subfigure}[b]{0.5\textwidth}
            \centering
            \includegraphics[width=\linewidth]{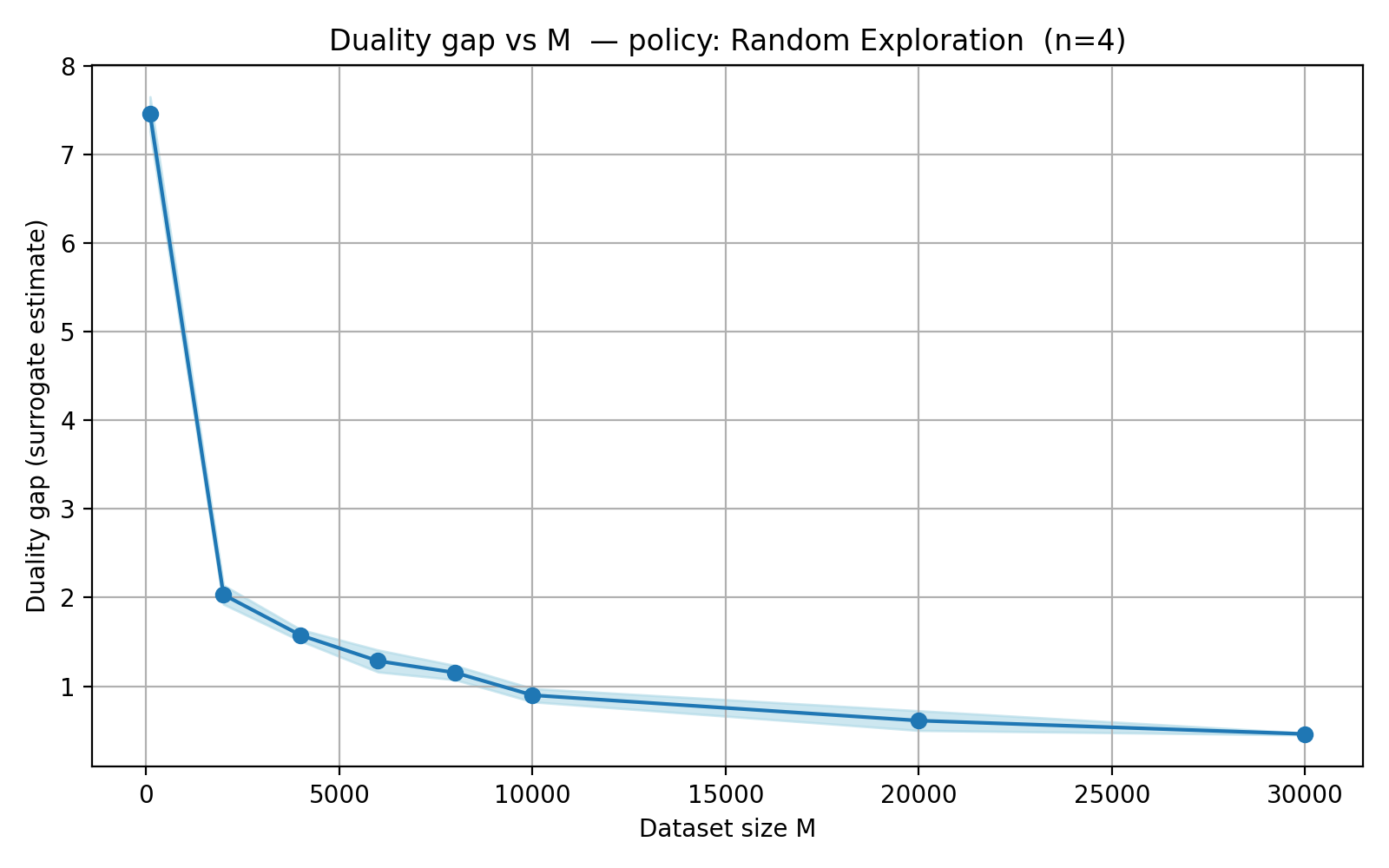}
        \end{subfigure}
        \\
        \begin{subfigure}[b]{0.5\textwidth}
            \centering
            \includegraphics[width=\linewidth]{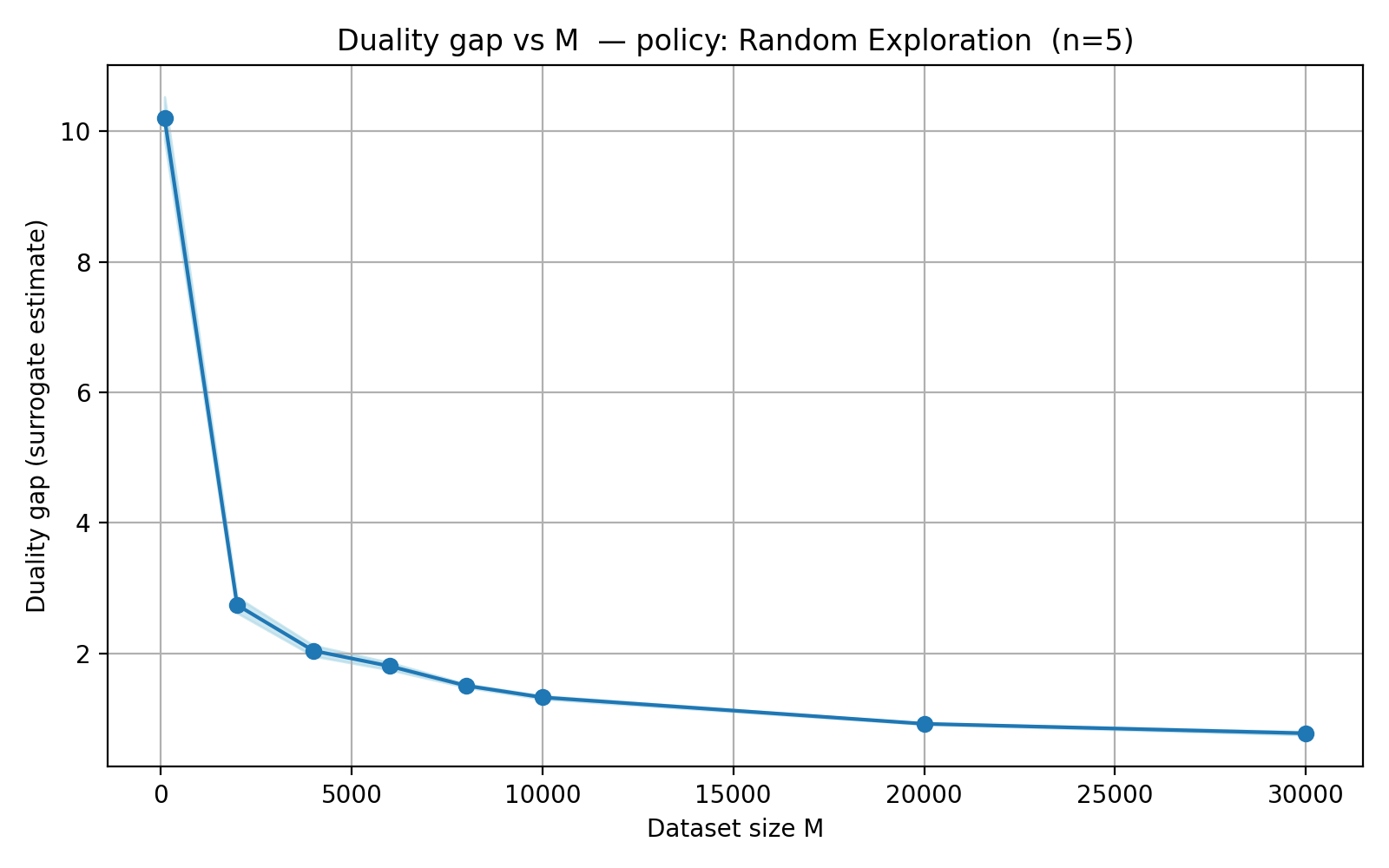}
        \end{subfigure}
        \begin{subfigure}[b]{0.5\textwidth}
            \centering
            \includegraphics[width=\linewidth]{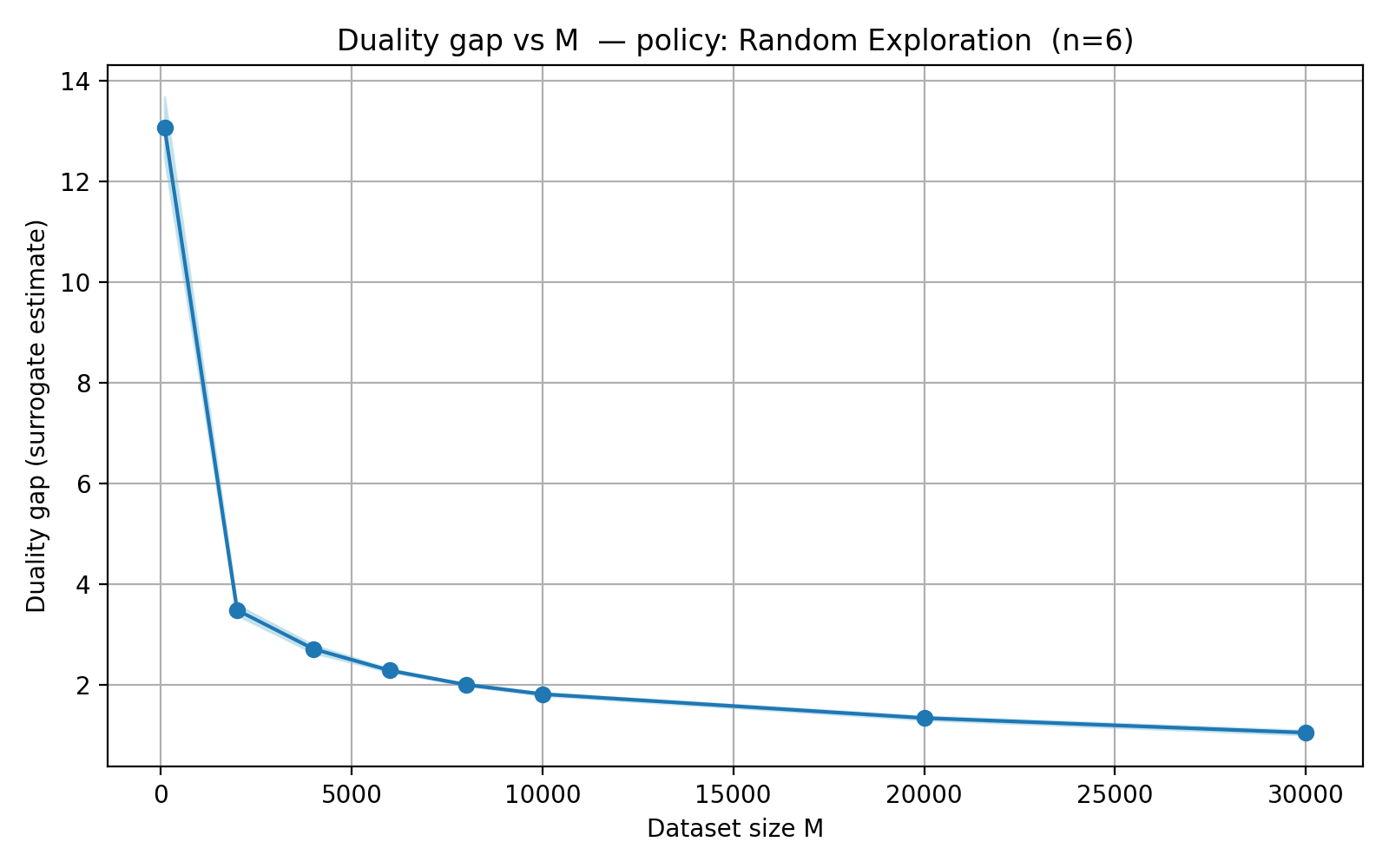}
        \end{subfigure}
    \end{tabular}
    \caption{{Mean approximate duality gap versus the size of datasets generated by $\rho^{\text{rand}}$ over $5$ runs with different seeds for the utility generation model with mixed-coalition-size effects under \textit{semi-bandit} feedback. Here, the number of agents is varied over $n \in \{3,4,5,6\}$. Shaded regions indicate standard deviations.}}
    \label{fig:mixed effects}
\end{figure} 

\begin{figure}[hp!]
    \begin{tabular}{cc}
        \begin{subfigure}[b]{0.5\textwidth}
            \centering
            \includegraphics[width=\linewidth]{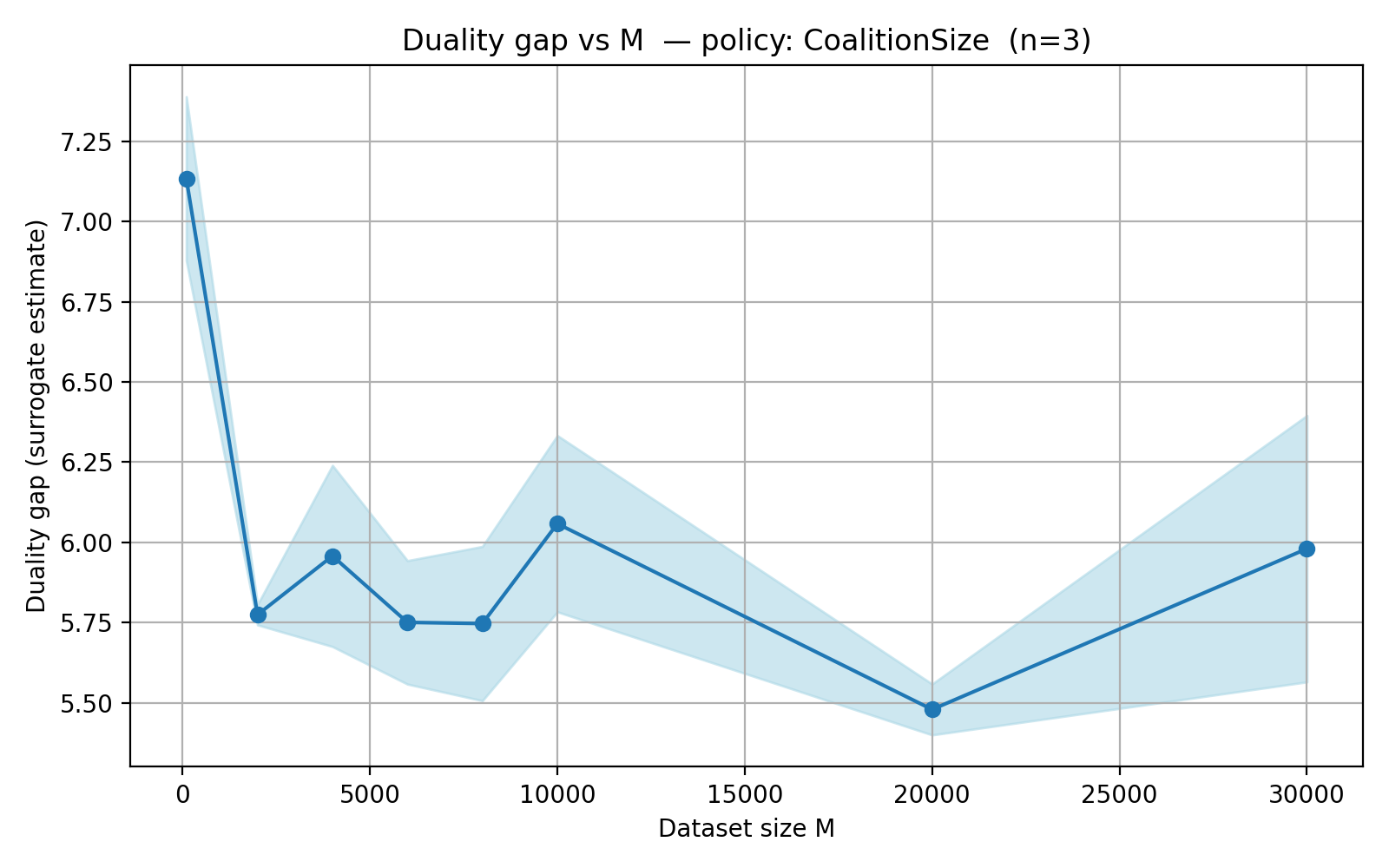}
        \end{subfigure}
        \begin{subfigure}[b]{0.5\textwidth}
            \centering
            \includegraphics[width=\linewidth]{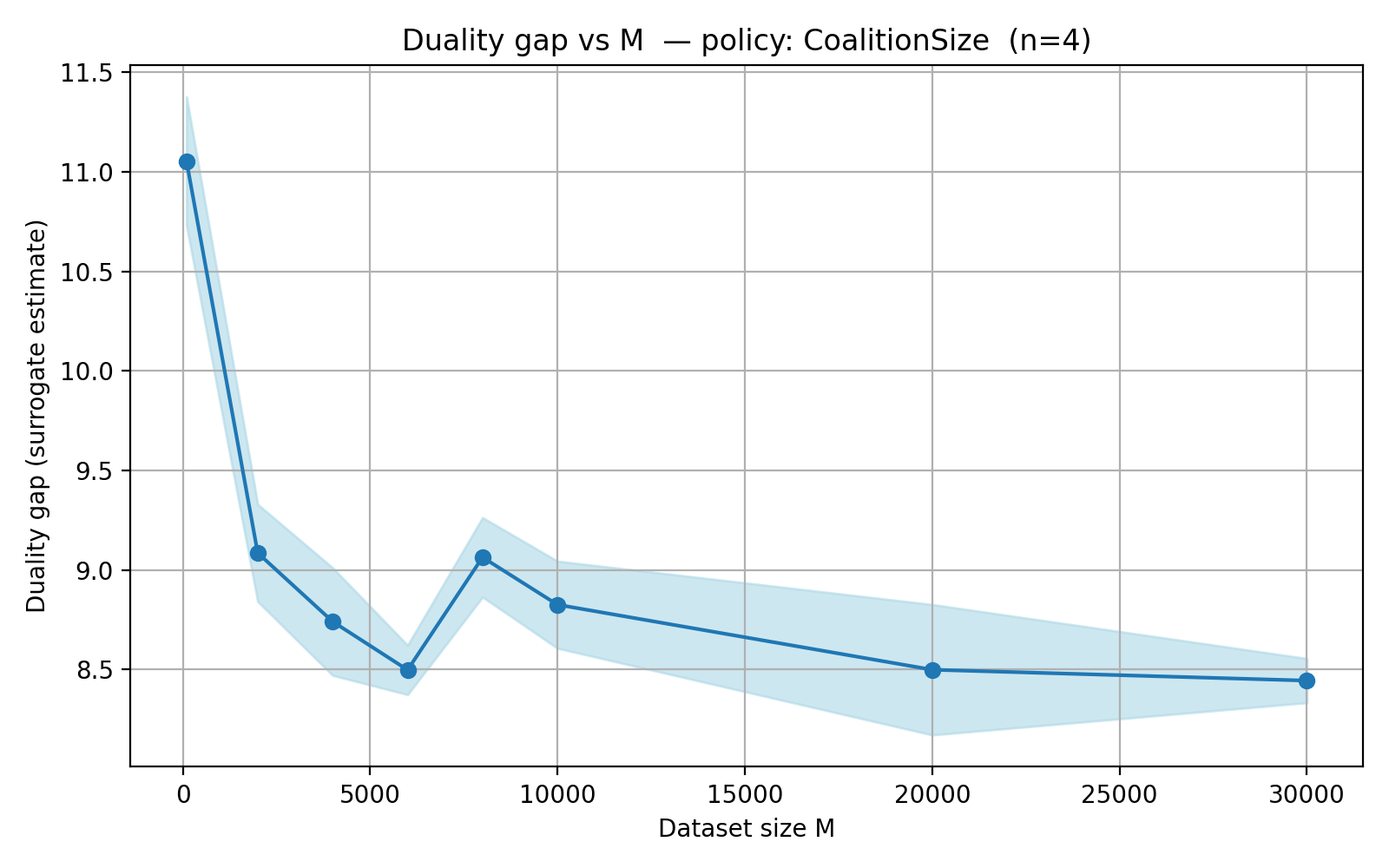}
        \end{subfigure}
        \\
        \begin{subfigure}[b]{0.5\textwidth}
            \centering
            \includegraphics[width=\linewidth]{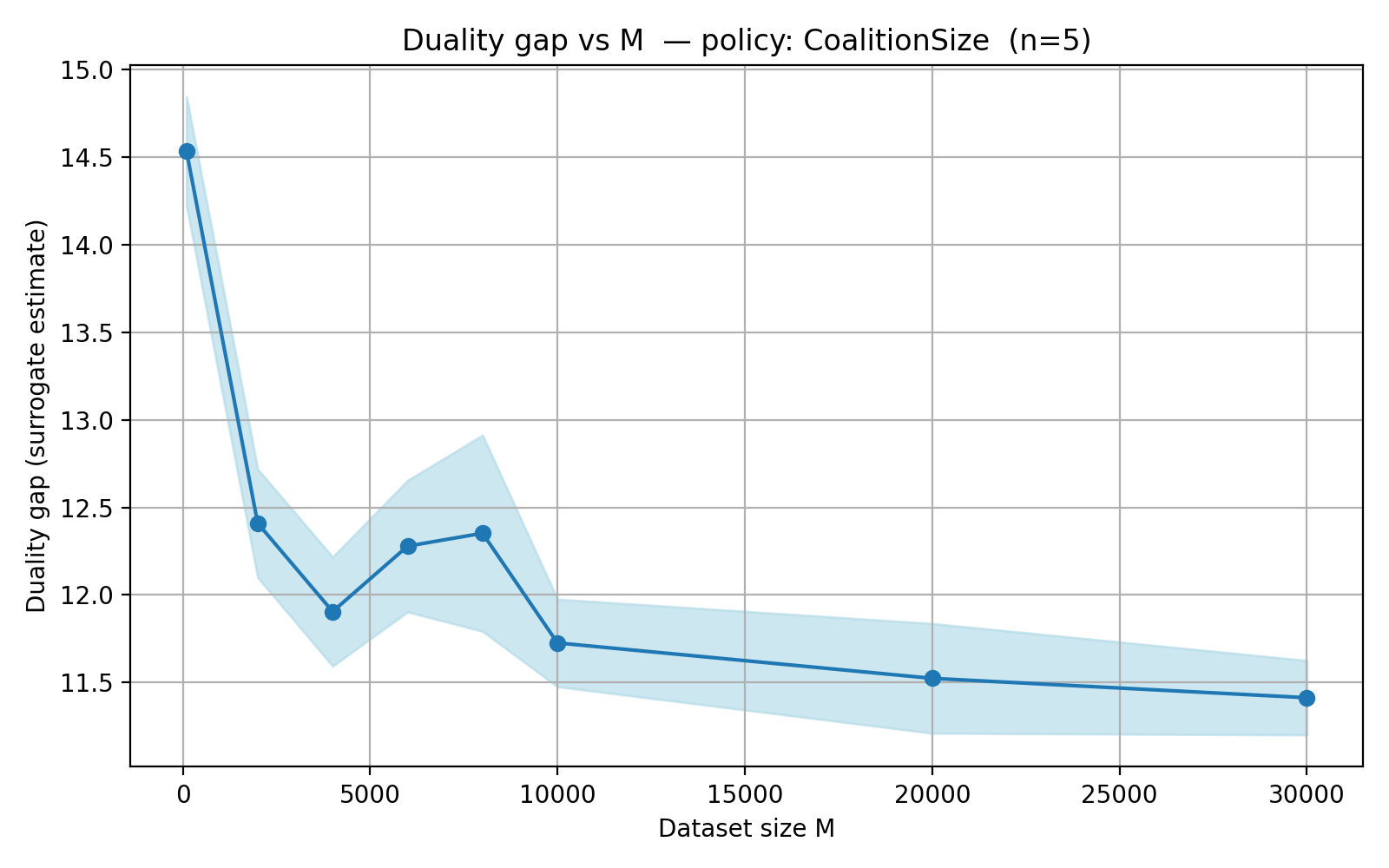}
        \end{subfigure}
        \begin{subfigure}[b]{0.5\textwidth}
            \centering
            \includegraphics[width=\linewidth]{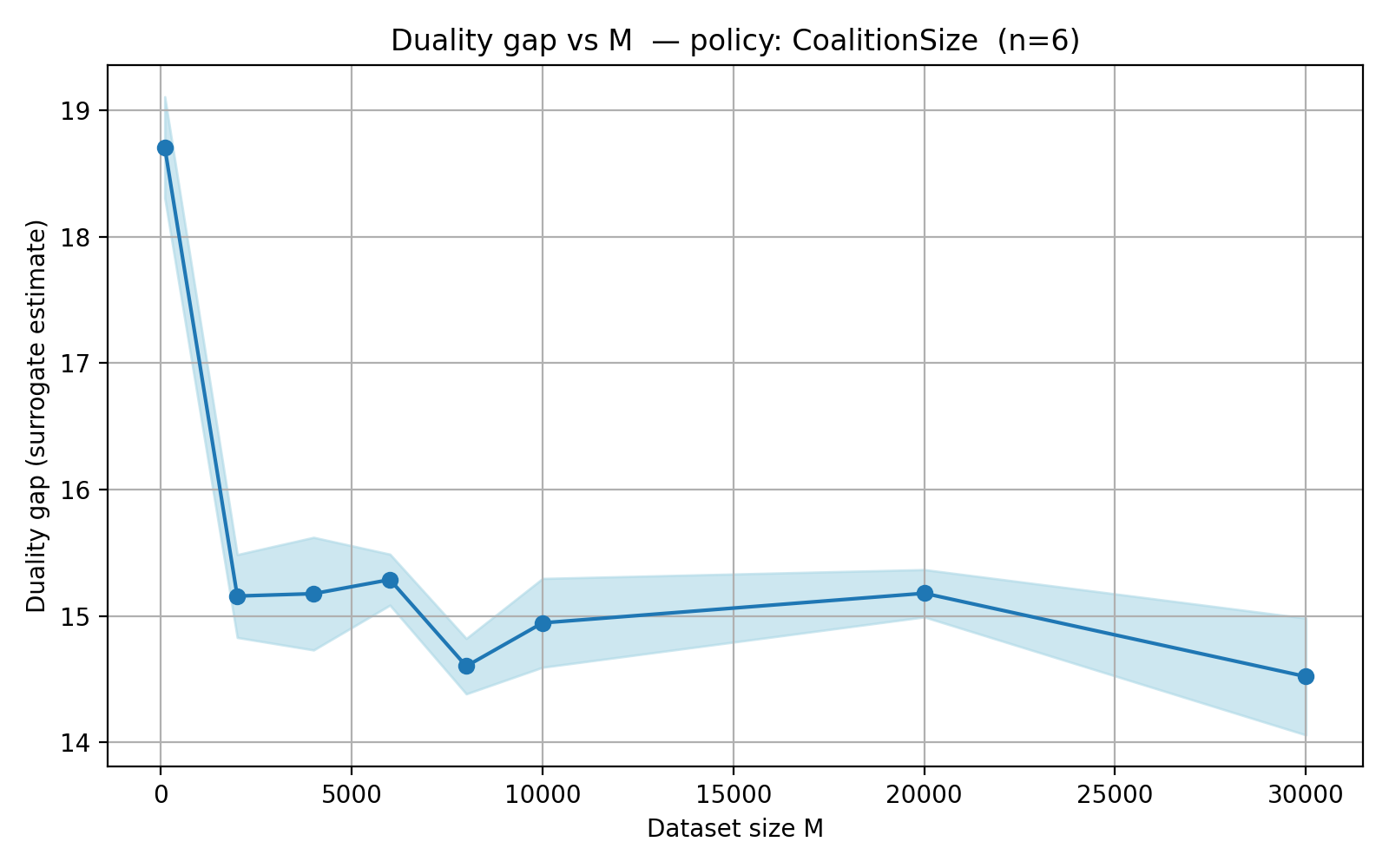}
        \end{subfigure}
    \end{tabular}
    \caption{{Mean approximate duality gap versus the size of datasets generated by $\rho^{\text{coalitionSize}}$ over $5$ runs with different seeds for the utility generation model with mixed-coalition-size effects under \textit{semi-bandit} feedback. Here, the number of agents is varied over $n \in \{3,4,5,6\}$. Shaded regions indicate standard deviations.}}
    \label{fig:mixed effects2}
\end{figure} 

\newpage

\subsection{Empirical Evaluations under Bandit Feedback}
In this section, we supply additional experimental results for bandit feedback. We herein focus on the utility generation model with mixed coalition-size-effects from Appendix \ref{supp:mixed effects}, which exhibits trends similar to our empirical results under semi-bandit feedback. 

For this set of experiments with $n\geq 2$, note that only the uniformly random exploration policy $\rho^{\text{rand}}$ satisfies Assumption 2, but the exploration policy $\rho^{\text{coalitionSize}}$ does not. 

Figure \ref{fig:mixed effects bandit} reports the mean approximate duality gap $\widehat{\gap}^\delta(\bm{\varphi}^{\out})$ of the strategy $\bm{\varphi}^{\out}$ produced by Algorithm 1 versus the size of datasets generated by $\rho^{\text{rand}}$ over $5$ runs with different seeds. Further, Figure \ref{fig:mixed effects2 bandit} reports the mean approximate duality gap $\widehat{\gap}^\delta(\bm{\varphi}^{\out})$ of the strategy $\bm{\varphi}^{\out}$ produced by Algorithm 1 versus the size of datasets generated by $\rho^{\text{coalitionSize}}$ over $5$ runs with different seeds. By Lemma \ref{supp:lemma:surrogate}, $\widehat{\gap}^\delta(\bm{\varphi}^{\out})$ upper bounds the \textit{true} duality gap with high probability, thus quantifying how close the learned strategy is to Nash stability. In both figures, we examine the effect of varying the number of agents $n \in \{3,4,5,6\}$ while fixing $k=5$, while evaluating our algorithm on datasets of sizes $M \in \{100, 2000, 4000, 6000, 8000, 10000, 20000, 30000\}$ for each fixed number of agents.

Unlike Figures \ref{fig:mixed effects}--\ref{fig:mixed effects2}, Algorithm 1 consistently reaches a low approximation to Nash stability under $\rho^{\text{rand}}$, but fails to do so under $\rho^{\text{coalitionSize}}$ as noted in Figure \ref{fig:mixed effects2 bandit}. This supports the practical relevance of Assumption 1 in the main text: $\rho^{\text{rand}}$ satisfies it, allowing effective learning, whereas $\rho^{\text{coalitionSize}}$ may violate it, yielding poorer performance. For $\rho^{\text{rand}}$, the scaling in $n,k,M$ also matches Theorem \ref{supp:thm:bandit}, Corollary~\ref{coro:optimal} and Remark 6 in the main text.

\begin{figure}[hp!]
    \begin{tabular}{cc}
        \begin{subfigure}[b]{0.5\textwidth}
            \centering
            \includegraphics[width=\linewidth]{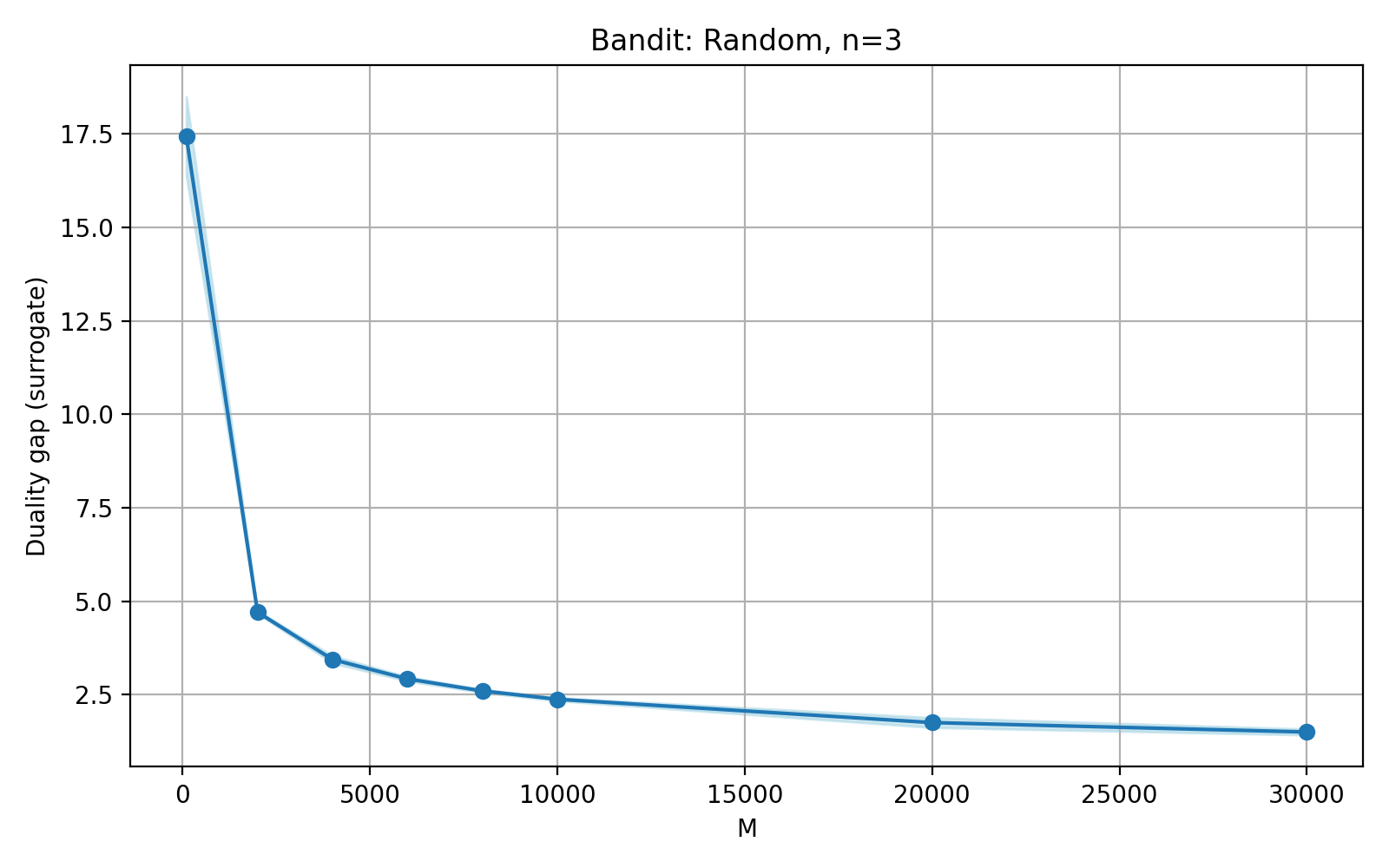}
        \end{subfigure}
        \begin{subfigure}[b]{0.5\textwidth}
            \centering
            \includegraphics[width=\linewidth]{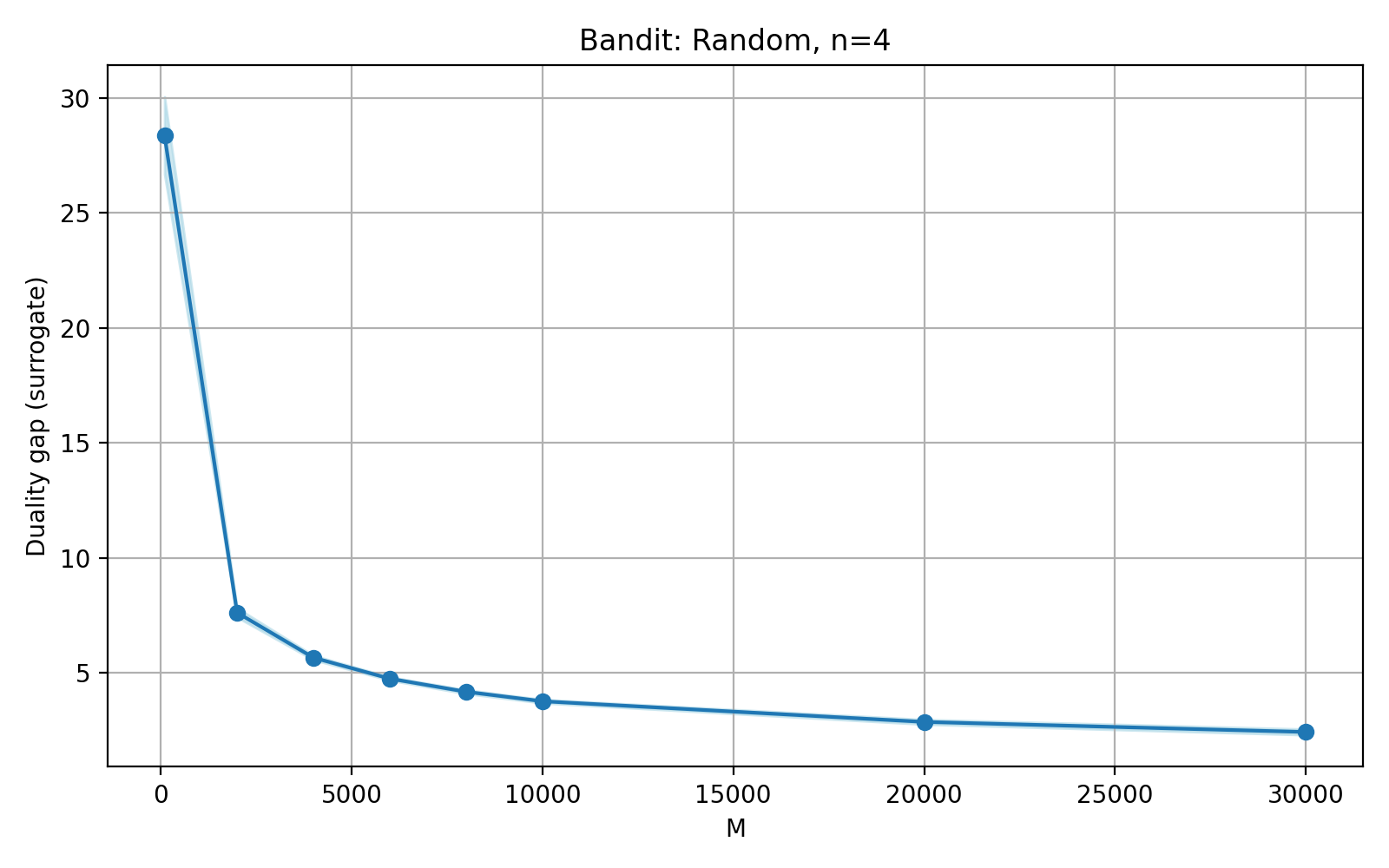}
        \end{subfigure}
        \\
        \begin{subfigure}[b]{0.5\textwidth}
            \centering
            \includegraphics[width=\linewidth]{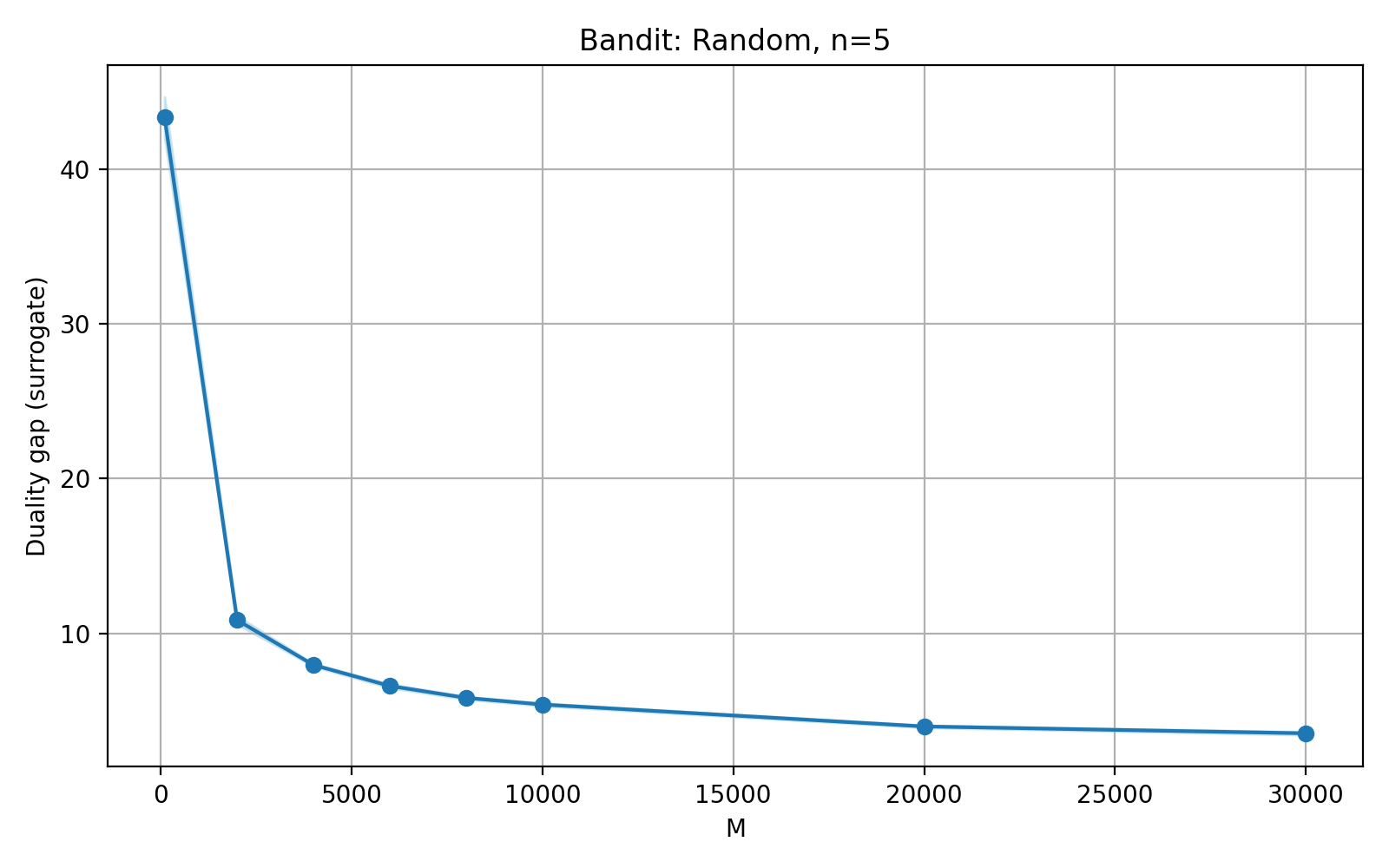}
        \end{subfigure}
        \begin{subfigure}[b]{0.5\textwidth}
            \centering
            \includegraphics[width=\linewidth]{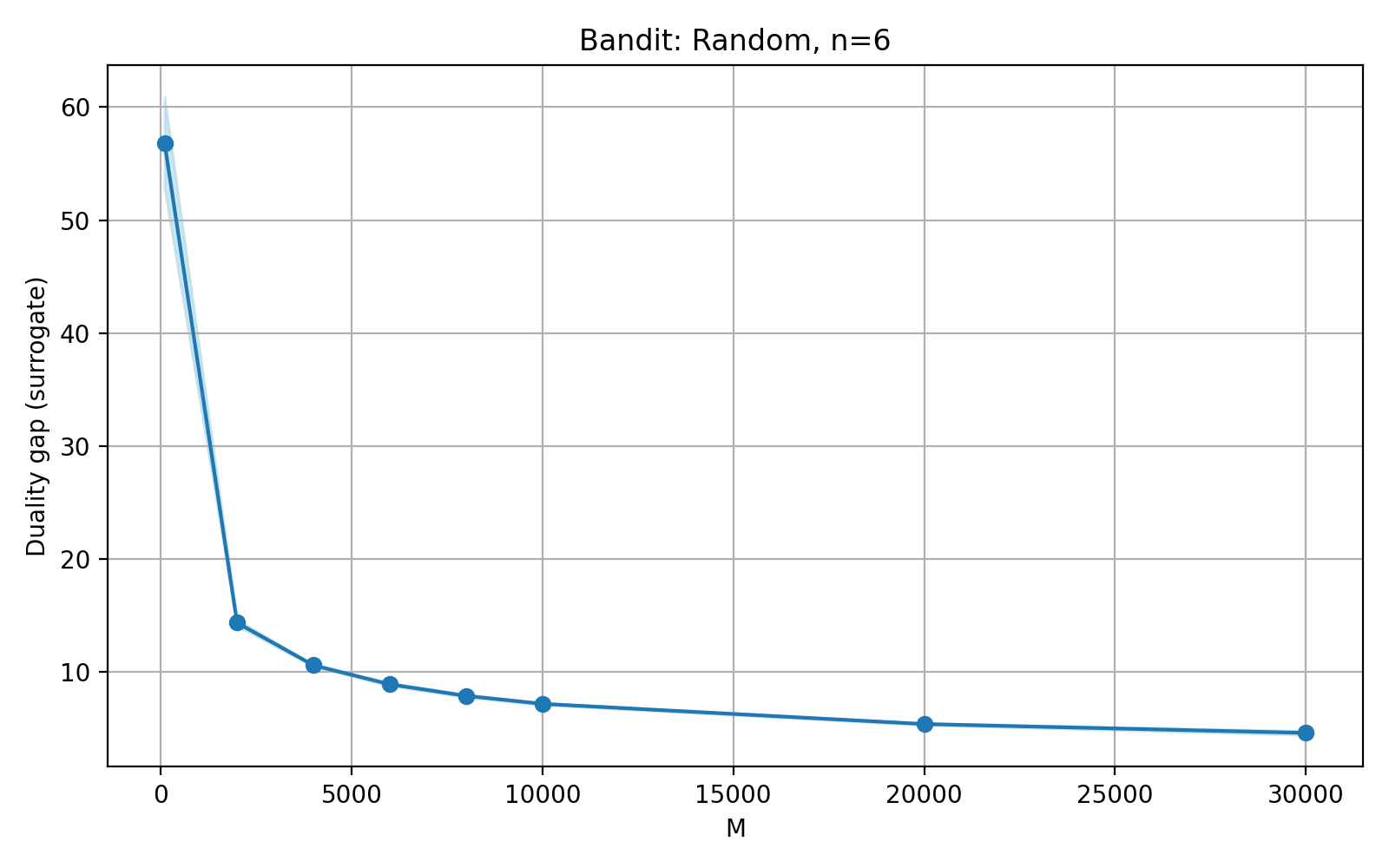}
        \end{subfigure}
    \end{tabular}
    \caption{{Mean approximate duality gap versus the size of datasets generated by $\rho^{\text{rand}}$ over $5$ runs with different seeds for the utility generation model with mixed-coalition-size effects under \textit{bandit} feedback. Here, the number of agents is varied over $n \in \{3,4,5,6\}$. Shaded regions indicate standard deviations.}}
    \label{fig:mixed effects bandit}
\end{figure} 

\begin{figure}[hp!]
    \begin{tabular}{cc}
        \begin{subfigure}[b]{0.5\textwidth}
            \centering
            \includegraphics[width=\linewidth]{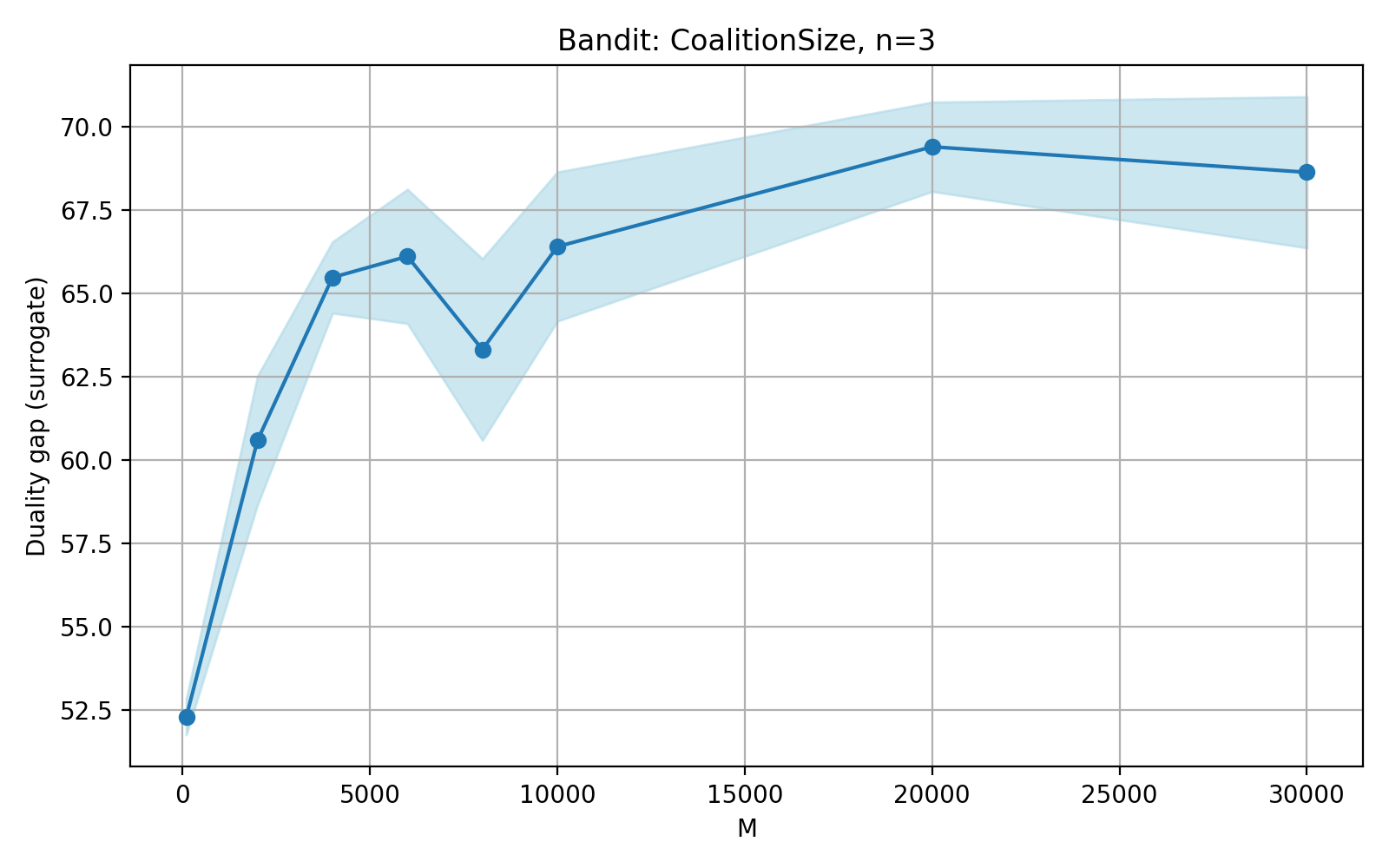}
        \end{subfigure}
        \begin{subfigure}[b]{0.5\textwidth}
            \centering
            \includegraphics[width=\linewidth]{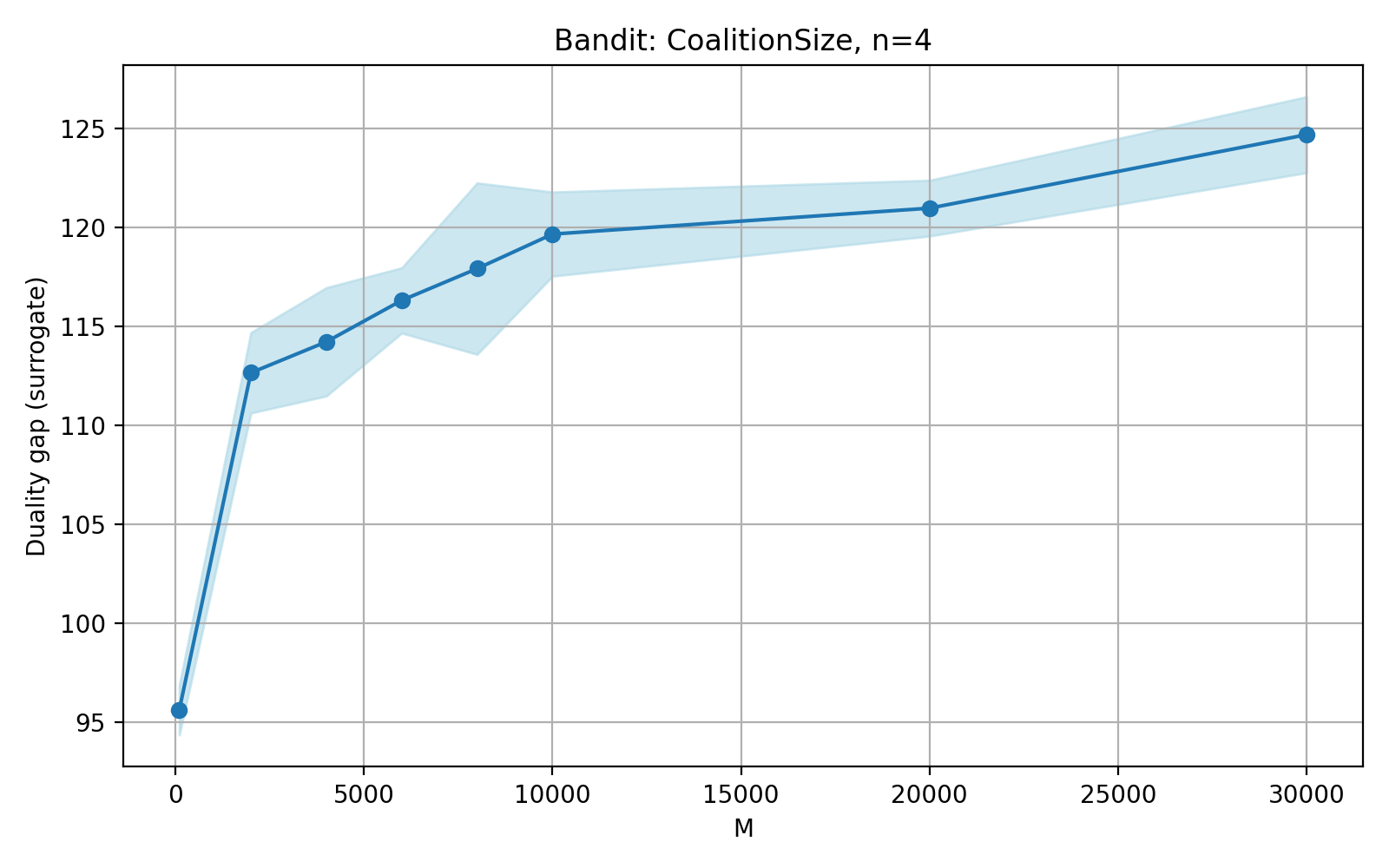}
        \end{subfigure}
        \\
        \begin{subfigure}[b]{0.5\textwidth}
            \centering
            \includegraphics[width=\linewidth]{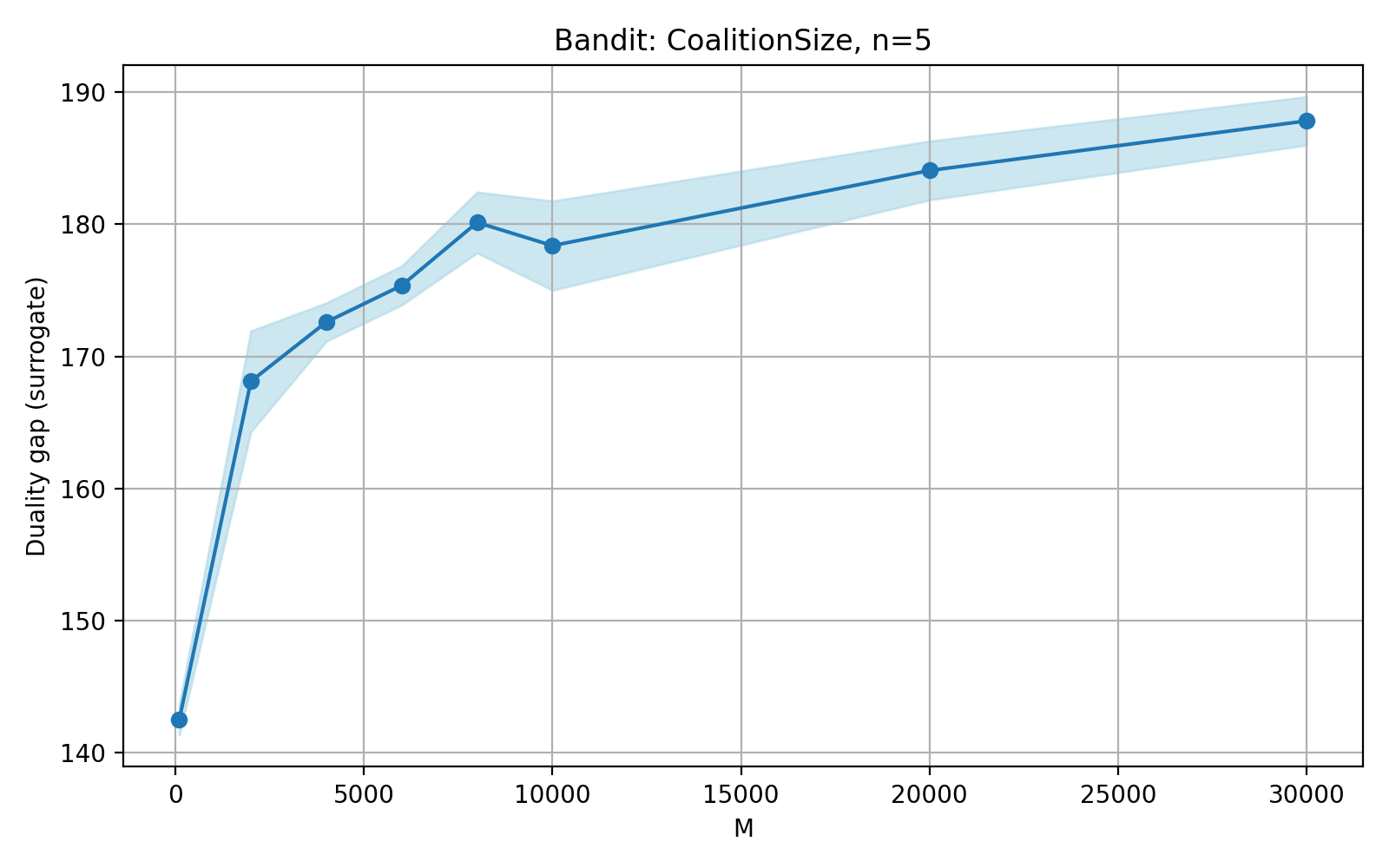}
        \end{subfigure}
        \begin{subfigure}[b]{0.5\textwidth}
            \centering
            \includegraphics[width=\linewidth]{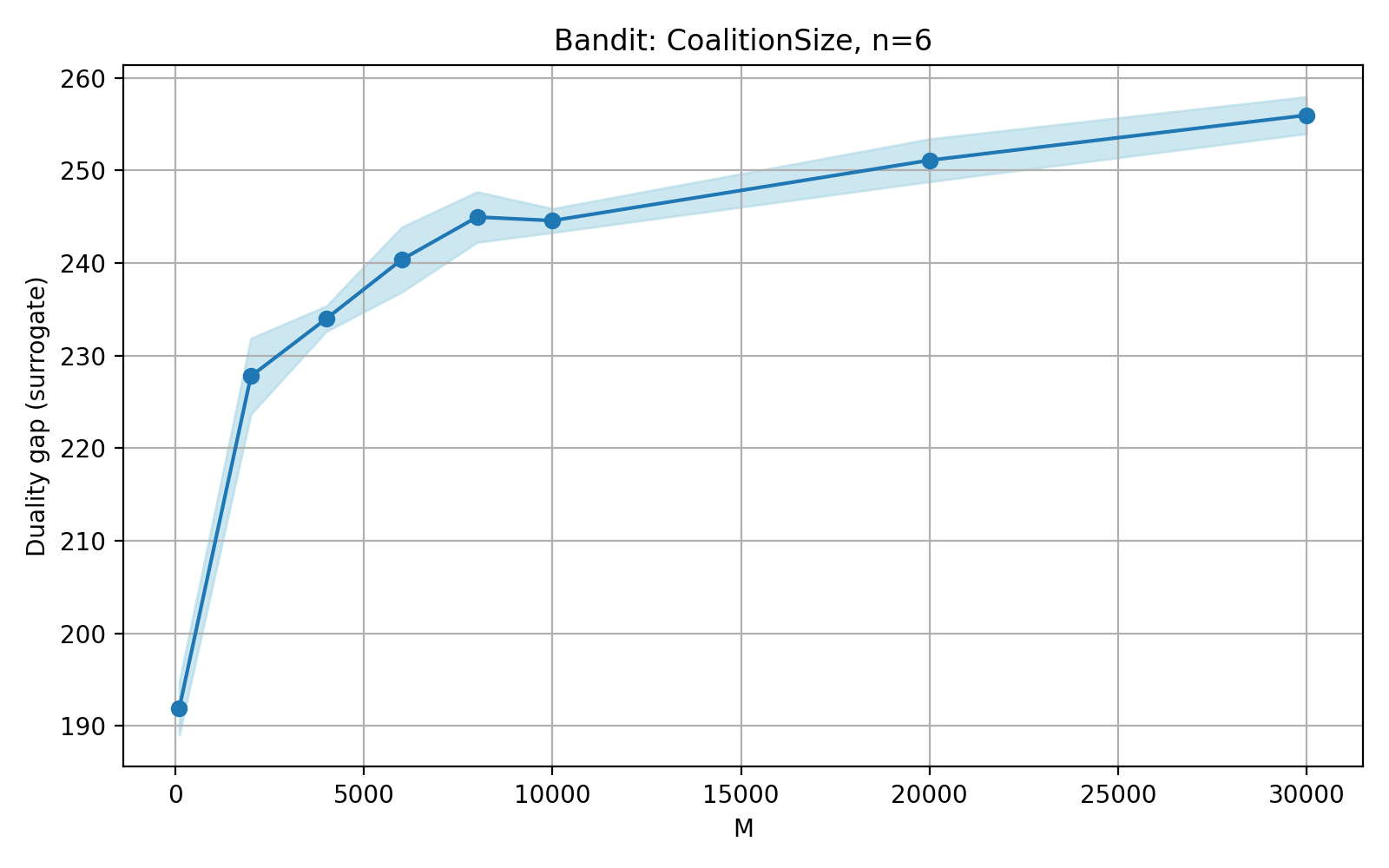}
        \end{subfigure}
    \end{tabular}
    \caption{{Mean approximate duality gap versus the size of datasets generated by $\rho^{\text{coalitionSize}}$ over $5$ runs with different seeds for the utility generation model with mixed-coalition-size effects under \textit{bandit} feedback. Here, the number of agents is varied over $n \in \{3,4,5,6\}$. Shaded regions indicate standard deviations.}}
    \label{fig:mixed effects2 bandit}
\end{figure}

\bibliographystyle{plainnat} 
\bibliography{example_paper}

\end{document}